%

\documentclass[a4paper,12pt]{report}
\pdfoutput=1


\newcommand{\auth}{Omar Valdivia}   
\newcommand{\thesistitle}{Transgression forms as source for Topological gravity and Chern--Simons--Higgs theories}
\newcommand{\degree}{Doctor of Philosophy} 
\newcommand{\supdate}{August 2014}            


\usepackage[inner=1.5in,top=1in,bottom=0.75in,outer=0.75in,includefoot]{geometry}

\usepackage{setspace}
\usepackage{fancyhdr}

\usepackage{amsmath,amsfonts,amssymb,amsthm,amstext,amscd,array}
\usepackage{mathrsfs}
\usepackage{bbm}
\usepackage{bm}
\usepackage{tikz}
\usepackage{tikz-cd}
\usetikzlibrary{matrix,arrows,decorations.pathmorphing}

\usepackage[citecolor=, urlcolor=,
  linkcolor=, colorlinks=, bookmarksopen=false, bookmarks=true]{hyperref}
\usepackage[nonumberlist]{glossaries}
\makeglossaries

%
%
%
%
%
%



\newacronym{}{}{}

\newacronym{sisd}{SISD}{Single Instruction, Single Data} 

\newacronym{smp}{SMP}{shared memory multiprocessor}

\newacronym{ghc}{GHC}{Glasgow Haskell Compiler}

\newacronym{ghceden}{GHC-Eden}{The Parallel Haskell Compilation System Eden}

\newacronym{gph}{GpH}{Glasgow Parallel Haskell}

\newacronym{gdh}{GdH}{Glasgow Distributed Haskell}

\newacronym{ghcsmp}{GHC-SMP}{Haskell on a Shared Memory Multiprocessor}

\newacronym{hdph}{HdpH}{Haskell Distributed Parallel Haskell}

\newacronym{ghcpps}{GHC-PPS}{GHC Parallel Profiling System}

\newacronym{pe}{PE}{Processing Element}

\newacronym{uma}{UMA}{Uniform Memory Access}

\newacronym{numa}{NUMA}{Non-Uniform Memory Access}

\newacronym{mpp}{MPP}{Massively Parallel Processors}

\newacronym{mpi}{MPI}{Message Passing Interface}

\newacronym{pvm}{PVM}{Parallel Virtual Machine}

\newacronym{hpc}{HPC}{High-Performance Computing}

\newacronym{mcnoc}{McNoC}{Multicores Network-on-Chip}

\newacronym{api}{API}{Application Programming Interface}


\usepackage{graphicx}
\usepackage{subfigure}
\usepackage{xspace}
\usepackage{listings}
\usepackage{pdfpages}
\usepackage{alltt}
\usepackage{float} 
\usepackage[notindex, nottoc, notlof, notlot ]{tocbibind} 
\usepackage[arrow,matrix,curve]{xy}


\newtheorem{theorem}{Theorem}[section]

\newtheorem{definition}{Definition}[section]

\theoremstyle{remark}

  \newcommand{\C}{\complex}
  
  \newcommand{\mbf}[1]{{\boldsymbol {#1} }}
  \def\ii{{\,{\rm i}\,}}
  \def\dd{{\rm d}}

  \def\a{{\sf a}}

  \def\sff{{\sf f}}

  \newcommand{\Hom}{{\rm Hom}}

  \newcommand{\unit}{\mathbbm{1}} 			

  \newcommand{\frg}{\mathfrak{g}}				
  \newcommand{\frh}{\mathfrak{h}}				

  \newcommand{\CCR}{\mathscr{R}}
  \newcommand{\CH}{\mathcal{H}}

  \newcommand{\CG}{\mathcal{G}}
  \newcommand{\CM}{\mathcal{M}}

  \newcommand{\CCM}{\mathscr{M}}

  \newcommand{\Wcal}{\mathcal{W}}

  \newcommand{\eq}{\begin{equation}}
  \newcommand{\eqend}{\end{equation}}
  \newcommand{\eqa}{\begin{eqnarray}}
  \newcommand{\nonueqa}{\begin{eqnarray*}}
  \newcommand{\eqaend}{\end{eqnarray}}
  \newcommand{\nonueqaend}{\end{eqnarray*}}
  
  \newcommand{\bma}[1]{\begin{array}{#1}}
  \newcommand{\ema}{\end{array}}
  \newcommand{\bc}{\begin{center}}
  \newcommand{\ec}{\end{center}}

  \newcommand{\newsection}{\setcounter{equation}{0}\section}

  \newcommand{\complex}{{\mathbb C}} 
  
  \def\alg{{\mathcal A}}

  \newif\ifold             \oldtrue

  \def\e{{\,\rm e}\,}

  \hyphenation{pre-print}
  \hyphenation{pre-prints}
  \hyphenation{di-men-sion-al}
  \hyphenation{di-men-sion-al-ly}
  \def\be{\begin{equation}}
  \def\ee{\end{equation}}
  \def\bea{\begin{eqnarray}}
  \def\eea{\end{eqnarray}}
  \def\bd{\begin{displaymath}}
  \def\ed{\end{displaymath}}

  \newcommand{\beq}{\begin{eqnarray}}
  \newcommand{\eeq}{\end{eqnarray}}

  \makeatletter
  \newdimen\normalarrayskip              
  \newdimen\minarrayskip                 
  \normalarrayskip\baselineskip
  \minarrayskip\jot
  \newif\ifold             \oldtrue            
  \def\arraymode{\ifold\relax\else\displaystyle\fi} 
  \def\@arrayskip{\ifold\baselineskip\z@\lineskip\z@
       \else
       \baselineskip\minarrayskip\lineskip2\minarrayskip\fi}
  \def\@arrayclassz{\ifcase \@lastchclass \@acolampacol \or
  \@ampacol \or \or \or \@addamp \or
     \@acolampacol \or \@firstampfalse \@acol \fi
  \edef\@preamble{\@preamble
    \ifcase \@chnum
       \hfil$\relax\arraymode\@sharp$\hfil
       \or $\relax\arraymode\@sharp$\hfil
       \or \hfil$\relax\arraymode\@sharp$\fi}}
  \def\@array[#1]#2{\setbox\@arstrutbox=\hbox{\vrule
       height\arraystretch \ht\strutbox
       depth\arraystretch \dp\strutbox
       width\z@}\@mkpream{#2}\edef\@preamble{\halign \noexpand\@halignto
  \bgroup \tabskip\z@ \@arstrut \@preamble \tabskip\z@ \cr}%
  \let\@startpbox\@@startpbox \let\@endpbox\@@endpbox
    \if #1t\vtop \else \if#1b\vbox \else \vcenter \fi\fi
    \bgroup \let\par\relax
    \let\@sharp##\let\protect\relax
    \@arrayskip\@preamble}
  \makeatother

\setcounter{tocdepth}{3} 
\setcounter{secnumdepth}{3}

\begin{document}
\doublespacing

\pagestyle{empty}
\begin{center}
\begin{spacing}{2}
{\large{\ \\  \vspace{1.5cm}\textbf{\MakeUppercase{\thesistitle}}}}\\
\end{spacing}
\vfill
{\Large\textit{by}}\\\vspace{0.2cm}
{\Large\upshape{\auth}}\\\vspace{1.0cm}
\includegraphics[width=3.5cm]{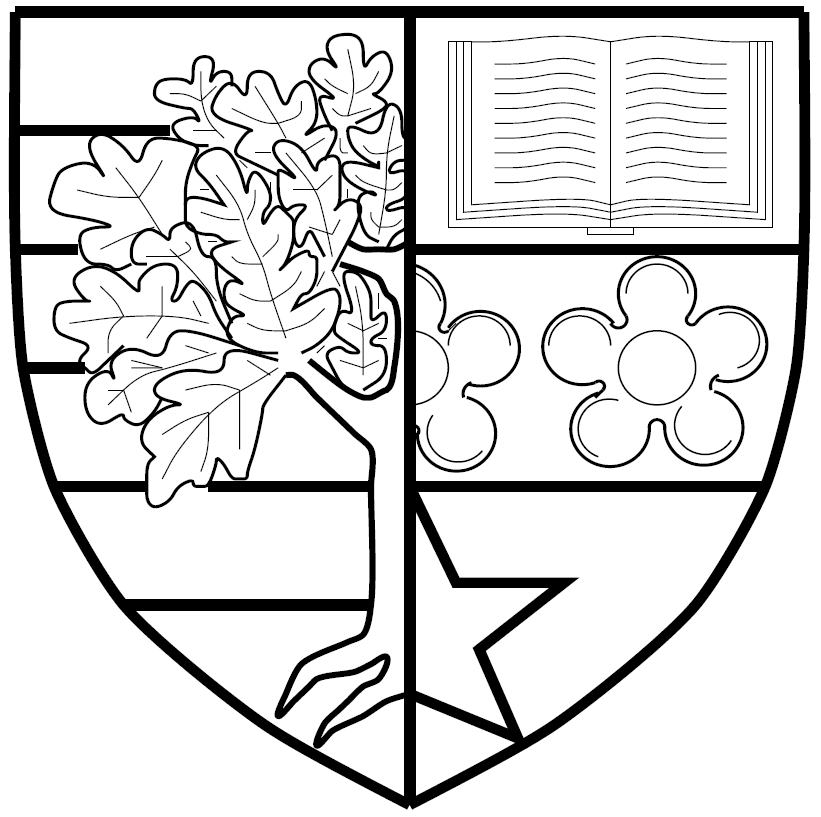}\\
\vspace{1cm}
{\large Submitted for the degree of \\ \degree}\\
\vspace{1cm}
{\large\textsc{Department of Mathematics}\\
\textsc{School of Mathematical and Computer Sciences}\\
\textsc{Heriot-Watt University}}\vfill
{\large{\supdate}}
\end{center}
{\small The copyright in this thesis is owned by the author. Any quotation from the report or use of any of the information contained in it must acknowledge this report as the source of the quotation or information.}
\clearpage
\begin{center}
\LARGE\textbf {Abstract}
\end{center}
\vspace{1cm}

\begin{spacing}{1} 
\noindent

Two main gauge invariant ``off--shell'' models are studied in this Thesis. Both of them constructed by considering different configurations of transgressions forms as Lagrangians.

\begin{description}

\item[i)] Poincar\'e-invariant topological gravity in even
dimensions is formulated as a transgression field theory in one higher dimension whose gauge connections are
associated to linear and nonlinear realizations of the Poincar\'e
group $ISO(d-1,1)$. The resulting theory is a gauged Wess--Zumino--Witten model
whereby the transition functions relating gauge fields belong to the
coset $\frac{ISO(d-1,1)}{SO(d-1,1)}$. The coordinate 
parametrizing the coset space is identified with the scalar field in
the fundamental representation of the gauge group of the even-dimensional
topological gravity theory. The supersymmetric extension 
leads to topological supergravity in two dimensions starting from a transgression field theory
which is
invariant under the supersymmetric extension of the Poincar\'e group
in three dimensions. The construction is extended to a three-dimensional
Chern--Simons theory of gravity invariant under the Maxwell
algebra, where the corresponding Maxwell gauged Wess--Zumino--Witten model is obtained.

\item[ii)] Dimensional reduction of Chern--Simons theories with arbitrary
gauge group in a formalism based on equivariant principal bundles is considered. For
the classical gauge groups the relations between
equivariant principal bundles and quiver bundles is clarified, and show that the
reduced quiver gauge theories are all generically built on the same
universal symmetry breaking pattern. The reduced model is a novel Chern--Simons--Higgs
theory consisting of a Chern--Simons term valued in the residual gauge
group plus a higher order gauge and diffeomorphism invariant coupling of
Higgs fields with the gauge fields. The moduli spaces of solutions provide in some instances geometric representations of certain quiver varieties as moduli spaces of flat invariant connections. 
In the context of dimensional reductions involving non-compact gauge groups, the reduction of five-dimensional supergravity induce novel couplings between gravity
and matter. The resulting model is regarded as to a quiver gauge theory of AdS$_{3} \times \mathrm{U}(1)$ gravity
involving a non-minimal coupling to scalar Higgs fermion fields.

\end{description}

\end{spacing}
\newpage
\bigskip
\begin{flushright}
\textit{To the memory of my father ...}
\end{flushright}
\newpage

\clearpage
\pagestyle{plain}
\clearpage\pagenumbering{roman}
\noindent
{\LARGE\textbf{Acknowledgements}}
\vspace{1cm}

\begin{spacing}{1} 
\noindent
I would like to thank in first place to my family. In particular to my mother and sister for their tremendous love and support. Also to my wife Cecilia, since she arrived to my life my days are full of happiness and without her light, the accomplishment of this Thesis would not have been possible. To my professors Patricio Salgado, for his influence and trust in me not only as a student but most importantly as a person. To Richard Szabo for all his invaluable advices and for showing me what a professional researcher must be. To Jaime Araneda, Ximena Garc\'ia, Luis Roa Robert Weston and Domenico Semianra for their patience and support. To my good friends Leonardo Baez, Ricardo Carocca, Jorge Clarke, Arturo Fern\'andez, Arturo G\'omez, Sergio Inglima, Fernando Izaurieta, Patricio Mella, Nelson Merino, Yazmina Olmos, Alfredo P\'erez, Eduardo Rodr\'iguez and Patricio Salgado Rebolledo. For all of them, thank you for your friendship and enlightening discussions which have been crucial for the development of this Thesis. To the staff people in Concepci\'on as well as at Heriot--Watt University;  Yarela Daroh, June Maxwell, Claire Porter, Marcela Sanhueza, Heraldo Manr\'iquez and Victor Mora; thank you very much for all your help and kindness.

To finalize, I would like to thank to different institutions which have made this investigation possible. To Universidad de Concepci\'on for its dependencies and financial support, to Heriot--Watt University and Maxwell Institute for Mathematical Sciences for its facilities as well as to the Comosi\'on Nacional de Ciencia y Tecnolog\'ia CONICYT for the PhD scholarship grants and co-tutela award.

\end{spacing}

\tableofcontents
\listoftables
\listoffigures
\clearpage
\pagestyle{fancy}
\pagenumbering{arabic}
\fancyhead{}
\lhead{\slshape \leftmark} 
\cfoot{\thepage}
\renewcommand{\headrulewidth}{0.4pt}
\renewcommand{\footrulewidth}{0.0pt}
\renewcommand{\chaptermark}[1]{\markboth{\chaptername\ \thechapter:\ #1}{}}

\chapter{Introduction}
\label{ch:intro}
\begin{flushright}
\textit{``...es una historia que tiene que ver con el curso de la V\'ia Láctea...''  \\ Silvio Rodriguez, Canci\'on del elegido.} \footnote{\scriptsize ``...this is a story which deals with the course of the Milky Way...''. Silvio Rodriguez, Song of the chosen.}
\end{flushright}

\newsection{Motivation}

Physics is a extremely powerful science. It has the property of describing the majority of phenomena observed in nature by using only a set of mathematical equations. This equations encode the information about the fundamental laws which govern processes at different scales in the universe.

Although this is a highly nontrivial fact, it has been the source of curiosity of many people in the course of human kind. The development of physics has revealed another and even more astonishing fact; the idea of that all phenomena in nature seems to share a common origin, even when at first light they exhibit a completely different provenance. This suggest that if all the processes are governed by the same set of fundamental rules, it may be possible to create a single physical framework in which all the interactions are
described in a consistent manner. This seductive idea has become into a very important concept for the scientific community and it is often referred to as \textit{unification}. One of the first lights in this direction was given
by Sir Isaac Newton in XVII century. Newton realized how to unify the motion of planets around the Sun with the laws of movement of bodies on Earth. This is known as Newton's Universal Gravitational Law. Later on, almost
three centuries after, this ordinary equation allowed human to put feet on the surface of the Moon. Of course, much more work has been done since Newton's contribution. In
particular, the works of Faraday and Amp\`ere in the field of electricity and magnetism unified by James Maxwell in what is known today as Electrodynamics, as well as the unification of Newton's gravitational laws with the concept of
space-time made by Albert Einstein in the framework of Special and subsequently General Relativity
in 1915. Today, with the development of Quantum Mechanics, the Standard model of High Energy Physics is the most successful, accurate and predictive model for the interaction of particles. In this framework, three of the four known
interactions of nature are unified (Electromagnetic, Strong Nuclear and Weak Nuclear) and experimentally corroborated with enormous precision. The dynamical content of the theory is written in terms of a Yang--Mills action functional which is built on the assumption that interactions of nature
should be unchanged by a specific group of transformations acting on each point of space-time; a local \textit{gauge symmetry}. This symmetry principle is of vital importance because it sets by one hand, the fundamental
constituents of matter and on the other hand the carriers of interactions. The success of the Standard Model lies on the fact that it is renormalizable and anomaly free; these are highly restrictive conditions so any theory which avoids inconsistencies like those is a believable tool and, at the same time, a prime
criterion for its construction.

The advantage of gauge invariance in quantum systems is that the gauge symmetry does not depend on the field configurations. Since this symmetry relates the divergences appearing in the scattering amplitudes in such a way that they can be absorbed in the coupling constants at some order of the loop expansion, it should remain at all orders in perturbation theory so that the gauge symmetry it is not spoiled by quantum corrections. This is in contrast with some other theories in which gauge symmetry is only realized by using the equations of motion i.e., an \textit{``on-shell''} symmetry. This kind of symmetries are usually broken by quantum mechanics.

Even when there is no clear
understanding if all the gauge invariant theories are renormalizable, the only renormalizable theories which describe our universe are gauge theories. This is really unexpected since gauge symmetry was initially introduced with the motivation of providing a systematic way to describe interactions which respect some symmetry principles more than curing renormalization. Thus, gauge invariance seems to be it a key
ingredient in the construction of experimentally testable theories since symmetry principles are then not only useful in the construction of classical actions, they are also sufficient conditions to ensure the viability of the quantization procedure of a classical action.

The underlying structure of the gauge invariance is mathematically captured
through the concept of \textit{fibre bundle}, which is a systematic way to implement
a group acting on a set of fields that carry a particular representation of the
group~\cite{Nakahara,Eguchi:1980jx,deAz95}. 

The gravitational interaction, in contrast, has stubbornly
resisted quantization. This does not mean that General Relativity is
constructed over weak postulates. In fact, this theory predicts in a good way how the
Universe it is behaved at large scales as well as the dynamic of bodies around
super-massive objects like black holes and neutron stars. However, General Relativity cannot be understood as a quantum field theory due to its perturbative expansion is not renormalizable~\cite{'tHooft:1974bx}. The situation with quantum gravity is particularly annoying because one have been led to think that gravitational attraction is a fundamental interaction. In fact, Einstein equations can be derived form an action principle so one could expect that a path integral for the gravitational field can be defined. But it may also be possible  that General Relativity is only an effective theory for gravity in four
dimensions~\cite{Burgess:2003jk}. Despite these arguments, General Relativity seems to be the consistent
framework compatible with the idea that physics should be insensitive
to the choice of coordinates or the state of motion of any observer;
this is expressed mathematically as invariance under
reparametrizations or local diffeomorphisms. Although this reparametrization invariance constitutes
a local symmetry, it does not qualify as a gauge symmetry. The reason
is that gauge transformations act on the fields while diffeomorphisms
act on their arguments, i.e., on the coordinates. A systematic way to
circumvent this obstruction is by using the tangent space
representation; in this framework gauge transformations constitute
changes of frames which leave the coordinates unchanged. However,
general relativity is not invariant under local
translations, except by a special accident in three spacetime
dimensions where the Einstein--Hilbert action is purely topological. This Thesis is intended to shed some light in the
question of how to construct a gauge invariant version of gravity. The
motivations are quite obvious; in complete correspondence to the Standard
Model, if the truly theory of gravity is such that it can be written as a gauge theory, then in principle it could accept a quantization procedure. This would promote gravity in to the
same footage as the Standard Model, providing in this way for a candidate for the unified
theory of the fundamental interactions. 
 
The classification of topological gauge theories of gravity was introduced by Ali Chamseddine in late eighties~\cite{Chamseddine:1989yz,Cha89,Cha90}. The natural gauge groups $G$ considered are the anti-de~Sitter
group $SO(d-1,2)$, the de~Sitter group $SO(d,1)$, and the Poincar\'e
group $ISO(d-1,1)$ in $d$ spacetime dimensions depending on the sign of the
cosmological constant: $-1,+1,0$ respectively. In odd dimensions
$d=2n+1$, the gravitational theories are constructed in terms of
secondary characteristic classes called
Chern--Simons forms~\cite{Chern:1974ft}. Chern--Simons forms are useful objects because they
lead to gauge invariant theories (modulo boundary terms). They also
have a rich mathematical structure similar to those of the (primary)
characteristic classes that arise in Yang--Mills theories: they are constructed in terms of a gauge potential which descends from a connection on a principal bundle.
 In even dimensions, there is no
natural candidate such as the Chern--Simons forms; hence in order to construct an invariant $2n$-form, the product of $n$ field strengths is not
sufficient and requires the insertion of a scalar multiplet $\phi^{a}$ in the fundamental
representation of the gauge group $G.$ This requirement
ensures gauge invariance but it threatens the topological origin of
the theory. 

However, in this Thesis it will be shown that even-dimensional topological gravity actually encodes a very elegant topological interpretation; it can be
formulated in terms of a generalization of Chern--Simons forms, namely, a \textit{transgression} form~\cite{Merino:2010zz,Salgado:2013pva,PhysRevD.89.084077}.
The construction of gauge theories using transgression forms as Lagrangians is relatively new~\cite{Mora:2003wy,Mora:2004kb,Mora:2006ka} and provides of many advantages form the physical point of view. It will be shown that even-dimensional topological gravity can be written as a transgression field theory which is deeply connected with topological quantities called gauged Wess--Zumino--Witten terms~\cite{Wess:1967jq,Witten:1983tw}. This construction will also be extended to supersymmetry. 

Another interesting area for the study of gauge theories is in the context of dimensional reduction. Dimensional reduction provides a means of unifying gauge and Higgs sectors into a pure Yang--Mills theory in higher dimensions. The reductions are particularly rich if the extra spacetime
dimensions admit isometries, which can then be implemented on gauge orbits of fields~\cite{ForgaCS:1979zs}. The natural setting for spacetime isometries are coset
spaces $G/H$ of compact Lie groups in which Yang--Mills theory on the product space $ M\times
G/H$ is reduced to a Yang--Mills--Higgs theory on the manifold $M$; the construction can be 
extended supersymmetrically and also embedded in string theory \cite{Kapetanakis:1992hf}.
Equivariant dimensional reduction is an alternative approach
which naturally incorporates background fluxes coming from the topology of the canonical connections on the principal $H$-bundle $G\to G/H$~\cite{AlGar12,Lechtenfeld:2007st,Dolan:2010ur}; the reduced Yang--Mills--Higgs model is then succinctly described by a quiver gauge theory on $M$ whose underlying quiver is canonically associated to the representation theory of the Lie groups $H\subset G$. Such reductions have been used to
describe vortices as generalized instantons in higher-dimensional
Yang--Mills theory \cite{AlvarezConsul:2001mb,Lechtenfeld:2006wu,Popov:2007ms,Lechtenfeld:2008nh,Popov:2010rf}, as well as to construct explicit
$SU(2)$-equivariant monopole and dyon solutions of pure Yang--Mills theory in
four dimensions~\cite{Popov:2008wh}. 
A related approach is described in \cite{Manton:2010mj} which systematically translates the inverse relations of restriction and induction of vector bundles \cite{AlGar12} into the framework of
principal bundles. In this formulation there is no restriction on the
structure group and it permits, for instance, the application of equivariant dimensional
reduction techniques to gauge theories involving arbitrary gauge groups
$\mathcal{G}$. In the following we adapt such an approach to the simplest case where
$G=SU(2)  $ and $H=U(1)$, so that the internal coset
space $G/H$ is the two-sphere $S^2$ or the complex projective line $%
\mathbb{C}
P^{1}$. This example turns out to be rich enough to capture many of the general
features that one would encounter on generic cosets $G/H$.

The second result presented in this Thesis is the equivariant dimensional
reduction of topological gauge theories~\cite{Szabo:2014zua}. We calculate the reduction of an arbitrary odd-dimensional Chern--Simons
form over $%
\mathbb{C}
P^{1}$; although Chern--Simons Lagrangians are not gauge-invariant, we circumvent this problem by regarding them in the framework of transgression forms.
The reduced theory is a novel diffeomorphism-invariant
Chern--Simons--Higgs model, composed by a lower dimensional Chern--Simons form coupled to residual magnetic monopole charges plus a non-minimal coupling between the curvature 2-form and the Higgs fields. 

 As physical application, we consider the case of non-compact gauge supergroups. In order to make contact with topological gauge theories of gravity, we perform the dimensional reduction of five-dimensional
Chern--Simons supergravity over $\mathbb{C}P^1$. It will be shown that if the
Higgs fields are bifundamental fields in the fermionic sector of the gauge algebra, then the reduced action contains the standard
Einstein--Hilbert term plus a non-minimal coupling of the Higgs fermions to the curvature.
This reduction scheme thus constitutes a novel systematic way to couple scalar fermionic fields to
gravitational Lagrangians, in a manner whereby non-vacuum solutions of three-dimensional anti-de Sitter gravity can be lifted to give new solutions of five-dimensional supergravity on product spacetimes $M\times S^2$. 

\newsection{Plan of the Thesis}

This Thesis is organized as follows. In Chapter \ref{ch:background}, we review the mathematical background mainly used in the context of gauge theories, namely, the concept of fibre bundle, the notion of a connection over a principal bundle and the Chern--Weil Theorem via the homotopy formula.
 
 In Chapter~\ref{ch:Transgression}, we analyse the general aspects of the construction of gauge invariant theories using transgression forms as Lagrangians. Also, the relations between transgression and Chern--Simons field theories are discussed as well as its relation with the gauged Wess--Zumino--Witten models.
 
 In Chapter~\ref{ch:top_grav},
we review some aspects of topological gravity and how it can be connected with Lanczos--Lovelock theories of gravity. We review the formalism of nonlinear realizations of Lie groups and its application to the case of Chern--Simons gravity invariant under the Poincar\'e group. Then, in Section \ref{firstresult} one of the main results of this Thesis is presented: the construction of even-dimensional topological gravity as a transgression field theory. As a representative example of how to incorporate
fermions into our construction, we derive the supersymmetric extension of even-dimensional
topological gravity starting form a Chern--Simons supergravity action in three dimensions.

Chapter~\ref{ch:Max_alg}, contains an application in which we construct the gauged Wess--Zumino--Witten
model associated to the Maxwell algebra, and it is shown that the Maxwell algebra can be obtained as an $S-$expansion procedure. 

 In Chapter~\ref{ch:covquiv} we discuss general
aspects of $SU(2)$-equivariant dimensional reduction and
revisit the example of pure Yang--Mills theory as illustration. Also, as the second main result of this Thesis, the
symmetry breaking patterns are analysed for the classical gauge groups
and the geometric structure of general principal quiver bundles is described. Then, in Section \ref{secondresult},
we derive the $SU(2)$-equivariant dimensional reduction of Chern--Simons
gauge theories in arbitrary odd dimensionality and discuss some explicit examples.
Finally, we carry out the dimensional reduction of
five-dimensional Chern--Simons supergravity and point out some possible implications.

Chapter~\ref{ch:conclusions} contains a summary and discussion of the main results presented in this Thesis as well as some future possible developments.

Four appendices conclude this Thesis: In Appendix~\ref{ch:app2} we resume the general properties of nonlinear realizations of Lie group theory. Appendix~\ref{ch:app1} contains
some technical details about the construction of Chern--Simons supergravity
actions and a brief summary and conventions for spinors in three and five dimensions. In Appendix~\ref{ch:app3} we summarise the construction of the $S$-expansion method for Lie algebras. Finally, in Appendix~\ref{ch:app4} the group theory data for the decomposition of the classical gauge groups in terms of the Cartan--Weyl basis is summarized.

\chapter{Mathematical Background}
\label{ch:background}
\begin{flushright}
\textit{``...labourers of science is what we are.''  P. S.}
\bigskip
\end{flushright}
In order to describe gauge theories in a rigorous way it is necessary to
introduce the concept of \textit{Fibre Bundle}. A fibre bundle can be seen as a topological space which locally is the product of two manifolds. In the case when the product is also globally defined, the fibre bundle is said to be trivial. 

This mathematical concept is of vital importance in the construction of gauge theories. For this reason, a general however not complete, description will be given through this chapter. For a more detailed analysis, see the books~\cite{Nakahara,Eguchi:1980jx,deAz95}.

\newsection{Fibre bundles}

\begin{definition}
A fibre bundle is composed by the set $\left\{  \mathcal{E},\mathcal{M}%
,\mathcal{F},\pi\right\}  $ where $\mathcal{E},\mathcal{M},\mathcal{F}$ are
topological spaces and $\pi:\mathcal{E}\rightarrow\mathcal{M}$ is a continuous
and surjective projection map. $\mathcal{E}$ is referred as to the total
space, $\mathcal{M}$ is the base space and $\mathcal{F}$ is the fibre.

The local triviality condition $\varphi$ consists in the requirement that for
each $x\in\mathcal{M}$ there exist an open set $U$ such that
$\pi^{-1}\left(  U\right)  $ is homeomorphic to the direct product
$U\times\mathcal{F}$. The homeomorphism $\varphi:\pi^{-1}\left(  U\right)
\rightarrow U\times\mathcal{F}$ is such that the  diagram~(\ref{lotr}) commutes,
\begin{figure}[h!]
\centering
\begin{tikzcd}[column sep=6em,row sep=6em]
\pi^{-1}(U) \arrow[]{r}{\varphi}\arrow[]{d}{\pi}
&U \times F \arrow[]{dl}{\mathrm{proj}_1}\\
U&
\end{tikzcd}
\caption{Local Trivialization} \label{lotr}
\end{figure}
where $\mathrm{proj}_{1}$ denotes the standard projection map
$\mathrm{proj}_{1}:U\times\mathcal{F}\rightarrow U$.
\end{definition}

Let $\left\{  U_{\alpha}\right\}  $ be an open covering of $\mathcal{M}$ in
such a way that $\bigcup\limits_{\alpha}U_{\alpha}=\mathcal{M}$. Each open
$U_{\alpha}$ has a homeomorphism $\varphi_{\alpha}$ associated. The set
$\left\{  U_{\alpha},\varphi_{\alpha}\right\}  $ corresponds to a
\textit{local trivialization} of the fibre bundle. Thus, for each
$x\in\mathcal{M}$ the pre-image $\pi^{-1}\left(  x\right)  $ is homeomorphic to
$\mathcal{F}$ and it will be called \textit{the fibre }in $x$. 

\subsection{Transition functions}

In order to describe the fibre bundle completely in terms of the local
trivializations $\left\{  U_{\alpha},\varphi_{\alpha}\right\}  $ it is
necessary to find juncture conditions in the non-empty overlaps between
different open sets $U$. Given two opens $U_{\alpha}$ and $U_{\beta}$ with
non-empty overlaps $U_{\alpha}\cap U_{\beta}\neq\varnothing$, they respective
local trivializations $\varphi_{\alpha}$ and $\varphi_{\beta}$ will, in general, map $\pi^{-1}\left(  U_{\alpha}\cap U_{\beta
}\right)  $ to $\left(  U_{\alpha}\cap U_{\beta}\right)  \times\mathcal{F}$ in
a different way%
\begin{align}
\varphi_{\alpha} &  :\pi^{-1}\left(  U_{\alpha}\cap U_{\beta}\right)
\rightarrow\left(  U_{\alpha}\cap U_{\beta}\right)  \times\mathcal{F},\\
\varphi_{\beta} &  :\pi^{-1}\left(  U_{\alpha}\cap U_{\beta}\right)
\rightarrow\left(  U_{\alpha}\cap U_{\beta}\right)  \times\mathcal{F},
\end{align}
or more explicitly%
\begin{align}
\varphi_{\alpha}\left(  p\right)   &  =\left(  \pi\left(  p\right)
,y_{\alpha}\left(  p\right)  \right)  =\left(  x,y_{\alpha}\left(  p\right)
\right)  ,\\
\varphi_{\beta}\left(  p\right)   &  =\left(  \pi\left(  p\right)  ,y_{\beta
}\left(  p\right)  \right)  =\left(  x,y_{\beta}\left(  p\right)  \right)  ,
\end{align}
where $x\in U_{\alpha}\cap U_{\beta}$, $y_{\alpha}\in\mathcal{F}$. This
induces the following
\begin{definition}
the composite map of $\varphi_{\alpha}$ and $\varphi_{\beta}^{-1}$ is
\begin{equation}
\varphi_{\alpha}\circ\varphi_{\beta}^{-1}:\left(  U_{\alpha}\cap U_{\beta
}\right)  \times\mathcal{F}\rightarrow\left(  U_{\alpha}\cap U_{\beta}\right)
\times\mathcal{F},
\end{equation}
or in a more explicit way%
\begin{equation}
\varphi_{\alpha}\circ\varphi_{\beta}^{-1}\left(  x,y_{\beta}\right)  =\left(
x,\tau_{\alpha\beta}\left(  x\right)  y_{\beta}\right)  ,
\end{equation}
where $\tau_{\alpha\beta}\left(  x\right)  $ corresponds to a continuous
left-acting operator over $\mathcal{F}$. The mappings $\tau_{\alpha\beta
}\left(  x\right)  :\mathcal{F}\rightarrow\mathcal{F}$ are called the
transitions functions and satisfy the following properties
\begin{enumerate}
\item $\tau_{\alpha\alpha}\left(  x\right)  =\mathrm{I}_{\mathcal{F}},$
\item $\tau_{\alpha\beta}\left(  x\right)  =\tau_{\beta\alpha}^{-1}\left(
x\right)  ,$
\item $\tau_{\alpha\gamma}\left(  x\right)  =\tau_{\alpha\beta}\left(
x\right)  \tau_{\beta\gamma}\left(  x\right)  .$
\end{enumerate}
\end{definition}

The last condition holds in the case $U_{\alpha}\cap U_{\beta}\cap U_{\gamma
}\neq\varnothing$ and is called the co-cycle condition. These three properties
implies that the transition functions $\tau$ form a group; the structure group
$\mathcal{G}$%
\begin{equation}
\tau_{\alpha\beta}:U_{\alpha}\cap U_{\beta}\rightarrow\mathcal{G}.%
\end{equation}

In the case of a \textit{smooth fibre bundle }($\mathcal{E},\mathcal{M}%
,\mathcal{F}$ differentiable manifolds and, $\pi,\varphi$ smooth maps),
$\mathcal{G}$ corresponds to a \textit{Lie group}. Throughout this thesis only
this kind of fibre bundle are considered. Note that the transition functions
were chosen acting by the left, but this is only a matter of convention. In
principle, it is possible to define the \textit{right action} of a group over
the fibre and therefore inducing the right action over the whole bundle.

\begin{definition}
Let $g$ be an element of the structure group $\mathcal{G}$. We denote the right
action of $\mathcal{G}$ over the fibre $\mathcal{F}$ as%
\begin{equation}
y_{\alpha}^{\prime}\left(  p\right)  =y_{\beta}\left(  p\right)  g.
\end{equation}
Let $p,p^{\prime}$ be two point of the fibre bundle. It will be said that%
\begin{equation}
p^{\prime}=pg
\end{equation}
if the following conditions are satisfied%
\begin{align}
\pi\left(  pg\right)   &  =\pi\left(  p\right)  ,\\
y_{\beta}\left(  pg\right)   &  =y_{\beta}\left(  p\right)  g.
\end{align}
\end{definition}

The induced action over the bundle is independent of the homeomorphism
$\varphi$ and therefore of the choosing of $y_{\alpha}$. In fact, since%
\begin{equation}
y_{\alpha}\left(  p\right)  =\tau_{\alpha\beta}\left(  x\right)  y_{\beta
}\left(  p\right)
\end{equation}
it follows,%
\begin{align}
y_{\alpha}\left(  pg\right)   &  =\tau_{\alpha\beta}\left(  x\right)
y_{\beta}\left(  pg\right)  \\
&  =\tau_{\alpha\beta}\left(  x\right)  y_{\beta}\left(  p\right)  g\\
&  =y_{\alpha}\left(  p\right)  g.
\end{align}

The existence of the left action $L_{g}$ and the right action $R_{g}$ of an
element $g\in\mathcal{G}$ acting on the bundle suggest the definition of a
particularly elegant structure: the \textit{principal bundle}

\begin{definition}
A principal bundle is a fibre bundle in which
\begin{enumerate}
\item The fibre $\mathcal{F}$,
\item The set of transitions functions $\left\{  \tau_{\alpha\beta}\right\}  $,
\item The structure group $\mathcal{G}$ acting on the right, 
\end{enumerate}
\end{definition}
\textit{corresponds to the same Lie group} $\mathcal{G}$.

As it will be seen later, this kind of fibre bundle is the fundamental block
in the construction of gauge theories. In general, principal bundles are
denoted by $\mathcal{P}$ instead of $\mathcal{E}$ and of course the fibre
$\mathcal{F}$ is in this case denoted by $\mathcal{G}$.

\subsection{Local sections}

\begin{definition}
given the open covering $\left\{  U_{\alpha}\right\}  $ for $\mathcal{M}$ it
is possible to define a local section $\sigma_{\alpha}$ as the map%
\begin{equation}
\sigma_{\alpha}:U_{\alpha}\longrightarrow\mathcal{P},%
\end{equation}
such that $\forall x\in U_{\alpha}$
\begin{equation}
\pi\circ\sigma_{\alpha}\left(  x\right)  =x.
\end{equation}

\end{definition}

A section $\sigma_{\alpha}$ and the local homeomorphism $\varphi_{\alpha}%
:\pi^{-1}\left(  U_{\alpha}\right)  \rightarrow U_{\alpha}\times\mathcal{G}$
are intimately related. In fact, given a local trivialization $\varphi
_{\alpha}$ induces a particularly local section called \textit{natural section
}%
\begin{equation}
\sigma_{\alpha}\left(  x\right)  =\varphi_{\alpha}^{-1}\left(  x,e\right),
\end{equation}
where $e$ is the identity element of $\mathcal{G}$. The inverse affirmation
it is also true: a section induces a reciprocal local homeomorphism $\varphi$.
Thus, let $y_{\alpha}:\pi^{-1}\left(  U_{\alpha}\right)  \rightarrow
\mathcal{G}$ be the map such that%
\begin{equation}
\sigma_{\alpha}\left(  x\right)  y_{\alpha}\left(  p\right)  =p, \label{ch2.2}%
\end{equation}
where $y_{\alpha}\left(  p\right)  \in\mathcal{G}$ is acting by the right over
the point $\sigma_{\alpha}\left(  x\right)  $. Note that $\sigma_{\alpha
}\left(  x\right)  =p\left[  y_{\alpha}\left(  p\right)  \right]  ^{-1}$ it
does not depend on $p$. In fact, let $p^{\prime}=pg$ then%
\begin{align}
p^{\prime}\left[  y_{\alpha}\left(  p^{\prime}\right)  \right]  ^{-1}  &
=pg\left[  y_{\alpha}\left(  pg\right)  \right]  ^{-1} \nonumber \\
&  =pgg^{-1}\left[  y_{\alpha}\left(  p\right)  \right]  ^{-1}\nonumber \\
&  =p\left[  y_{\alpha}\left(  p\right)  \right]  ^{-1}.
\end{align}
Thus, it is possible to define
\begin{equation}
\varphi_{\alpha}\left(  p\right)  =\left(  x,y_{\alpha}\left(  p\right)
\right),
\end{equation}
where the condition eq.$\left(  \ref{ch2.2}\right)  $ implies that
$\sigma_{\alpha}\left(  x\right)  =\varphi_{\alpha}^{-1}\left(  x,e\right)  $.

Given two local sections $\sigma_{\alpha}$ and $\sigma_{\beta}$ over
$U_{\alpha}\cap U_{\beta}$, it is possible to show that they are related by the transition functions
$\tau_{\alpha\beta}:U_{\alpha}\cap U_{\beta}\rightarrow\mathcal{G}$. Let
$\sigma_{\alpha}$ and $\sigma_{\beta}$ be natural sections. Then, using eq.$\left(
\ref{ch2.2}\right)  $ it follows%
\begin{equation}
\sigma_{\alpha}\left(  x\right)  y_{\alpha}\left(  p\right)  =\sigma_{\beta
}\left(  x\right)  y_{\beta}\left(  p\right)  ,
\end{equation}
but
\begin{equation}
y_{\alpha}(p)=\tau_{\alpha \beta}(x) y_{\beta}(p),
\end{equation}
and therefore,
\begin{align}
\sigma_{\beta}\left(  x\right)   &
=\sigma_{\alpha}\left(  x\right)  y_{\alpha}\left(  p\right)  \left[
y_{\beta}\left(  p\right)  \right]  ^{-1} \nonumber\\  
& =\sigma_{\alpha}\left(  x\right)  \tau_{\alpha\beta}(x)y_{\beta}\left(  p\right)  \left[
y_{\beta}\left(  p\right)  \right]  ^{-1} \nonumber\\
&  =\sigma_{\alpha}\left(  x\right)  \tau_{\alpha\beta}\left(  x\right) .
\end{align}

\subsection{Symmetry group $\mathcal{G}$}

Since in the case of principal bundles the fibre $\mathcal{F}$ coincides with
the structure group $\mathcal{G}$, it is fair at this point to precise some important
features. Throughout this thesis the concepts of \textit{Lie algebra} and
\textit{Lie group} will be used interchangeably. 
Since $\mathcal{G}$ corresponds to a Lie group, it posses a manifold
structure. So, it is possible to define the \textit{tangent} space
$T_{p}\left(  \mathcal{G}\right)  $ as the set of tangent vectors of
$\mathcal{G}$ at the point $p$. In particular, the Lie algebra $\mathfrak{g}$ associated to
$\mathcal{G}$ corresponds to the tangent space of the identity element
$T_{e}\left(  \mathcal{G}\right)  $. The tangent space $T_{e}\left(  \mathcal{G}\right)  $ has vector
space structure so if $\left\{  \mathsf{T}_{A}\right\}  $ is a basis for
$T_{e}\left(  \mathcal{G}\right)  =\mathfrak{g}$; then%
\begin{equation}
\left[  \mathsf{T}_{A},\mathsf{T}_{B}\right]  =C_{AB}^{~~C}\mathsf{T}%
_{C},\label{pr1}%
\end{equation}
where $C_{AB}^{~~C}$ are known as the structure constants which satisfy%
\begin{equation}
C_{AB}^{~~C}=-C_{BA}^{~~C},
\end{equation}
and the \textit{Jacobi identity}%
\begin{equation}
C_{AB}^{~~C}C_{CD}^{~~E}+C_{DA}^{~~C}C_{CB}^{~~E}+C_{BD}^{~~C}C_{CA}^{~~E}=0.
\end{equation}
The commutator eq.$\left(  \ref{pr1}\right)  $ induce the notion of commutator
of differential forms valued in the Lie algebra $\mathfrak{g}$. Let $P$ and $Q$ be a $p$ and a
$q-$form respectively, defined over a manifold $\mathcal{X}$ and taking values in
the Lie algebra $\mathfrak{g}$, i.e.,%
\begin{align}
P &  =P^{A}\mathsf{T}_{A},\\
Q &  =Q^{B}\mathsf{T}_{B},
\end{align}
where $P^{A}\in\Omega^{p}\left(  \mathcal{X}\right)  $ and $Q^{B}\in\Omega
^{q}\left(  \mathcal{X}\right)  $. The commutator $\left[  P,Q\right]  $ is
defined by%
\begin{align}
\left[  P,Q\right]   &  =P^{A}\wedge Q^{B}\left[  \mathsf{T}_{A}%
,\mathsf{T}_{B}\right]   \nonumber \\
&  =P^{A}\wedge Q^{B}C_{AB}^{~~C}\mathsf{T}_{C}.
\end{align}

Since $T_{e}\left(  \mathcal{G}\right)  =\mathfrak{g}$, it is
interesting to define a way to map vectors in $T_{g}\left(  \mathcal{G}\right)$ to vectors in $T_{e}\left(  \mathcal{G}\right)$. In order to do so, it is necessary to introduce some definitions.
Let $\mathcal{X}$ and $\mathcal{W}$ be two manifolds and let $f:\mathcal{X}$
$\longrightarrow\mathcal{W}$ be a map between them. We denote by $f^{\ast}$ to
the reciprocal image or \textit{pull-back} induced by $f$ over a form in
$\mathcal{W}$ to a form in $\mathcal{X}$. Moreover, we denote by $f_{\ast}$ to the direct image
or \textit{pushforward} induced from a vector in $\mathcal{X}$ to a vector in
$\mathcal{W}$. 

Now, given a vector $X\left(  g\right)  \in
T_{g}\left(  \mathcal{G}\right)  $, the corresponding element of the Lie
algebra $\mathsf{X}\in T_{e}\left(  \mathcal{G}\right)  =\mathfrak{g}$
is given by%
\begin{equation}
\mathsf{X}=L_{g^{-1}\ast}\left(  X\left(  g\right)  \right)  .
\end{equation}
This operation induces the definition of the canonical form of a Lie group,
 the\textit{ Maurer--Cartan form} $\mathrm{\theta}\left(  g\right)  \in\Omega
^{1}\left(  \mathcal{G}\right)  \otimes\mathfrak{g}$, as the form which
satisfies%
\begin{equation}
\mathrm{\theta}\left(  g\right)  \left(  X\left(  g\right)  \right)
=\mathsf{X}. \label{MCform}
\end{equation}

Given a matrix representation of $\mathcal{G}$, it is direct to show that the
condition%
\begin{equation}
\mathrm{\theta}\left(  g\right)  (  X\left(  g\right) )
=L_{g^{-1}\ast}\left(  X\left(  g\right)  \right)
\end{equation}
implies%
\begin{equation}
\mathrm{\theta}\left(  g\right)  =\mathrm{Ad}_{g^{-1}}(\dd_{\mathcal{G}})=g^{-1}\dd_{\mathcal{G}}g, \label{pr5}%
\end{equation}
where $\dd_{\mathcal{G}}$ is the exterior derivative over $\mathcal{G}$. Since
$\mathrm{\theta}=\theta^{A}\mathsf{T}_{A}$, the components $\theta^{A}$ form
a dual representation of the Lie algebra and satisfy the Maurer--Cartan
structure equations%
\begin{equation}
\dd_{\mathcal{G}}\theta^{C}+\frac{1}{2}C_{AB}^{~~C}\theta^{A}\wedge\theta
^{B}=0.\label{pr2}%
\end{equation}
The Maurer--Cartan equations are dual to eq.$\left(  \ref{pr1}\right)  $, they
carry the same information. Thus, the dual version of the Jacobi identity is
simply the exterior derivative of eq.$\left(  \ref{pr2}\right)  $,%
\begin{equation}
\frac{1}{2}C_{AB}^{~~C}C_{DC}^{~~E}\theta^{D}\wedge\theta^{A}\wedge\theta
^{B}=0.
\end{equation}

\newsection{Connections over principal bundles}

By definition, any principal fibre bundle $\mathcal{P}$ is locally a structure
of the form $U\times\mathcal{G}$. It seems reasonable though to expect that
the tangent space associated with the fibre bundle can be decomposed in a
direct sum structure. This decomposition can be made in such a way that the
tangent space $T_{p}(\mathcal{P})$ is the direct sum of a \textit{vertical} component $V_{p}(\mathcal{P})$ tangent to
the fibre, and a horizontal component $H_{p}(\mathcal{P})$ which is orthogonal
respect to $V_{p}(\mathcal{P})$. This operation is systematically implemented by
using the so called \textit{Ehresmann Connection} \cite[Chapter 10]{Nakahara}.

Let $T_{p}\left(  \mathcal{P}\right)  $ be the tangent space associated to the principal
bundle $\mathcal{P}$ at the point $p$, and let us decompose it as $T_{p}\left(
\mathcal{P}\right)  =V_{p}\left(  \mathcal{P}\right)  \oplus H_{p}\left(
\mathcal{P}\right)  $. Here, the vertical space $V_{p}\left(  \mathcal{P}\right)  $ corresponds to
the tangent space respect to the fibre $\mathcal{G}$ and the horizontal subspace $H_{p}\left(
\mathcal{P}\right)  $, its orthogonal complement. This motivates the following

\begin{definition}
The vertical subspace $V_{p}\left(  \mathcal{P}\right)  $ of $T_{p}\left(
\mathcal{P}\right)  $ is defined as the kernel of the pushforward of the projection map $\pi_{\ast}$%
\begin{equation}
V_{p}\left(  \mathcal{P}\right)  =\left\{  Y\in T_{p}\left(  \mathcal{P}%
\right)  \text{ such that }\pi_{\ast}\left(  Y\right)  =0\right\}  .
\label{pr17}%
\end{equation}

\end{definition}

In order to define the horizontal subspace $H_{p}\left(  \mathcal{P}\right)  $
in a unique way, it is necessary to define first the notion of connection over
the principal bundle:
\begin{definition}
Let $\mathrm{\omega}\in\Omega^{1}\left(  \mathcal{P}\right)  \otimes
\mathfrak{g}$ be a one form over $\mathcal{P}$ valued in the Lie algebra
$\mathfrak{g}$ satisfying the following conditions
\begin{enumerate}
\item $\mathcal{\omega}$ is continuous and smooth on $\mathcal{P},$
\item for all $Y\in V_{p}\left(  \mathcal{P}\right)  $, it holds%
\begin{equation}
\mathrm{\omega}\left(  Y\right)  =\mathrm{\theta}\left(  y_{\alpha}\left(
p\right)  \right)  \left(  y_{\alpha\ast}Y\right)  =\mathsf{Y}, \label{pr4}%
\end{equation}
\item the right action of the group is given by%
\begin{equation}
R_{g}^{\ast}\left(  \mathrm{\omega}\left(  p\right)  \right) =\mathrm{Ad}_{g^{-1}}(\omega(p)) =g^{-1}%
\mathrm{\omega}\left(  p\right)  g
\end{equation}
where $R_{g}^{\ast}$ is the pull-back induced by the right action $p^{\prime
}=pg$.
\end{enumerate}
\end{definition}
The form $\mathrm{\omega}$ satisfying these properties is called
one--form \textit{connection}. From the first property one sees that
$\mathrm{\omega}$ is globally defined on $\mathcal{P}$. The second condition
implies that $\mathrm{\omega}$ associates to every vector in $V_{p}\left(
\mathcal{P}\right)  $ its corresponding element in the Lie algebra $\mathfrak{g}$ where
$y_{\alpha\ast}$ represents the pushforward $y_{\alpha\ast}:T_{p}\left(
\mathcal{P}\right)  \longrightarrow T_{y_{\alpha}}(\mathcal{G})$ induced by the map $y_{\alpha
}:\mathcal{P}\longrightarrow\mathcal{G}$. Given this definition, it is time to
define the horizontal subspace.

\begin{definition}
The horizontal subspace it is defined as the kernel of $\mathcal{\omega}$%
\begin{equation}
H_{p}\left(  \mathcal{P}\right)  \equiv\left\{  X\in T_{p}\left(
\mathcal{P}\right)  \text{ such that }\mathrm{\omega}\left(  X\right)
=0\right\}  .
\end{equation}
\end{definition}

In this way, given a one-form connection $\mathrm{\omega}$, a unique
definition for $H_{p}\left(  \mathcal{P}\right)  $ is constructed. This
definition fulfils the consistency condition
\begin{equation}
H_{pg}\left(  \mathcal{P}\right)  =R_{g\ast}H_{p}\left(  \mathcal{P}\right),
\label{pr8}%
\end{equation}
so the distribution of $H_{p}\left(  \mathcal{P}\right)  $ it is invariant
under the action of $\mathcal{G}$. In fact, let $X\in H_{p}\left(
\mathcal{P}\right)  ;$ then%
\begin{equation}
\mathrm{\omega}\left(  R_{g\ast}X\right)  =R_{g}^{\ast}\mathrm{\omega}\left(
X\right)  =0,
\end{equation}
and therefore $R_{g\ast}X\in H_{pg}\left(  \mathcal{P}\right)  .$

\subsection{The gauge potential}

The one-form connection $\mathrm{\omega}$ is intimately related with the
concept of gauge potential in the context of gauge theories.

\begin{definition}
Let $\sigma_{\alpha}:U_{\alpha}\longrightarrow P$ be a local section and
$\mathrm{\omega}$ a one-form connection over $\mathcal{P}$. Then the gauge
potential is defined as
\begin{equation}
\mathcal{A}_{\alpha}=\sigma_{\alpha}^{\ast}\left(  \mathrm{\omega}\right)
\label{pr3}%
\end{equation}
where  $\sigma^{\ast}_{\alpha}$ denotes the pull-back map by a local section which projects the connection
$\mathcal{\omega}$ to the open set $U_{\alpha}\subset\mathcal{M}$.
\end{definition}
Now, given two open sets $U_{\alpha}$ and $U_{\beta}$ with non-empty overlap
$U_{\alpha}\cap U_{\beta}\neq\varnothing$ one has two gauge connections
$\mathcal{A}_{\alpha}=\sigma_{\alpha}^{\ast}\left(  \mathrm{\omega}\right)  $
and $\mathcal{A}_{\beta}=\sigma_{\beta}^{\ast}\left(  \mathrm{\omega}\right)
$. In order to find the relation between the two gauge potentials, let us
consider a vector $X\in T_{x}\left(  U_{\alpha}\cap U_{\beta}\right)  $. The
direct image $\sigma_{\beta\ast}:T_{x}\left(  U_{\alpha}\cap U_{\beta}\right)
\longrightarrow T_{\sigma_{\beta}\left(  x\right)  }\left(  \mathcal{P}%
\right)  $ can be expressed as it follows%
\begin{align}
\sigma_{\beta\ast}X &  =\left[  \sigma_{\alpha}\left(  x\right)  \tau
_{\alpha\beta}\left(  x\right)  \right]  _{\ast}\left(  X\right)  \nonumber\\
&  =R_{\tau_{\alpha\beta\ast}}\circ\sigma_{\alpha\ast}\left(  X\right)
+\sigma_{\alpha}\left(  x\right)  _{\ast}\circ\tau_{\alpha\beta\ast}\left(
X\right).
\end{align}
Considering the following direct images%
\begin{align}
\tau_{\alpha\beta\ast} &  :T_{x}\left(  U_{\alpha}\cap U_{\beta}\right)
\longrightarrow T_{\tau_{\alpha\beta}\left(  x\right)  }\left(  \mathcal{G}%
\right)  ,\\
\sigma_{\alpha}\left(  x\right)  _{\ast} &  :T_{\tau_{\alpha\beta}\left(
x\right)  }\left(  \mathcal{G}\right)  \longrightarrow T_{\sigma_{\alpha
}\left(  x\right)  \tau_{\alpha\beta}\left(  x\right)  }\left(  \mathcal{P}%
\right)  ,\\
\sigma_{\alpha\ast} &  :T_{x}\left(  U_{\alpha}\cap U_{\beta}\right)
\longrightarrow T_{\sigma_{\alpha}\left(  x\right)  }\left(  \mathcal{P}%
\right)  ,\\
R_{\tau_{\alpha\beta\ast}} &  :T_{\sigma_{\alpha}\left(  x\right)  }\left(
\mathcal{P}\right)  \longrightarrow T_{\sigma_{\alpha}\left(  x\right)
\tau_{\alpha\beta}\left(  x\right)  }\left(  \mathcal{P}\right),
\end{align}
eq.$\left(  \ref{pr3}\right)  $ reads%
\begin{align}
\mathcal{A}_{\beta}\left(  X\right)   &  =\sigma_{\beta}^{\ast}\mathrm{\omega
}\left(  X\right) \nonumber\\
&  =\mathrm{\omega}\left(  \sigma_{\beta\ast}X\right) \nonumber\\
&  =\mathrm{\omega}\left(  R_{\tau_{\alpha\beta\ast}}\circ\sigma_{\alpha\ast
}\left(  X\right)  \right)  +\mathrm{\omega}\left(  \sigma_{\alpha}\left(
x\right)  _{\ast}\circ\tau_{\alpha\beta\ast}\left(  X\right)  \right).
\end{align}
Using eq.$\left(  \ref{pr4}\right)  $, we find%
\begin{equation}
\mathrm{\omega}\left(  \sigma_{\alpha}\left(  x\right)  _{\ast}\circ
\tau_{\alpha\beta\ast}\left(  X\right)  \right)  =\mathrm{\theta}\left(
\tau_{\alpha\beta}\right)  \left(  \tau_{\alpha\beta\ast}\left(  X\right)
\right)  ,
\end{equation}
and using eq.$\left(  \ref{pr5}\right)  $%
\begin{align}
\mathcal{A}_{\beta}\left(  X\right)   &  =\sigma_{\alpha}^{\ast}\left(
\tau_{\alpha\beta}^{-1}\mathcal{\omega}\tau_{\alpha\beta}\right)  \left(
X\right)  +\left(  \tau_{\alpha\beta}^{-1}\dd_{\mathcal{G}}\tau_{\alpha\beta
}\right)  \left(  \tau_{\alpha\beta\ast}X\right)  \nonumber\\
&  =\tau_{\alpha\beta}^{-1}\sigma_{\alpha}^{\ast}\left(  \mathcal{\omega
}\right)  \tau_{\alpha\beta}\left(  X\right)  +\tau_{\alpha\beta}^{\ast
}\left(  \tau_{\alpha\beta}^{-1}\dd_{\mathcal{G}}\tau_{\alpha\beta}\right)
\left(  X\right)  \nonumber\\
&  =\left(  \tau_{\alpha\beta}^{-1}\mathcal{A}_{\alpha}\tau_{\alpha\beta}%
+\tau_{\alpha\beta}^{-1}\dd_{\mathcal{M}}\tau_{\alpha\beta}\right)  \left(
X\right)  ,
\end{align}
we finally arrive to%
\begin{equation}
\mathcal{A}_{\beta}=\tau_{\alpha\beta}^{-1}\mathcal{A}_{\alpha}\tau
_{\alpha\beta}+\tau_{\alpha\beta}^{-1}\dd_{\mathcal{M}}\tau_{\alpha\beta}.
\label{pr6}%
\end{equation}

The relation between the gauge connections eq.$\left(  \ref{pr6}\right)  $ is
what in physics it is known as \textit{gauge transformations}. If the
principal bundle $\mathcal{P}$ is not trivial, there always exists
$\tau_{\alpha\beta}\left(  x\right)  \in\mathcal{G}$ such that the connection
$\mathcal{A}_{\alpha}$ over $U_{\alpha}$ and the connection $\mathcal{A}%
_{\beta}$ over $U_{\beta}$ are related by eq.$\left(  \ref{pr6}\right)  $ over
$U_{\alpha}\cap U_{\beta}$. In the future we write $g=\tau_{\alpha\beta
}\left(  x\right)  $ and $\dd=\dd_{\mathcal{M}}$ to avoid overload notation. In
this way eq.$\left(  \ref{pr6}\right)  $ reads%
\begin{equation}
\mathcal{A\longrightarrow A}^{\prime}=g^{-1}\mathcal{A}g+g^{-1}\dd g \ . \label{pr7}%
\end{equation}

\subsection{Covariant derivative and curvature}

The definition of connection over $\mathcal{P}$ implies that $\mathrm{\omega}$
projects any vector in $T_{p}\left(  \mathcal{P}\right)  $ only to its vertical
component. It seems natural then to consider a $b$--form $B\in$
$\Omega^{b}\left(  \mathcal{P}\right)  \otimes\mathfrak{g}$ which projects vectors in $T_{p}\left(  \mathcal{P}\right)  $ to the
horizontal component $H_{p}\left(  \mathcal{P}\right)  $.
Let $h:T_{p}\left(  \mathcal{P}\right)  \longrightarrow H_{p}\left(
\mathcal{P}\right)  $ be a map such that to every vector $Z\in T_{p}\left(
\mathcal{P}\right)  $ it associate its projection over the horizontal subspace
$Z^{h}=h\left(  Z\right)  $. With the help of the map $h$, tensor and pseudo
tensorial forms will be defined.
\begin{definition}
a $b$-form $B$ over $\mathcal{P}$ is called pseudotensorial when it
satisfies
\begin{equation}
R_{g}^{\ast} B\left(  pg\right)    =g^{-1}Bg \ ,
\end{equation}
where $R_{g}^{\ast}$ is the reciprocal image induced by the right action
$p^{\prime}=pg$. The form $B$ is called tensorial if
\begin{equation}
B\left(  Z_{1},\ldots,Z_{b}\right)  =B\left(  Z_{1}%
^{h},\ldots,Z_{b}^{h}\right)  ,
\end{equation}
where $Z_{i}^{h}=h\left(  Z_{i}\right)  $, $i=1,\ldots,b$.
\end{definition}

The name \textit{tensorial form} has it origin in the following theorem.

\begin{theorem}
Let $B$ be a tensorial form over $\mathcal{P}$. Then, on the
intersection $U_{\alpha}\cap U_{\beta}\neq\varnothing$ it follows%
\begin{equation}
\mathcal{B}_{\beta}=\tau_{\alpha\beta}^{-1}\mathcal{B}_{\alpha}\tau
_{\alpha\beta},%
\end{equation}
where the $b$--form $\mathcal{B}\in\mathcal{M}$ corresponds to $\mathcal{B}%
_{\alpha}=\sigma_{\alpha}^{\ast}B$.

\begin{proof}
In fact, we have%
\begin{align}
R_{g}^{\ast}B\left(  p\right)  \left(  Z_{1}\left(  p\right)
,\ldots,Z_{b}\left(  p\right)  \right)   &  =R_{g}^{\ast}B\left(
p\right)  \left(  Z_{1}^{h}\left(  p\right)  ,\ldots,Z_{b}^{h}\left(
p\right)  \right)   \nonumber \\
&  =B\left(  pg\right)  \left(  R_{\ast g}Z_{1}^{h}\left(  p\right)
,\ldots,R_{\ast g}Z_{b}^{h}\left(  p\right)  \right)  .
\end{align}
Using eq.$\left(  \ref{pr8}\right)  $, it follows
\begin{align}
R_{g}^{\ast}B\left(  p\right)  \left(  Z_{1}\left(  p\right)
,\ldots,Z_{b}\left(  p\right)  \right)   &  =B\left(  pg\right)
\left(  Z_{1}^{h}\left(  pg\right)  ,\ldots,Z_{b}^{h}\left(  pg\right)
\right) \nonumber \\
&  =B\left(  pg\right)  \left(  Z_{1}\left(  pg\right)  ,\ldots
,Z_{b}\left(  pg\right)  \right) \ .
\end{align}
Thus, when $B$ is tensorial,%
\begin{equation}
R_{g}^{\ast}B\left(  p\right)  \left(  Z_{1}\left(  p\right)
,\ldots,Z_{b}\left(  p\right)  \right)  =B\left(  pg\right)  \left(
Z_{1}\left(  pg\right)  ,\ldots,Z_{b}\left(  pg\right)  \right)  . \label{pr9}%
\end{equation}
Repeating the procedure used to deduce the gauge transformation eq.$\left(
\ref{pr7}\right)  $ one concludes that, in the intersection $U_{\alpha}\cap
U_{\beta},$%
\begin{equation}
\mathcal{B}_{\beta}=\tau_{\alpha\beta}^{-1}\mathcal{B}_{\alpha}\tau
_{\alpha\beta}. \label{pr10}%
\end{equation}

\end{proof}
\end{theorem}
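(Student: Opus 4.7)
The plan is to mimic almost verbatim the derivation of the inhomogeneous gauge transformation law~\Ref{pr7}, but to show that the Maurer--Cartan contribution, which is the source of the inhomogeneous piece, is killed here by the tensorial character of $B$. Thus only the adjoint conjugation by $\tau_{\alpha\beta}$ survives, giving the homogeneous transformation law~\Ref{pr10}.

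Concretely, I would pick a point $x\in U_{\alpha}\cap U_{\beta}$ and a tangent vector $X\in T_{x}(U_{\alpha}\cap U_{\beta})$, and start from the relation $\sigma_{\beta}(x)=\sigma_{\alpha}(x)\,\tau_{\alpha\beta}(x)$ established earlier. Applying the Leibniz rule for the pushforward, as was done for the gauge potential, yields the splitting
\begin{equation}
\sigma_{\beta\ast}X=R_{\tau_{\alpha\beta}\ast}\sigma_{\alpha\ast}X+\sigma_{\alpha}(x)_{\ast}\tau_{\alpha\beta\ast}X,
\end{equation}
so that $\mathcal{B}_{\beta}(X)=\sigma_{\beta}^{\ast}B(X)=B\left(R_{\tau_{\alpha\beta}\ast}\sigma_{\alpha\ast}X\right)+B\left(\sigma_{\alpha}(x)_{\ast}\tau_{\alpha\beta\ast}X\right)$. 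For the first term I would apply the equivariance property $R_{g}^{\ast}B=g^{-1}Bg$ with $g=\tau_{\alpha\beta}(x)$, which reproduces exactly $\tau_{\alpha\beta}^{-1}\,\mathcal{B}_{\alpha}(X)\,\tau_{\alpha\beta}$.

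The main step, and the only place where the argument deviates from the derivation for $\mathcal{A}$, is to show that the second contribution vanishes. The vector $\sigma_{\alpha}(x)_{\ast}\tau_{\alpha\beta\ast}X$ lives in $T_{\sigma_{\alpha}(x)\tau_{\alpha\beta}(x)}(\mathcal{P})$ and is obtained by translating a tangent vector on $\mathcal{G}$ along the fibre over the fixed base point $x$; projecting with $\pi_{\ast}$ gives zero because $\pi\circ\sigma_{\alpha}(x)\circ\tau_{\alpha\beta}$ is constant in $x$. By~\Ref{pr17} this vector is therefore vertical, and since $B$ is tensorial, its horizontal projection is zero, so $B$ annihilates it. This is exactly where the Maurer--Cartan piece $\tau_{\alpha\beta}^{-1}\dd\tau_{\alpha\beta}$ would have appeared for a mere pseudotensorial form through the second condition~\Ref{pr4}; tensoriality is precisely the stronger condition that suppresses it.

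Combining the two computations and noting that the identity holds for all $X$, one obtains the intertwining relation $\mathcal{B}_{\beta}=\tau_{\alpha\beta}^{-1}\,\mathcal{B}_{\alpha}\,\tau_{\alpha\beta}$ on $U_{\alpha}\cap U_{\beta}$. I expect the delicate step to be a careful verification of verticality of $\sigma_{\alpha}(x)_{\ast}\tau_{\alpha\beta\ast}X$; once that is in place, the rest is an application of the equivariance axiom together with the definition of pull-back.
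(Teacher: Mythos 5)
Your proposal is correct and follows essentially the same route as the paper: the paper's proof establishes the equivariance identity and then invokes ``repeating the procedure used to deduce the gauge transformation'', which is exactly the computation you carry out, with the decomposition $\sigma_{\beta\ast}X=R_{\tau_{\alpha\beta}\ast}\sigma_{\alpha\ast}X+\sigma_{\alpha}(x)_{\ast}\tau_{\alpha\beta\ast}X$ and the observation that the second (Maurer--Cartan) piece is vertical and hence annihilated by the tensorial (horizontality) condition. You in fact make explicit the one step the paper leaves implicit; just note that for a general $b$-form the same argument goes through by multilinearity, since every cross term contains at least one vertical argument.
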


This theorem lead us naturally to another definition; the definition of the
exterior covariant derivative operator $\mathcal{D}$

\begin{definition}
Let $B$ be a pseudotensorial form over $\mathcal{P}$. Then, the
exterior covariant derivative is defined as%
\begin{equation}
\mathcal{D}B=\dd_{\mathcal{P}}B\circ h,
\end{equation}
where $\dd_{\mathcal{P}}$ denotes the exterior derivative in $\mathcal{P}$.
Using the same arguments that were used to show eq.$\left(  \ref{pr9}\right)  $,
it is possible to show that if $B$ is a pseudotensorial form, then
$\mathcal{D}B$ is a $\left(  b+1\right) $-tensorial form.
\end{definition}

Since the connection $\mathrm{\omega}$ is pseudotensorial, it is possible to
define the tensorial two-form $F$ as
\begin{equation}
F=\mathcal{D}\mathrm{\omega},
\end{equation}
which is called the \textit{curvature} of the principal bundle $\mathcal{P}$. Also, it
is possible to show (see~\cite[Theorem 2.13]{deAz95})  directly that%
\begin{equation}
F=\dd_{\mathcal{P}}\mathrm{\omega}+\frac{1}{2}\left[  \mathrm{\omega
},\mathrm{\omega}\right] \ .
\end{equation}
Given the curvature $F$ over the principal bundle $\mathcal{P}$, it
is direct to define the gauge curvature or \textit{field strength} over an open
set $U_{\alpha}$ as $\mathcal{F}_{\alpha}=\sigma_{\alpha}^{\ast
}F$. In terms of eq.$\left(  \ref{pr3}\right)  $, the curvature
$\mathcal{F}_{\alpha}$ corresponds to%
\begin{equation}
\mathcal{F}_{\alpha}=\dd_{\mathcal{M}}\mathcal{A}_{\alpha}+\mathcal{A}_{\alpha
}\wedge\mathcal{A}_{\alpha}.
\end{equation}
Now, $\mathrm{F}$ is tensorial so it satisfies eq.$\left(  \ref{pr10}\right)
$%
\begin{equation}
F_{\beta}=\tau_{\alpha\beta}^{-1}\mathcal{F}_{\alpha}\tau
_{\alpha\beta},%
\end{equation}
and it is direct to verify the consistency with eq.$\left(  \ref{pr6}\right)
$.

Given a tensorial $b$-form $B\in\Omega^{b}\left(  \mathcal{P}\right)
\otimes\mathfrak{g}$, it is possible to show (see~\cite[p.94]{deAz95}) that its covariant
derivative $\mathcal{D}B$ is given by%
\begin{equation}
\mathcal{D}B=\dd_{\mathcal{P}}B+\left[  \mathrm{\omega
},B\right].
\end{equation}
Moreover, the derivative operator $\mathcal{D}$ satisfy the Bianchi identities%
\begin{align}
\mathcal{DD}B  &  =\left[  F,B\right]  , \\
\mathcal{D}F  &  =0. \label{pr19}%
\end{align}

In the base space $\mathcal{M}$, the covariant derivative is obtained by the projection
\begin{align}
D_{\alpha}\mathcal{B}_{\alpha}  &  =\sigma_{\alpha}^{\ast}\left(
\mathcal{D}B\right)  \nonumber\\
&  =\dd_{\mathcal{M}}\mathcal{B}_{\alpha}+\left[  \mathcal{A}_{\alpha
},\mathcal{B}_{\alpha}\right]  , \label{pr12}%
\end{align}
and the Bianchi identities are given by%
\begin{align}
D_{\alpha}D_{\alpha}\mathcal{B}_{\alpha}  &  =\left[  \mathcal{F}_{\alpha
},\mathcal{B}_{\alpha}\right]  ,\\
D_{\alpha}\mathcal{F}  &  =0.
\end{align}
Note that, by definition, $\mathcal{D}B$ is tensorial and therefore
it satisfies
\begin{equation}
D_{\beta}\mathcal{B}_{\beta}=\tau_{\alpha\beta}^{-1}D_{\alpha}\mathcal{B}%
_{\alpha}\tau_{\alpha\beta} \ ,%
\end{equation}
and again, this is consistent with the gauge transformation eq.$\left(
\ref{pr6}\right)  $.

\newsection{Equivariant principal bundles}

Let $\mathcal{P}$ be a principal bundle with base $\mathcal{M}$. If
$\mathcal{M}$ admits the action of a Lie group $G$, we say that $\mathcal{P}$
is $G-$equivariant if the $G-$action lifts to $\mathcal{P}$. This means that
the $G-$action on $\mathcal{P}$ is such that the diagram (\ref{equivfig}) commutes, %
\begin{figure}[h!]
\centering
\begin{tikzcd}[column sep=4em,row sep=4em]
\mathcal{P} \arrow[]{r}{g}\arrow[]{d}{\pi}
& \mathcal{P} \arrow[]{d}{\pi}\\
\mathcal{M} \arrow[]{r}{g}&   \mathcal{M}
\end{tikzcd}
\caption{Equivariant condition} \label{equivfig}
\end{figure}
where $\pi:\mathcal{P\rightarrow M}$ is the projection map and $g$ denotes
both maps $\mathcal{P\rightarrow P}$ and $\mathcal{M\rightarrow M}$.

In what follows we focus on product manifolds of the form $ \mathcal{M}=M\times G/H$, where $M$ is a closed
manifold and $H\subset G$ is a subgroup. For $g\in G$ we denote the coset $gH$
as $\left[  g\right]  $. The action of $G$ on $\mathcal{M}$ is given by%
\begin{equation}
g\left(  x,\left[  g^{\prime}\right]  \right)  =\left(  x,\left[  gg^{\prime
}\right]  \right)\ ,
\end{equation}
for $x\in M$ and $g,g^{\prime}\in G$. So the $G-$action on $\mathcal{M}$ is
extended to be the trivial action on $M$.

Given an $G-$equivariant bundle over $\mathcal{M}$, we can induce an
$H-$equivariant principal bundle over $M$ by restriction%
\begin{equation}
\mathcal{P\mapsto}\left.  \mathcal{P}\right\vert _{\left[e \right]   \times
M}\text{ .}%
\end{equation}
The $H-$action on $M$ is also trivial. However, the action on the fibres of
$\left.  \mathcal{P}\right\vert _{\left[e \right]   \times
M}$ is not
necessary the trivial one. The inverse operation of restriction is called
induction: If $P$ denotes an $H-$equivariant principal bundle over $M$, we can
define
\begin{equation}
\mathcal{P}=G\times_{H}P\label{epb4}%
\end{equation}
where the quotient space $G\times_{H}P$ is the set of equivalent classes
$G\times P$ with respect to the equivalence relation%
\begin{equation}
\left(  g,p\right)  \sim\left(  gh,h^{-1}p\right)  \text{, where }g\in
G\text{, }h\in H\text{ and }p\in P.
\end{equation}
The projection map $\pi:$ $\mathcal{P}\rightarrow\mathcal{M}$ is given by%
\begin{equation}
\pi\left(  \left[  g,p\right]  \right)  =\left(  \left[  g\right]  ,\pi\left(
p\right)  \right)  \text{,}%
\end{equation}
where the projection map in the right hand side is such that $\pi:P\rightarrow
M$. In this way, there is a one-to-one correspondence between $G-$equivariant
principal bundles over $\mathcal{M}$ and $H-$equivariant principal bundles
over $M$. From now we assume $P=\left.  \mathcal{P}\right\vert _{\left[e \right]   \times
M}$ where $\mathcal{P}$ is $G-$equivariant.

The $H-$action on $P$ can be locally described by Lie group homomorphisms
$\rho_{x}:H\rightarrow\mathcal{G}$ as we explain in what follows. Let $\left\{
U_{\alpha}\right\}  _{\alpha\in I}$ be a good open covering of $M$ with
$I $ as a set of indices, and let $\left\{  \sigma_{\alpha}\right\}
_{\alpha\in I}$ be a collection of local sections $\sigma_{\alpha}:U_{\alpha
}\subset M\rightarrow P$. For $\alpha\in I$, the homomorphisms $\rho
_{\alpha,x}:H\rightarrow\mathcal{G}$ are defined by%
\begin{equation}
h\sigma_{\alpha}\left(  x\right)  =\sigma_{\alpha}\left(  x\right)
\rho_{\alpha,x}\left(  h\right)  \text{, with }h\in H\text{ and }x\in
U_{\alpha} \label{epb1}%
\end{equation}

By \cite[Lemma]{HJV} the local sections $\sigma_{\alpha}$ can be chosen
in such a way that the homomorphism $\rho_{\alpha,x}$ does not depend on $x$.
In this way we denote the homomorphism in eq.$\left(  \ref{epb1}\right)  $
simple as $\rho_{\alpha}$. Let $\tau_{\alpha\beta}:U_{\alpha}\cap U_{\beta
}\rightarrow\mathcal{G}$ be the transition functions of $P$ such that
$\sigma_{\alpha}=\tau_{\alpha\beta}\sigma_{\beta}$. Then
\begin{equation}
\tau_{\alpha\beta}\left(  x\right)  \rho_{\beta}\left(  h\right)
=\rho_{\alpha}\left(  h\right)  \tau_{\alpha\beta}\left(  x\right) \ ,
\label{epb2}%
\end{equation}
with $h\in H$ and $x\in U_{\alpha}\cap U_{\beta}\neq\varnothing$
Therefore, on non-empty intersections $U_{\alpha}\cap U_{\beta}$ the
homomorphisms $\rho$ are related by conjugation.

In order to define local trivialisations of $P$ with respect to which the
homomorphisms agree on all patches $U_{\alpha}$, we fix $\alpha_{1}\in I$ and let
$\rho_{\alpha_{1}}:H\rightarrow\mathcal{G}$ be the homomorphism defined by%
\begin{equation}
h\sigma_{\alpha_{1}}\left(  x\right)  =\sigma_{\alpha_{1}}\left(  x\right)
\rho_{\alpha_{1}}\left(  h\right) \ , 
\end{equation}
 with $h\in H$ and $x\in
U_{\alpha}$.%

According to eq.$\left(  \ref{epb2}\right)  $ the homomorphisms $\rho_{\alpha
}$ lie in the same conjugacy class on overlapping $U_{\alpha}$. Therefore, it
is possible to find constant $g_{\alpha}\in\mathcal{G}$ such that%
\begin{equation}
\rho_{\alpha}=g_{\alpha}\rho_{\alpha_{1}}g_{\alpha}^{-1} \ . \label{epb3}%
\end{equation}
These transitions functions $g_{\alpha}$ are not unique. If one defines
sections $\tilde{\sigma}_{\alpha}=\sigma_{\alpha}\left(  x\right)g_{\alpha}  $
for $\alpha\in I$, then using eq.$\left(  \ref{epb1}\right)  $ we get%
\begin{equation}
h\tilde{\sigma}_{\alpha}\left(  x\right)  =\sigma_{\alpha}\left(  x\right)
\rho_{\alpha}\left(  h\right)  g_{\alpha}=\tilde{\sigma}_{\alpha}\left(
x\right)  \rho_{\alpha_{1}}\left(  h\right)  \text{ , }x\in U_{\alpha}%
\end{equation}
Thus, the trivialization defined in terms of the local sections $\left\{
\tilde{\sigma}_{\alpha}\right\}  _{\alpha\in I}$ is such that the action of
$H$ is characterized by only one homomorphism $\rho_{\alpha_{1}}%
:H\rightarrow\mathcal{G}$. Note that different choices of $g_{\alpha}$ in
eq.$\left(  \ref{epb3}\right)  $ leads in general to different trivialization
of $P$. However, the statement that $\rho_{\alpha_{1}}$ is the same across the
open covering $\left\{  U_{\alpha}\right\}  _{\alpha\in I}$ still holds in
this new local trivialization.

Now, let $\tilde{\tau}_{\alpha\beta}:U_{\alpha}\cap U_{\beta}\rightarrow
\mathcal{G}$ denote the transition function with respect to $\tilde{\sigma
}_{\alpha}$. Under change of sections we have
\begin{align}
\tilde{\sigma}_{\beta}\left(  x\right)  \rho_{\alpha_{1}}\left(  h\right)
&=  h\tilde{\sigma}_{\beta}\left(  x\right)  \nonumber \\ 
&=  h\tilde{\sigma}_{\alpha}\left(
x\right)  \tilde{\tau}_{\alpha\beta}\left(  x\right)   \nonumber \\
&= \tilde{\sigma}_{\beta
}\left(  x\right)  \tilde{\tau}_{\alpha\beta}^{-1}\left(  x\right)
\rho_{\alpha_{1}}\left(  h\right)  \tau_{\alpha\beta}\left(  x\right) \ ,
\end{align}
and therefore%
\begin{equation}
\rho_{\alpha_{1}}=\tilde{\tau}_{\alpha\beta}^{-1}\rho_{\alpha_{1}}\tilde\tau
_{\alpha\beta} \ .%
\end{equation}
This means that the structure group of $P$ reduces to $\mathcal{H}%
=\mathcal{Z}_{\mathcal{G}}\left(  \rho_{\alpha_{1}}\left(  H\right)  \right)
$, the centralizer in $\mathcal{G}$ of the image of $H$ by $\rho_{\alpha_{1}}%
$. 

The analysis presented in this section automatically provides a systematic
recipe for constructing bundles: Start with a homomorphism $\rho
:H\rightarrow\mathcal{G}$ and construct a principal bundle $P_{M}$ over $M$
with fibre $\mathcal{H}=\mathcal{Z}_{\mathcal{G}}\left(  \rho\left(  H\right)
\right)  $. Moreover, $P_{M}$ also extends to an $G-$equivariant principle
bundle on $\mathcal{M}=G/H\times M$ by virtue of eq.$\left(  \ref{epb4}%
\right)  $.

\newsection{Invariant connections} \label{sec:invconn}

A connection $\mathsf{\omega}$ over a principal bundle $\mathcal{P}$ over
$\mathcal{M}=M\times G/H$ is called $G-$invariant if
\begin{equation}
g^{\ast}\mathsf{\omega}=\mathsf{\omega}\text{ , \ for all }g\in G.
\end{equation}
Here the map $g^{\ast}$ denotes the pullback of the map
$g:\mathcal{P\rightarrow P}$. The classification of invariant connections in
terms of the structures on the bundles $P=\left.  \mathcal{P}\right\vert
_{\left\{  H\right\}  \times M}$ over $M$ is given by Wang's theorem~\cite{wang1958} in the special case when $M$ is a point. The generalization to
the case of $M$ being contractible is given in~\cite{HJV,JSL}. We follow the
treatment adopted in the later case.

Recall the Maurer--Cartan form defined in (\ref{MCform}). The Maurer--Cartan
form on $G$ is denoted by $\mathsf{\theta}_{G}:T_{g}\left(  G\right)
\rightarrow\mathfrak{g}$, with $g\in G$. To identify $G-$invariant connections
on $\mathcal{P}$ we define the maps%
\begin{align}
\psi_{\alpha}  & :G\times U_{\alpha}\times\mathcal{G\longrightarrow}\left.
\mathcal{P}\right\vert _{G/H\times U_{\alpha}}\text{ } \, \nonumber \\
& \left(  g,x,q\right)  \longmapsto g\sigma_{\alpha}\left(  x\right)  q \label{projconn}
\end{align}
for all $\alpha\in I$. According to \cite[Theorem 2]{HJV}, the pull-back under
$\psi_{\alpha}$ of an invariant connection $\mathsf{\omega}$ defined on
$\mathcal{P}$ is given by%
\begin{equation}
\psi_{\alpha}^{\ast}\mathsf{\omega}_{\left(  g,x,q\right)  }=\mathrm{Ad}(q^{-1})  \left(  \Phi_{\alpha}\left(  x\right)  \circ
\mathsf{\theta}_{G}+\mu_{\alpha}\right)  +\mathsf{\theta}_{\mathcal{G}} \label{pullpsi}
\end{equation}
where
\begin{description}
\item[a)] $\Phi_{\alpha}\left(  x\right)  :\mathfrak{g}\rightarrow
\mathfrak{q}$, being $\mathfrak{q}$ the Lie algebra of $\mathcal{G}$, is a
family of linear maps which depend on $x\in U_{\alpha}$ satisfying%
\begin{align}
\Phi_{\alpha}\left(  x\right)  \circ\mathrm{Ad}\left(  h\right)    &
=\mathrm{Ad}\left(  \rho_{\alpha}\left(  h\right)  \right)  \circ\Phi_{\alpha
}\left(  x\right)  \text{ \ with }h\in H\text{ ,} \label{psicond2}\\
\Phi_{\alpha}\left(  x\right)  \left(  \mathsf{X}_{0}\right)    &
=\rho_{\alpha\ast}\left(  \mathsf{X}_{0}\right)  \text{ for }\mathsf{X}_{0}%
\in\mathfrak{h}\text{ ,} \label{psicond}%
\end{align}
where $\mathfrak{h}$ is the Lie algebra of $H$ and $\rho_{\alpha\ast
}:\mathfrak{h}\rightarrow\mathfrak{q}$ is the Lie algebra homomorphism induced
by $\rho_{\alpha}$.

\item[b)] $\mu_{\alpha}\in\Omega^{1}\left(  U_{\alpha},\mathfrak{q}\right)  $
such that%
\begin{equation}
\mu_{\alpha}=\mathrm{Ad}\left(  \rho_{\alpha}\left(  h\right)  \right)
\mu_{\alpha}\text{ , with }h\in H \label{connacond}
\end{equation}
and therefore $\mu_{\alpha}$ takes values in the Lie algebra of $\mathcal{H}%
=\mathcal{Z}_{\mathcal{G}}\left(  \rho\left(  H\right)  \right)  $.
\end{description}

We now look at the behaviour of the invariant connection $\mathsf{\omega}$ in
terms of the geometric data on $M$. In order to do so, one needs to specify
how $\Phi_{\alpha}$ and $\mu_{\alpha}$ change under transformations in
non-empty overlaps $U_{\alpha}\cap U_{\beta}\neq\varnothing$. To this end,
define the embeddings%
\begin{align}
\iota_{\alpha}  & :U_{\alpha}\hookrightarrow G\times U_{\alpha}\times
\mathcal{G} \nonumber \\
x  & \mapsto\left(  e_{G},x,e_{\mathcal{G}}\right)
\end{align}
for $\alpha\in I$, with $e_{G}$ and $e_{\mathcal{G}}$ the neutral elements in
$G$ and $\mathcal{G}$ respectively. Let $X\in T_{x}\left(  M\right)  $ be a
tangent vector at the point $x\in$ $U_{\alpha}\cap U_{\beta}$. Then $\iota_{\alpha
}^{\ast}\psi_{\alpha}^{\ast}\mathsf{\omega}\left(  X\right)  =\mu_{\alpha}(X)$,
and analogously for $\beta\in I$. Representing the tangent vector $X$ in terms
of a path $\gamma\left(  t\right)  :\left(  -\varepsilon,\varepsilon\right)
\rightarrow U_{\alpha}\cap U_{\beta}$, the pushforward of $X$ under
$\psi_{\beta\ast}\iota_{\beta\ast}$ is given by%
\begin{align}
\psi_{\beta\ast}\iota_{\beta\ast}X  & =\left.  \frac{\dd}{\dd t}\sigma_{\beta
}\left(  \gamma\left(  t\right)  \right)  \right\vert _{t=0} \nonumber\\
& =\left.  \frac{\dd}{\dd t}\sigma_{\alpha}\left(  \gamma\left(  t\right)  \right)
g_{\alpha\beta}\left(  \gamma\left(  t\right)  \right)  \right\vert _{t=0}  \nonumber\\
& =\left(  \psi_{\alpha\ast}\iota_{\alpha\ast}X\right)  g_{\alpha\beta}%
+\sigma_{\beta}g_{\alpha\beta}^{-1}\dd_{{\scriptsize X}}g_{\alpha\beta} \nonumber \\
& =R_{g_{\alpha\beta}\ast}\psi_{\alpha\ast}\iota_{\alpha\ast}X+\widetilde
{\left(  g_{\alpha\beta}^{-1}\dd_{{\scriptsize X}}g_{\alpha\beta}\right)  } \ .%
\end{align}
Hence,%
\begin{equation}
\mu_{\beta}\left(  X\right)  =\omega\left(  \psi_{\beta\ast}\iota_{\beta\ast
}X\right)  =\mathrm{Ad}\left(  g_{\alpha\beta}^{-1}\right)  \mu_{\alpha
}\left(  X\right)  +g_{\alpha\beta}^{-1}\dd_{{\scriptsize X}}g_{\alpha\beta
}%
\end{equation}
which identifies the collection of $\mu_{\alpha}\in\Omega^{1}\left(
U_{\alpha},\mathfrak{q}\right)  $ as connection on $P$. If we specify a set of
sections $\left\{  \tilde{\sigma}_{\alpha}\right\}  _{\alpha\in I}$ such that
there is  only one $\rho_{\alpha_{1}}:H\rightarrow\mathcal{G}$, then for all
$\alpha\in I$ the connection $\mu_{\alpha}$ takes values in the Lie algebra of
$\mathcal{H}=\mathcal{Z}_{\mathcal{G}}\left(  \rho_{\alpha_{1}}\left(  H\right)  \right)
$. This is consistent with the fact that the structure group of $P$ can be
reduced to $\mathcal{H}$.

In the case of $\Phi_{\alpha}\left(  x\right)  $, note that
\begin{equation}
\left.  \Phi_{\alpha}\left(  x\right)  \circ\mathsf{\theta}_{{\scriptsize G}%
}\right\vert _{g}=\left.  \Phi_{\alpha}\left(  x\right)  \left(
\mathsf{T}_{a}\right)  X^{a}\right\vert _{g}\text{ , for }g\in G
\end{equation}
where $\left\{  \mathsf{T}_{a}\right\}  _{a=1}^{\dim\left(  \mathfrak{g}%
\right)  }$ is a basis for $\mathfrak{g}$ and $X^{a}$ denotes a set of
left-invariant forms dual to $\mathsf{T}_{a}$. Let $y\in G/H\,$\ and let $Y$
be a tangent vector at $y$. Moreover, let $N\subset G/H$ be an open region
around $y$ and assume that there exists a section $\eta:N\rightarrow G$. On
$U_{\alpha}\cap U_{\beta}\neq\varnothing$ define the map%
\begin{align}
\tilde{\eta}  & :N\times\left(  U_{\alpha}\cap U_{\beta}\right)  \rightarrow
G\times\left(  U_{\alpha}\cap U_{\beta}\right)  \times\mathcal{G} \nonumber \\
& \left(  \left[  g\right]  x\right)  \mapsto\left(  \eta\left(  \left[
g\right]  \right)  ,x,e_{\mathcal{G}}\right) \ .
\end{align}
Since $\psi_{\beta\ast}\tilde{\eta}_{\ast}=R_{g_{\alpha\beta}\ast}\psi
_{\alpha\ast}\tilde{\eta}_{\ast}$ one find%
\begin{align}
\left.  \Phi_{\beta}\left(  x\right)  \circ\mathsf{\theta}_{{\scriptsize G}%
}\right\vert _{\eta\left(  y\right)  }\left(  \eta_{\ast}Y\right)    &
=\tilde{\eta}^{\ast}\psi_{\beta}^{\ast}\mathsf{\omega}\left(  Y\right)   \nonumber\\
& =\mathsf{\omega}\left(  \psi_{\beta\ast}\tilde{\eta}_{\ast}Y\right)  \nonumber\\
& =\mathrm{Ad}\left(  g_{\alpha\beta}^{-1}\right)  \psi_{\alpha}^{\ast
}\mathsf{\omega}\left(  \tilde{\eta}_{\ast}Y\right)  \nonumber\\
& =\mathrm{Ad}\left(  g_{\alpha\beta}^{-1}\right)  \left(  \left.
\Phi_{\alpha}\left(  x\right)  \circ\mathsf{\theta}_{{\scriptsize G}%
}\right\vert _{\eta\left(  y\right)  }\left(  \eta_{\ast}Y\right)  \right)  .
\end{align}
As a consequence of the last result, one obtains%
\begin{equation}
\Phi_{\beta}\left(  x\right)  \left(  \mathsf{T}_{a}\right)  =\mathrm{Ad}%
\left(  g_{\alpha\beta}^{-1}\right)  \Phi_{\alpha}\left(  x\right)  \left(
\mathsf{T}_{a}\right) \ ,
\end{equation}
and then $\Phi_{\alpha}\left(  x\right)  \left(  \mathsf{T}_{a}\right)  $ define
sections of the associated vector bundle $\mathrm{ad}\left(  P\right)  $ over
$M$. We conclude by summarising the main results of this section and the previous one
in the following table of correspondences between equivariant principal bundles and its corresponding invariants connections.%
\\
\begin{table}[h!]
\centering
\begin{tabular}{ccc}

$G/H\times M$                                                       &  & $M$                                                                               \\ \hline \\
\begin{tabular}[c]{@{}l@{}}$G-$equivariant bundle $\mathcal{P}$\\ with structure group
$\mathcal{G}$\end{tabular}       & $\longrightarrow$ & \begin{tabular}[c]{@{}l@{}}$H-$equivariant bundle $P$\\ with structure
group $\mathcal{H}$\\ \end{tabular} \\[0.7cm] 
\begin{tabular}[c]{@{}l@{}}$G-$invariant connection $\mathsf{\omega}$\\ on $\mathcal{P}$\\ \end{tabular} & $\longrightarrow$  &  \begin{tabular}[c]{@{}l@{}}sections $\Phi_{\alpha
}\left(  \mathsf{T}_{a}\right)  $ on $\mathrm{ad}\left(  P\right)  $\\ and connection $\mu$ on $P$ \\ \end{tabular}             \\ 
\end{tabular} 
\caption{Bundle correspondence.}
\label{one2one}
\end{table}
\bigskip                                               
\newsection{Transgression forms and the Chern class}
Given the notion of principal bundle $\mathcal{P}$ and connection $\mathrm{\omega}$, it is interesting to study the existence of characteristic quantities defined over $\mathcal{P}$. In principle, this quantities are defined in terms of the connection $\mathrm{\omega}$. However, it turns out that they are completely independent of the choice of $\mathrm{\omega}$. Thus, these quantities define \textit{topological invariants} which measure the obstruction of the bundle $\mathcal{P}$ to be trivial. 

In the following, we first introduce some preliminary definitions and then we will use the Chern--Weil theorem to define the Transgression form and the Chern Class.

\subsection{Invariant polynomial}

\begin{definition}
An invariant polynomial of degree $n$, is a $n-$linear map%
\begin{equation}
\left\vert \ldots\right\vert :\underset{n\text{ times}}{\underbrace
{\mathfrak{g\times}\ldots\times\mathfrak{g}}}\longrightarrow%
\mathbb{R}
\end{equation}
which satisfy the condition%
\begin{equation}
\left\vert \left(  g^{-1}Z_{1}g\right)  \wedge\ldots\wedge\left(
g^{-1}Z_{n}g\right)  \right\vert =\left\vert Z_{1}%
\wedge\ldots\wedge Z_{n}\right\vert \label{pr11}%
\end{equation}
where $Z\in\Omega^{z_{i}}\otimes\mathfrak{g}$, $i=1,\ldots,n$ and
$g=exp(\lambda^A \mathsf{T}_A)\in\mathcal{G}$. When the extra condition
\begin{equation}
\left\vert Z_{1}\wedge\ldots\wedge Z_{i}\wedge
 Z_{j}\wedge\ldots\wedge\mathrm{Z}_{n}\right\vert =\left(
-1\right)  ^{z_{i}z_{j}}\left\vert Z_{1}\wedge\ldots\wedge
Z_{j}\wedge Z_{i}\wedge\ldots\wedge Z_{n}\right\vert
\end{equation}
is satisfied for all $Z_{i},Z_{j}$, we say that the
invariant polynomial is symmetric and we denote it by $\left\langle
\ldots\right\rangle $.
\end{definition}

Since $Z=Z^{A}\mathsf{T}_{A}$, it is possible to write
\begin{equation}
\left\langle Z_{1}\wedge\ldots\wedge Z_{n}\right\rangle
=Z_{1}^{A_{1}}\wedge\ldots\wedge Z_{n}^{A_{n}}\left\langle \mathsf{T}_{A_{1}%
}\ldots\mathsf{T}_{A_{n}}\right\rangle,
\end{equation}
where $\left\langle \mathsf{T}_{A_{1}}\ldots\mathsf{T}_{A_{n}}\right\rangle $
is called the symmetric invariant tensor. The invariance condition
eq.$\left(  \ref{pr11}\right)  $ can be written in different ways. In fact, if
we consider an element $g=\exp\left(  \lambda^{A}\mathsf{T}_{A}\right)
\in\mathcal{G}$ infinitesimally close to the identity, the invariant condition
takes the form%
\begin{equation}
\left\langle \left[  \lambda,Z_{1}\right]  \wedge Z_{2}%
\wedge\ldots\wedge Z_{n}\right\rangle +\ldots+\left\langle
Z_{1}\wedge\ldots\wedge Z_{n-1}\wedge\left[  \lambda
,Z_{n}\right]  \right\rangle =0 \ ,
\end{equation}
where $\lambda=\lambda^{A}\mathsf{T}_{A}$. Note that if we replace $\lambda$ with the
one-form connection $\mathcal{A}$ and recalling the definition of covariant
derivative eq.$\left(  \ref{pr12}\right)  $, the invariance condition is
given by%
\begin{equation}
\left\langle D\left(  Z_{1}\wedge\ldots\wedge Z_{n}\right)
\right\rangle =\dd\left\langle Z_{1}\wedge\ldots\wedge Z%
_{n}\right\rangle .\label{pr21}%
\end{equation}
All this expressions for the invariance condition will be used through this
Thesis depending on the context.

\subsection{Projection of differential forms}

So far, we have defined differential forms over the base space $\mathcal{M}$
starting from differential forms defined over the principal bundle
$\mathcal{P}$. In order to do so, we used the reciprocal image induced by the
local sections $\sigma_{\alpha}:U_{\alpha}\subset\mathcal{M\rightarrow P}%
$. A natural question is if one can do the opposite, i.e, defining a form in
$\mathcal{P}$ starting from a form defined on $\mathcal{M}$. This is
possible by using the pull-back of the projection map $\pi
:\mathcal{P\rightarrow M}$.

Given a $p-$form $\mathcal{B}$ over $\mathcal{M}$ it is possible to construct
a $p-$form $B$ over $\mathcal{P}$ as%
\begin{equation}
B=\pi^{\ast}\mathcal{B}.%
\end{equation}
Now, given two points of the same fibre $p$ and $p^{\prime}=pg$ and a $p-$form
$B$, in general it is true that $B\left(  p\right)  $ and
$B\left(  pg\right)  $ will correspond to the pull-back of different
$p-$forms over $\mathcal{M}$,%
\begin{align}
B(p)   &  =\pi^{\ast}\mathcal{B}(x)  \ ,\\
B(pg)   &  =\pi^{\ast}\mathcal{B}^{\prime}(x) \ ,
\end{align}
with $\mathcal{B}(x)  \neq\mathcal{B}^{\prime}(x)  $
in general. When the $p-$form $B$ along the fibre $\pi^{-1}(
x)  $ corresponds to the pull-back of a unique $p-$form $\mathcal{B}%
(x)  $, we say that $B$ is projectable to $\mathcal{B}%
(x)  $. Let us precise the notion of projectable differential
form in terms of the following
\begin{theorem}
Let $B$ be a $p-$form over $\mathcal{P}$ satisfying
\begin{enumerate}
\item $B$ is right invariant under the action of $\mathcal{G}$,%
\begin{equation}
R_{g}^{\ast}B=B \label{pr14}%
\end{equation}
\item $B$ acts on $T_{p}\left(  \mathcal{P}\right)  $ in such a way
that%
\begin{equation}
B\left(  X_{1}\ldots X_{n}\right)  =B\left(  X_{1}^{h}\ldots
X_{n}^{h}\right)  \label{pr13}%
\end{equation}
then, there exist a unique $p-$form $\mathcal{B}(x)  $ defined
over $\mathcal{M}$ such that $B=\pi^{\ast}\mathcal{B}$ and therefore
$B$ is projectable.
\end{enumerate}
\end{theorem}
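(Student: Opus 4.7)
The plan is to construct $\mathcal{B}$ pointwise on $\mathcal{M}$ by lifting tangent vectors to $\mathcal{P}$, then use hypothesis~(\ref{pr13}) to make the definition independent of the chosen lifts and hypothesis~(\ref{pr14}) to make it independent of the chosen fibre point, and finally verify $\pi^{\ast}\mathcal{B}=B$ together with uniqueness. Concretely, for each $x\in\mathcal{M}$ I would fix some $q\in\pi^{-1}(x)$ and, for vectors $X_{1},\ldots,X_{p}\in T_{x}(\mathcal{M})$, choose lifts $\widetilde{X}_{i}\in T_{q}(\mathcal{P})$ with $\pi_{\ast}\widetilde{X}_{i}=X_{i}$ (such lifts exist because $\pi$ is a submersion), and define
\[
\mathcal{B}(x)(X_{1},\ldots,X_{p})\;:=\;B(q)\bigl(\widetilde{X}_{1},\ldots,\widetilde{X}_{p}\bigr).
\]

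The bulk of the argument is to check well-definedness. Two lifts of the same $X_{i}$ differ by a vector in $V_{q}(\mathcal{P})=\ker\pi_{\ast}$; since condition~(\ref{pr13}) says $B$ only sees the horizontal projections $\widetilde{X}_{i}^{\,h}$, and these are uniquely determined by $X_{i}$ (being the unique horizontal lifts), the value of $B$ is unaffected by the choice of lift. For independence of the base point, any other element of $\pi^{-1}(x)$ is of the form $qg$, and $R_{g\ast}\widetilde{X}_{i}$ is again a lift of $X_{i}$ because $\pi\circ R_{g}=\pi$. Then the right-invariance~(\ref{pr14}) yields
\[
B(qg)\bigl(R_{g\ast}\widetilde{X}_{1},\ldots,R_{g\ast}\widetilde{X}_{p}\bigr)=\bigl(R_{g}^{\ast}B\bigr)(q)\bigl(\widetilde{X}_{1},\ldots,\widetilde{X}_{p}\bigr)=B(q)\bigl(\widetilde{X}_{1},\ldots,\widetilde{X}_{p}\bigr),
\]
so $\mathcal{B}(x)$ is intrinsic to $x$. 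Smoothness of $\mathcal{B}$ is inherited from that of $B$ via any local section $\sigma_{\alpha}$, since on $U_{\alpha}$ one has $\mathcal{B}=\sigma_{\alpha}^{\ast}B$.

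Once well-definedness is settled, the identity $B=\pi^{\ast}\mathcal{B}$ is essentially tautological: for any $q\in\mathcal{P}$ and $Y_{1},\ldots,Y_{p}\in T_{q}(\mathcal{P})$, using the $Y_{i}$ themselves as lifts of $\pi_{\ast}Y_{i}$ at the base point $q$ gives
\[
(\pi^{\ast}\mathcal{B})(q)(Y_{1},\ldots,Y_{p})=\mathcal{B}(\pi(q))(\pi_{\ast}Y_{1},\ldots,\pi_{\ast}Y_{p})=B(q)\bigl(Y_{1}^{\,h},\ldots,Y_{p}^{\,h}\bigr)=B(q)(Y_{1},\ldots,Y_{p}),
\]
the last equality using~(\ref{pr13}) once more. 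Uniqueness is then immediate: if $\pi^{\ast}\mathcal{B}^{\prime}=\pi^{\ast}\mathcal{B}$, the surjectivity of $\pi_{\ast}$ on each tangent space forces $\mathcal{B}^{\prime}(x)=\mathcal{B}(x)$ at every $x\in\mathcal{M}$.

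The main obstacle—and the reason both hypotheses are needed—is the passage from a fibrewise definition to a globally consistent form on $\mathcal{M}$. Condition~(\ref{pr13}) alone would only fix $\mathcal{B}$ once a preferred point in each fibre is chosen; it is the equivariance~(\ref{pr14}) that knits these fibre-by-fibre candidates into a single well-defined form. Everything else in the proof reduces to bookkeeping with the horizontal projection $h$ and the submersion property of $\pi$.
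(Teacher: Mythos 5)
Your proof is correct and follows essentially the same route as the paper's: both define $\mathcal{B}$ by lifting tangent vectors to a fibre point, use the horizontality condition~(\ref{pr13}) to remove the ambiguity in the choice of lift, and combine right-invariance~(\ref{pr14}) with $R_{g\ast}H_{p}(\mathcal{P})=H_{pg}(\mathcal{P})$ to remove the dependence on the fibre point. Your write-up is, if anything, slightly more explicit about the construction, well-definedness and uniqueness than the paper's consistency-check style argument, but the mechanism is identical.
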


\begin{proof}
Let $X_{i}\left(  p\right)  \in T_{p}\left(  \mathcal{P}\right)  $ with
$i=1,\ldots,q$. Consider the pushforward induced by the projection map $\pi$%
\begin{align}
\pi_{\ast} &  :T_{p}\left(  \mathcal{P}\right)  \longrightarrow T_{x}\left(
\mathcal{M}\right)   \nonumber \\
&  :X_{i}\left(  p\right)  \longrightarrow Y_{i}\left(  x\right)  =\pi_{\ast
}X_{i}(p) \ .
\end{align}
Let $\mathcal{B}(x)  $ be a $q-$form such that%
\begin{equation}
B\left(  p\right)  =\pi^{\ast}\mathcal{B}\left(  x\right)  .
\end{equation}
Then, we have
\begin{align}
\mathcal{B}\left(  x\right)  \left(  Y_{1}\left(  x\right)  \ldots
,Y_{q}\left(  x\right)  \right)   &  =\mathcal{B}\left(  x\right)  \left(
\pi_{\ast}X_{1}\left(  p\right)  ,\ldots,\pi_{\ast}X_{q}\left(  p\right)
\right)   \nonumber \\
&  =\pi^{\ast}\mathcal{B}\left(  x\right)  \left(  X_{1}\left(  p\right)
,\ldots,X_{q}\left(  p\right)  \right)   \nonumber \\
&  =B\left(  p\right)  \left(  X_{1}\left(  p\right)  ,\ldots
,X_{q}\left(  p\right)  \right) \ ,
\end{align}
and using eq.$\left(  \ref{pr13}\right)  $,%
\begin{equation}
\pi^{\ast}\mathcal{B}\left(  x\right)  \left(  X_{1}\left(  p\right)
,\ldots,X_{q}\left(  p\right)  \right)  =B\left(  p\right)  \left(
X_{1}^{h}\left(  p\right)  ,\ldots,X_{q}^{h}\left(  p\right)  \right)
.\label{pr15}%
\end{equation}
Now, using eq.$\left(  \ref{pr8}\right)  $%
\begin{align}
B\left(  pg\right)  \left(  X_{1}^{h}\left(  pg\right)  ,\ldots
,X_{q}^{h}\left(  pg\right)  \right)   &  =B\left(  pg\right)
\left(  R_{g\ast}X_{1}^{h}\left(  p\right)  ,\ldots,R_{g\ast}X_{q}^{h}\left(
p\right)  \right)   \nonumber \\
&  =R_{g}^{\ast}B\left(  p\right)  \left(  X_{1}^{h}\left(  p\right)
,\ldots,X_{q}^{h}\left(  p\right)  \right)  ,
\end{align}
Finally, using eq.$\left(  \ref{pr14}\right)  $ we have%
\begin{equation}
B\left(  pg\right)  \left(  X_{1}^{h}\left(  pg\right)  ,\ldots
,X_{q}^{h}\left(  pg\right)  \right)  =B\left(  p\right)  \left(
X_{1}^{h}\left(  p\right)  ,\ldots,X_{q}^{h}\left(  p\right)  \right),
\label{pr16}%
\end{equation}
and comparing eq.$\left(  \ref{pr15}\right)  $ with eq.$\left(  \ref{pr16}%
\right)  $ one sees that $B\left(  pg\right)  $ and $B\left(
p\right)  $ corresponds to the pull-back of the same form $\mathcal{B}(x)  $%
\begin{equation}
B\left(  pg\right)  \left(  X_{1}^{h}\left(  pg\right)  ,\ldots
,X_{q}^{h}\left(  pg\right)  \right)  =\pi^{\ast}\mathcal{B}\left(  x\right)
\left(  X_{1}\left(  p\right)  ,\ldots,X_{q}\left(  p\right)  \right).
\end{equation}
Thus, the form $\mathcal{B}\left(  x\right)  $ is the projection of
$B$.
\end{proof}

The important thing about the projection operation is that given a smooth
$p-$form $B$ defined over $\mathcal{P}$, its projection
$\mathcal{B}\left(  x\right)  $ gives a $p-$form \textit{globally} defined
over the base space $\mathcal{M}$. This is in contrast with the construction
of differential forms by using the pull-back induced by local sections
$\sigma_{\alpha}$; given a globally defined $p-$form over
$\mathcal{P}$, in general is only possible to obtain $p-$forms locally
defined in an open set $U_{\alpha}\subset\mathcal{M}$.

However, it is possible to find relations between both procedures. In order to
do so, it is necessary to observe that given an arbitrary local section
$\sigma_{\alpha}$, is always possible to find an element $g_{\alpha}%
\in\mathcal{G}$ such that%
\begin{equation}
\sigma_{\alpha}\circ\pi=g_{\alpha}.
\end{equation}
Now, if $B$ is projectable,
\begin{align}
B &  =R_{g_{\alpha}}^{\ast}B \nonumber\\
&  =R_{\sigma_{\alpha}\circ\pi}^{\ast}B\nonumber\\
&  =\pi^{\ast}\sigma_{\alpha}^{\ast}B \ ,%
\end{align}
and writing $\mathcal{B}_{\alpha}\mathcal{=\sigma}_{\alpha}^{\ast}B$,
we have%
\begin{equation}
B=\pi^{\ast}\mathcal{B}_{\alpha}.\label{pr20}%
\end{equation}
Since $B$ is projectable, its projection $\mathcal{B}$ is such that
$B=\pi^{\ast}\mathcal{B}$ is unique. This means that in the non-empty 
overlap $U_{\alpha}\cap U_{\beta}$, the forms $\mathcal{B}$ are equivalent
\begin{equation}
\mathcal{B}_{\alpha}=\mathcal{B}_{\beta}=\mathcal{B}%
\end{equation}
and globally defined over $\mathcal{M}$.

An important property about the projection operation is its relation with the
covariant derivative operator $\mathcal{D}$ defined over the principal bundle $\mathcal{P}$.

\begin{theorem}
If $B$ is projectable, then%
\begin{equation}
\mathcal{D}B=\dd_{\mathcal{P}}B \ . \label{pr18}%
\end{equation}

\end{theorem}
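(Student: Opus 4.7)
The plan is to exploit the previous projectability theorem, which guarantees that $B$ is projectable precisely when there exists a unique $p$-form $\mathcal{B}$ on $\mathcal{M}$ with $B=\pi^{\ast}\mathcal{B}$. Once this characterization is in hand, the identity $\mathcal{D}B=\dd_{\mathcal{P}}B$ should reduce to showing that $\dd_{\mathcal{P}}B$ is itself horizontal, so that precomposing with the horizontal projector $h$ does nothing.

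First I would invoke the previous theorem to write $B=\pi^{\ast}\mathcal{B}$ for some globally defined $\mathcal{B}\in\Omega^{p}(\mathcal{M})$. Since the exterior derivative commutes with pull-backs, this gives
\begin{equation}
\dd_{\mathcal{P}}B=\dd_{\mathcal{P}}\pi^{\ast}\mathcal{B}=\pi^{\ast}\dd_{\mathcal{M}}\mathcal{B},
\end{equation}
so that $\dd_{\mathcal{P}}B$ is again the pull-back of a form on the base. The key observation is then that any form of the type $\pi^{\ast}\alpha$ vanishes whenever one of its arguments is vertical: if $V\in V_{p}(\mathcal{P})$, then by the very definition (\ref{pr17}) one has $\pi_{\ast}V=0$, hence
\begin{equation}
(\pi^{\ast}\alpha)(V,X_{2},\ldots,X_{q})=\alpha(\pi_{\ast}V,\pi_{\ast}X_{2},\ldots,\pi_{\ast}X_{q})=0.
\end{equation}
Applied to $\alpha=\dd_{\mathcal{M}}\mathcal{B}$, this shows that $\dd_{\mathcal{P}}B$ annihilates vertical directions.

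From here the conclusion is immediate: decomposing each argument $X_{i}=X_{i}^{h}+X_{i}^{v}$ according to $T_{p}(\mathcal{P})=H_{p}(\mathcal{P})\oplus V_{p}(\mathcal{P})$, multilinearity together with the vanishing on vertical vectors gives
\begin{equation}
\dd_{\mathcal{P}}B(X_{1},\ldots,X_{p+1})=\dd_{\mathcal{P}}B(X_{1}^{h},\ldots,X_{p+1}^{h}),
\end{equation}
which is exactly $(\dd_{\mathcal{P}}B\circ h)(X_{1},\ldots,X_{p+1})=\mathcal{D}B(X_{1},\ldots,X_{p+1})$ by the definition of the covariant derivative.

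The only subtlety, and the step I would be most careful about, is the commutation $\dd_{\mathcal{P}}\pi^{\ast}=\pi^{\ast}\dd_{\mathcal{M}}$ when $B$ is $\mathfrak{g}$-valued: one must recall that this identity holds component by component in any fixed basis $\{\mathsf{T}_{A}\}$, since the basis vectors are constant and the statement reduces to the ordinary naturality of the exterior derivative under smooth maps. Everything else is a direct application of the previous theorem and the horizontal/vertical decomposition induced by the connection $\omega$.
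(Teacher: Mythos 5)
Your argument is correct and follows essentially the same route as the paper's own proof: both write $B=\pi^{\ast}\mathcal{B}$, use the naturality $\dd_{\mathcal{P}}\pi^{\ast}=\pi^{\ast}\dd_{\mathcal{M}}$, and then reduce to the fact that $\pi_{\ast}$ annihilates vertical vectors so that only the horizontal components of the arguments survive. Your phrasing via multilinearity and vanishing on vertical directions is just a slightly more explicit rendering of the paper's step $\pi_{\ast}X_{i}=\pi_{\ast}X_{i}^{h}$, and the remark about handling the $\mathfrak{g}$-valued case componentwise is a sensible, if minor, addition.
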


\begin{proof}
In fact,
\begin{align}
\dd_{\mathcal{P}}B\left(  X_{1},\ldots,X_{p}\right)   &
=\dd_{\mathcal{P}}\pi^{\ast}\mathcal{B}\left(  X_{1},\ldots,X_{p}\right)   \nonumber\\
&  =\pi^{\ast}\left(  \dd_{\mathcal{M}}\mathcal{B}\right)  \left(  X_{1}%
,\ldots,X_{p}\right)   \nonumber \\
&  =\dd_{\mathcal{M}}\mathcal{B}\left(  \pi_{\ast}X_{1},\ldots,\pi_{\ast}%
X_{p}\right)  .
\end{align}
By definition of vertical subspace, we have $\pi_{\ast}X_{i}=\pi_{\ast}X_{i}^{h}$
$\left[  \text{see\thinspace eq.}\left(  \ref{pr17}\right)  \right].  $ Thus,%
\begin{align}
\dd_{\mathcal{P}}B
\left(  X_{1},\ldots,X_{p}\right)   &
=\dd_{\mathcal{M}}\mathcal{B}\left(  \pi_{\ast}X_{1}^{h},\ldots,\pi_{\ast}%
X_{p}^{h}\right)   \nonumber\\
&  =\pi^{\ast}\left(  \dd_{\mathcal{M}}\mathcal{B}\right)  \left(  X_{1}%
^{h},\ldots,X_{p}^{h}\right)  \nonumber\\
&  =\dd_{\mathcal{P}}\pi^{\ast}\mathcal{B}\left(  X_{1}^{h},\ldots,X_{p}%
^{h}\right)  \nonumber\\
&  =\dd_{\mathcal{P}}B\left(  X_{1}^{h},\ldots,X_{p}^{h}\right)  \nonumber\\
&  =\dd_{\mathcal{P}}B\circ h\left(  X_{1},\ldots,X_{p}\right)  \nonumber\\
&  =\mathcal{D}B\left(  X_{1},\ldots,X_{p}\right)  ,
\end{align}
and therefore%
\begin{equation}
\dd_{\mathcal{P}}B=\mathcal{D}B \ .%
\end{equation}

\end{proof}

\subsection{Chern--Weil theorem}

\begin{theorem}
Let $\mathcal{P}\left(  \mathcal{M},\mathcal{G}\right)  $ be a principal
bundle with base space $\mathcal{M}$ endowed with a one--form connection $\mathrm{\omega
}\in\Omega^{1}\left(  \mathcal{P}\right)  \otimes\mathfrak{g}$. Let $F\in\Omega^{2}\left(  \mathcal{P}\right)\otimes\mathfrak{g}$ be its
corresponding curvature, where $\mathfrak{g}$ is the Lie algebra associated to the
structure group $\mathcal{G}$. Let $\left\langle \mathsf{T}_{A_{1}}%
\ldots\mathsf{T}_{A_{n+1}}\right\rangle $ be a symmetric invariant tensor of rank
$n+1$, and let $\left\langle F^{n+1}\right\rangle $ be a $\left(
2n+2\right)-$form
\begin{equation}
\left\langle F^{n+1}\right\rangle =\underset{n+1\text{ times}%
}{\underbrace{\left\langle F\wedge\ldots\wedge F\right\rangle
}} \ .
\end{equation}
Then,
\begin{enumerate}
\item $\left\langle F^{n+1}\right\rangle $ is a closed form,
$\dd_{\mathcal{P}}\left\langle F^{n+1}\right\rangle =0$ and projectable%
\begin{equation}
\left\langle F^{n+1}\right\rangle =\pi^{\ast}\left\langle
\mathcal{F}^{n+1}\right\rangle ,
\end{equation}
with $\left\langle \mathcal{F}^{n+1}\right\rangle $ closed,  $\dd_{\mathcal{M}%
}\left\langle \mathcal{F}^{n+1}\right\rangle =0$.
\item Given two connections $\mathrm{\omega}$ and $\mathrm{\bar{\omega}}$ over
$\mathcal{P}$ and their respective curvatures $F$ and $\bar{F}%
$, the difference $\left\langle F^{n+1}\right\rangle -\left\langle
\bar{F}^{n+1}\right\rangle $ is an exact form and projectable.
\end{enumerate}
\end{theorem}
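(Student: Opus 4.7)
The plan is to prove both parts by combining the projection theorem of the preceding subsection with theorem~(\ref{pr18}), the Bianchi identity $\mathcal{D}F=0$ of~(\ref{pr19}), and the infinitesimal invariance identity~(\ref{pr21}) satisfied by the symmetric invariant polynomial.

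For part~1, I would first establish projectability of $\left\langle F^{n+1}\right\rangle$ by verifying the two hypotheses of the projection theorem. Right-invariance follows from the pseudotensorial transformation $R_{g}^{\ast}F=g^{-1}Fg$ combined with the invariance property~(\ref{pr11}) of the symmetric polynomial, which gives $R_{g}^{\ast}\left\langle F^{n+1}\right\rangle =\left\langle F^{n+1}\right\rangle$. Horizontality follows because $F$ itself is tensorial, so $F(X_{1},X_{2})=F(X_{1}^{h},X_{2}^{h})$, and this property transfers to every wedge factor. Hence $\left\langle F^{n+1}\right\rangle =\pi^{\ast}\left\langle \mathcal{F}^{n+1}\right\rangle$ for a unique form on $\mathcal{M}$. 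Closedness on $\mathcal{P}$ then follows by applying~(\ref{pr21}) with all arguments equal to $F$, yielding $\dd_{\mathcal{P}}\left\langle F^{n+1}\right\rangle =(n+1)\left\langle \mathcal{D}F\wedge F^{n}\right\rangle$, which vanishes by Bianchi. Injectivity of $\pi^{\ast}$ on forms pulled back from the base then transports closedness down to $\dd_{\mathcal{M}}\left\langle \mathcal{F}^{n+1}\right\rangle =0$.

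For part~2, I would employ the standard interpolation argument. Introduce the one-parameter family of connections $\omega_{t}=\bar{\omega}+t(\omega-\bar{\omega})$ for $t\in[0,1]$, with curvatures $F_{t}$. The difference $\alpha:=\omega-\bar{\omega}$ is tensorial since the inhomogeneous $g^{-1}\dd g$ terms in the gauge transformations of $\omega$ and $\bar{\omega}$ cancel. A direct computation gives $\dot{F}_{t}=\mathcal{D}_{t}\alpha$, where $\mathcal{D}_{t}$ is the covariant exterior derivative with respect to $\omega_{t}$. Using symmetry of the invariant tensor, the invariance identity~(\ref{pr21}), and the Bianchi identity $\mathcal{D}_{t}F_{t}=0$, one obtains
\[
\frac{\dd}{\dd t}\left\langle F_{t}^{n+1}\right\rangle =(n+1)\left\langle \mathcal{D}_{t}\alpha\wedge F_{t}^{n}\right\rangle =(n+1)\,\dd_{\mathcal{P}}\left\langle \alpha\wedge F_{t}^{n}\right\rangle .
\]
Integrating in $t$ over $[0,1]$ produces the transgression formula $\left\langle F^{n+1}\right\rangle -\left\langle \bar{F}^{n+1}\right\rangle =\dd_{\mathcal{P}}\mathcal{T}(\omega,\bar{\omega})$ with $\mathcal{T}(\omega,\bar{\omega})=(n+1)\int_{0}^{1}\left\langle \alpha\wedge F_{t}^{n}\right\rangle \dd t$, proving exactness. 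Projectability of the difference is immediate from part~1, and a direct check shows that $\mathcal{T}(\omega,\bar{\omega})$ is itself projectable since both $\alpha$ and $F_{t}$ are tensorial.

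The main obstacle, in my view, is the careful bookkeeping of horizontality and equivariance at each step: one must verify that $\omega-\bar{\omega}$ is genuinely tensorial while each $\omega_{t}$ remains a bona fide connection, and reconcile $\dd_{\mathcal{P}}$ with $\mathcal{D}$ via~(\ref{pr18}) on the projectable objects appearing in the argument, so that the ``$\dd$'' on the two sides of the invariance identity refers consistently to the same operator throughout the computation.
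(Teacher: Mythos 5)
Your proposal is correct and follows essentially the same route as the paper: projectability of $\left\langle F^{n+1}\right\rangle$ from tensoriality of $F$ plus invariance of the polynomial, closedness from the Bianchi identity after identifying $\dd_{\mathcal{P}}$ with $\mathcal{D}$ on projectable forms, and the interpolation $\omega_{t}=\bar{\omega}+t\left(\omega-\bar{\omega}\right)$ with $\dot{F}_{t}=\mathcal{D}_{t}\left(\omega-\bar{\omega}\right)$ yielding the transgression form for part~2. The only cosmetic differences are that you descend closedness to $\mathcal{M}$ via injectivity of $\pi^{\ast}$ rather than re-running the Bianchi argument on the base, and you justify tensoriality of $\omega-\bar{\omega}$ via cancellation of inhomogeneous terms rather than by evaluating on vertical vectors as the paper does; both are sound.
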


\begin{proof}[Proof part 1]
$\left\langle F^{n+1}\right\rangle $ is projectable since it is
invariant under the right action $R_{g}^{\ast}$ of the group. In fact, given
that $F$ is tensorial, we have%
\begin{equation}
R_{g}^{\ast}F=g^{-1}Fg,
\end{equation}
and due to $\left\langle F^{n+1}\right\rangle $ is an invariant
polynomial, it follows%
\begin{equation}
R_{g}^{\ast}\left\langle F^{n+1}\right\rangle =\left\langle
F^{n+1}\right\rangle .
\end{equation}
On the other hand, since $F$ is tensorial, $F\left(
X_{1},X_{2}\right)  =F\left(  X_{1}^{h},X_{2}^{h}\right)  $ and
therefore%
\begin{equation}
\left\langle F^{n+1}\right\rangle \left(  X_{1},\ldots,X_{2n+2}%
\right)  =\left\langle F^{n+1}\right\rangle \left(  X_{1}^{h}%
,\ldots,X_{2n+2}^{h}\right)  .
\end{equation}
Thus, since $\left\langle F^{n+1}\right\rangle $ is projectable it
satisfies eq.$\left(  \ref{pr18}\right)  $%
\begin{equation}
\dd_{\mathcal{P}}\left\langle F^{n+1}\right\rangle =\mathcal{D}%
\left\langle F^{n+1}\right\rangle ,
\end{equation}
and using the Bianchi identity eq.$\left(  \ref{pr19}\right)  $ we see that
$\left\langle F^{n+1}\right\rangle $ is a closed form%
\begin{equation}
\dd_{\mathcal{P}}\left\langle F^{n+1}\right\rangle =0.
\end{equation}
Moreover, using eq.$\left(  \ref{pr20}\right)  $ one finds that the polynomial
\begin{equation}
\left\langle \mathcal{F}_{\alpha}^{n+1}\right\rangle =\sigma_{\alpha}^{\ast
}\left\langle F^{n+1}\right\rangle
\end{equation}
corresponds to the projection of $\left\langle F^{n+1}\right\rangle $
over $\mathcal{M}$ and therefore $\left\langle \mathcal{F}_{\alpha}%
^{n+1}\right\rangle $ is globally defined. In order to show that $\left\langle
\mathcal{F}_{\alpha}^{n+1}\right\rangle $ is closed, consider the Bianchi
identity%
\begin{equation}
D_{\alpha}\left\langle \mathcal{F}_{\alpha}^{n+1}\right\rangle =0.
\end{equation}
Making use of the invariance property eq.$\left(  \ref{pr21}\right)  $ it follows that%
\begin{equation}
\dd_{\mathcal{M}}\left\langle \mathcal{F}_{\alpha}^{n+1}\right\rangle =0.
\end{equation}
The quantity $\left\langle \mathcal{F}_{\alpha}^{n+1}\right\rangle ,$ with a
given normalization factor, corresponds to the $\left(  n+1\right)  $
\textit{Chern character}%
\begin{equation}
\mathsf{ch}_{n+1}\left(  \mathcal{F}\right)  =\frac{1}{\left(  n+1\right)
!}\left(  \frac{i}{2\pi}\right)  ^{n+1}\left\langle \mathcal{F}^{n+1}%
\right\rangle .
\end{equation}

\end{proof}

\begin{proof}[Proof part 2]
Consider two one-form connections $\mathrm{\omega}$ and $\mathrm{\bar{\omega
}}$. The difference
\begin{equation}
O=\mathrm{\omega}-\mathrm{\bar{\omega}}%
\end{equation}
is tensorial because given a vertical vector $Y\in V_{p}\left(  \mathcal{P}\right)  $, then%
\begin{equation}
O\left(  Y\right)  =\mathsf{Y}-\mathsf{Y}=0.
\end{equation}
Using these two connections, it is possible to define a third interpolating connection $\mathrm{\omega}_t$ as%
\begin{equation}
\mathrm{\omega}_{t}=\mathrm{\bar{\omega}}+tO\label{pr24}%
\end{equation}
where $t\in\left[  0,1\right]  $, and its corresponding curvature
\begin{align}
F_{t} &  =\mathcal{D}\mathrm{\omega}_{t}\nonumber\\
&  =\dd_{\mathcal{P}}\mathrm{\omega}_{t}+\mathrm{\omega}_{t}\wedge
\mathrm{\omega}_{t}\nonumber\\
&  =\bar{F}+t\mathcal{\bar{D}}O+t^{2}O%
\wedge O,\label{pr25}%
\end{align}
where $\mathcal{\bar{D}}O=\dd_{\mathcal{P}}O+\left[
\mathrm{\bar{\omega}},O\right]  $. Now, the derivative respect to $t$
of $F_{t}$ is given by%
\begin{align}
\frac{\dd F_{t}}{\dd t} &  =\mathcal{\bar{D}}O+t\left[
O,O\right]  \nonumber\\
&  =\mathcal{D}_{t}O,\label{pr22}%
\end{align}
where%
\begin{equation}
\mathcal{D}_{t}O=\dd_{\mathcal{P}}O+\left[  \mathrm{\omega
}_{t},O\right] \  .
\end{equation}
Since $\left.  F_{t}\right\vert _{t=0}=\bar{F}$ and $\left.
F_{t}\right\vert _{t=1}=F$, it is possible to write the
difference $ \left\langle F^{n+1}\right\rangle -\left\langle \bar{F}%
^{n+1}\right\rangle$ as
\begin{equation}
\left\langle F^{n+1}\right\rangle -\left\langle \bar{F}%
^{n+1}\right\rangle =\int_{0}^{1}\dd t\frac{\dd}{\dd t}\left\langle F%
_{t}^{n+1}\right\rangle \ .
\end{equation}
The polynomial $\left\langle F_{t}^{n+1}\right\rangle $ is a $\left(  2n+2\right)-$form and symmetric. This allow us to write%
\begin{equation}
\left\langle F^{n+1}\right\rangle -\left\langle \bar{F}%
^{n+1}\right\rangle =\left(  n+1\right)  \int_{0}^{1}\dd t\left\langle
\frac{\dd F_{t}}{\dd t}\wedge F_{t}^{n}\right\rangle ,
\end{equation}
inserting eq.$\left(  \ref{pr22}\right)  $, we get%
\begin{align}
\left\langle F^{n+1}\right\rangle -\left\langle \bar{F}%
^{n+1}\right\rangle  &  =\left(  n+1\right)  \int_{0}^{1}\dd t\left\langle
\mathcal{D}_{t}O\wedge F_{t}^{n}\right\rangle  \nonumber \\
&  =\left(  n+1\right)  \int_{0}^{1}\dd t\mathcal{D}_{t}\left\langle
O\wedge F_{t}^{n}\right\rangle.
\end{align}
where we have used the Bianchi identity $\mathcal{D}_{t}F_{t}=0$. The
form $O$ is tensorial and since $\left\langle O%
\wedge F_{t}^{n}\right\rangle $ is an invariant polynomial, it is
projectable. This means that $\mathcal{D}_{t}\left\langle O%
\wedge F_{t}^{n}\right\rangle =\dd_{\mathcal{P}}\left\langle
O\wedge F_{t}^{n}\right\rangle $ and therefore we write%
\begin{equation}
\boxed{
\left\langle F^{n+1}\right\rangle -\left\langle \bar{F}%
^{n+1}\right\rangle =\dd_{\mathcal{P}}\mathrm{T}_{\mathrm{\omega}%
\leftarrow\mathrm{\bar{\omega}}}^{\left(  2n+1\right)  }}%
\end{equation}
with%
\begin{equation}
\mathrm{T}_{\mathrm{\omega}\leftarrow\mathrm{\bar{\omega}}}^{\left(
2n+1\right)  }=\left(  n+1\right)  \int_{0}^{1}\dd t\left\langle O%
\wedge F_{t}^{n}\right\rangle \label{pr23}%
\end{equation}

The $\left(  2n+1\right)  -$form $\mathrm{T}_{\mathrm{\omega}\leftarrow
\mathrm{\bar{\omega}}}^{\left(  2n+1\right)  }$ defined over the principal bundle $\mathcal{P}$ is
called \textit{Transgression form}. The transgression form is projectable over
the base space $\mathcal{M}$%
\begin{align}
T_{\mathcal{A}_{\alpha}\leftarrow\mathcal{\bar{A}}_{\alpha}}^{\left(
2n+1\right)  } &  =\sigma_{\alpha}^{\ast}\mathrm{T}_{\mathrm{\omega}%
\leftarrow\mathrm{\bar{\omega}}}^{\left(  2n+1\right)  }\nonumber \\
&  =\left(  n+1\right)  \int_{0}^{1}\dd t\left\langle  \Theta_{\alpha
} \wedge \left[  \mathcal{F}_{t}\right]_{\alpha}^{n}\right\rangle , \label{transform}
\end{align}
with
\begin{align}
\Theta_{\alpha} &  =\sigma_{\alpha}^{\ast}\mathrm{O}\text{,}\\
\left[  \mathcal{F}_{t}\right]  _{\alpha} &  =\sigma_{\alpha}^{\ast}%
F_{t}.
\end{align}
The form $T_{\mathcal{A}_{\alpha}\leftarrow\mathcal{\bar{A}}_{\alpha}%
}^{\left(  2n+1\right)  }$ is called transgression over the base space or
simply transgression if there is no room for confusion. The transgression form
is globally defined so it will be denoted simply by $T_{\mathcal{A}%
\leftarrow\mathcal{\bar{A}}}^{\left(  2n+1\right)  }$. Note that over the base
space $\mathcal{M}$, the relation
\begin{equation}
\left\langle \mathcal{F}^{n+1}\right\rangle -\left\langle \mathcal{\bar{F}%
}^{n+1}\right\rangle =\dd_{\mathcal{M}}T_{\mathcal{A}\leftarrow\mathcal{\bar{A}%
}}^{\left(  2n+1\right)  }\label{pr26}%
\end{equation}
holds.
\end{proof}

\subsection{Chern--Simons forms} \label{chssect}

It is important to emphasize that over the principal bundle $\mathcal{P}$ the
form $\left\langle F^{n+1}\right\rangle $ is not only closed but
exact. However, this is not true for $\left\langle \mathcal{F}^{n+1}%
\right\rangle $ as we will see shortly.

In the definition of the transgression form eq.$\left(  \ref{pr23}\right)  $,
two one--form connections $\mathrm{\omega}$ and $\mathrm{\bar{\omega}}$ were
used to construct a third one $\mathrm{\omega}_{t}$ given in eq.$\left(
\ref{pr24}\right)  $. It is interesting to observe that a connection over
$\mathcal{P}$ cannot be zero; otherwise it would not satisfy eq.$\left(
\ref{pr4}\right)  $. Thus, to impose the condition $\mathrm{\bar{\omega}}=0$
would mean that the expressions for $\mathrm{\omega}_{t}$ and $F_{t}$
$\left[  \text{see eq}.\left(  \ref{pr24},\ref{pr25}\right)  \right]  $
\begin{align}
\mathrm{\omega}_{t} &  =t\mathrm{\omega},\\
F_{t} &  =t \dd_{\mathcal{P}}\mathrm{\omega}+t^{2}\mathrm{\omega}%
\wedge\mathrm{\omega},
\end{align}
no longer correspond to a one--form connection and a two--form curvature over
$\mathcal{P}$. However, the procedure used in the second half of the proof of
the Chern--Weil theorem still holds. In that case, let us to write%
\begin{equation}
\left\langle F^{n+1}\right\rangle =\dd_{\mathcal{P}}\mathrm{Q}^{\left(
2n+1\right)  }\left(  \mathrm{\omega}\right)  ,
\end{equation}
where%
\begin{align}
\mathrm{Q}^{\left(  2n+1\right)  }\left(  \mathrm{\omega}\right)   &
=\mathrm{T}_{\mathrm{\omega}\leftarrow\mathrm{0}}^{\left(  2n+1\right)  } \nonumber \\
&  =\left(  n+1\right)  \int_{0}^{1}\dd t\left\langle \mathrm{\omega}%
\wedge\left(  t\dd_{\mathcal{P}}\mathrm{\omega}+t^{2}\mathrm{\omega}%
\wedge\mathrm{\omega}\right)^{n} \right\rangle ,
\end{align}
is called the \textit{Chern--Simons} form over $\mathcal{P}$. Again, since
$\mathrm{\omega}_{t}$ and $F_{t}$ are not proper connection and
curvature on $\mathcal{P}$ respectively, the Chern--Simons form it is not
invariant and therefore not projectable over the base space $\mathcal{M}$.

In other words, given two local sections $\sigma_{\alpha}:U_{\alpha}%
\subset\mathcal{M}\rightarrow\mathcal{P}$ and $\sigma_{\beta}:U_{\beta
}\subset\mathcal{M}\rightarrow\mathcal{P}$, in general must occur that%
\begin{equation}
\sigma_{\alpha}^{\ast}\mathrm{Q}^{\left(  2n+1\right)  }\neq\sigma_{\beta
}^{\ast}\mathrm{Q}^{\left(  2n+1\right)  }.
\end{equation}
In this way, the definition of the Chern--Simons form over the base space $\mathcal{M}$
\begin{align}
Q_{\alpha}^{\left(  2n+1\right)  }\left(  \mathcal{A}_{\alpha}\right)   &
=\sigma_{\alpha}^{\ast}\mathrm{Q}^{\left(  2n+1\right)  }\left(
\mathrm{\omega}\right)  \nonumber \\
&  =\left(  n+1\right)  \int_{0}^{1} \dd t\left\langle \mathcal{A}_{\alpha}%
\wedge\left(  t \dd_{\mathcal{M}}\mathcal{A}_{\alpha}+t^{2}\mathcal{A}_{\alpha
}\wedge\mathcal{A}_{\alpha}\right)^{n} \right\rangle \ .
\end{align}
it is only \textit{locally} defined over a particular open set $U_{\alpha}%
\in\mathcal{M}$. Consequently, $\left\langle \mathcal{F}^{n+1}\right\rangle $
can be written as the exterior derivative of the Chern--Simons form
$Q_{\alpha}^{\left(  2n+1\right)  }$ only in an open region $U_{\alpha}\in\mathcal{M}$
\begin{equation}
\left.  \left\langle \mathcal{F}^{n+1}\right\rangle \right\vert _{U_{\alpha}%
}=\dd_{\mathcal{M}}Q_{\alpha}^{\left(  2n+1\right)  } \ , \label{csclo}
\end{equation}
but not globally over $\mathcal{M}$. In the future, we omit the index $\alpha$ in
order to avoid overloaded notation. However, it is important to keep in mind
that the Chern--Simons form is only locally defined.

\newsection{Homotopy}

Another interesting property of transgression forms is that the following expression
vanishes identically
\begin{align}
\dd T_{\mathcal{A}\leftarrow\mathcal{\tilde{A}}}^{\left(  2n+1\right)
}+\dd T_{\mathcal{\tilde{A}}\leftarrow\mathcal{\bar{A}}}^{\left(  2n+1\right)
}+\dd T_{\mathcal{\bar{A}}\leftarrow\mathcal{A}}^{\left(  2n+1\right)  } &
=\left\langle \mathcal{F}^{n+1}\right\rangle -\left\langle \mathcal{\tilde{F}%
}^{n+1}\right\rangle +\left\langle \mathcal{\tilde{F}}^{n+1}\right\rangle \nonumber \\
&  -\left\langle \mathcal{\bar{F}}^{n+1}\right\rangle +\left\langle
\mathcal{\bar{F}}^{n+1}\right\rangle -\left\langle \mathcal{F}^{n+1}%
\right\rangle \nonumber \\
&  =0.
\end{align}
for independent gauge connections $\mathcal{A}$, $\mathcal{\bar{A}}$ and
$\mathcal{\tilde{A}}$ . Since $T_{\mathcal{\bar{A}}\leftarrow\mathcal{A}%
}^{\left(  2n+1\right)  }=-T_{\mathcal{A}\leftarrow\mathcal{\bar{A}}}^{\left(
2n+1\right)  }$, this means that%
\begin{equation}
\dd T_{\mathcal{A}\leftarrow\mathcal{\bar{A}}}^{\left(  2n+1\right)
}=\dd
T_{\mathcal{A}\leftarrow\mathcal{\tilde{A}}}^{\left(  2n+1\right)
}+\dd
T_{\mathcal{\tilde{A}}\leftarrow\mathcal{\bar{A}}}^{\left(  2n+1\right)  }.
\end{equation}
Therefore, it is always possible to decompose a transgression as the sum of
two others plus a closed form $\vartheta$%
\begin{equation}
T_{\mathcal{A}\leftarrow\mathcal{\bar{A}}}^{\left(  2n+1\right)
}=T_{\mathcal{A}\leftarrow\mathcal{\tilde{A}}}^{\left(  2n+1\right)
}+T_{\mathcal{\tilde{A}}\leftarrow\mathcal{\bar{A}}}^{\left(  2n+1\right)
}+\vartheta.
\end{equation}
The functional dependence of the closed form $\vartheta$ can be determined by
using a powerful tool, the \textit{Extended Homotopy Cartan formula.}

\subsection{The extended Cartan homotopy formula \textrm{ECHF}}

Consider now a set of $(r+2)$ independent gauge connections $\mathcal{A}%
_{i}\in\Omega^{1}\left(  \mathcal{M}\right)  \otimes\mathfrak{g}$ with
$i=0,...,r+1$. Moreover, let us consider the embedding of the $(r+1)$ simplex $\Delta_{r+1}$ in $\mathbb{R}^{r+2}$ defined by
\begin{equation}
\begin{tabular}{c c c c}
$\{(t^0,t^1, \ldots,t^{r+1}) \in \mathbb{R}^{r+2}$, & 
with& $t^i\geqslant 0, \forall i=1, \ldots, r+1$ and& $\sum\limits_{i=0}^{r+1}t^i=1\}$
\end{tabular}
\end{equation}
 The relation between the simplex and the set of gauge connections is such that the expression
\begin{equation}
\mathcal{A}_{t}=\sum\limits_{i=0}^{r+1}t^{i}\mathcal{A}_{i}%
\end{equation}
transforms as a gauge connection when $(t^0,t^1, \ldots,t^{r+1})\in \Delta_{r+1}$. It is direct to verify that if one performs a gauge transformation to each $\mathcal{A}_{i}$, then $\mathcal{A}_{t}$ transforms as a connection. This
allow us to define a two--form curvature $\mathcal{F}_{t}=\dd\mathcal{A}%
_{t}+\mathcal{A}_{t}\wedge\mathcal{A}_{t}$. 
Thus, for every point of the simplex $\Delta_{r+1}$, we associate a connection $\mathcal{A}_i$. In particular, the $i$-th vertex of $\Delta_{r+1}$ is related to
$i$-th connection $\mathcal{A}_{i}$. We denote the simplex of the associated gauge connections by
\begin{equation}
\Delta_{r+1}=\left(  \mathcal{A}_{0}\mathcal{A}_{1}...\mathcal{A}_{r+1}\right).
\end{equation}

Let $\Upsilon$ be a polynomial in the forms $\left\{  \mathcal{A}%
_{t},\mathcal{F}_{t},\dd_{t}\mathcal{A}_{t},\dd_{t}\mathcal{F}_{t}\right\}  $
which is also a $\left(  q+m\right)  -$form over $\Delta_{r+1}\times\mathcal{M}$. Let $\dd$ and $\dd_t$ be the exterior derivative operator acting on $\mathcal{M}$ and $\Delta_{r+1}$ respectively.
We introduce now the \textit{homotopy derivation
} operator $l_{t}$ which maps differential forms according to%
\begin{equation}
l_{t}:\Omega^{a}\left(  \mathcal{M}\right)  \times\Omega^{b}\left(
\Delta_{r+1}\right)  \longrightarrow\Omega^{a-1}\left(  \mathcal{M}\right)
\times\Omega^{b+1}\left(  \Delta_{r+1}\right)  .
\end{equation}
This operator satisfies the Leibniz rule, and together with the operators $\dd$
and $\dd_{t}$ define the following graded algebra%
\begin{align}
\dd^{2} &  =0,\label{pr27}\\
\dd_{t}^{2} &  =0,\label{pr28}\\
\left[  l_{t},\dd\right]   &  =\dd_{t},\label{pr29}\\
\left\{  \dd,\dd_{t}\right\}   &  =0.\label{pr30}%
\end{align}
A consistent way to define the action of $l_{t}$ over $\mathcal{A}_{t}$ and $\mathcal{F}_{t}$ is the following%
\begin{align}
l_{t}\mathcal{F}_t &  =\dd_{t}\mathcal{A}_{t},\\
l_{t}\mathcal{A}_{t} &  =0.
\end{align}

Using eq.$\left(  \ref{pr27}-\ref{pr30}\right)  $ it is possible to show that%
\begin{equation}
\left[  l_{t}^{p+1},\dd\right]  \Upsilon=\left(  p+1\right)  \dd_{t}l_{t}%
^{p}\Upsilon.
\end{equation}
Now, integrating over the simplex $\Delta_{r+1}$ with $p+q=r$ and $m\geqslant p$, we have%
\begin{align}
\int_{\Delta_{r+1}}l_{t}^{p+1} \dd\Upsilon-\int_{\Delta_{r+1}}\dd l_{t}^{p+1}\Upsilon &
=\left(  p+1\right)  \int_{\Delta_{r+1}}\dd_{t}l_{t}^{p}\Upsilon \nonumber\\
& =\left(  p+1\right)  \int_{\partial \Delta_{r+1}}l_{t}^{p}\Upsilon.\label{pr31}%
\end{align}
Since $l_{t}^{p+1}\Upsilon$ is a $\left(  r+1\right)  -$form over $\Delta_{r+1}$,
it follows%
\begin{equation}
\int_{\Delta_{r+1}}\dd l_{t}^{p+1}\Upsilon=\left(  -1\right)  ^{r+1}\dd \int_{\Delta_{r+1}%
}l_{t}^{p+1}\Upsilon
\end{equation}
Replacing in eq.$\left(  \ref{pr31}\right)  $ and introducing normalization
factors we have%
\begin{equation}
\int_{\partial \Delta_{r+1}}\frac{1}{p!}l_{t}^{p}\Upsilon=\int_{\Delta_{r+1}}\frac
{1}{\left(  p+1\right)  !}l_{t}^{p+1}\dd \Upsilon+\left(  -1\right)  ^{r}%
\dd\int_{\Delta_{r+1}}\frac{1}{\left(  p+1\right)  !}l_{t}^{p+1}\Upsilon.\label{pr32}%
\end{equation}
This important result it is known in literature as the \textit{Extended
Homotopy Cartan Formula }(\textrm{EHCF}) \cite{Manes:2012zz,Izaurieta:2006wv,Izaurieta:2005vp}.

Let us look now at the particular case where the following polynomial is
selected%
\begin{equation}
\Upsilon=\left\langle \mathcal{F}_{t}^{n+1}\right\rangle .
\end{equation}
This choice carries three properties
\begin{enumerate}
\item $\Upsilon$ is $\mathcal{M}-$closed,
\item $\Upsilon$ is a $0-$form on $\Delta_{r+1}$ i.e, $q=0$,
\item $\Upsilon$ is a $\left(  2n+2\right)  -$form on $\mathcal{M}$. Thus, $0\leqslant p \leqslant 2n+2$.
\end{enumerate}

With these considerations
\textrm{EHCF} reduces to%
\begin{equation}
\int_{\partial \Delta_{p+1}}\frac{l_{t}^{p}}{p!}\left\langle \mathcal{F}_{t}%
^{n+1}\right\rangle =\left(  -1\right)  ^{p}\dd\int_{\Delta_{p+1}}\frac{l_{t}^{p+1}%
}{\left(  p+1\right)  !}\left\langle \mathcal{F}_{t}^{n+1}\right\rangle,
\end{equation}
which is known as the restricted or closed version of \textrm{EHCF}.

\subsubsection{$p=0$, the Chern-Weil theorem}

A well known particular case of \textrm{EHCF} is the Chern--Weil theorem.
Setting $p=0$ in the above expression, we get%
\begin{equation}
\int_{\partial \Delta_{1}}\left\langle \mathcal{F}_{t}^{n+1}\right\rangle
=\dd\int_{\Delta_{1}}l_{t}\left\langle \mathcal{F}_{t}^{n+1}\right\rangle
\label{ch2.jojo}%
\end{equation}
where $\mathcal{F}_{t}$ is the curvature for the connection $\mathcal{A}%
_{t}=t_{0}\mathcal{A}_{0}+t_{1}\mathcal{A}_{1}$ where $t_{0}+t_{1}=1$. The
boundary of the simplex $\Delta_{1}=\left(  \mathcal{A}_{0},\mathcal{A}_{1}\right)
$ corresponds to%
\begin{equation}
\partial\left(  \mathcal{A}_{0},\mathcal{A}_{1}\right)  =\left(
\mathcal{A}_{1}\right)  -\left(  \mathcal{A}_{0}\right)  ,
\end{equation}
so the left hand side of eq.$\left(  \ref{ch2.jojo}\right)  $ is then%
\begin{equation}
\int_{\partial \Delta_{1}}\left\langle \mathcal{F}_{t}^{n+1}\right\rangle
=\left\langle \mathcal{F}_{1}^{n+1}\right\rangle -\left\langle \mathcal{F}%
_{0}^{n+1}\right\rangle .
\end{equation}
Since $\left\langle \mathcal{F}_{t}^{n+1}\right\rangle $ is a symmetric
polynomial, we have
\begin{equation}
l_{t}\left\langle \mathcal{F}_{t}^{n+1}\right\rangle =\left(  n+1\right)
\left\langle l_{t}\mathcal{F}_{t}\mathcal{F}_{t}^{n}\right\rangle .
\end{equation}
Now, the homotopic derivative of the curvature $l_{t}\mathcal{F}_{t}$ is given
by
\begin{align}
l_{t}\mathcal{F}_{t} &  =\dd_{t}\mathcal{A}_{t} \  \nonumber\\
&  =\dd t^{0}\mathcal{A}_{0}+\dd t^{1}\mathcal{A}_{1}\nonumber \\
&  =\dd t^{1}\left(  \mathcal{A}_{1}-\mathcal{A}_{0}\right)  .
\end{align}
In this way,%
\begin{equation}
l_{t}\left\langle \mathcal{F}_{t}^{n+1}\right\rangle =\left(  n+1\right)
\dd t^{1}\left\langle \left(  \mathcal{A}_{1}-\mathcal{A}_{0}\right)
\wedge\mathcal{F}_{t}^{n}\right\rangle .\label{pr33}%
\end{equation}
Replacing in eq.$\left(  \ref{ch2.jojo}\right)  $ we obtain the Chern--Weil
theorem%
\begin{align}
\left\langle \mathcal{F}_{1}^{n+1}\right\rangle -\left\langle \mathcal{F}%
_{0}^{n+1}\right\rangle  &  =\left(  n+1\right)  \dd\int_{0}^{1}\dd t\left\langle
\left(  \mathcal{A}_{1}-\mathcal{A}_{0}\right)  \wedge\mathcal{F}_{t}%
^{n}\right\rangle \\
&  =\dd T_{\mathcal{A}_{1}\leftarrow\mathcal{A}_{0}}^{\left(  2n+1\right)  }.
\end{align}

\subsubsection{$p=1$, the triangle equation} \label{tria}

The $p=1$ case corresponds to the so called triangle equation. In fact, for
$p=1$ \textrm{EHCF} reads%
\begin{equation}
\int_{\partial \Delta_{2}}l_{t}\left\langle \mathcal{F}_{t}^{n+1}\right\rangle
=-\frac{1}{2}\dd\int_{\Delta_{2}}l_{t}^{2}\left\langle \mathcal{F}_{t}^{n+1}%
\right\rangle, \label{ch2.2klkl}%
\end{equation}
where $\mathcal{F}_{t}$ is the curvature for the connection $\mathcal{A}%
_{t}=t_{0}\mathcal{A}_{0}+t_{1}\mathcal{A}_{1}+t_{2}\mathcal{A}_{2},$ with
$t_{0}+t_{1}+t_{2}=1$. Again, the boundary of the simplex $\Delta_{2}=\left(
\mathcal{A}_{0},\mathcal{A}_{1},\mathcal{A}_{2}\right)  $ corresponds to%
\begin{equation}
\partial\left(  \mathcal{A}_{0},\mathcal{A}_{1},\mathcal{A}_{2}\right)
=\left(  \mathcal{A}_{1}\mathcal{A}_{2}\right)  -\left(  \mathcal{A}%
_{0}\mathcal{A}_{2}\right)  +\left(  \mathcal{A}_{0}\mathcal{A}_{1}\right)  ,
\end{equation}
and the left hand side in eq.$\left(  \ref{ch2.2klkl}\right)  $ is given by%
\begin{equation}
\int_{\partial \Delta_{2}}l_{t}\left\langle \mathcal{F}_{t}^{n+1}\right\rangle
=\int_{\left(  \mathcal{A}_{1}\mathcal{A}_{2}\right)  }l_{t}\left\langle
\mathcal{F}_{t}^{n+1}\right\rangle -\int_{\left(  \mathcal{A}_{0}%
\mathcal{A}_{2}\right)  }l_{t}\left\langle \mathcal{F}_{t}^{n+1}\right\rangle
+\int_{\left(  \mathcal{A}_{0}\mathcal{A}_{1}\right)  }l_{t}\left\langle
\mathcal{F}_{t}^{n+1}\right\rangle .
\end{equation}
Since $\left[  \text{see eq.}\left(  \ref{pr33}\right)  \right]  $%
\begin{equation}
\int_{\left(  \mathcal{A}_{1}\mathcal{A}_{2}\right)  }l_{t}\left\langle
\mathcal{F}_{t}^{n+1}\right\rangle =T_{\mathcal{A}_{2}\leftarrow
\mathcal{A}_{1}}^{\left(  2n+1\right)  },
\end{equation}
it follows%
\begin{equation}
\int_{\partial \Delta_{2}}l_{t}\left\langle \mathcal{F}_{t}^{n+1}\right\rangle
=T_{\mathcal{A}_{2}\leftarrow\mathcal{A}_{1}}^{\left(  2n+1\right)
}-T_{\mathcal{A}_{2}\leftarrow\mathcal{A}_{0}}^{\left(  2n+1\right)
}+T_{\mathcal{A}_{1}\leftarrow\mathcal{A}_{0}}^{\left(  2n+1\right)
}.\label{ch2.2.ttt}%
\end{equation}
Now, using the symmetry of the polynomial $\left\langle ...\right\rangle $ one
derives%
\begin{equation}
\frac{1}{2}l_{t}^{2}\left\langle \mathcal{F}_{t}^{n+1}\right\rangle =\frac
{1}{2}n\left(  n+1\right)  \left\langle \left(  \dd_{t}\mathcal{A}_{t}\right)
^{2}\wedge\mathcal{F}_{t}^{n-1}\right\rangle ,\label{ch2.2.popo}%
\end{equation}
where%
\begin{equation}
\dd_{t}\mathcal{A}_{t}=\dd t^{0}\mathcal{A}_{0}+\dd t^{1}\mathcal{A}_{1}%
+\dd t^{2}\mathcal{A}_{2,}%
\end{equation}
but considering that
\begin{equation}
\dd t^{0}+\dd t^{1}+\dd t^{2}=0 \ ,
\end{equation}
one finds%
\begin{equation}
\dd_{t}\mathcal{A}_{t}=\dd t^{0}\left(  \mathcal{A}_{0}-\mathcal{A}_{1}\right)
+\dd t^{2}\left(  \mathcal{A}_{2}-\mathcal{A}_{1}\right) \ .
\end{equation}
Replacing in eq.$\left(  \ref{ch2.2.popo}\right)  $ we obtain%
\begin{equation}
\frac{1}{2}l_{t}^{2}\left\langle \mathcal{F}_{t}^{n+1}\right\rangle =-n\left(
n+1\right)  \dd t^{0}\dd t^{2}\left\langle \left(  \mathcal{A}_{2}-\mathcal{A}%
_{1}\right)  \wedge\left(  \mathcal{A}_{1}-\mathcal{A}_{0}\right)
\wedge\mathcal{F}_{t}^{n-1}\right\rangle \ .
\end{equation}
It is convenient to re-define the integration parameters%
\begin{align*}
t &  =1-t^{0},\\
s &  =t^{2},
\end{align*}
and integrate explicitly over $\Delta_{2}$. In this way, we get
\begin{equation}
\frac{1}{2}\dd\int_{ \Delta_{2}}l_{t}^{2}\left\langle \mathcal{F}_{t}%
^{n+1}\right\rangle =Q_{\mathcal{A}_{2}\leftarrow\mathcal{A}_{1}%
\leftarrow\mathcal{A}_{0}}^{\left(  2n\right)  } \ ,%
\end{equation}
where $Q_{\mathcal{A}_{2}\leftarrow\mathcal{A}_{1}\leftarrow\mathcal{A}_{0}%
}^{\left(  2n\right)  }$ is defined as%
\begin{equation}
Q_{\mathcal{A}_{2}\leftarrow\mathcal{A}_{1}\leftarrow\mathcal{A}_{0}}^{\left(
2n\right)  }=n\left(  n+1\right)  \int_{0}^{1} \dd t\int_{0}^{t} \dd s\left\langle
\left(  \mathcal{A}_{2}-\mathcal{A}_{1}\right)  \wedge\left(  \mathcal{A}%
_{1}-\mathcal{A}_{0}\right)  \wedge\mathcal{F}_{st}^{n-1}\right\rangle  \label{chortiz}
\end{equation}
where $\mathcal{F}_{st}$ is the curvature for the connection
\begin{equation}
\mathcal{A}_{s,t}   =s\, \left(  \mathcal{A}_{2}-\mathcal{A}_{1}\right)
+t\, (\mathcal{A}_{1}-\mathcal{A}_{0})  +\mathcal{A}_{0} \ .
\end{equation}
Inserting eq.(\ref{chortiz}) into eq.$\left(  \ref{ch2.2.ttt}\right)  $ the
triangle equation is finally obtained%
\begin{equation}
\ T_{\mathcal{A}_{2}\leftarrow\mathcal{A}_{0}}^{\left(  2n+1\right)
}=T_{\mathcal{A}_{2}\leftarrow\mathcal{A}_{1}}^{\left(  2n+1\right)
}+T_{\mathcal{A}_{1}\leftarrow\mathcal{A}_{0}}^{\left(  2n+1\right)
}+\dd Q_{\mathcal{A}_{2}\leftarrow\mathcal{A}_{1}\leftarrow\mathcal{A}_{0}%
}^{\left(  2n\right)  }.\label{pr34}%
\end{equation}

As we will see in the following chapter,  transgression forms will be used
as Lagrangians for constructing physical theories. For this reason, eq.$\left(  \ref{pr34}\right)  $ is
extremely useful since it allows to split the transgression in different pieces which will correspond
to different interactions present in the resulting theory.

\chapter{Transgression forms as source for gauge theories}
\label{ch:Transgression}
\begin{flushright}
\textit{``...s\'olo cuando transgredo alguna
orden \\
el futuro se vuelve respirable...}''. \\ \textit{Transgresiones, Mario Benedetti.}
\footnote{\scriptsize ``...only when I transgress an order, the future becomes breathable...''. Transgressions, Mario Benedetti.}
\bigskip
\end{flushright}

Given a certain symmetry, the natural way to construct a gauge invariant
theory is by using a Yang-Mills Lagrangian. This is for instance the case for
the interactions of the Standard Model, the most successful and remarkable
model in particle physics. The Yang--Mills action functional is given by%
\begin{equation}
\mathsf{S}_{\mathrm{YM}}=-\frac{1}{4}\int_{\mathcal{M}}\sqrt{g}%
\text{\textbf{Tr}}\left(  \mathcal{F}^{\mu\nu}\mathcal{F}_{\mu\nu}\right)
d^{4}x. \label{tr1}%
\end{equation}
However, is not so hard to face some limitations. As it can be seen from
eq.$\left(  \ref{tr1}\right)  $, the Yang-Mills action requires the inclusion
of a metric structure as background. Thus, in the case of a curved base
space $\mathcal{M}$, the metric tensor becomes dynamical and the action cannot
be considered as describing a pure gauge theory due to the inclusion of a
dynamic field which it is not part of the gauge connection $\mathcal{A}$.

For this reason, it seems natural to think in gauge theories which are
background independent. Transgression forms are natural candidates for
``metric-free" actions in the sense that they can be used to construct gauge
theories without any associated background. The scenario is still better:
Transgressions forms are genuinely gauge invariant objects, they are also
globally defined so they can be integrated over the whole base space
$\mathcal{M}$. The price to pay is the inclusion, in addition to the gauge
field $\mathcal{A}$, a second one--form connection $\mathcal{\bar{A}}$. This
duplicity in the gauge field configurations may be thought as an obstruction from a
traditional point of view. However, it will be shown that the presence of two gauge
connections is a very versatile feature. In fact, 
transgression forms give rise to different types of theories depending on
the conditions imposed on the connections $\mathcal{A}$ and $\mathcal{\bar{A}%
}$.

Throughout this chapter, we review the consequences of choosing any of these conditions. First, we consider the most general transgression field theory. That is, without imposing any condition on the
connections. Secondly, we show that transgressions forms lead to Chern--Simons theories by imposing 
$\mathcal{\bar{A}}=0$, and we will present as an example the construction of Chern--Simons gravitational theories in odd dimensions. Finally, we study the case when both gauge
connections are related by a gauge transformation. In the later case, the resulting theory is the so called
gauged Wess--Zumino--Witten action. For more details about some of the results
presented in this section, see \cite{Mora:2003wy,Borowiec:2003rd,Borowiec:2005ky,Miskovic:2004us}.

\newsection{Transgressions forms as Lagrangians}

\begin{definition}
Let $\mathcal{P}$ be a principal bundle with a $\left(  2n+1\right)
-$dimensional orientable base space $\mathcal{M}$. Let $\mathcal{A}$ and
$\mathcal{\bar{A}}$ be two Lie valued connections, and let
$T_{\mathcal{A}\leftarrow\mathcal{\bar{A}}}^{\left(  2n+1\right)  }$ be the
transgression form over the base space $\mathcal{M}$ given by eq.$(\ref{transform})$.
We define as the Transgression Lagrangian over $\mathcal{M}$, to the $\left(
2n+1\right)  -$form 
\begin{align}
\mathscr{L}_{\mathsf{T}}^{\left(  2n+1\right)  }\left(  \mathcal{A}%
,\mathcal{\bar{A}}\right)   &  =\kappa T_{\mathcal{A}\leftarrow\mathcal{\bar
{A}}}^{\left(  2n+1\right)  }\nonumber\\
&  =\kappa\left(  n+1\right)  \int_{0}^{1}dt\left\langle \Theta\wedge
\mathcal{F}_{t}^{n}\right\rangle \ , \label{ch2transl}%
\end{align}
where $\kappa$ is a constant, $\Theta=\mathcal{A}$ $-\mathcal{\bar{A}}$, and
$\mathcal{F}_{t}=d\mathcal{A}_{t}+\mathcal{A}_{t}\wedge\mathcal{A}_{t}$ is the
curvature of $\mathcal{A}_{t}=\mathcal{\bar{A}}+t\Theta$.
\end{definition}

It is direct to show that the transgression Lagrangian eq.$\left(
\ref{ch2transl}\right)  $ is gauge invariant. In fact, under gauge
transformations~\cite{Izaurieta:2005vp,Mora:2006ka}%
\begin{align}
\mathcal{A}  &  \rightarrow\mathcal{A}^{\prime}=g^{-1}\mathcal{A}%
g+g^{-1}\dd g\label{ch2gin1} \ , \\
\mathcal{\bar{A}}  &  \rightarrow\mathcal{\bar{A}}^{\prime}=g^{-1}%
\mathcal{\bar{A}}g+g^{-1}\dd g \ , \label{ch2gin2}%
\end{align}
with $g\left(  x\right)  =\exp\left\{  \lambda^{A}\left(  x\right)
\mathsf{T}_{A}\right\}  \in\mathcal{G}$ and $\left\{  \mathsf{T}%
_{A},A=1,...,\dim\left(  \mathfrak{g}\right)  \right\}  ,$we have
\begin{align}
\Theta &  \longrightarrow\Theta^{\prime}=g^{-1}\Theta g \ ,\\
\mathcal{F}_{t}  &  \longrightarrow\mathcal{F}_{t}^{\prime}=g^{-1}%
\mathcal{F}_{t}g \ .
\end{align}
Using the invariance property of $\left\langle \ldots\right\rangle $
eq.$(\ref{pr21})$, it follows%
\begin{equation}
\mathscr{L}_{\mathsf{T}}^{\left(  2n+1\right)  }\left(  \mathcal{A}%
,\mathcal{\bar{A}}\right)  =\mathscr{L}_{\mathsf{T}}^{\left(  2n+1\right)
}\left(  \mathcal{A}^{\prime},\mathcal{\bar{A}}^{\prime}\right)  .
\end{equation}

The transgression Lagrangian satisfy two properties
\begin{itemize}
\item Antisymmetry%
\begin{equation}
\mathscr{L}_{\mathsf{T}}^{\left(  2n+1\right)  }\left(  \mathcal{A}%
,\mathcal{\bar{A}}\right)  =-\mathscr{L}_{\mathsf{T}}^{\left(  2n+1\right)
}\left(  \mathcal{\bar{A}},\mathcal{A}\right)  .
\end{equation}

\item Triangle Equation [see Section $(\ref{tria})$]%
\begin{equation}
\mathscr{L}_{\mathcal{A\leftarrow\bar{A}}}^{\left(  2n+1\right)  } =\mathscr{L}_{\mathcal{A\leftarrow\tilde{A}}}^{\left(
2n+1\right)  }-\mathscr{L}_{\mathcal{\bar{A}\leftarrow\tilde{A}}}^{\left(  2n+1\right)
}-\kappa\dd Q_{\mathcal{A\leftarrow\bar{A}\leftarrow\tilde{A}}}^{\left(  2n\right)  } \ .
\label{ch2trian}%
\end{equation}

\end{itemize}

It is interesting to note that imposing $\mathcal{\tilde{A}}=0$ in eq.$\left(
\ref{ch2trian}\right)  $ does not affect the global property of $\mathscr{L}%
_{\mathsf{T}}^{\left(  2n+1\right)  }\left(  \mathcal{A},\mathcal{\bar{A}%
}\right)  $. This is mainly because the definition of $\mathscr{L}_{\mathsf{T}%
}^{\left(  2n+1\right)  }\left(  \mathcal{A},\mathcal{\bar{A}}\right)  $ does
not depend on the election of $\mathcal{\tilde{A}}$.

On the other hand, since $Q^{\left(  2n+1\right)  }\left(  \mathcal{A}\right)
=T_{\mathcal{A}\leftarrow0}^{\left(  2n+1\right)  }$ corresponds to the
Chern--Simons form locally defined over $\mathcal{M}$, it is natural to define
the Chern--Simons Lagrangian as%
\begin{align}
\mathscr{L}_{\mathsf{CS}}^{\left(  2n+1\right)  }\left(  \mathcal{A}\right)
&  =\kappa T_{\mathcal{A}\leftarrow0}^{\left(  2n+1\right)  }\nonumber\\
&  =\kappa\left(  n+1\right)  \int_{0}^{1}dt\left\langle \mathcal{A\wedge
}\left(  td\mathcal{A}+t^{2}\mathcal{A}\wedge\mathcal{A}\right)^n  \right\rangle
. \label{ch2csl}%
\end{align}
Thus, the transgression Lagrangian can be written as the difference
of two Chern--Simons forms plus an exact form 
\begin{equation}
\mathscr{L}_{\mathsf{T}}^{\left(  2n+1\right)  }\left(  \mathcal{A}%
,\mathcal{\bar{A}}\right)  =\mathscr{L}_{\mathsf{CS}}^{\left(  2n+1\right)
}\left(  \mathcal{A}\right)  -\mathscr{L}_{\mathsf{CS}}^{\left(  2n+1\right)
}\left(  \mathcal{\bar{A}}\right)  -\kappa \dd \mathscr{B}^{(2n)}(\mathcal{A},\mathcal{\bar{A}}) \ , \label{ch2tranlag}%
\end{equation}
where $\mathscr{B}=Q_{\mathcal{A\leftarrow\bar
{A}\leftarrow}0}^{\left(  2n\right)  }$. 
It is important to remark that even when the Chern--Simons Lagrangian is only
locally defined, the transgression Lagrangian still is globally defined. This
is due to the presence of the boundary term $ \dd \mathscr{B}$, which plays the role of a
regularizing term, guaranteeing the full invariance of the Lagrangian under
gauge transformations.

\newsection{Transgression gauge field theory}

The most general transgression field theory can be constructed considering
$\mathcal{A}$ and $\mathcal{\bar{A}}$ as independent dynamic fields. In this
case, using eq.$\left(  \ref{ch2tranlag}\right)  $, the transgression action is given by%
\begin{align}
\mathsf{S}_{\mathsf{T}}^{\left(  2n+1\right)  }\left[  \mathcal{A}%
,\mathcal{\bar{A}}\right]   &  =\int_{\mathcal{M}}\mathscr{L}_{\mathsf{T}%
}^{\left(  2n+1\right)  }\left(  \mathcal{A},\mathcal{\bar{A}}\right)
\nonumber\\
&  =\int_{\mathcal{M}}\mathscr{L}_{\mathsf{CS}}^{\left(  2n+1\right)  }\left(
\mathcal{A}\right)  -\int_{\mathcal{M}}\mathscr{L}_{\mathsf{CS}}^{\left(
2n+1\right)  }\left(  \mathcal{\bar{A}}\right)  -\kappa \dd \int_{\mathcal{M}%
}\mathscr{B}^{(2n)}(\mathcal{A},\mathcal{\bar{A}}) \ ,\label{ch2tranact}%
\end{align}
which describes  a gauge field theory composed by two
auto-interacting gauge fields $\mathcal{A}$ and $\mathcal{\bar{A}}$ at the
bulk of $\mathcal{M}$, plus a mutual interaction at the boundary
$\partial\mathcal{M}$.

There are two independent set of symmetries lurking in the transgression
action eq.$\left(  \ref{ch2tranact}\right)  $. The first one is a built-in
symmetry, guaranteed from the outset by our use of differential forms
throughout, namely diffeomorphism invariance%
\begin{align}
\delta_{\text{{\scriptsize diff}}}\mathcal{A}  &  =-\pounds _{\xi}%
\mathcal{A},\label{ch2diffa}\\
\delta_{\text{{\scriptsize diff}}}\mathcal{\bar{A}}  &  =-\pounds _{\xi
}\mathcal{\bar{A}}, 
\end{align}
where $\pounds $ is the Lie derivative operator, and $\xi$ is a vector field
that generates the infinitesimal diffeomorphism.

The second symmetry is gauge symmetry. Under a continuous local gauge
transformation with element $g=\exp\left\{  \lambda^{A}\mathsf{T}_{A}\right\}
$, the gauge connections $\mathcal{A}$ and $\mathcal{\bar{A}}$ change
according to eq.$\left(  \ref{ch2gin1},\ref{ch2gin2}\right)  $. In an
infinitesimal form, these gauge transformations are given by%
\begin{align}
\delta_{\text{{\scriptsize gauge}}}\mathcal{A}  &  =D_{\mathcal{A}}%
\lambda,\label{ch2gaugetr}\\
\delta_{\text{{\scriptsize gauge}}}\mathcal{\bar{A}}  &  =D_{\mathcal{\bar{A}%
}}\lambda,\nonumber
\end{align}
where $\lambda=\lambda^{A}\mathsf{T}_{A}$ is a $\mathfrak{g}-$valued $0-$form
group parameter.

The variation of the transgression action eq.$\left(  \ref{ch2tranact}\right)
$ leads to the following equations of motion%
\begin{align}
\left\langle \mathsf{T}_{A}\mathcal{F}^{n}\right\rangle  &  =0,\\
\left\langle \mathsf{T}_{A}\mathcal{\bar{F}}^{n}\right\rangle  &  =0,
\end{align}
subject to the boundary conditions%
\begin{equation}
\left.  \int_{0}^{1}dt\left\langle \delta\mathcal{A}_{t}\wedge\Theta
\wedge\mathcal{F}_{t}^{n-1}\right\rangle \right\vert _{\partial\mathcal{M}%
}=0.
\end{equation}

For interesting applications of transgression fields theories in the context of supergravity and more recently in the context of hydrodynamics, see~\cite{Mora:2004kb,Izaurieta:2005vp,Haehl:2013hoa}

\newsection{Chern--Simons theory}

Using Chern--Simons forms as Lagrangians seems to be the most direct way to
avoid using two gauge connections. In fact, the Chern--Simons Lagrangian
eq.$\left(  \ref{ch2csl}\right)  $ depends only on a single dynamical field
$\mathcal{A}$. However, the simplification of imposing $\mathcal{\bar{A}}=0$
in the transgression Lagrangian eq.$\left(  \ref{ch2transl}\right)  $ gives
rise to nontrivial problems. As it was mentioned in section \ref{chssect},
Chern--Simons forms are only locally defined over the base space $\mathcal{M}%
$. This means that, if $\left\{  U_{\alpha}\right\}  $ is an open overing of a
$\left(  2n+1\right)  -$dimensional base space $\mathcal{M}$, on the nonempty
intersection $U_{\alpha}\cap U_{\beta}$ it must occur that%
\begin{equation}
Q_{\alpha}^{\left(  2n+1\right)  }\left(  \mathcal{A}_{\alpha}\right)  \neq
Q_{\beta}^{\left(  2n+1\right)  }\left(  \mathcal{A}_{\beta}\right)  .
\end{equation}
This nonlocality property of the Chern--Simons form is a source of ambiguity
in the definition of the action. The reason is basically that integrating
$\left.  \mathscr{L}_{\mathsf{CS}}^{\left(  2n+1\right)  }\right\vert
_{U_{\alpha}}=\kappa Q_{\alpha}^{\left(  2n+1\right)  }$ and $\left.
\mathscr{L}_{\mathsf{CS}}^{\left(  2n+1\right)  }\right\vert _{U_{\beta}%
}=\kappa Q_{\beta}^{\left(  2n+1\right)  }$ on the intersection $U_{\alpha
}\cap U_{\beta}$, leads in general to different answers and hence is
impossible to define the action in a unique manner.

However, in the case of Chern--Simons forms it is possible to partially
circumvent this problem. An easy way to see this is by considering eq.$(\ref{csclo}) $. Since $\left\langle \mathcal{F}^{n+1}\right\rangle $ is
gauge invariant, it follows that under a gauge transformation%
\begin{equation}
\delta_{\text{{\scriptsize gauge}}}\left\langle \mathcal{F}^{n+1}\right\rangle
=\dd \left(  \delta_{\text{{\scriptsize gauge}}}Q^{\left(  2n+1\right)  }\right)
=0.
\end{equation}
Thus, the Chern--Simons form is gauge invariant up to a closed form.
This implies that a Chern--Simons action can be only locally defined  modulo
boundary terms%
\begin{equation}
\mathsf{S}_{\mathsf{CS}}^{\left(  2n+1\right)  }\left[  \mathcal{A}\right]
=\int_{\mathcal{M}}\mathscr{L}_{\mathsf{CS}}^{\left(  2n+1\right)  }\left(
\mathcal{A}\right)  +\int_{\partial\mathcal{M}}X^{\left(  2n\right)  }.
\label{ch2csact}%
\end{equation}

The variation of the Chern--Simons action eq.$\left(  \ref{ch2csact}\right)  $
gives the equations of motion associated to the connection $\mathcal{A}$%
\begin{equation}
\left.  \left\langle \mathsf{T}_{A}\mathcal{F}^{n}\right\rangle \right\vert
_{\mathcal{M}}=0,
\end{equation}
but in principle is not possible to define boundary conditions starting
form the action principle. This has serious implications such as the
non uniqueness of the Noether currents and conserved charges. For this
reason, is more interesting to work with transgression forms. In contrast
to Chern--Simons forms, any action principle constructed using transgression as Lagrangians is uniquely defined
and conduce to finite conserved charges via Noether theorem. See~\cite{Izaurieta:2006ia,Rodriguez:2006xx} for a more detailed discussion.

The Chern--Simons action eq.$(\ref{ch2csact})$ is still diffeomorphism invariant and, as it was mentioned, it remains unchanged under
gauge transformations eq.$\left(  \ref{ch2gaugetr}\right) $ modulo boundary terms.

It is important to mention that the physical properties of a Chern--Simons
theory are strongly related with the choice of the Lie algebra $\mathfrak{g}$. In
principle, the Lie algebra may be composed of different subspaces which will
lead to different interactions in the resulting gauge theory; each of them with a different
physical interpretation. For instance, a particular sector in the Lie algebra
may correspond to gravitational theories and some other to the presence of
fermionic fields. Furthermore, depending on the form of
the invariant tensor, the Chern--Simons Lagrangian splits into pieces defined on the bulk of $\mathcal{M}$,
and some other pieces defined at the boundary $\partial\mathcal{M}$.

It is desirable then to use a systematic procedure in which the Lagrangian naturally splits accordingly with the respective subspaces associated to the Lie algebra, as well as to separate the bulk and the boundary contributions. The advantage in
doing so is that the resulting action principle can always be written in
terms of pieces which reflect the physical features of the theory. This can be
done by applying the \textrm{ECHF} recursively as we will be explained next.

\subsection{Subspace separation method} \label{SSMM}

The key observation behind the subspace separation method comes from the
triangle equation
\begin{equation}
 T_{\mathcal{A}_{2}\leftarrow\mathcal{A}_{1}}^{\left(  2n+1\right)
}=T_{\mathcal{A}_{2}\leftarrow\mathcal{A}_{0}}^{\left(  2n+1\right)
}-T_{\mathcal{A}_{1}\leftarrow\mathcal{A}_{0}}^{\left(  2n+1\right)
}-\dd Q_{\mathcal{A}_{2}\leftarrow\mathcal{A}_{1}\leftarrow\mathcal{A}_{0}%
}^{\left(  2n\right)  }. \label{ch2treq}%
\end{equation}

The transgression form $T_{\mathcal{A}_{2}\leftarrow\mathcal{A}_{1}}^{\left(
2n+1\right)  }$, which interpolates between the connections $\mathcal{A}_{2}$
and $\mathcal{A}_{1}$ can be decomposed as the sum of two transgressions;
$T_{\mathcal{A}_{2}\leftarrow\mathcal{A}_{0}}^{\left(  2n+1\right)  }$ which
interpolates between $\mathcal{A}_{2}$, $\mathcal{A}_{0}$, and $T_{\mathcal{A}%
_{1}\leftarrow\mathcal{A}_{0}}^{\left(  2n+1\right)  }$ interpolating between
$\mathcal{A}_{1}$, $\mathcal{A}_{0}$, plus an exact form which depends on
$\mathcal{A}_{2}$, $\mathcal{A}_{1}$ and $\mathcal{A}_{0}$. The
triangle equation allows to divide a transgression (or Chern--Simons
alternatively) Lagrangian by using an intermediary connection $\mathcal{A}_{1}$.
In this way, the iterative use of the triangle equation enable us to perform the
separation of the Lagrangian in terms of the subspaces early mentioned. For further details about the subspace separation method and its applications see \cite{Izaurieta:2005vp,Izaurieta:2006wv}.

Let $\mathfrak{g}$ be a Lie algebra, $\mathcal{A}$ and $\mathcal{\bar{A}}$ two
Lie valued connections and $\mathscr{L}_{\mathsf{T}}^{\left(  2n+1\right)
}\left(  \mathcal{A},\mathcal{\bar{A}}\right)  =\kappa T_{\mathcal{A}%
\leftarrow\mathcal{\bar{A}}}^{\left(  2n+1\right)  }$. The subspace separation
of the corresponding transgression Lagrangian (or Chern--Simons in the case
$\mathcal{\bar{A}}=0$), is obtained by applying the following procedure

\begin{enumerate}
\item Split the Lie algebra $\mathfrak{g}$ in terms of subspaces  $\mathfrak{g}=V_{0}\oplus\ldots\oplus V_{p}$.

\item Write the connections in pieces valued in each subspace $\mathcal{A}%
=\mathrm{a}_{0}+\ldots+\mathrm{a}_{p}$ and $\mathcal{\bar{A}}=\mathrm{\bar{a}%
}_{0}+\ldots+\mathrm{\bar{a}}_{p}$.

\item Use the triangle equation $\left(  \ref{ch2treq}\right)  $ for writing
$\mathscr{L}_{\mathsf{T}}^{\left(  2n+1\right)  }\left(  \mathcal{A}%
,\mathcal{\bar{A}}\right)  $ (or $\mathscr{L}_{\mathsf{CS}}^{\left(
2n+1\right)  }\left(  \mathcal{A}\right)  $) with%
\begin{align}
\mathcal{A}_{0}  &  =\mathcal{\bar{A}} \ ,\\
\mathcal{A}_{1}  &  =\mathrm{a}_{0}+\ldots+\mathrm{a}_{p-1} \ ,\\
\mathcal{A}_{2}  &  =\mathcal{A} \ .%
\end{align}

\item Iterate step $3$ for the transgression $T_{\mathcal{A}_{1}%
\leftarrow\mathcal{A}_{0}}^{\left(  2n+1\right)  }$ and so forth.
\end{enumerate}

As an example, the derivation of the Chern--Simons
gravity Lagrangian in arbitrary odd-dimension is presented.

\subsection{An example: Chern--Simons gravity}

In this section, we describe general aspects of Chern--Simons
theory of gravity in arbitrary odd dimensions~\cite{Zanelli:2005sa}. In order to do so, we use the tools developed in the previous
sections. Even though, we will come back to Chern--Simons gravity in the next chapter where the classification of topological theories of gravity in arbitrary dimensions is discussed.

Let $\mathcal{M}$ be a $(2n+1)-$dimensional manifold. We consider now the anti--de Sitter algebra in $(2n+1)-$dimensions $\mathfrak{g}=\mathfrak{so}(2n,2)$ with commutation relations 
\begin{align}
\left[  \mathsf{J}_{ab},\mathsf{J}_{cd}\right] &  =\eta_{ac}\, \mathsf{J} \label{adscommrel1}%
_{bd}+\eta_{bd}\, \mathsf{J}_{ca}-\eta_{bc}\, \mathsf{J}_{ad}-\eta_{ad}\,
\mathsf{J}_{bc} \ ,  \\[4pt]
\left[  \mathsf{J}_{ab},\mathsf{P}_{c}\right] &  =\eta_{ac}\, \mathsf{P}%
_{b}-\eta_{bc}\, \mathsf{P}_{a} \ ,  \\[4pt]
\left[  \mathsf{P}_{a},\mathsf{P}_{b}\right] & =\mathsf{J}_{ab} \ . \label{adscommrel2}
\end{align}
Here $\left\{  \mathsf{J}_{ab}\right\}  _{a,b=1}^{2n+1}$ generate
the Lorentz subalgebra $\mathfrak{so}(2n,1)$, $\left\{  \mathsf{P}_{a}\right\}  _{a=1}%
^{2n+1}$ generate the symmetric coset $\mathfrak{so}(2n,2)/\mathfrak{so}(2n,1)$ and 
$(\eta_{ab}) ={\rm diag}\left(  -1,1, \ldots ,1\right)  $ is a $(2n+1)$-dimensional Minkowski metric.

The anti--de Sitter algebra can be naturally decomposed 
into two vector subspaces $\mathfrak{so}\left(  2n,2\right)  =\mathsf{V}_{0}\oplus \mathsf{V}_{1}$ where
$\mathsf{V}_{0}={\mathrm{ Span}}_{\mathbb{C}}\left\{  \mathsf{J}_{ab}\right\}
$ and $\mathsf{V}_{1}={\mathrm{ Span}}_{\mathbb{C}} \left\{  \mathsf{P}_{a}\right\}  $. Since we are interested in pure Chern--Simons theory, the Lie valued one--form connections can be defined as follows
\begin{align}
\mathcal{\bar{A}} &=0 \ , \\
\mathcal{A}&= \omega + e \ ,
\end{align} 
with
\begin{align}
\omega&=\dfrac{1}{2} \omega^{ab}\mathsf{J}_{ab}, \\
e&= \frac{1}{l}e^a \mathsf{P}_{a} \ .
\end{align} 
The field $\omega$ is called the \textit{spin connection} since it transforms as a connection under gauge transformations valued in the Lorentz subgroup $SO(2n,1)$. The gauge field $e$ is the \textit{vielbeine} and transforms as a vector under the Lorentz gauge transformations. The constant $l$ has length units and is called the radius of curvature of anti--de Sitter space.

The invariant tensor associated to $\mathfrak{so}\left(  2n,2\right)$ has only one nonvanishing component given by
\begin{equation}
\left\langle \mathsf{J}_{a_{1}a_{2}}\cdots \mathsf{J}_{a_{2n-1}a_{2n}} \,
\mathsf{P}_{a_{2n+1}}\right\rangle =\frac{2^{n}}{n+1} \,
\epsilon_{a_{1}\cdots a_{2n+1}} \ . \label{ch2invten}
\end{equation}

In order to construct the Chern--Simons Lagrangian for the $SO(2n,2)$ group, we apply the subspace separation method. First, we use the vector subspace decomposition for the gauge algebra $\mathsf{V}_{0}={\mathrm{ Span}}_{\mathbb{C}}\left\{  \mathsf{J}_{ab}\right\}
$ and $\mathsf{V}_{1}={\mathrm{ Span}}_{\mathbb{C}} \left\{  \mathsf{P}_{a}\right\}  $. Next, we split the  gauge potential $\mathcal{A}$ into pieces valued in each subspace of the gauge algebra
\begin{align}
\mathcal{A}_{0} & =0 ,\\  
\mathcal{A}_{1}  &=\omega, \\
\mathcal{A}_{2}  &=\omega+e.
\end{align}
Finally, we write the triangle equation $(\ref{pr34})$
\begin{equation}
\mathscr{L}^{(2n+1)}_{\mathsf{CS}}(\omega,e)=
\kappa T^{(2n+1)}_{(\omega+e)\leftarrow \omega} + 
\kappa T^{(2n+1)}_{\omega\leftarrow 0} +
\kappa \dd Q^{(2n)}_{(\omega+e)\leftarrow \omega\leftarrow 0}.
\end{equation}

Given the invariant tensor eq.$(\ref{ch2invten})$, it is direct to show that $T^{(2n+1)}_{\omega\leftarrow 0}=0$. Moreover, the explicit form of $\kappa T^{(2n+1)}_{(\omega+e)\leftarrow \omega} $ is given by
\begin{equation}
\kappa T^{(2n+1)}_{(\omega+e)\leftarrow \omega}
=(n+1)\kappa\int_{0}^{1} \dd t \langle e \wedge \mathcal{F}^{n}_{t} \rangle \label{adsaction}
\end{equation}
with 
\begin{equation}
\mathcal{F}_{t}=\mathcal{R}+t^2e \wedge e+t\mathcal{T} \ ,
\end{equation}
and
\begin{align}
\mathcal{R} &= \frac{1}{2}R^{ab} \mathsf{J}_{ab} \ , \\
\mathcal{T} &= \frac{1}{l}T^a\mathsf{P}_{a} \ , \\
e \wedge e  &=\frac{1}{l^2}e^a \wedge e^b \mathsf{J}_{ab} \ .
\end{align}
The quantities $R^{ab}=\dd\omega^{ab}+\omega^{a}_{~c}\wedge \omega^{cb}$ and $T^a=\dd e^a+\omega^{a}_{~b}\wedge e^b$ correspond to the two--forms Lorentz curvature and Torsion respectively.
 
 For the expression $Q^{(2n)}_{(\omega+e)\leftarrow \omega\leftarrow 0}$, one has 
\begin{equation}
Q^{(2n)}_{(\omega+e)\leftarrow \omega\leftarrow 0}
=-n(n+1) \int_{0}^{1} \dd t \int_{0}^{t} \dd s \langle \mathcal{F}^{n-1}_{st} \wedge \omega \wedge e \rangle \ ,
\end{equation}
with 
\begin{equation}
\mathcal{F}_{st}=s\mathcal{T}_{t}+t\mathcal{R}_t+s^2e\wedge e \ .
\end{equation}
Here, we have defined
\begin{align}
\mathcal{T}_{t}  & =\frac{1}{l}T_{t}^{a}\mathsf{P}_{a}=\frac{1}{l}\left(
\dd e^{a}+t\omega_{~b}^{a}\wedge e^{b}\right)  \mathsf{P}_{a} \ ,\\
\mathcal{R}_{t}  & =\frac{1}{2}R_{t}^{ab}\mathsf{J}_{ab}=\frac{1}{2}\left(
\dd\omega^{ab}+t\omega_{~c}^{a}\wedge\omega^{cb}\right)  \mathsf{J}_{ab} \ .%
\end{align}

The Chern--Simons action for the $SO(2n,2)$ group is then
\begin{align}
&\mathsf{S}_{\mathsf{CS}}^{\left(  2n+1\right)  }   =\frac{\kappa}{l}%
\int_{\mathcal{M}}\int_{0}^{1}\dd t \, \epsilon_{a_{1}\ldots a_{2n+1}}\check{R}%
_{t}^{a_{1}a_{2}}\wedge\ldots\wedge\check{R}_{t}^{a_{2n-1}a_{2n}}\wedge
e^{a_{2n+1}}\\
& +\frac{n\kappa}{l}\dd\int_{\mathcal{M}}\int_{0}^{1}\dd t\int_{0}^{t}%
\dd s\,t^{n-1}\epsilon_{a_{1}\ldots a_{2n+1}}\check{R}_{st}^{a_{1}a_{2}}\wedge
\ldots\wedge\check{R}_{st}^{a_{2n-3}a_{2n-2}}\wedge\omega^{a_{2n-1}a_{2n}}\wedge
e^{a_{2n+1}} . \label{adscs} 
\end{align}
where we have used the abbreviations
\begin{equation}
\begin{tabular}
[c]{ll}%
$\check{R}_{t}^{ab}=\left(  R^{ab}+\frac{t^{2}}{l^{2}}e^{a}\wedge
e^{b}\right)  ;$ & $\check{R}_{st}^{ab}=\left(  R_{t}^{ab}+\frac{s^{2}}{l^{2}%
}e^{a}\wedge e^{b}\right)  $%
\end{tabular}  \label{chekr}
\end{equation}

Note that because of the form of the invariant tensor, the torsion does not appear explicitly in eq.$(\ref{adscs})$. Since the Chern--Simons Lagrangian it is not globally defined, its corresponding action is gauge invariant only modulo boundary terms. In fact, the Chern--Simons action defined on the bulk of $\mathcal{M}$ remains unchanged up to boundary contributions under the following transformations
\begin{align}
\delta e^{a}  & =-\lambda_{~b}^{a}e^{b}+D\lambda^{a},\\
\delta\omega^{ab}  & =D\lambda^{ab}+e^{a}\lambda^{b}-e^{b}\lambda^{a}.%
\end{align} 

Finally, The variation of the action eq.$(\ref{adscs})$, up to boundary terms, leads to the following equations of motion
\begin{align}
\mathcal{R}_{abc}T^{c}  & =0 \ ,\\
\mathcal{R}_{abc}\left(  R^{ab}+\frac{1}{l^{2}}e^{a}\wedge e^{b}\right)    & =0 \ ,
\end{align}
where%
\begin{equation}
\mathcal{R}_{abc}\equiv\epsilon_{abca_{1}...a_{2n-2}}\left(  R^{a_{1}a_{2}}+\frac
{1}{l^{2}}e^{a_{1}}\wedge e^{a_{2}}\right)  \wedge\ldots\wedge\left(
R^{a_{2n-3}a_{2n-2}}+\frac{1}{l^{2}}e^{a_{2n-3}}\wedge e^{a_{2n-2}}\right)  .
\end{equation}

\newsection{Gauged Wess--Zumino--Witten term}

We have learned that the most general transgression field theory eq.$\left(
\ref{ch2tranact}\right) $ is globally defined. The price to pay was the
inclusion of two connections $\mathcal{A}$ and $\mathcal{\bar{A}}$ as
independent dynamic fields. This certainly opens the question about the
physical meaning of the field $\mathcal{\bar{A}}$. We also have shown that
further simplifications, like imposing $\mathcal{\bar{A}}=0$ in the
transgression action, conduces to Chern--Simons theories eq.$\left(
\ref{ch2csact}\right)  $ which are not globally defined, and therefore they are gauge
invariant modulo boundary terms.

There is a third and no less interesting possibility: that is to regard
$\mathcal{A}$ and $\mathcal{\bar{A}}$ as two connections defining the same
nontrivial principal bundle $\mathcal{P}$. This means that in non empty overlaps $U_{\alpha}\cap U_{\beta}\neq\varnothing$, they are related by
a gauge transformation.

If a nontrivial bundle is considered, the transgression action eq.$\left(
\ref{ch2tranact}\right)  $ is not globally defined. However, the action can
be treated formally provided more than one chart is used. This justifies the
introduction of two gauge connections defined in different charts, such that in
the overlap of two charts $U_{\alpha}\cap U_{\beta}$, the connections are
related by
\begin{equation}
\mathcal{\bar{A}}=g^{-1}\mathcal{A}g+g^{-1}dg\equiv\mathcal{A}^{g} \ ,%
\end{equation}
where $g=\tau_{\alpha\beta}\left(  x\right) \in \mathcal{G} $ is a transition function which
determines the nontriviality of the principal bundle $\mathcal{P}$.

To illustrate how the transgression action eq.$\left(  \ref{ch2tranact}\right)  $
changes under $\mathcal{\bar{A}}\longrightarrow\mathcal{A}^{g}$ we first
consider
\begin{align}
\mathscr{L}_{\mathsf{CS}}^{\left(  2n+1\right)  }\left(  \mathcal{\bar{A}%
}\right)   &  =\mathscr{L}_{\mathsf{CS}}^{\left(  2n+1\right)  }\left(
\mathcal{A}^{g}\right)   \nonumber \\
&  =\kappa T_{\mathcal{A}^{g}\leftarrow0}^{\left(  2n+1\right)  }.
\end{align}
We can use the subspace separation method to split the transgression
$T_{\mathcal{A}^{g}\leftarrow0}^{\left(  2n+1\right)  }$ as follows%
\begin{equation}
T_{\mathcal{A}^{g}\leftarrow0}^{\left(  2n+1\right)  }=T_{\mathcal{A}%
^{g}\leftarrow g^{-1}\dd g}^{\left(  2n+1\right)  }+T_{g^{-1}\dd g\leftarrow
0}^{\left(  2n+1\right)  }+\dd Q_{\mathcal{A}^{g}\leftarrow g^{-1}\dd g\leftarrow
0}^{\left(  2n\right)  } \ . \label{ch2wzw}%
\end{equation}
The right hand side of eq.$\left(  \ref{ch2wzw}\right)  $ can be evaluated
term by term. In fact, the first term corresponds to the Chern--Simons form
$T_{\mathcal{A}\leftarrow0}^{\left(  2n+1\right)  }$, where%
\begin{equation}
T_{\mathcal{A}^{g}\leftarrow g^{-1}\dd g}^{\left(  2n+1\right)  }=\left(
n+1\right)  \int_{0}^{1}\dd t\left\langle g^{-1}\mathcal{A}g\wedge\mathcal{F} %
_{t}^{n}\right\rangle \ ,
\end{equation}
and $\mathcal{F}_{t}$ is the curvature for the 
\begin{equation}
\mathcal{A}_{t}=g^{-1}\dd g+tg\mathcal{A}g^{-1} \ .
\end{equation}
Using eq.$\left(  \ref{pr2}\right)  $ it is direct to show that%
\begin{equation}
\mathcal{F}_{t}=g^{-1}\left(  t\dd\mathcal{A}+t^{2}\mathcal{A}\wedge
\mathcal{A}\right)  g \ ,
\end{equation}
and then%
\begin{equation}
T_{\mathcal{A}^{g}\leftarrow g^{-1}\dd g}^{\left(  2n+1\right)  }=T_{\mathcal{A}%
\leftarrow0}^{\left(  2n+1\right)  } \ .%
\end{equation}

Let us now to evaluate the second term in eq.$\left(  \ref{ch2wzw}\right)  $.
Again, using eq.$\left( \ref{pr2}\right )  $ we obtain
\begin{align}
T_{g^{-1}\dd g\leftarrow0}^{\left(  2n+1\right)  }  &  =\left(  n+1\right)
\int_{0}^{1}\dd t\left\langle \left(  g^{-1}\dd g\right)  \wedge\left(  t\dd\left(
g^{-1}\dd g\right)  +t^{2}\left(  g^{-1}\dd g\right)  ^{2}\right)  ^{n}\right\rangle
\nonumber\\
&  =\left(  n+1\right)  \left(  -1\right)  ^{n}\int_{0}^{1}\dd tt^{n}\left(
1-t\right)  ^{n}\left\langle \left(  g^{-1}\dd g\right)  \wedge\left(
g^{-1}\dd g\right)  ^{2n}\right\rangle  \nonumber\\
&  =\left(  -1\right)  ^{n}\frac{n!\left(  n+1\right)  !}{\left(  2n+1\right)
!}\left\langle \left(  g^{-1}\dd g\right)  \wedge\left(  g^{-1}\dd g\right)
^{2n}\right\rangle \ ,
\end{align}
where we have used%
\begin{equation}
\int_{0}^{1}\dd tt^{n}\left(  1-t\right)  ^{n}=\frac{\left(  n!\right)  ^{2}%
}{\left(  2n+1\right)  !} \ .
\end{equation}
This term is the so called Wess--Zumino term and corresponds to a closed,
however not exact, $\left(  2n+1\right) -$form $\dd T_{g^{-1}\dd g\leftarrow
0}^{\left(  2n+1\right)  }=0$. Since the Wess--Zumino term represents a
winding number, it will be a total derivative unless $\mathcal{G}$ has
nontrivial homotopy group $\pi(\mathcal{G})$ and a large gauge transformation
is performed.

The remaining term in eq.$\left(  \ref{ch2wzw}\right)  $ corresponds to%
\begin{equation}
Q_{\mathcal{A}^{g}\leftarrow g^{-1}\dd g\leftarrow0}^{\left(  2n\right)
}=-n\left(  n+1\right)  \int_{0}^{1}\dd t\int_{0}^{t}\dd s\left\langle
g^{-1}\dd g\wedge\mathcal{A}\wedge\mathcal{F}_{st}^{n-1}\right\rangle ,
\end{equation}
where $\mathcal{F}_{st}=\dd\mathcal{A}_{st}+\mathcal{A}_{st}\wedge
\mathcal{A}_{st}$, with
\begin{equation}
\mathcal{A}_{st}=tg^{-1}\dd g+sg^{-1}\mathcal{A}g \ .
\end{equation}

In this case, it is direct to show that
\begin{equation}
\mathcal{F}_{st}=g^{-1}\left[  s\left(  \dd\mathcal{A}+s\mathcal{A}%
\wedge\mathcal{A}\right)  +t\left(  1-t\right)  \left(  \dd gg^{-1}\right)
^{2}\right]  g,
\end{equation}
and therefore,%
\begin{align}
Q_{\mathcal{A}^{g}\leftarrow g^{-1}\dd g\leftarrow0}^{\left(  2n\right)  }  &
=-n\left(  n+1\right)  \int_{0}^{1}\dd t\int_{0}^{t}\dd s\times\\
& \times\left\langle g^{-1}\dd g\wedge\mathcal{A}\wedge\left[  s\left(
\dd\mathcal{A}+s\mathcal{A}\wedge\mathcal{A}\right)  +t\left(  1-t\right)
\left(  g^{-1}\dd g\right)  ^{2}\right]  ^{n-1}\right\rangle .
\end{align}

Finally, the Chern--Simons Lagrangian $\mathscr{L}_{\mathsf{CS}}^{\left(
2n+1\right)  }\left(  \mathcal{A}^{g}\right)  $, is given by
\begin{align}
\mathscr{L}_{\mathsf{CS}}^{\left(  2n+1\right)  }\left(  \mathcal{A}%
^{g}\right)   &  =\mathscr{L}_{\mathsf{CS}}^{\left(  2n+1\right)  }\left(
\mathcal{A}\right)  +\left(  -1\right)  ^{n}\frac{n!\left(  n+1\right)
!}{\left(  2n+1\right)  !}\left\langle \left(  g^{-1}\dd g\right)  \wedge\left(
g^{-1}\dd g\right)  ^{2n}\right\rangle \nonumber \\
&  -n\left(  n+1\right)  \dd \int_{0}^{1}\dd t\int_{0}^{t}\dd s\times \nonumber \\
&  \times\left\langle g^{-1}\dd g\wedge\mathcal{A}\wedge\left[  s\left(
\dd\mathcal{A}+s\mathcal{A}\wedge\mathcal{A}\right)  +t\left(  1-t\right)
\left(  g^{-1}\dd g\right)  ^{2}\right]  ^{n-1}\right\rangle .\label{ch2csgauge}
\end{align}
Thus, the Chern--Simons Lagrangian changes by a closed form under gauge transformations.

We turn now to the transgression Lagrangian eq.$\left(  \ref{ch2tranact}%
\right)  $. In fact inserting eq.$\left(  \ref{ch2csgauge}\right)  $ the
transgression becomes into a gauged Wess--Zumino--Witten Lagrangian%
\begin{align}
\mathscr{L}_{\mathsf{T}}^{\left(  2n+1\right)  }\left(  \mathcal{A}%
,\mathcal{A}^{g}\right)    & =\mathscr{L}_{\mathsf{gWZW}}^{\left(  2n+1\right)
}\left(  \mathcal{A},\mathcal{A}^{g}\right)   \\
& =\left(  -1\right)  ^{n+1}\frac{n!\left(  n+1\right)  !}{\left(
2n+1\right)  !}\left\langle \left(  g^{-1}\dd g\right)  \wedge\left(
g^{-1}\dd g\right)  ^{2n}\right\rangle +\dd\left(  \mathcal{C}^{\left(  2n\right)
}-\mathcal{B}^{\left(  2n\right)  }\right), \nonumber
\end{align}
where 
\begin{align}
\mathcal{C}^{\left(  2n\right)  }& \equiv \kappa Q_{\mathcal{A}^{g}\leftarrow g^{-1}\dd g\leftarrow
0}^{\left(  2n\right)  } \ , \\
\mathcal{B}^{\left(  2n\right)  }&\equiv\kappa Q_{\mathcal{A}\leftarrow
\mathcal{A}^{g}\leftarrow0}^{\left(  2n\right)  } \ .
\end{align}
This Lagrangian has interesting properties. For instance, the $n=1$ case
conduces to a gauged Wess--Zumino--Witten action given by%
\begin{equation}
\mathsf{S}_{\mathsf{gWZW}}\left[  \mathcal{A},g\right]  =\kappa\int
_{\mathcal{M}}\frac{1}{3}\left\langle \left(  g^{-1}\dd g\right)  ^{3}%
\right\rangle +\dd\left\langle 2\left(  g^{-1}\dd g\right)  \wedge\mathcal{A+A}%
g^{-1}\wedge\mathcal{A}g\right\rangle .
\end{equation}

As it will be shown later, if we consider the action valued in the Poincare group $ISO(2,1)$, this model
collapses to a boundary term. In fact, we will show that the
resulting gauged Wess--Zumino--Witten term can be used as a Lagrangian for an action principle in two dimensions. In that case, the resulting model is the simplest
version for topological actions for gravity in even dimensions classified in~\cite{Cha90}. 
In the next chapter, we will construct the early mentioned gauged Wess--Zumino--Witten theory in two dimensions and its generalization to arbitrary even dimensions. We will show that the supersymmetric extension leads to topological supergravity
in two dimensions starting from a transgression field theory which is invariant under the
supersymmetric extension of the Poincare group in three dimensions. We also apply this
construction to a three-dimensional Chern--Simons theory of gravity genuinely invariant under
the Maxwell algebra and obtain the corresponding gauged Wess--Zumino--Witten model. For similar approaches in three dimensions see \cite{Ogura198961,Carlip:1991zm}.
\chapter{Transgression field theory and topological gravity actions}
\label{ch:top_grav}
\begin{flushright}
\textit{``...presiento que por lo emp\'irico se ha enloquecido la br\'ujula...''.  \\ C\'andidos, Inti-Illimani.}
\footnote{\scriptsize ``...I sense that empiricism has made ​​foolish the compass''. C\'andidos, Inti-Illimani.}
\bigskip
\end{flushright}

The problem of unifying gravity to the other fundamental interactions still remains as an open problem. A particularly interesting direction to look at this problem is from the field theory point of view. This approach aims to find a unified theory with gauge symmetries incorporating gravity which also should be well behave at the quantum level. The nearest in spirit to this approach are supergravity models~\cite{Nilles:1983ge}, especially the ones which are extensions of the standard Weinberg--Salam model. Those, however are not renormalizable.

In the past, many attempts were made to construct gravity as a gauge theory of the Lorentz or Poincar\'e group in four dimensions~\cite{Utiyama:1956sy}. It later became clear that if both the vierbein and the spin connection are to be viewed as gauge fields, the Einstein--Hilbert action is then only invariant under a constrained symmetry in which the torsion is set to zero~\cite{Chamseddine:1976bf,MacDowell:1977jt}. In that case the spin connection transformation law must be modified nonlinearly to satisfy the constraints and the gauge system cannot be considered as a standard gauge theory. Therefore, this understanding was only useful in constructing geometric actions but cannot not be used to unify gravity with the other known interactions within the framework of a renormalizable gauge theory.
Later on, at the end of the eighties, it was shown that three-dimensional gravity can be written not only as renormalizable but finite gauge theory~\cite{Witten:1988hc,Witten:1989sx}. This relies on the curious fact that three-dimensional Einstein gravity corresponds to a Chern--Simons action for the gauge group $ISO(2,1)$. This gave new momentum to explore new features about topological gauge theories of gravity. 

The classification of topological gauge theories for gravity and its supersymmetric extensions were introduced by A. H. Chamseddine at the beginning of the decade of the nineties~\cite{Chamseddine:1989yz,Cha89,Cha90}. The natural gauge groups $\mathcal{G}$ considered are the anti-de~Sitter
group $SO(d-1,2)$, the de~Sitter group $SO(d,1)$, and the Poincar\'e
group $ISO(d-1,1)$ in $d$ spacetime dimensions depending on the sign of the
cosmological constant: $-1,+1,0$ respectively. In odd dimensions
$d=2n+1$, the gravitational theories are constructed in terms of Chern--Simons forms. As we have learned from previous chapters, Chern--Simons forms are useful objects because they
lead to gauge invariant theories modulo boundary terms. They also
have a rich mathematical structure similar to those of the characteristic classes that arise in Yang--Mills theories; they are constructed in terms of a gauge potential which descends from a connection on a principal $\mathcal{G}$-bundle.
 In even dimensions, there is no
natural candidate such as the Chern--Simons forms; hence in order to construct an invariant $2n$-form, the product of $n$ field strengths is not
sufficient and requires the insertion of a scalar multiplet $\phi^{a}$ in the fundamental
representation of the gauge group $\mathcal{G}$. This requirement
ensures gauge invariance but it threatens the topological origin of
the theory. 

In this chapter we pursue of the main results of this Thesis. In section \ref{firstresult}, we show that even-dimensional topological gravity can be
formulated as a transgression field theory genuinely invariant under the Poincar\'e group~\cite{Merino:2010zz,Salgado:2013pva,PhysRevD.89.084077}. The gauge connections are
considered taking values in the Lie algebras associated to linear
and nonlinear realizations of the gauge group. The resulting theory corresponds to
a gauged Wess--Zumino--Witten model
\cite{Wess:1967jq,Witten:1983tw} where the scalar field $\phi$ is now identified with the coset parameter of the nonlinear realization of the Poincar\'e group
$ISO(d-1,1)$. By similar arguments we also
compute the transgression action for the $\mathcal{N}=1$ Poincar\'e
supergroup in three dimensions, and show that the resulting action is the one proposed in~\cite{Chamseddine:1989yz}.

\newsection{Topological gauge theories of gravity}

Topological gauge theories of gravity were classified in~\cite{Chamseddine:1989yz,Cha89,Cha90}. The natural gauge groups
$\mathcal{G}$ involved in the classification are given by (\ref{gaugegrav})
\begin{table}[h!]
\centering
$\mathcal{G}$ \quad : \quad
\begin{tabular}
[c]{||l|l|l||}\hline\hline
\textrm{AdS} & $SO(d-1,2)$ & $\Lambda<0$\\\hline
\textrm{dS} & $SO(d,1)$ & $\Lambda>0$\\\hline
\textrm{Poincar\'e} & $ISO(d-1,1)$ & $\Lambda=0$\\\hline\hline
\end{tabular}
\caption{Gauge groups}
\label{gaugegrav}
\end{table}
depending on the spacetime dimension $d$ and the sign of the cosmological constant $\Lambda$. These
gauge groups are the smallest nontrivial choices which contain the
Lorentz symmetry $SO(d-1,1)$, as well as symmetries analogous to local
translations.

In odd dimensions $d=2n+1$, the action for topological gravity is written in terms of a
Chern--Simons form defined by
\begin{align}
\mathsf{S}^{\left(  2n+1\right)  }_{\mathsf{CS}}\left[  \mathcal{A}\right]  & =\kappa \,
\int_{\mathcal{M}}\, \mathscr{L}_{\mathsf{CS}}^{\left(  2n+1\right)  }\left(
\mathcal{A}\right) \nonumber   \\ 
&=\kappa\, \left(  n+1\right) \, \int_{\mathcal{M}}
\ \int_{0}%
^{1} \, dt \ \left\langle \mathcal{A} \wedge\left(  t\,
    \dd\mathcal{A}+t^{2} \, 
\mathcal{A\wedge A}\right)  ^{n}\right\rangle \ . \label{cspoinc}%
\end{align}

In even dimensions there is no topological candidate such as the Chern--Simons form. In fact, the
exterior product of $n$ field strengths makes the required $2n$-form in a
$2n$-dimensional spacetime, but in order to obtain a gauge invariant
differential $2n$-form, a scalar multiplet $\phi^{a}$ with $a=1,\ldots
,2n+1$ transforming in the fundamental representation of the gauge group must be
added in such a way that the action can be written as
\begin{equation}
\mathrm{S}^{\left(  2n\right)  }\left[  \mathcal{A},\phi\right]  =\kappa\,
(n+1)\int_{\mathcal{M}}\, \big\langle \mathcal{F}^n \wedge
\phi\big\rangle \ . \label{maineq}%
\end{equation}
Here $\mathcal{F}=\dd\mathcal{A+A\wedge A}$ is the curvature two-form
associated to the gauge potential $\mathcal{A}$. Note that here the
Lagrangian $\big\langle \mathcal{F}^n \wedge
\phi\big\rangle$ is a global differential form on $\mathcal{M}$. This topological action has interesting
applications; for instance, it describes the Liouville theory 
of gravity from a local Lagrangian in two dimensions \cite{D'Hoker:1983ef,D'Hoker:1983is}.
 
 In the following, we focus on the Chern--Simons action for the anti--de Sitter gauge group and indicate how to recover the Poincar\'e action by In\"onu-Wigner contraction or equivalently by taking the limit $l\rightarrow \infty$ of the anti-de Sitter curvature radius.
  Recalling the invariant tensor associated to the $\mathfrak{so}(2n,2)$ algebra eq.$(\ref{ch2invten})$ and the commutation relations eq.$(\ref{adscommrel1}-\ref{adscommrel2})$, it is direct to show that the odd-dimensional topological gravity action for the AdS group takes the form (\ref{adscs}). On the other hand, the even-dimensional action takes the form
 \begin{equation}
 \mathrm{S}^{\left(  2n\right)  }\left[ e, \omega,\phi\right]  =\kappa\,(n+1) \int
 _{\mathcal{M}_{2n}}\epsilon_{a_{1} \cdots a_{2n+1}}\, \check{R}^{a_{1}a_{2}}\wedge \cdots \wedge \check{R}^{a_{2n-1}%
 a_{2n}}\, \phi^{a_{2n+1}} \label{topgads}
 \end{equation}
with $\check{R}^{ab}$ given in the left hand side of (\ref{chekr}) with $t=1$.
It is interesting to mention that the action (\ref{topgads}) can be obtained starting form its odd dimensional counterpart by dimensional reduction where the symmetry breaking to $ISO(2n-1,1)$, $SO(2n,1)$ or $SO(2n-1,2)$, is subject to suitable field truncation \cite{Cha90}.

\subsection{Lanczos--Lovelock gravity}

The most general Lagrangian in $d$ dimensions which is compatible with the
Einstein--Hilbert action for gravity is a polynomial of degree $\left[  d/2\right]  $ in
the curvatures known as the Lanczos--Lovelock Lagrangian
\cite{Lan38,Lov71,Cno05,Des05,Mar91}. Lanczos--Lovelock theories share
the same fields, symmetries and local degrees of freedom of General
Relativity. In fact the Lanczos--Lovelock Lagrangian constructed by considering a generalization of the Einstein tensor which is second order in derivatives, symmetric and divergence-free. In the Cartan formalism, the Lagrangian is built from the
vielbein $e^{a}$ and the spin connection $\omega^{ab}$ via the Riemann
curvature two-form $R^{ab}=\dd\omega^{ab}+\omega
_{~c}^{a}\wedge\omega^{cb}$, leading to the action
\begin{equation}
\mathrm{S}^{(d)}_{\mathsf{LL}} =\int_{\mathcal{M}} \ \sum\limits_{p=0}^{\left[
d/2\right]  }\, \alpha_{p}\, \epsilon_{a_{1}\cdots a_{d}}\,
R^{a_{1}a_{2}}\wedge \cdots \wedge
R^{a_{2p-1}a_{2p}}\wedge e^{a_{2p+1}}\wedge \cdots \wedge e^{a_d} \ .
\end{equation}
Here $\alpha_{p}$ are arbitrary parameters that cannot be fixed from first
principles. However, in~\cite{Tro99} it is shown that by requiring the
equations of motion to uniquely determine the dynamics for as many
components of the independent fields as possible, one can fix $\alpha_{p}$ (in any
dimension) in terms of the gravitational and cosmological constants.

In $d=2n$ dimensions the parameters $\alpha_{p}$ are given by%
\begin{equation}
\alpha_{p}=\alpha_{0}\, \left(  2\gamma\right)  ^{p}\, \binom{n}{p} \label{coeffeven}
\end{equation}
and the Lagrangian takes a Born--Infeld form. The Lanczos--Lovelock action
constructed in this dimension is only invariant under the Lorentz symmetry
$SO(2n-1,1)$.
In odd dimensions $d=2n+1$ the coefficients are given by
\begin{equation}
\alpha_{p}=\alpha_{0}\, \frac{(2n-1)\, (2\gamma)  ^{p}%
}{2n-2p-1}\, \binom{n-1}{p} \ . \label{coeffodd}
\end{equation}
Here
\begin{equation}
\alpha_{0}=\frac{\kappa}{d\, l^{d-1}} \qquad \mbox{and} \qquad \gamma=-\mathrm{sgn}(
\mathrm{\Lambda}) \, \frac{l^{2}}{2}
\end{equation}
with 
\begin{equation}
\kappa^{-1}=2(d-2)!\Omega_{d-2}G
\end{equation}
where $G$ is the gravitational constant~\cite{Crisostomo:2000bb}, and $l$ is a length parameter related to the
cosmological constant by
\begin{equation}
\mathrm{\Lambda}=\pm\, \frac{(d-1)\, (d-2)}{2l^{2}} \ .
\end{equation}

With this choice of coefficients, the
Lanczos--Lovelock Lagrangian for $d=2n+1$
coincides exactly with a Chern--Simons form for the $\mathrm{AdS}$
group $SO(2n,2)$. This means that the exterior derivative of the Lanczos--Lovelock Lagrangian corresponds to a $2n+2$-dimensional Euler density. This is the reason
why there is no analogous construction in even dimensions: There are no known
topological invariants in odd dimensions which can be constructed
in terms of exterior products of curvatures alone. However, in ref.~\cite{Sal05} a Lanczos--Lovelock theory
genuinely invariant under the $\mathrm{AdS}$ group, in any dimension, is
proposed. The construction is based on the Stelle--West mechanism \cite{Ste80,Grignani:1991nj},
which is an application of the theory of nonlinear realizations of Lie groups
to gravity. For a detailed treatment of nonlinear realization theory and it applications, See Appendix \ref{ch:app2}.

\subsection{SWGN formalism}

The Stelle--West--Grignani--Nardelli (SWGN) formalism \cite{Ste80,Grignani:1991nj}
is an application of the theory of nonlinear realizations of Lie groups~[Appendix \ref{ch:app2}] to
gravity. In particular, it allows the construction of the
Lanczos--Lovelock theory of gravity which is genuinely
invariant under the anti-de~Sitter group $G=SO(d-1,2)$. This model is discussed by the action~\cite{Sal03b}
\begin{equation}
\mathrm{S}_{\mathsf{SW}}^{\left(  d\right)  }=\int_{\mathcal{M}} \
\sum\limits_{p=0}^{\left[  d/2\right]  } \,
\alpha_{p}\, \epsilon_{a_{1} \cdots a_{d}}\, \bar{R}{}^{a_{1}a_{2}}\wedge
\cdots \wedge\bar{R}{}^{a_{2p-1}a_{2p}}\wedge\bar{e}{}^{a_{2p+1}}%
\wedge \cdots \wedge\bar{e}{}^{a_{d}} \ .\label{llact}%
\end{equation}
Here $\bar{R}{}^{ab}=\dd\bar{\omega}{}^{ab}+\bar{\omega}{}_{~c}^{a}\wedge
\bar{\omega}{}^{cb} 
$ and $\bar{e}{}^{a}$ are nonlinear gauge
fields and the coefficients $\alpha_{p}$ are given by either eq.~$(\ref{coeffeven})$ or eq.~$(\ref{coeffodd})$ depending on the dimension of the spacetime. The relation between linear and nonlinear gauge fields is obtained using eq.~$(\ref{nonl13})$. For the present case one finds that 
\begin{align}
\bar{e}{}^{a} =&\Omega_{~b}^{a}( \cosh z) \, e^{b}-\Omega
_{~b}^{a}\Big(\, \frac{\sinh z}{z}\, \Big) \, D_{\omega}\phi^{b} \ , \label{xor1}\\[4pt]
\bar{\omega}{}^{ab} =&\omega^{ab}+\frac{\sigma}{l^{2}}\, \Big(\,
\frac{\sinh z}{z}\, \big(  \phi^{a}\, e^{b}-\phi^{b}\, e^{a}\big)
-\frac{\cosh z-1}{z^{2}} \, \big( \phi^{a}\,
D_{\omega}\phi^{b}-\phi^{b}\, D_{\omega}\phi
^{a}\big) \, \Big) \ , \label{xor3}%
\end{align}
where $e^a$ and $\omega^{ab}$ are the usual vielbein and spin connection, respectively. Here
we have defined
\begin{align}
D_{\omega}\phi^a & := \dd\phi^a + \omega_{~b}^{a} \, \phi^b \ ,
\nonumber \\[4pt]
z  & := \frac{\phi}{l}=\frac{\sqrt{\phi^{a}\, \phi_{a}}}{l} \ ,
\nonumber \\[4pt]
\Omega_{~b}^{a}(  u)    & := u\, \delta_{~b}^{a}+\left(
1-u\right) \, \frac{\phi^{a}\, \phi_{b}}{\phi^{2}} \ ,
\end{align}
where $l$ is the radius of curvature of AdS and $\phi^{a}$ are the \textrm{AdS} coordinates
 which parametrize the coset space $
\frac{SO(d-1,2)}{SO(d-1,1)} $. In this scheme, this coordinate carries no
dynamics as any value that we pick for it is equivalent to a gauge
fixing condition
which breaks the symmetry from \textrm{AdS} to the Lorentz subgroup. This is
best seen using the equations of motion; they are the same as those
for the ordinary Lanczos--Lovelock theory where the vielbein $e^{a}$ and the spin
connection $\omega^{ab}$ are replaced by their nonlinear versions
$\bar{e}{}^{a}$ and $\bar{\omega}{}^{ab}$ given in eqs.~$\left(  \ref{xor1}%
,\ref{xor3}\right)  $. 

In odd
dimensions $d=2n+1$, the Chern--Simons action written in terms of the linear
gauge fields $e^a$ and $\omega^{ab}$ with values in the Lie algebra of
$SO(2n,2)$ differs only by a boundary term from that written using
the nonlinear gauge fields $\bar{e}{}^a$
and $\bar{\omega}{}^{ab}$. This is by virtue of eq.~$(\ref{nonl13})$ which has the form of a gauge transformation
\begin{equation}
\mathcal{A} \ \longmapsto \ \mathcal{\bar{A}}=g^{-1}\, \left(  \dd+\mathcal{A}%
\right)  \, g \label{gtc}
\end{equation}
with $g=\e^{-\phi^{a}\, \mathsf{P}_{a}} \in  \frac{SO(2n,2)}%
{SO(2n,1)}  $. Alternatively, since $\mathcal{\bar{F}}=g^{-1}\,
\mathcal{F}\, g$ we have
\begin{equation}
\dd \mathscr{L}_{\mathsf{CS}}^{\left(  2n+1\right)  }\left(
  \mathcal{\bar{A}
} \, \right)  =\left\langle \mathcal{\bar{F}}^{n+1}\right\rangle =\left\langle
\mathcal{F}^{n+1}\right\rangle =\dd \mathscr{L}_{\mathsf{CS}}^{\left(
2n+1\right)  }\left(  \mathcal{A}\right) \label{laglnl}
\end{equation}
and hence both Lagrangians may locally differ only by a total
derivative.

\subsection{Chern--Simons gravity invariant under the Poincar\'e group}

Poincar\'e gravity in $2n+1$ dimensions can be formulated as a Chern--Simons theory 
for the gauge group 
$ISO(2n,1)$. This group can
be obtained by performing an In\"{o}n\"{u}--Wigner contraction of
the \textrm{AdS} group in odd dimensions $SO(2n,2)$.

The fundamental field is the one-form connection 
\begin{equation}
\mathcal{A}=e^a \, \mathsf{P}_a+\mbox{$\frac{1}{2}$}\, \omega^{ab}\, \mathsf{J}_{ab}
\end{equation}
with values in the Lie algebra $\mathfrak{iso}(2n,1)$ whose
commutation relations are given by
\begin{align}
\left[  \mathsf{J}_{ab},\mathsf{J}_{cd}\right] &  =-\eta_{ac}\, \mathsf{J}%
_{bd}-\eta_{bd}\, \mathsf{J}_{ca}+\eta_{bc}\, \mathsf{J}_{ad}+\eta_{ad}\,
\mathsf{J}_{bc} \ , \nonumber \\[4pt]
\left[  \mathsf{J}_{ab},\mathsf{P}_{c}\right] &  =\eta_{bc}\, \mathsf{P}_{a}-\eta_{ac}\, \mathsf{P}%
_{b} \ , \nonumber \\[4pt]
\left[  \mathsf{P}_{a},\mathsf{P}_{b}\right] & =0 \ .
\end{align}
Here $\left\{  \mathsf{J}_{ab}\right\}  _{a,b=1}^{2n+1}$ generate
the Lorentz subalgebra $\mathfrak{so}(2n,1)$, $\left\{  \mathsf{P}_{a}\right\}  _{a=1}%
^{2n+1}$ generate local Poincar\'e translations and
$(\eta_{ab}) ={\rm diag}\left(  -1,1, \ldots ,1\right)  $ is a $(2n+1)$-dimensional Minkowski metric.

The explicit form of the action can be obtained in the limit $l \rightarrow \infty$ of the Chern--Simons gravity action for the AdS group $SO(2n,2)$ or, alternatively, by using the
subspace separation method introduced in (\ref{SSMM}). Following this option, we first we decompose the gauge algebra
into vector subspaces $\mathfrak{iso}\left(  2n,1\right)  =\mathsf{V}_{1}\oplus \mathsf{V}_{2}$ where
$\mathsf{V}_{1}={\rm Span}_{\complex}\left\{  \mathsf{J}_{ab}\right\}
$ and $\mathsf{V}_{2}={\rm Span}_{\complex}\left\{  \mathsf{P}_{a}\right\}  $. Next
we split the  gauge potential into pieces valued in each subspace of the gauge algebra
\begin{align}
\mathcal{A}_{0} & =0 \ , \\
\mathcal{A}_{1}  &=\omega \ , \\
\mathcal{A}_{2} & =\omega+e \ .
\end{align}
where $\omega=\frac{1}{2}\, \omega^{ab}\, \mathsf{J}_{ab}$ and
$e=e^{a}\, \mathsf{P}_{a}$. Computing each component of the triangle
equation of eq.~$(\ref{pr34})$ one finds
\begin{align}
T_{\mathcal{A}_{2}\leftarrow\mathcal{A}_{1}}^{\left(  2n+1\right)  }  
& =\epsilon_{a_{1}\cdots a_{2n+1}}\, R^{a_{1}a_{2}}\wedge \cdots \wedge R^{a_{2n-1}a_{2n}
}\wedge e^{a_{2n+1}} \ , \label{t1x}\\[4pt]
T_{\mathcal{A}_{1}\leftarrow \mathcal{A}_0}^{\left(  2n+1\right)  }   & =0 \
, \label{t2x}\\[4pt] 
Q_{\mathcal{A}_{2}\leftarrow\mathcal{A}_{1}\leftarrow\mathcal{A}_0}^{\left(  2n\right)  }
& =-n\, \int_{0}^{1}\, \dd t \ t^{n-1}\, \epsilon_{a_{1}\cdots
  a_{2n+1}}\, R_{t}^{a_{1}a_{2}}
\wedge \cdots \wedge R_{t}^{a_{2n-3}a_{2n-2}}\wedge\omega^{a_{2n-1}a_{2n}}\wedge
e^{a_{2n+1}}  \ . \label{t3x}
\end{align}
Here we have used the fact that the only nonvanishing components of the
invariant tensor for the Poincar\'e algebra are given by
\begin{equation}
\left\langle \mathsf{J}_{a_{1}a_{2}}\cdots \mathsf{J}_{a_{2n-1}a_{2n}} \,
\mathsf{P}_{a_{2n+1}}\right\rangle =\frac{2^{n}}{n+1} \,
\epsilon_{a_{1}\cdots a_{2n+1}} \ .
\end{equation}
Since $\mathscr{L}_{\mathsf{CS}}^{\left(
    2n+1\right)  }\left(  \mathcal{A}\right)
=T_{\mathcal{A}_{2}\leftarrow\mathcal{A}_{0}}^{\left(  2n+1\right)  }$,
we obtain
\begin{align}
\mathscr{L}_{\mathsf{CS}}&^{\left(  2n+1\right)  }\left(  \mathcal{A}\right)
=\ \epsilon_{a_{1}\cdots a_{2n+1}}\, R^{a_{1}a_{2}}\wedge \cdots \wedge R^{a_{2n-1}a_{2n}}\wedge
e^{a_{2n+1}}\nonumber\\
& -n\, \dd\int_{0}^{1}\, \dd t \ t^{n-1}\, \epsilon_{a_{1}
  \cdots a_{2n+1}}\, R_{t}^{a_{1}a_{2}}%
\wedge \cdots \wedge R_{t}^{a_{2n-3}a_{2n-2}}\wedge\omega^{a_{2n-1}a_{2n}}\wedge
e^{a_{2n+1}}\label{cslin}%
\end{align}
where $R_{t}^{ab}=\dd\omega^{ab}+t\, \omega_{~c}^{a}\wedge\omega^{cb}$.

 Note that the piece of the Lagrangian which corresponds to the volume (bulk) term of $\mathcal{M}$ in the action, can still be recovered in the limit $l\to\infty$ from the Lanczos--Lovelock
series in the case $d=2n+1$ and $p=n$. However, there is an extra
boundary term which arises once the computation of the relevant Chern--Simons action by using transgressions. This boundary contribution is very important in the construction of the even dimensional topological gravity theory as we shall see later.

Under infinitesimal local gauge transformations with parameter
$\lambda=\frac{1}{2}\, \kappa^{ab}\, \mathsf{J}_{ab}+\rho^a \, \mathsf{P}_a$, the gauge fields transform as 
\begin{align}
\delta e^{a}    &=-D_{\omega}\rho^{a}+\kappa_{~b}^{a}\, e^{b} \ , \\
\delta\omega^{ab}   &=-D_{\omega}\kappa^{ab} \ .
\end{align}
and these transformations leave eq.~($\ref{cslin}$) invariant.

We now write the expression for the Chern--Simons Lagrangian where the gauge fields are written in terms of the nonlinear realization of the Poincar\'e group $ISO(2n+1,1)$. This can be done using eqs.~$(\ref{xor1},
\ref{xor3})$ in the limit $l\rightarrow\infty$. In that case one finds that the gauge linear and non linear gauge fields are related by
\begin{align}
\bar{e}^a &=e^a-D_{\omega}\phi^a \ , \\
\bar{\omega}^{ab} &=\omega^{ab} \ , 
\end{align}
and therefore the nonlinear gauge connection can be expressed as
\begin{equation}
\mathcal{\bar{A}}=\left( e^a-D_{\omega}\phi^a \right)\mathsf{P}_a+\frac{1}{2}\omega^{ab}\mathsf{J}_{ab} \ .
\end{equation}
and substituting into eq.~$(\ref{cslin})$ we obtain
\begin{align}
\mathscr{L}_{\mathsf{CS}}^{\left(  2n+1\right)  }\left(  \mathcal{\bar{A}%
}\, \right) = & \ \epsilon_{a_{1} \cdots a_{2n+1}}\, R^{a_{1}a_{2}}\wedge \cdots \wedge R^{a_{2n-1}%
a_{2n}}\wedge\left(
e^{a_{2n+1}}-D_{\omega}\phi^{a_{2n+1}}\right) \nonumber \\
& -n\, \dd\int_{0}^{1}\, \dd t \ t^{n-1}\,
\epsilon_{a_{1}\cdots a_{2n+1}}\, R_{t}^{a_{1}a_{2}}%
\wedge \cdots \wedge R_{t}^{a_{2n-3}a_{2n-2}}\wedge\omega^{a_{2n-1}a_{2n}}
\nonumber \\ & \hspace{5cm} \wedge\left(
e^{a_{2n+1}}-D_{\omega}\phi^{a_{2n+1}}\right) \ . \label{csnonlin} 
\end{align}
The gauge transformations for the coset field $\phi$ can be obtained
from eq.~$(\ref{nonl3})$ using $g_{0}-1=-\phi^a \, \mathsf{P}_a$. In
this case one shows that under local Poincar\'e translations the coset
field $\phi$ transforms as $\delta \phi^a=\rho^a$. One can
directly check, as in the case of the linear Lagrangian, that
eq.~$(\ref{csnonlin})$ remains unchanged under gauge transformations.

\newsection{Topological gravity actions} \label{firstresult}

Let now $\mathcal{M}$ be a manifold of dimension $d=2n+1$ with
boundary $\partial\mathcal{M}$. Let $\mathcal{A}$ and $\mathcal{\bar{A}}$ be the linear and nonlinear one-form gauge potentials both taking values in
the Lie algebra $\mathfrak{g}= \mathfrak{iso}(2n,1)$. In the following we assume
that both gauge potentials can be obtained as the pull-back by a local
section $\sigma$ of a one-form connection $\mathsf{\omega}$ defined on a
nontrivial principal $G$-bundle $\mathcal{P}$ over $\mathcal{M}$. 

Let us recall again the transgression action eq.~$\left(  \ref{ch2tranact}\right)  $. In the case for a manifold $\mathcal{M}$ with boundary $\partial\mathcal{M}$ we have
\begin{equation}
\mathsf{S}_{\mathsf{T}}^{\left(  2n+1\right)  }\left[  \mathcal{A}%
,\mathcal{\bar{A}}\right]  
 =\int_{\mathcal{M}}\mathscr{L}_{\mathsf{CS}}^{\left(  2n+1\right)  }\left(
\mathcal{A}\right)  -\int_{\mathcal{M}}\mathscr{L}_{\mathsf{CS}}^{\left(
2n+1\right)  }\left(  \mathcal{\bar{A}}\right)  -\kappa \int_{\partial\mathcal{M}%
}\mathscr{B}^{(2n)}(\mathcal{A},\mathcal{\bar{A}})
. \label{tact}%
\end{equation}
Here 
\begin{align}
\mathscr{B}^{(2n)}& := 
Q_{\mathcal{A}\leftarrow\mathcal{\bar{A}}\leftarrow 0} ^{\left(
2n\right)  } \nonumber \\ 
& =  n\, \left(  n+1\right) \, \int_{0}^{1}\, \dd t \ \int_{0}^{t}\,
\dd s\ \left\langle \left(  \mathcal{A}-\mathcal{\bar{A}}\right)  \wedge
\mathcal{\bar{A}}  \wedge\mathcal{F}_{s,t}^{n-1}%
\right\rangle \ ,
\label{Q2n}
\end{align}
with
\begin{equation}
\mathcal{F}_{s,t}  =\dd\mathcal{A}_{s,t}+\mathcal{A}_{s,t}\wedge\mathcal{A}%
_{s,t} \ ,
\end{equation}
and
\begin{equation}
\mathcal{A}_{s,t}   =s\, \left(  \mathcal{A}-\mathcal{\bar{A}}\right)
+t\, \mathcal{\bar{A}} \ .
\end{equation}
 
If the $G$-bundle $\mathcal{P}$ is nontrivial, then eq.~$(\ref{tact})$ can
be written more precisely by covering $\mathcal{M}$ with local charts. This explains the introduction of the second gauge potential $\mathcal{\bar{A}}$ such that in the overlap of two charts
the connections are related by a gauge transformation which we take to
be given by $g=\e^{-\phi^a \,\mathsf{P}_a} \in G/H$, where
$H=SO(2n,1)$ is the Lorentz subgroup. In this setting the coset element $g \in G/H$ is
interpreted as a transition function determining the nontriviality of
$\mathcal{P}$ \cite{Anabalon:2006fj}.

Now we construct transgression actions for the Poincar\'e group using
the Lagrangian of eq.~$(\ref{cslin})$ and its nonlinear representation
in eq.~$(\ref{csnonlin})$. 
In this case the boundary term $Q_{\mathcal{A\leftarrow\bar{A}\leftarrow}0}^{\left(  2n\right)
} $ defined by eq.~$\left(  \ref{Q2n}%
\right)  $ reads%
\begin{equation}
Q_{\mathcal{A\leftarrow\bar{A}\leftarrow}0}^{\left(  2n\right)
}
= n\, \int_{0}^{1}\, \dd t \ t^{n-1}\, \epsilon_{a_{1} \cdots
  a_{2n+1}}\, R_{t}^{a_{1}a_{2}}
\wedge \cdots \wedge R_{t}^{a_{2n-3}a_{2n-2}}\wedge\omega^{a_{2n-1}a_{2n}}\wedge
D_{\omega}\phi^{a_{2n+1}} \ . \label{borde}%
\end{equation}
Inserting eqs.~$\left(  \ref{cslin},\ref{csnonlin}\right)  $ and eq.~$\left(
\ref{borde}\right)  $ into eq.~$\left(  \ref{tact}\right)  $ we get%
\begin{equation}
\mathsf{S}_{\mathrm{T}}^{\left(  2n+1\right)  }\left[  \mathcal{A}%
,\mathcal{\bar{A}}\, \right]    =\kappa\, \int_{\mathcal{M}}\, \epsilon
_{a_{1} \cdots a_{2n+1}}\, R^{a_{1}a_{2}}\wedge \cdots \wedge R^{a_{2n-1}a_{2n}}\wedge
D_{\omega}\phi^{a_{2n+1}} \ ,
\end{equation}
which is a boundary term because of the Bianchi identity
$D_\omega R^{ab}=0$ and Stokes' theorem. This motivates the writing
\begin{equation}
\mathrm{S}^{\left(  2n\right)  }\left[  \omega,\phi\right]  =\kappa\, \int
_{\partial\mathcal{M}}\, \epsilon_{a_{1} \cdots a_{2n+1}}\, R^{a_{1}a_{2}}\wedge \cdots \wedge R^{a_{2n-1}%
a_{2n}}\, \phi^{a_{2n+1}} \label{topg}
\end{equation}
as an action
principle in one less dimension which corresponds to
$2n$-dimensional topological Poincar\'e gravity. Our derivation
can be regarded as a holographic principle in the
sense that the transgression action in eq.~$(\ref{tact})$ collapses to
its boundary contribution once we consider gauge connections taking
values in the Lie algebras associated to the linear and nonlinear
realizations of the Poincar\'e group. The topological action of
eq.~$(\ref{topg})$ is the action of a gauged WZW
model~\cite{Anabalon:2007dr}; this is because the transformation law for
the nonlinear gauge fields has the same form as a gauge transformation
from eq.~$(\ref{gtc})$ with gauge element $g=\e^{-\phi^{a}\,
  \mathsf{P}_a} \in \frac{ISO(2n,1)}{SO(2n,1)}$
\cite{Salgado:2013pva}. 

Recall that the nonlinear realization prescribes a transformation law
for the field $\phi$ under local translations given by $\delta
\phi^a=\rho^a$. This transformation breaks the symmetry of
eq.~$(\ref{topg})$ from $ISO(2n,1) $ to $ SO(2n,1)$; this is due to
the fact that the transformation law of the coset field $\phi$ under
local translations is not a proper adjoint transformation (see
eq.~$(\ref{nonl3})$). 
 
The variation of the action in eq.~$(\ref{topg})$ leads to the field
equations
\begin{align}
\epsilon_{abc a_{1} \cdots a_{2n-2}}\,D_\omega\phi^c \wedge R^{a_{1}a_{2}} \wedge \cdots  \wedge R^{a_{2n-3}%
a_{2n-2}} & =0 \ , \label{eomtop1} \\[4pt]
\epsilon_{c a_{1} \cdots a_{2n}}\, R^{a_{1}a_{2}} \wedge \cdots \wedge
R^{a_{2n-1}a_{2n}}  & =0 \ . \label{eomtop2}
\end{align}

Note that one can always use a gauge transformation to rotate to a
frame in which $\phi^1= \cdots = \phi^{a_{2n}}=0$ and $\phi^{a_{2n+1}}
:= \phi$. This choice breaks the gauge symmetry to the residual gauge
symmetry preserving the frame, which is a subgroup $SO(2n-1,1)
\hookrightarrow SO(2n,1)$; this is just the usual Lorentz symmetry
in $2n$ dimensions. If in addition one imposes the condition
$\omega^{a,{2n+1}}=0$ for $a=1, \ldots ,2n$, then gauge invariance of
eq.~(\ref{topg}) is also preserved. 

\subsection{Three-dimensional supergravity}
\label{3dsugra}
Supergravity in three dimensions \cite{Des83,Ach86} can be formulated
as a Chern--Simons theory for the Poincar\'e supergroup
\cite{Banh96a}. The action
is invariant (up to boundary terms) under Lorentz rotations, Poincar\'e
translations and $\mathcal{N}=1$ supersymmetry transformations. The gauge fields $e^a$, $\omega^{ab}
$ and the Majorana spinor $\bar{\psi}$ transform as components of a gauge connection
valued in the $\mathcal{N}=1$ supersymmetric extension of Poincar\'e
algebra in three dimensions given by
\begin{equation}
\mathcal{A}=\ii e^{a}\,  \mathsf{P}_{a}+\mbox{$\frac{\ii}{2}$}\,
\omega^{ab}\, \mathsf{J}_{ab}%
+\bar{\psi}\, \mathsf{Q} \ .
\end{equation}
This algebra contains, in addition to the bosonic commutation relations, the
supersymmetry algebra structure given by
\begin{align}
[  \mathsf{Q}_{\alpha},\mathsf{J}_{ab}] &=-
\mbox{$\frac{\ii}{2}$}\,  \left(\Gamma_{ab}
\right)_{\alpha}^{~\beta}\, \mathsf{Q}_{\beta} \ , \\ 
 \left\{  \mathsf{Q}_{\alpha},\mathsf{Q}_{\beta}\right\} & =\left(
\Gamma^{a}\right)  _{\alpha \beta}\, \mathsf{P}_{a} \ .
\end{align}
Here $\Gamma_{ab}=\left[  \Gamma_{a},\Gamma_{b}\right]  $, and the set of
gamma-matrices $\Gamma_{a}$ with $a=1,2,3$ defines a representation of the
Clifford algebra in $2+1$ dimensions. Our spinor conventions are summarized in Appendix~\ref{ch:app1}. In the case of $\mathcal{N}=1$ supersymmetry, the model is
described by the action%
\begin{align}
\mathrm{S}^{\left(  3\right)  }\left(  \mathcal{A}\right)
& =\kappa\, \int_{\mathcal{M}}\, \mathscr{L}_{\mathsf{CS}}^{\left(  3\right)  }\left(
\mathcal{A}\right) \nonumber \\
& = \kappa\, \int_{\mathcal{M}}\, \big(
\epsilon_{abc}\, R^{ab}\wedge
e^{c}-\ii\bar{\psi}\wedge D_\omega\psi\big) -\frac{\kappa}{2}\, \int_{\partial\mathcal{M}}\, \epsilon_{abc}\, \omega^{ab}\wedge
e^{c} \ , \label{sulinact}%
\end{align}
where ${\psi}$ is a two component Majorana spinor
one-form 
and
\begin{equation}
D_\omega\psi:=\dd\psi+\frac14\, \omega^{ab}\wedge \Gamma_{ab}\psi \ ,
\end{equation}
is the Lorentz covariant derivative in the spinor representation. Under an infinitesimal
gauge transformation with parameter $\lambda=\ii \rho^{a}\, \mathsf{P}_{a}+\frac
{\ii}{2}\, \kappa^{ab}\, \mathsf{J}_{ab}+\bar{\varepsilon}\, \mathsf{Q}$, the gauge fields transform
as
\begin{align}
\delta e^{a} &  =-D_{\omega}\rho^{a}+\kappa_{~b}^{a}\,
e^{b} \ ,
\nonumber \\[4pt]
\delta\omega^{ab} &  =-D_{\omega}\kappa^{ab} \ ,  \nonumber \\[4pt]
\delta\bar{\psi} &  =-D_\omega\bar{\varepsilon}-\mbox{$\frac{1}{4}$}\,
\kappa^{ab}\, \bar{\psi
}\, \Gamma_{ab} \ . 
\end{align}
These transformations leave the action of eq.~$\left(  \ref{sulinact}\right)  $ invariant
modulo boundary terms.

\subsection{Supersymmetric SWGN formalism}

The supersymmetric Stelle--West--Grignani--Nardelli formalism is treated in~\cite{Sal03a} where
the nonlinear realization of the supersymmetric \textrm{AdS} group in
three dimensions is considered. Here we consider the
nonlinear realization of the three-dimensional $\mathcal{N}=1$ Poincar\'e supergroup \cite{Sal01}.

Let $G$ denote the Poincar\'e supergroup generated by $\left\{  \mathsf{J}%
_{ab},\mathsf{P}_{a},\mathsf{Q} \right\}  $. It is convenient
to decompose $G$ into two subgroups: The Lorentz subgroup $L=SO(2,1)$ generated by
$\left\{  \mathsf{J}_{ab}\right\}  $ as the stability subgroup, and the
Poincar\'e subgroup $H=ISO(2,1)$ generated by $\left\{  \mathsf{J}_{ab},\mathsf{P}%
_{a}\right\}  $. We introduce a coset field associated to each generator in
the coset space $G/L$ through
$\bar{\chi}\,\mathsf{Q}$ and $\phi^{a}\, \mathsf{P}_{a}$. 
Let us write eq.~$\left(  \ref{nonl2}\right)  $ in the form%
\begin{equation}
g_{0}\, \e^{-\bar{\chi}\,\mathsf{Q}}\,  \e^{-\phi\text{\textperiodcentered}\mathsf{P}%
}= \e^{-\bar{\chi}^{\prime}\,\mathsf{Q}}\, 
\e^{-\phi^{\prime}\text{\textperiodcentered}\mathsf{P}}\, l_{1}
\end{equation}
with $l_1\in L$.
Multiplying on the right by $\e^{\phi\text{\textperiodcentered}\mathsf{P}}$ we
get%
\begin{equation}
g_{0}\, \e^{-\bar{\chi}\,\mathsf{Q}}\,  =\e^{-\bar{\chi}^{\prime}%
\,\mathsf{Q}}\,  h_{1}\qquad \mbox{and}
\qquad 
h_{1}\, \e^{-\phi\text{\textperiodcentered}\mathsf{P}} =\e^{-\phi^{\prime
}\text{\textperiodcentered}\mathsf{P}}\, l_{1}%
\end{equation}
with $h_{1}=\e^{-\phi^{\prime}\text{\textperiodcentered}\mathsf{P}}\,
l_{1}\, \e^{\phi
\text{\textperiodcentered}\mathsf{P}}\in H$. To obtain the transformation law of
the coset fields, we write these expressions in infinitesimal form%
\begin{align}
\e^{\bar{\chi}\,\mathsf{Q}} \left(  g_{0}-1\right)  \e^{-\bar{\chi}\, \mathsf{Q}}-\e^{\bar{\chi}\,\mathsf{Q}}\, \delta\big(
\e^{-\bar{\chi}\, \mathsf{Q}}\big) &  =h_{1}-1 \ , \label{suinf1}\\[4pt]
\e^{\phi\text{\textperiodcentered}\mathsf{P}}\, \left(  h_{1}-1\right)\,
\e^{-\phi\text{\textperiodcentered}\mathsf{P}}-\e^{\phi\text{\textperiodcentered
}\mathsf{P}}\, \delta \e^{-\phi\text{\textperiodcentered}\mathsf{P}} &
=l_{1}-1 \ , \label{suinf2}%
\end{align}
where $h_{1}=h_{1}\left(  \bar{\chi},\bar{\varepsilon}
,\rho,\kappa\right)  $ and $l_{1}=l_{1}\left(  \bar{\chi},\phi
,\bar{\varepsilon},\rho,\kappa\right)  $. Inserting $g_{0}%
-1=-\ii\rho^{a}\, \mathsf{P}_{a}-\frac{\ii}{2}\,
\kappa^{ab}\,\mathsf{J}_{ab}-\bar{\varepsilon}\, \mathsf{Q} $,
$h_{1}-1=-\ii \rho^{a}\, \mathsf{P}_{a}-\frac{\ii}{2}\kappa^{ab}\,\mathsf{J}_{ab}$ and $l_{1}-1=-\frac{\ii}%
{2}\, \kappa^{ab}\, \mathsf{J}_{ab}$ into eqs.~$\left(  \ref{suinf1}\text{,}%
\ref{suinf2}\right)  $, we find the symmetry transformations for the
coset fields%
\begin{align}
\delta\phi^{a} &  =\rho^{a}+\mbox{$\frac{\ii}{2}$}\,
\bar{\varepsilon}\, \Gamma^{a}\chi -\kappa_{~c}^a\, \phi^c \ , \label{nltf1}\\[4pt]
\delta\bar{\chi} &  = \mbox{$\frac{1}{4}$}\, \bar{\chi}\,
\kappa^{ab}\, \Gamma_{ab}+\bar{\varepsilon} \ . \label{nltf3}%
\end{align}
The relations between the linear and nonlinear gauge fields can be obtained
from eq.~$\left(  \ref{gtc}\right)  $. With $g=\e^{-\bar{\chi}%
\,\mathsf{Q}}\, \e^{-\phi\text{\textperiodcentered}\mathsf{P}}$
we get%
\begin{align}
V^{a} &  =e^{a}-D_\omega\phi^{a}-\mbox{$\frac{\ii}{2}$}\, D_\omega\bar{\chi} \,\Gamma^{a} \chi+\ii\bar
{\chi}\,   \Gamma^{a} \psi
 \ , \label{nlf1}\\[4pt]
W^{ab} &  =\omega^{ab} \ , \label{nlf2}\\[4pt]
\bar{\Psi} &
=\bar{\psi}-D_\omega\bar{\chi} \ . \label{nlf3} 
\end{align}

Note that the action for supergravity in three dimensions
written in terms of nonlinear fields reads%
\begin{align}
\mathrm{S}^{\left(  3\right)  }\left(  \mathcal{\bar{A}%
}\, \right) &=\kappa\, \int_{\mathcal{M}}\, \mathscr{L}_{\mathsf{CS}}^{\left(  3\right) 
}\left(  \mathcal{\bar{A}}\, \right) \nonumber \\  
 & = \kappa\,
\int_{\mathcal{M}}\, \big( \epsilon
_{abc}\, R^{ab}\wedge V^{c}-\ii 
\bar{\Psi}\wedge D_\omega\Psi \big) -\frac{\kappa}{2}\, \int_{\partial\mathcal{M}}\, \epsilon_{abc}\,
\omega^{ab} \wedge
V^{c} \ . \label{sunolinact}%
\end{align}
where 
\begin{equation}
\bar{\mathcal{A}}=V^{a}\mathsf{P}_{a}+\dfrac{1}{2}W^{ab}\mathsf{J}_{ab}+\bar{\Psi}\mathsf{Q} \ .
\end{equation}
\subsection{Topological supergravity in two-dimensions}

In complete analogy with the bosonic case, we now construct a transgression
action for the Poincar\'e supergroup in three dimensions.
Inserting eq.~$\left(  \ref{sulinact}\right)  $ and eq.~$\left(
\ref{sunolinact}\right)  $ in eq.~$\left(  \ref{tact}\right)  $ with
\begin{equation}
\mathscr{B}^{\left(  2\right)  }\left(
  \mathcal{A},\mathcal{\bar{A}}\, \right)
= \mbox{$-\frac{1}{2}$}\, \epsilon_{abc}\, \omega^{ab}\wedge \big(  D_\omega\phi^{c}+\ii D_\omega\bar{\chi
}\, \Gamma^{c} \chi-\ii\bar{\chi}\,  \Gamma^{c}
\psi \big)  
 +\ii\bar{\psi}\wedge D_\omega\chi
\end{equation}
we obtain%
\begin{equation}
\mathrm{S}^{\left(  2\right)  }\left[\omega,\phi;\bar{\psi},\chi\right]
=\kappa\, \int_{\partial\mathcal{M}}\, \big( \epsilon_{abc}\, R^{ab}\,
 \phi^{c} -2\ii\bar{\psi}\wedge D_\omega\chi
 \big) \ . \label{supact}%
\end{equation}
This action corresponds to the supersymmetric extension of topological
gravity in two dimensions proposed by~\cite{Chamseddine:1989yz}. As in the
purely bosonic case, supersymmetry is broken to the Lorentz symmetry
$SO(2,1)$ because of the nonlinear transformation laws in
eqs.~(\ref{nltf1},\ref{nltf3}); however, the action is invariant
under the full supersymmetry if one prescribes the correct
transformation laws for the coset fields $\bar{\chi}, \phi$
instead of considering the symmetries dictated by the nonlinear
realization. The variation of the action in eq.~$(\ref{supact})$ leads
to the field equations
\begin{align}
 \epsilon_{abc} \, \left(
   D_\omega\phi^{c}-\ii\bar{\psi}\,\Gamma^{c}\chi \right) &= 0 \ , \\ 
  \epsilon_{abc}\, R^{ab}&=0 \ , \\
  D_{\omega} \chi & =0 \ , \\
  D_{\omega}{\bar{\psi}} &=0  \ .
\end{align}

In this way, it has been shown that even-dimensional topological (super)gravity can be obtained by using a transgression field theory where the gauge connections take values in the Lie algebra associated to the linear and nonlinear realization of the (super)Poincar\'e group. The topological actions corresponds to a gauged Wess--Zumino--Witten term where the gauge transformation relating both gauge connections lives in the coset $ISO(2n,1)/SO(2n,1)$. 
It would be interesting to explore the relationship between the space of solutions of the Chern--Simons theory and its corresponding gauged Wess--Zumino--Witten models since there is evidence that Wess--Zumino--Witten actions corresponds to the holographic dual of the gravitational theory in one more dimension~\cite{Barnich:2013yka}. This problem is out of the scope of this Thesis but it constitutes a very interesting possibility to explore.

In principle, given a Lie algebra and its invariant tensors, one should be able to construct the associated gauged Wess--Zumino--Witten model. An interesting possibility is to consider non trivial extensions of the classical gravitational groups studied in\cite{Chamseddine:1990gk}. This is the case for instance of the Maxwell algebra. In the next Chapter, the three-dimensional Chern--Simons action for the Maxwell algebra is constructed, as well as its corresponding gauged Wess--Zumino--Witten model.

\chapter{Gauged WZW model for the Maxwell algebra}
\label{ch:Max_alg}
\begin{flushright}
\textit{``...\textquestiondown valdr\'a la pena jugarse 
la vida por una idea 
que puede resultar falsa?... \\
...claro que vale la pena}''. \\ \textit{Preguntas y respuestas, Nicanor Parra.}
\footnote{\scriptsize ``...will it be worth gamble
life for an idea
that may be false ?...Clearly worth''. Nicanor Parra, Questions and answers.}
\bigskip
\end{flushright}
The Maxwell algebra was introduced in the early seventies \cite{Bacry:1970ye,Schrader:1972zd}. In this context, the Maxwell algebra encodes the symmetries of a particle moving in an
electromagnetic background. Recently,
it has attracted more attention at some extent due to its supersymmetric extension~\cite{Bonanos:2009wy}.
In the context of gravitational theories, in~\cite{deAzcarraga:2010sw,deAzcarraga:2012qj} it is argued that gauging the Maxwell algebra leads to new contributions to the
cosmological term in Einstein gravity. Recently, in~\cite{deAzcarraga:2012qj} it has been shown that $D=4$, $N=1$ supergravity can be obtained geometrically as a quadratic expression in the curvatures of the Maxwell superalgebra. Thus, the Maxwell (super)algebra carries an interesting set of symmetries beyond the standard (super)Poincar\'e ones.

In this chapter we explore the
implications of the gauged Maxwell algebra in the context of Chern--Simons
gravity \cite{Salgado:2014qqa}. In particular, we consider the three-dimensional case and the construction of the
corresponding gauged
WZW model in two dimensions. In order to do so, we first need to specify the nonzero component of the invariant tensor associated to the Maxwell algebra. As we will see, the invariant tensors can be obtained by performing an $S-$expansion procedure~\cite{Izaurieta:2006ia} starting form the anti-de Sitter algebra $SO(d-1,2)$ with a suitable semigroup $S$.

\newsection{Maxwell algebra and $S-$expansion procedure}

The Maxwell algebra is a noncentral extension of the Poincar\'e algebra by a rank two tensor $\mathsf{Z}_{ab}=-\mathsf{Z}_{ba}$ such that
 \begin{align} 
 \left[  \mathsf{J}_{ab},\mathsf{P}_{c}\right]      &=\eta_{bc}\mathsf{P}%
 _{a}-\eta_{ac}\mathsf{P}_{b} \ , \label{maxalg1}\\
 \left[  \mathsf{Z}_{ab},\mathsf{P}_{c}\right]    & =0 \ ,\\
 \left[ \mathsf{J}_{ab},\mathsf{J}_{cd}\right] & =\eta _{bc}\, \mathsf{J}%
 _{ad}+\eta _{ad}\, \mathsf{J}_{bc}-\eta _{ac}\, \mathsf{J}_{bd}-\eta
 _{bd}\, \mathsf{J%
 }_{ac} \ , \\                                
 \left[ \mathsf{P}_{a},\mathsf{P}_{b}\right] & =\mathsf{Z}_{ab} \ , \\
\left[ \mathsf{J}_{ab},\mathsf{Z}_{cd}\right]  &=\eta _{bc}\, \mathsf{Z}%
_{ad}+\eta _{ad}\, \mathsf{Z}_{bc}-\eta _{ac}\, \mathsf{Z}_{bd}-\eta
_{bd}\, \mathsf{Z%
}_{ac} \ . \label{maxalg2}
\end{align}

We show now that the Maxwell algebra can be obtained as an $S$-expansion starting from the \textrm{AdS}
algebra. $S$-expansions consist of systematic Lie
algebra enhancements which enlarge symmetries. They have the nice property
that they provide the right invariant tensor of the expanded algebra
\cite{Iza06b}, which is a key ingredient in the evaluation of Chern--Simons forms. For related approaches regarding $S-$expansions in the context of gravitational Lie algebras see~\cite{Diaz:2012zza,Fierro:2014lka,Caroca:2011zz} and  Appendix~\ref{ch:app3} for generalities about the procedure itself.

\subsection{$S$-expansion of the AdS algebra}

We now show that the Maxwell algebra can be obtained by an $S$-expansion
of the AdS algebra. Let $\mathrm{S}_{\mathrm{E}}^{\left(  2\right)  }$ be the semigroup~\cite{Salgado:2014qqa}
\begin{equation}
\mathrm{S}_{\mathrm{E}}^{\left(  2\right)  }=\left\{  \lambda_{0},\lambda
_{1},\lambda_{2},\lambda_{3}\right\} \ ,
\end{equation}
with composition law%
\begin{equation}
\lambda_{\alpha}\text{\textperiodcentered}\lambda_{\beta}:= \left\{
\genfrac{.}{.}{0pt}{}{\lambda_{\alpha+\beta} \qquad \text{ if } \ \alpha+\beta
\leq3 \ ,}{\lambda_{3} \qquad \text{ if } \ \alpha+\beta>3 \ .}%
\right.
\label{SE2law}\end{equation}
Recall that the \textrm{AdS} algebra $\frg=\mathfrak{so}(d-1,2)$ in $d$ dimensions is given by%
\begin{align}
\left[\,  \bar{\mathsf{J}}_{ab},\bar{\mathsf{J}}_{cd} \, \right]    & =\eta
_{bc}\, \bar{\mathsf{J}}_{ad}+\eta_{ad}\, \bar{\mathsf{J}}_{bc}-\eta_{ac} \,
\bar{\mathsf{J}}_{bd}-\eta_{bd}\, \bar{\mathsf{J}}_{ac} \ , \label{ads1}\\[4pt]
\left[ \, \bar{\mathsf{J}}_{ab},\bar{\mathsf{P}}_{c} \, \right]    & =\eta
_{bc}\, \bar{\mathsf{P}}_{a}-\eta_{ac}\, \bar{\mathsf{P}}_{b}  \ , \label{ads2}  \\[4pt]
\left[ \, \bar{\mathsf{P}}_{a},\bar{\mathsf{P}}_{b} \, \right]    & =\bar{\mathsf{J}}_{ab} \ . \label{ads3}
\end{align}
This algebra can be decomposed into two subspaces
$
\mathfrak{g}=\mathsf{V}_{0}\oplus\mathsf{V}_{1}%
$
where $\mathsf{V}_{0}={\rm Span}_{\complex}\left\{  \bar{\mathsf{J}}_{ab}\right\}$ and $\mathsf{V}_{1}={\rm Span}_{\complex}\left\{  \bar{\mathsf{P}}_{a}\right\}$. In terms of these subspaces, the $\mathrm{AdS}$ algebra has
the structure%
\begin{equation}
\left[  \mathsf{V}_{0},\mathsf{V}_{0}\right]  \subset\mathsf{V}_{0} \
, \qquad \left[  \mathsf{V}_{0},\mathsf{V}_{1}\right]
\subset\mathsf{V}_{1} \qquad \mbox{and} \qquad \left[
  \mathsf{V}_{1},\mathsf{V}_{1}\right]  \subset\mathsf{V}_{0} \ . \label{subspa}
\end{equation}
If we now choose the partition for the semigroup [See. \ref{resexp}]
$\mathrm{S}_{\mathrm{E}}^{\left(  2\right)  }$ given by
\begin{equation}
\mathrm{S}_{0}  =\left\{  \lambda_{0},\lambda_{2}\right\}  \cup\left\{
\lambda_{3}\right\} \qquad \mbox{and} \qquad 
\mathrm{S}_{1} =\left\{  \lambda_{1}\right\}  \cup\left\{  \lambda
_{3}\right\} \ ,
\end{equation}
then this partition is resonant with respect to the structure of the
AdS algebra; under the
semigroup multiplication law we have%
\begin{equation}
\mathrm{S}_{0}\, \text{\textperiodcentered}\, \mathrm{S}_{0}\subset\mathrm{S}_{0}
\ , \qquad \mathrm{S}_{0}\,\text{\textperiodcentered}\, \mathrm{S}_{1}\subset\mathrm{S}%
_{1} \qquad \mbox{and} \qquad \mathrm{S}_{1}\,
\text{\textperiodcentered}\, \mathrm{S}_{1}%
\subset\mathrm{S}_{0}
\end{equation}
which agrees with the decomposition in eq.~(\ref{subspa}). The resonance condition allows us to construct a resonant subalgebra
$\mathfrak{g}_{\rm R}$ defined by%
\begin{equation}
\mathfrak{g}_{\rm R}=\mathsf{W}_{0}\oplus\mathsf{W}_{1}:= \left(
\mathrm{S}_{0}\times\mathsf{V}_{0}\right)  \oplus\left(  \mathrm{S}%
_{1}\times\mathsf{V}_{1}\right) \ .
\end{equation}
Explicitly one has
\begin{align}
\mathsf{W}_{0}  & =\left\{  \lambda_{0},\lambda_{2},\lambda_{3}\right\}
\times{\rm Span}_{\complex}\left\{  \bar{\mathsf{J}}_{ab}\right\}
=:{\rm Span}_{\complex} \left\{  \mathsf{J}%
_{ab,0},\mathsf{J}_{ab,2},\mathsf{J}_{ab,3}\right\} \ , \nonumber \\[4pt]
\mathsf{W}_{1}  & =\left\{  \lambda_{1},\lambda_{3}\right\}
\times{\rm Span}_{\complex} \left\{
\bar{\mathsf{P}}_{a}\right\}  =: {\rm Span}_{\complex} \left\{  \mathsf{P}_{a,1},\mathsf{P}%
_{a,3}\right\} \ .
\end{align}
Since $\lambda_{3}$ is a zero element
in the semigroup [See. \ref{0reduc}], one can extract another subalgebra by setting $\mathsf{J}_{ab,3}=\mathsf{P}%
_{a,3}=0$; this choice still preserves the Lie algebra
structure of the residual algebra. This algebra is called a $0_{S}$-forced
resonant algebra and is composed by the subspaces
\begin{align}
{\tilde{\sf W}}_{0}  ={\rm Span}_{\complex} \left\{  \mathsf{J}_{ab,0},\mathsf{J}%
_{ab,2}\right\}  \qquad \mbox{and} \qquad
{\tilde{\sf W}}_{1}  ={\rm Span}_{\complex} \left\{
  \mathsf{P}_{a,1}\right\} \ .
\end{align}
In order to obtain a presentation for the $0_{S}$-forced resonant algebra we
use eqs. (\ref{ads1}--\ref{ads3}) together with eq. (\ref{SE2law}) to compute the commutation relations 
\begin{align}
\left[  \mathsf{J}_{ab,0},\mathsf{J}_{cd,0}\right]    & =\lambda_{0}%
\lambda_{0}\left[  \mathsf{\bar{J}}_{ab},\mathsf{\bar{J}}_{cd}\right]
\sim\mathsf{J}_{ab,0} \ ,\\
\left[  \mathsf{J}_{ab,0},\mathsf{J}_{cd,2}\right]    & =\lambda_{0}%
\lambda_{1}\left[  \mathsf{\bar{J}}_{ab},\mathsf{\bar{J}}_{cd}\right]
\sim\mathsf{J}_{ab,2} \ , \\
\left[  \mathsf{J}_{ab,2},\mathsf{J}_{cd,2}\right]    & =\lambda_{2}%
\lambda_{2}\left[  \mathsf{\bar{J}}_{ab},\mathsf{\bar{J}}_{cd}\right]  \sim0 \ , \\
\left[  \mathsf{J}_{ab,0},\mathsf{P}_{a,1}\right]    & =\lambda_{0}\lambda
_{1}\left[  \mathsf{\bar{J}}_{ab},\mathsf{\bar{P}}_{c}\right]  \sim
\mathsf{P}_{a,1} \ , \\
\left[  \mathsf{J}_{ab,2},\mathsf{P}_{a,1}\right]    & =\lambda_{2}\lambda
_{1}\left[  \mathsf{\bar{J}}_{ab},\mathsf{\bar{P}}_{c}\right]  \sim0 \ , \\
\left[  \mathsf{P}_{a,1},\mathsf{P}_{b,1}\right]    & =\lambda_{1}\lambda
_{1}\left[  \mathsf{\bar{P}}_{a},\mathsf{\bar{P}}_{b}\right]  \sim
\mathsf{J}_{ab,2} \ .%
\end{align}
Here we have used the symbol $\sim$ just for denoting the subspace structure of the expanded algebra. Now, identifying
\begin{align}
\mathsf{J}_{ab}:=\mathsf{J}_{ab,0}  \ , \qquad 
\mathsf{Z}_{ab}:= \mathsf{J}_{ab,2}  \qquad \mbox{and} \qquad
\mathsf{P}_{a}:= \mathsf{P}_{a,1} 
\end{align}
we obtain the Maxwell algebra in $d$ dimensions (\ref{maxalg1}-\ref{maxalg2}).

\subsection{Invariant tensors}

The $S$-expansion procedure also provides the invariant tensors
associated to the expanded algebra [See. \ref{invite}]; here we study the particular
case of $d=3$ dimensions. The invariant tensors of the AdS algebra $\mathfrak{so}(2,2)$ are
given by \cite{Hayashi:1991mu}%
\begin{align}
\left\langle \,\bar{\mathsf{J}}_{ab}\, \bar{\mathsf{J}}_{cd}\, \right\rangle  &
=\mu_{0}\, \left(  \eta_{ad}\, \eta_{bc}-\eta_{ac}\, \eta_{bd}\right)
\ , \nonumber \\[4pt]
\left\langle \, \bar{\mathsf{J}}_{ab}\, \bar{\mathsf{P}}_{c}\, \right\rangle  &
=\mu_{1}\, \epsilon_{abc} \ , \nonumber \\[4pt]
\left\langle \, \bar{\mathsf{P}}_{a}\, \bar{\mathsf{P}}_{b}\, \right\rangle  & =\mu
_{0}\, \eta_{ab} \ ,
\end{align}
where $\mu_i$, $i=0,1$ are arbitrary constants. By \cite[Theorem~7.2]{Iza06b}, the $S$-expanded tensors
are given by the formula%
\begin{equation}
\left\langle \mathsf{T}_{A,\alpha}\, \mathsf{T}_{B,\beta}\right\rangle
=\tilde{\alpha}_{\gamma}\, K_{\alpha\beta}^{~~\gamma}\, \left\langle \mathsf{T}%
_{A}\, \mathsf{T}_{B}\right\rangle \label{sinvten}
\end{equation}
where $\tilde{\alpha}_\gamma$ are also arbitrary constants, and $K_{\alpha\beta}^{~~\gamma}$ is
called a two-selector (\ref{twosel}). The application of the formula eq.~(\ref{sinvten}) for the $S$-expanded generators 
$\mathsf{J}_{ab,0}$, $\mathsf{J}_{ab,2}$ and $\mathsf{P}_{a,1}$ gives
the following invariant tensors for the Maxwell algebra 
\begin{align}
 \left\langle \mathsf{J}_{ab}\, \mathsf{J}_{cd}\right\rangle  &=\alpha
 _{0}\, \left( \eta _{ad}\, \eta _{bc}-\eta _{ac}\, \eta _{bd}\right)
 \ ,  \label{tens1} \\[4pt]
 \left\langle \mathsf{J}_{ab}\, \mathsf{P}_{c}\right\rangle  &=\alpha
 _{1}\, \epsilon _{abc} \ , \label{tens2} \\[4pt]
 \left\langle \mathsf{J}_{ab}\, \mathsf{Z}_{cd}\right\rangle  &=\alpha
 _{2}\, \left( \eta _{ad}\, \eta _{bc}-\eta _{ac}\, \eta _{bd}\right)
 \ , \label{tens3}  \\[4pt]
 \left\langle \mathsf{P}_{a}\, \mathsf{P}_{b}\right\rangle  &=\alpha
 _{2}\, \eta_{ab} \ , \label{tens4}
 \end{align}%
 with the redefined arbitrary constants
\begin{equation}
\alpha_{0}:= \tilde{\alpha}_{0}\, \mu_{0} \ , \qquad 
\alpha_{1}:= \tilde{\alpha}_{1}\, \mu_{1} \qquad \mbox{and} \qquad
\alpha_{2}:= \tilde{\alpha}_{2}\, \mu_{0} \ .
\end{equation}

\newsection{Maxwell algebra and Chern--Simons gravity}

In order to construct the Chern--Simons gravitational Lagrangian, we need to \textit{gauge} the Maxwell algebra. Let us to consider a connection one-form
$\mathcal{A}$ taking values in the Maxwell algebra, which can be
expanded as%
\begin{equation}
\mathcal{A}=e^{a}\, \mathsf{P}_{a}+\mbox{$\frac{1}{2}$}\, \omega
^{ab}\, \mathsf{J}_{ab}+\mbox{$\frac{%
1}{2}$}\, \sigma ^{ab}\, \mathsf{Z}_{ab}
\end{equation}
Here, $e^{a}$ and $\omega ^{ab}$ are identified with the standard vielbein and spin
connection, and we introduce an additional rank two
antisymmetric one-form $\sigma^{ab}=-\sigma^{ba}$ as the gauge field
corresponding to the generator $\mathsf{Z}_{ab}$. 
 
 With this data, we can apply the subspace separation method once again to obtain the associated Chern--Simons action. In fact, using 
 \begin{align}
 \mathcal{A}_{0}  & =0,\\
 \mathcal{A}_{1}  & =\omega,\\
 \mathcal{A}_{2}  & =e+\omega,\\
 \mathcal{A}_{3}  & =\sigma+e+\omega,
 \end{align}
 where $e=e^{a}\mathsf{P}_{a}$, $\omega=\frac{1}{2}\omega^{ab}\mathsf{J}_{ab}$
 and $\sigma=\frac{1}{2}\sigma^{ab}\mathsf{Z}_{ab}$ recursively in the triangle equation (\ref{pr34}), and the nonvanishing components of the invariant tensor eq.(\ref{tens1}-\ref{tens4}), one finds the Chern--Simons gravity action for the
 Maxwell algebra 
 \begin{align}    
 \mathrm{S}_{\mathsf{CS}}^{(3)}(\mathcal{A})=&\ \kappa \,
 \int_{\mathcal{M}} \, \bigg(
 \frac{\alpha_{0}}{2}\, \omega_{~b}^a \wedge \Big( \dd\omega _{~c}^{b}+\frac{2%
 }{3}\, \omega _{~d}^{b}\wedge \omega _{~c}^{d}\Big) +\alpha _{1}\, \epsilon
 _{abc}\, R^{ab}\wedge e^{c}  \label{maxcs}\\
 & \qquad\qquad +\alpha _{2}\, \left( T^{a}\wedge e_{a}+R_{~c}^{a}\wedge \sigma
 _{~a}^{c}\right) -\dd\Big( \, \frac{\alpha _{1}}{2}\, \epsilon _{abc}\, \omega
 ^{ab}\wedge e^{c}+\frac{\alpha _{2}}{2}\, \omega _{~b}^{a}\wedge \sigma
 _{~a}^{b}\, \Big) \bigg) \ , \nonumber
 \end{align}%
where $T^a=D_\omega e^a$ is the torsion two-form. The resulting theory
contains three sectors governed by the different values of the
coupling constants $\alpha_i$. The first term is the gravitational
Chern--Simons Lagrangian~\cite{Witten:1988hc} while the second term is the
usual Einstein--Hilbert Lagrangian. The sector proportional to
$\alpha_2$ contains the torsional term plus a new coupling between the
gauge field $\sigma^{ab}$ and the Lorentz curvature. Up to boundary
terms, the action of eq.~$(\ref{maxcs})$ is invariant under the local
gauge transformations%
\begin{align}
\delta e^{a} &=-D_{\omega }\rho ^{a}+\kappa _{~b}^{a}\, e^{b} \ ,
\nonumber \\[4pt]
\delta \omega ^{ab} &=-D_{\omega }\kappa ^{ab} \ , \nonumber \\[4pt]
\delta \sigma ^{ab} &= -D_{\omega }\tau ^{ab}-2e^{a}\, \rho ^{b}-2\omega
_{~c}^{a} \tau ^{cb}+2\kappa _{~c}^{a}\, \sigma ^{cb} \ .
\end{align}

The variation of eq.(\ref{maxcs}) leads to the following equations of motion
\begin{align}
\alpha_{0}R_{ab}+\alpha
_{1}\epsilon_{abc}T^{c}-\alpha_{2}\left(  e_{a} \wedge e_{b}+\frac{1}{2}D_{\omega}\sigma_{ab}\right)  =0 ,  \label{eom1}\\
\alpha_{1}\epsilon_{abc}%
R^{ab}+2\alpha_{2}T_{c}=0 \label{eom2},\\
\alpha_{2}R_{ab}=0 \label{eom3}.
\end{align}
Substituting eq.$(\ref{eom3})$, with $\alpha_2 \neq 0$, into eq.$(\ref{eom2})$ we get $T^a=0$. Substituting again into eq.$(\ref{eom1})$ we finally get
\begin{align}
R_{ab} &  =0,\label{flatr}\\
T_{c} &  =0,\label{flatt}\\
D_{\omega}\sigma_{ab} +2e_{a} \wedge e_{b}&  =0. \label{flats}%
\end{align}
Thus, according to eq.$(\ref{flatr}, \ref{flatt})$, the three dimensional Chern--Simons action for the Maxwell algebra describes a flat geometry. The new feature of this theory comes from eq.$(\ref{flats})$ which can be interpreted as the coupling of a matter field $\sigma$ to the flat three dimensional space.
\newsection{Maxwell gauged WZW model}                      
          
The Maxwell group $G$ contains the Lorentz subgroup $H$ generated by $\{%
\mathsf{J}_{ab}\}$ and the coset $G/H$ generated by $\left\{ \mathsf{P}_{a},%
\mathsf{Z}_{ab}\right\} $. Under gauge transformations, the gauge field
transforms according to eq.~$(\ref{gtc})$.                                                     
Let us now perform a gauge transformation with gauge element $g \in G/H $ given by
\begin{equation}
g=\e^{-\frac{1}{2}\, h^{ab}\, \mathsf{Z}_{ab}}\, \e^{-\phi ^{a}\,
  \mathsf{P}_{a}} \ .
\end{equation}           
According to eq.~$(\ref{gtc})$ with
\begin{equation}      
\bar{\mathcal{A}}=V^{a}\, \mathsf{P}_{a}+\mbox{$\frac{1}{2}$}\, W^{ab}\,
\mathsf{J}_{ab}+ \mbox{$\frac{1}{2}$}\, \Sigma^{ab} \,
\mathsf{Z}_{ab}
\end{equation}                           
it is straightforward to show, using the commutation relations (\ref{maxalg1}-\ref{maxalg2}), that  
\begin{align}                 
V^{a} &=e^a -D_{\omega}\phi^a \ , \nonumber \\[4pt]
W^{ab} &=\omega ^{ab} \ , \nonumber \\[4pt]
\Sigma ^{ab} &=2\phi ^{a}\, e^{b}+\sigma ^{ab}-\phi ^{a}\,
D_\omega\phi ^{b}-D_\omega h^{ab} \ .
\end{align}%
On the other hand, the Chern--Simons action written in terms of the nonlinear connection $\mathcal{\bar{A}}$ is given by
\begin{align}    
 &\mathrm{S}_{\mathsf{CS}}^{(3)}(\mathcal{\bar{A}})=\ \kappa \,
 \int_{\mathcal{M}} \, \bigg(
 \frac{\alpha_{0}}{2}\, W_{~b}^a \wedge \Big( \dd W _{~c}^{b}+\frac{2%
 }{3}\, W_{~d}^{b}\wedge W_{~c}^{d}\Big) +\alpha _{1}\, \epsilon
 _{abc}\, \bar{R}^{ab}\wedge V^{c}  \label{nlmaxcs}\\
 & \qquad\qquad +\alpha _{2}\, \left( \bar{T}^{a}\wedge V_{a}+\bar{R}_{~c}^{a}\wedge \Sigma
 _{~a}^{c}\right) -\dd\Big( \, \frac{\alpha _{1}}{2}\, \epsilon _{abc}\, W
 ^{ab}\wedge V^{c}+\frac{\alpha _{2}}{2}\, W_{~b}^{a}\wedge \Sigma
 _{~a}^{b}\, \Big) \bigg), \nonumber
 \end{align}%
where
\begin{align}    
\bar{R}^{ab} = &\dd W^{ab}+W^{a}_{~c}\wedge W^{cb}
   \\
 \bar{T}^{a}=& \dd V^{a}+W^{a}_{~b}\wedge V^{b}.
 \end{align}%

The final step is to compute the transgression action for the Maxwell algebra. In order to do so note that the boundary contribution eq.(\ref{Q2n}) reads
\begin{align}
Q_{\mathcal{A\leftarrow\bar{A}\leftarrow}0}^{\left(  2\right)}=& \ -\alpha
_{2}\,e^{a}\wedge D_\omega \phi _{a} -\mbox{$\frac{%
\alpha _{1}}{2}$}\, \epsilon _{abc}\, \omega ^{ab}\wedge D_\omega\phi
^{c} \nonumber \\ & + \mbox{$\frac{\alpha _{2}}{2}$}\, \omega
_{~c}^{a}\wedge \left( 2\phi ^{c}\, e_{a}-\phi ^{c}\, D_\omega\phi
_{a}-D_\omega h_{~a}^{c}\right) \ . \label{maxQ2n}
\end{align}
Now, inserting eq.~(\ref{maxcs}, \ref{nlmaxcs}) and eq.~({\ref{maxQ2n}}) into eq.~$(\ref{ch2tranact})$,
The resulting action is a boundary term which corresponds to the gauged
WZW action associated to the Maxwell algebra. As previously, we propose it as a Lagrangian in one less dimension%
\begin{equation}
\mathrm{S}^{(2)}\left[\omega, \phi, h, e \right]=\kappa\,
\int_{\partial \mathcal{M}}\, \big(\alpha _{1}\, \epsilon _{abc}\, R^{ab}\, \phi ^{c}+\alpha
_{2}\,  R_{~c}^{a}\wedge h_{~a}^{c}
\big) \ . \label{wzwmax}
\end{equation}
This action generalizes the topological action for gravity from eq.~$(\ref{topg})$. However, it is interesting to note that both actions are classically equivalent, on-shell.  

The variation of eq.$(\ref{wzwmax})$ gives the following equations of motion
\begin{align}
\alpha_{2}D_{\omega}h_{ab}-\alpha_{1}\epsilon_{abc}D_{\omega}\phi^{c}   &=0 , \label{wzwmax1}\\
R_{ab} &=0 . \label{wzwmax2}
\end{align}
Now, making the following redefinition $\bar{\phi}^c=\alpha_2 \epsilon^{cjk}h_{jk}-2\alpha_1 \phi^c$, it is direct to show that eq.$(\ref{wzwmax1})$ satisfy $\epsilon_{abc}D_{\omega}\bar{\phi}^c=0$, which corresponds, together with eq.$(\ref{wzwmax2})$, to the field equations for the topological gravity theory in the $n=1$ case eq.$(\ref{eomtop1}, \ref{eomtop2})$. Note that this equivalence is only classical. It would be interesting to investigate what are the implications of this model at the quantum level.
\chapter{Covariant Quiver Gauge Theories}
\label{ch:covquiv}
\begin{flushright}
\textit{``I am coming more and more to the conviction that the necessity of our geometry cannot be demonstrated, at least neither by, nor for, the human intellect.'' \\ Carl Friedrich Gauss.}
\bigskip
\end{flushright}
In
this Chapter we pursue the equivariant dimensional
reduction of topological gauge theories. Starting form studying the related problems of generalizing equivariant dimensional reduction to arbitrary
gauge groups $\mathcal{G}$ and extending these techniques to Chern--Simons gauge theories.  Finally the case of non-compact gauge supergroups is explored. In particular, we perform the dimensional reduction of five-dimensional
Chern--Simons supergravity over $\mathbb{C}P^1$. 

\newsection{$SU(2)-$Equivariant principal bundles}

In this section we study gauge theories on the product
space $\mathcal{M}=M\times S^{2}$. Here $M$ is a closed
$d$-dimensional manifold with local coordinates $(x^\mu)_{\mu=1}^d$. On the sphere $S^{2}\simeq%
\mathbb{C}
P^{1}$ we use complex coordinates $(y,\bar y)$ defined by stereographic parametrization. 
We identify $S^{2}$ with the coset space $SU(2)/U(1)$. This induces a transitive action of $SU(2)$ on $S^{2}$ which we extend to the trivial action on $M$.
In order to obtain dimensionally reduced gauge invariant field theories starting from arbitrary gauge groups $\mathcal{G}$, in this section we study $SU(2)$-equivariant principal bundles on
$\mathcal{M}$ and their corresponding $SU(2)$-invariant connections. We follow for a large part the treatment of \cite{Manton:2010mj}.

Every $SU(2)$-equivariant principal bundle
over $S^{2}$ with structure group $\mathcal{G}$ is isomorphic to a quotient
space \cite{JSL}%
\begin{equation}
\mathcal{P}_{\rho}=SU(2)  \times_{\rho}\mathcal{G}\label{eq9}%
\end{equation}
where 
$\rho:U(1)  \rightarrow\mathcal{G}$ is a homomorphism and the elements of $SU(2)  \times_{\rho}\mathcal{G}$ are
equivalence classes $\left[s,g\right]$ on $SU(2)  \times\mathcal{G}$ with respect to the equivalence relation%
\begin{equation}
\left(  s,g\right)  \equiv\big(  s\, s_{0}\,,\,\rho( s_{0})
^{-1}\, g\big) \label{eq10}%
\end{equation}
for all elements $s_{0}\in U(1)\subset SU(2) $. The bundle projection
$\pi:\mathcal{P}_{\rho}\rightarrow S^{2}$ is given by%
\begin{equation}
\pi\left(  \left[  s,g\right]  \right)  =\left[  s\right] \label{eq11}%
\end{equation}
where $\left[  s\right]  $ denotes the left coset $
s\cdot U(1)   $ in $SU(2)  $. Bundles
$\mathcal{P}_{\rho},\mathcal{P}_{\rho'}$ are isomorphic if and only if
the homomorphisms $\rho,\rho'
:U(1)  \rightarrow\mathcal{G}$ take values in the same conjugacy class of $\mathcal{G}$.

Let $P$ be an $SU(2)$-equivariant principal $\CG$-bundle
over $\mathcal{M}=M\times S^{2}$ and select a good open covering $\left\{
U_{i}\right\}  _{i\in I}$ of $M$, i.e. all $U_{i}$ are contractible. Then
the restrictions $\left.  P\right\vert _{U_{i}\times S^{2}}$ are $SU(2)$-equivariant bundles which are trivial on each $U_{i}$, so
that %
\begin{equation}
\left.  P\right\vert _{U_{i}\times S^{2}}\simeq U_{i}\times \mathcal{P}_{\rho_{i}%
}\label{eq12}%
\end{equation}
where the homomorphisms $\rho_{i}:U(1)  \rightarrow
\mathcal{G}$ may be different for every open set $U_{i}\subset M$. However,
on the non-empty intersections $U_{ij}=U_{i}\cap U_{j}$ in $M$, the
restrictions $\left.  P\right\vert _{U_{ij}\times S^{2}}$ are isomorphic
to%
\begin{equation}
U_{ij}\times \mathcal{P}_{\rho_{j}}\simeq\left.  P\right\vert _{U_{ij}\times S^{2}%
}\simeq U_{ij}\times \mathcal{P}_{\rho_{i}} \ .\label{eq13}%
\end{equation}
This means that $\mathcal{P}_{\rho_{j}}\simeq \mathcal{P}_{\rho_{i}}$ and hence $\rho
_{i},\rho_{j}$ take values in the same conjugacy class of $\mathcal{G}$. If $M$ is connected, a
representative homomorphism $\rho$ can be chosen such that%
\begin{equation}
\left.  P\right\vert _{U_{i}\times S^{2}}\simeq U_{i}\times \mathcal{P}_{\rho
}\label{eq14}%
\end{equation}
for all $ i\in I$, and which satisfies%
\begin{equation}
\rho=h_{ij}^{-1}\, \rho \, h_{ij}\label{eq15}%
\end{equation}
for all transition functions $h_{ij}:U_{ij}\rightarrow\mathcal{G}$.
This implies that $h_{ij}$ take values in the centralizer of the image 
$\rho\left(  U(1)  \right)  $ in $\mathcal{G}$, which we denote by
\begin{equation}
\mathcal{H}=Z_{\mathcal{G}}\big(  \rho(  U(1)
)  \big) \ . \label{eq16}%
\end{equation}
Thus the collection of transition functions $\{h_{ij}\}$ for $ i,j\in I$ defines a principal bundle $P_{M}$ over $M$ with structure
group $\mathcal{H}$ which is the residual gauge group after dimensional reduction.

The homomorphism $\rho$ is determined by specifying a unique element $\Lambda
\in\mathfrak{g}$, where $\mathfrak{g}$ is the Lie algebra of $\mathcal{G}$. For this, introduce the Pauli spin matrices%
\begin{equation}%
\sigma_{1}=%
\begin{pmatrix}
0 & 1\\
1 & 0
\end{pmatrix}
\ , \qquad \sigma_{2}=%
\begin{pmatrix}
0 & -\ii\\
\ii & 0
\end{pmatrix}
\ , \qquad \sigma_{3}=%
\begin{pmatrix}
1 & 0\\
0 & -1
\end{pmatrix}
\label{eq17}%
\end{equation}
so that $\mathsf{T}_{a}=-\frac{\ii}{2}\, \sigma_{a}$ for $a=1,2,3$ generate the defining representation of the Lie algebra $\mathfrak{su}(2)  $, where the $U(1)  $ subgroup of
$SU(2)  $ is generated by $T_{3}$. Any element of $U(1)  $ can be written as $\exp(  t\, \mathsf{T}_{3})  $, where $t\in \mathbb{R}$, and the
image of this element under the homomorphism $\rho$ is 
\begin{equation}
\rho\big(  \exp( t\, \mathsf{T}_{3})  \big)  =\exp( t\, \Lambda) \label{eq18}%
\end{equation}
where $\exp( t\, \Lambda )  \in\mathcal{G}$. 
Note that the identity element of $U(1)\subset SU(2)$ corresponds to $t=4\pi$, so that
\begin{equation}
\exp(  4\pi \, \mathsf{T}_{3})  =\unit_{SU(2)} \ ,\label{eq19}%
\end{equation}
and since $\rho$ is a homomorphism 
it follows that $\Lambda$ must satisfy
\begin{equation}
\exp(  4\pi\, \Lambda)  =\unit_{\mathcal{G}} \ . \label{eq20}%
\end{equation}
This leads generally to an algebraic quantization condition on $\rho: U(1)\rightarrow \mathcal{G}$ which we describe explicitly in what follows.

The operations of restriction and induction \cite{AlGar12}
work for principal bundles in the same way as for
vector bundles. Given an $SU(2)$-equivariant principal bundle
$P\rightarrow M\times S^2 $, its restriction $P \vert _{M\times [\unit_{SU(2)}]}$ defines a $U(1)$-equivariant principal bundle on $M$ which is isomorphic to $P_{M}$. The $U(1)$-action on the fibre is defined by the homomorphism $\rho : U(1) \rightarrow \mathcal{G}$ and it extends trivially on the base space $M$. The inverse operation gives $P=SU(2)\times_{\rho}\left.P\right\vert _{M\times[\unit_{SU(2)}]  }$. 

\newsection{$SU(2)-$Invariant connections} 

Let us turn now to the derivation of the $SU(2)-$invariant connection
$\mathcal{A}$ defined on $\mathcal{M}=SU(2)/U(1)\times M$. To this end, we
apply the results obtained in section \ref{sec:invconn} with the special
identifications $G=SU(2)$ and $H=U(1)$. The key point of this construction is that dimensional reduction of gauge theories on $\mathcal{M}$ with $SU(2)-$invariant connection conduce naturally to a gauge theory on $M$.

Recall that the most general element $g\in SU(2)$ can be written in term of
two complex parameters $z$, $w\in%
\mathbb{C}
$%
\begin{equation}
g=%
\begin{pmatrix}
z & -\bar{w}\\
w & \bar{z}%
\end{pmatrix}
\,\ \text{, \ such that }\left\vert z\right\vert ^{2}+\left\vert w\right\vert
^{2}=1 \ .%
\end{equation}
The complex numbers $z$, $w\in%
\mathbb{C}
$ can be written in terms of polar coordinates
\begin{equation}%
\begin{tabular}
[c]{ll}%
$z=\cos\frac{\vartheta}{2}\mathrm{e}^{-\frac{i}{2}\left(  \chi+\varphi\right)
}$, & $w=\sin\frac{\vartheta}{2}\mathrm{e}^{-\frac{i}{2}\left(  \chi
-\varphi\right)  }$ ,
\end{tabular}
\ \label{su2conn1}%
\end{equation}
with $\vartheta\in\left[  0,\pi\right]  $, $\varphi\in\left[  0,2\pi\right]  $
and $\chi\in\left[  0,4\pi\right]  $. Using the generators of the Lie algebra
of $\mathfrak{su}(2)$, and the respective Maurer--Cartan form $\theta
_{{\scriptsize SU(2)}}$%
\begin{align}
\theta_{SU(2)}  &  =g^{-1}\dd g=\mathsf{T}_{a}X^{a} \nonumber \\
&  =-\frac{i}{2}%
\begin{pmatrix}
X^{3} & X^{1}-iX^{2}\\
X^{1}+iX^{2} & -X^{3}%
\end{pmatrix}
\ ,
\end{align}
we obtain the left invariant forms \cite[Section 11.7]{rohtua}
\begin{align}
X^{1}  &  =\sin\chi d\vartheta-\sin\vartheta\cos\chi d\varphi\text{
},\label{su2conn2}\\
X^{2}  &  =\cos\chi d\vartheta+\sin\vartheta\sin\chi d\varphi~,\\
X^{3}  &  =d\chi+\cos\vartheta \dd \varphi~. \label{su2conn3}%
\end{align}
Since the subgroup $U(1)\subset SU(2)$ is generated by $\mathsf{T}_{3}$,
using eq.$\left(  \ref{su2conn1}\right)  $ we can factorize any element $g\in SU(2)$
as follows%
\begin{equation}
g=%
\begin{pmatrix}
\cos\frac{\vartheta}{2}\mathrm{e}^{-\frac{i}{2}\varphi} & -\sin\frac
{\vartheta}{2}\mathrm{e}^{-\frac{i}{2}\varphi}\\
\sin\frac{\vartheta}{2}\mathrm{e}^{\frac{i}{2}\varphi} & \cos\frac{\vartheta
}{2}\mathrm{e}^{\frac{i}{2}\varphi}%
\end{pmatrix}
\mathrm{e}^{\mathsf{T}_{3}\chi}~.
\end{equation}
Furthermore, it is possible to choose representatives of the classes in $SU(2)/U(1)$ in
which $\chi=\varphi$. In that case eq.(\ref{su2conn1}) reads
\begin{equation}
\begin{tabular}
[c]{ll}%
$z=\cos\frac{\vartheta}{2}\mathrm{e}^{-\frac{i}{2}\varphi}~,$ & $w=\sin
\frac{\vartheta}{2}$%
\end{tabular}
\ .
\end{equation}
Introducing real coordinates $x^{1}$, $x^{2}$ and $x^{3}$ we have%
\begin{align}
x^{1}-ix^{2}  &  =2zw \nonumber \\
&  =\sin\vartheta\left(  \cos\varphi-i\sin\varphi\right)  ~,
\end{align}
and%
\begin{equation}
x^{3}=\left\vert z\right\vert ^{2}-w^{2}=\cos\vartheta~.
\end{equation}
This allows us to identify $SU(2)/U(1)$ with $S^{2}$. Note that the point
$\vartheta=\pi$ is the south pole of $S^{2}$. 

Let $N\subset S^{2}$ be an open
region around the south pole of $S^{2}$. In this open set $N$, we can define a
complex coordinate $y$ by stereographic projection from the north pole%
\begin{equation}
y=\frac{x^{1}+ix^{2}}{1-x^{3}}=\frac{\sin\vartheta~\mathrm{e}^{i\varphi}%
}{1-\cos\vartheta}=y\left(  \vartheta,\varphi\right)  ~.
\end{equation}
The stereographic projection parametrizes $N=S^{2}\backslash\left\{
\vartheta=0\right\}  $. Introducing a section $\eta:N\rightarrow SU(2)$ given
by%
\begin{equation}
\eta\left(  y\left(  \vartheta,\varphi\right)  \right)  =%
\begin{pmatrix}
\cos\frac{\vartheta}{2}\mathrm{e}^{-i\varphi} & -\sin\frac{\vartheta}{2}\\
\sin\frac{\vartheta}{2} & \cos\frac{\vartheta}{2}\mathrm{e}^{i\varphi}%
\end{pmatrix} \ ,
\end{equation}
we can pullback under $\eta$ the left-invariants forms on $SU(2)$ eq.$\left(
\ref{su2conn2}-\ref{su2conn3}\right)  $%
\begin{align}
\eta^{\ast}X^{1}  &  =\sin\varphi d\vartheta-\sin\vartheta\cos\varphi
d\varphi \ , \label{etapull1}\\
\eta^{\ast}X^{2}  &  =\cos\varphi d\vartheta+\sin\vartheta\sin\varphi
d\varphi~,\\
\eta^{\ast}X^{3}  &  =\left(  1+\cos\vartheta\right)  d\varphi~. \label{etapull2}
\end{align}
With this information we can write down the explicit expressions for the
$SU(2)-$invariant connection over $\mathcal{M}$. 

Let $P_{M}=\left.
\mathcal{P}\right\vert _{\left\{  y=0\right\}  \times M}$ be an
$U(1)-$equivariant principal bundle over $M$, $\left\{  \sigma
_{i}\right\}  _{i\in I}$ be a set of local sections $\sigma_{i}:U_{i}%
\rightarrow P_{M}$ and assume the homomorphisms $\rho_{i}:U(1)\rightarrow
\mathcal{G}$ agree for any $i\in I$. In that case, we can simply drop the
sub index $i$ and thus $\rho_{i}=\rho$. Consider a local section of
$\mathcal{P}$ defined by%
\begin{align}
\varepsilon_{i}  &  :N\times U_{i}\longrightarrow\mathcal{P} \nonumber \\
&  \left(  y,x\right)  \mapsto\left(  \eta\left(  y\right)  \sigma_{i}\left(
x\right)  \right)  ~. \label{epsi}
\end{align}
In order to find an expression for $\mathcal{A}_{i,N}=\varepsilon_{i}^{\ast
}\mathsf{\omega}$, we use eq.(\ref{pullpsi}). To this end introduce the
map%
\begin{align}
\tilde{\eta}  &  :N\times U_{i}\longrightarrow G\times U_{i}\times
\mathcal{G} \nonumber \\
&  \left(  y,x\right)  \mapsto\left(  \eta\left(  y\right)  ,x,e_{\mathcal{G}%
}\right)  ~, \label{etagorro}
\end{align}
and note that using eq.(\ref{projconn}) we obtain $\varepsilon_{i}=\psi_{i}\circ\tilde{\eta}$. Then,
\begin{align}
\mathcal{A}_{i,N}  &  =\varepsilon_{i}^{\ast}\mathsf{\omega~,} \nonumber \\
&  =\psi_{i}^{\ast}\tilde{\eta}^{\ast}\mathsf{\omega}_{\left(  y,x\right)
}~,\nonumber \\
&  =\left.  \Phi_{i}\left(  x\right)  \left(  \mathsf{T}_{a}\right)
\eta^{\ast}X^{a}\right\vert _{y}+\mu_{i}\left(  x\right)  ~.
\end{align}
Note that the pulled-back connection $\mathcal{A}_{i,N}$ over $N\times U_{i}$ implies that the last term $\theta_{\mathcal{G}}$ in eq.(\ref{pullpsi}) vanishes. 

As it was mentioned in section \ref{sec:invconn}, the $\mu_{i}$ are in fact one-form
connections on $P$. Now, writing $\Phi_{i,a}$ with $a=1,2,3$ for $\Phi
_{i}\left(  \mathsf{T}_{a}\right)  $ and using eq.$\left(  {\ref{psicond}}\right)  $ we
arrive to the following expression%
\begin{equation}
\Phi_{i,3}  =\rho_{3}
  =\rho_{\ast}\left(  \mathsf{T}_{3}\right)  ~.
\end{equation}
This last equation can be regarded as the definition of $\rho_{3}$. Note that
there is no contradiction in having a constant section $\rho_{3}$ in the
associated vector bundle $\mathrm{ad}\left(  P\right)  $ due to the elements
of $\mathcal{H=Z}_{\mathcal{G}}\left(  \rho\left(  U(1)\right)  \right)  $
commute with $\rho_{3}$.

Using eq.$\left( \ref{etapull1}-\ref{etapull2}\right)  $ it is possible to show that
\begin{align}
\mathcal{A}_{i,N}\left(  y,x\right)   &  =\left(  \Phi_{i,1}\left(  x\right)
\sin\varphi+\Phi_{i,2}\left(  x\right)  \cos\varphi\right)  d\vartheta
\nonumber\\
& + \left(-  \Phi_{i,1}\left(  x\right)  \cos\varphi+\Phi_{i,2}\left(
x\right)  \sin\varphi\right)  \sin\vartheta d\varphi\nonumber\\
&  +\rho_{3}\left(  1+\cos\vartheta\right)  d\varphi+\mu_{i}\left(  x\right)
~.\label{su2conn5}%
\end{align}
Moreover, using eq.$\left(  {\ref{psicond2}}\right)  $ one obtains%
\begin{equation}
\Phi_{i}\left(  \mathrm{e}^{-\mathsf{T}_{3}t}\mathsf{T}_{a}\mathrm{e}%
^{\mathsf{T}_{3}t}\right)  =\mathrm{e}^{-\rho_{3}t}\Phi_{i}\left(
\mathsf{T}_{a}\right)  \mathrm{e}^{\rho_{3}t} \label{psiexp}
\end{equation}
where $t\in%
\mathbb{R}
$, $a=1,2,3$. In an infinitesimal form, eq.(\ref{psiexp}) becomes%
\begin{equation}
\Phi_{i}\left(  \left[  \mathsf{T}_{3},\mathsf{T}_{a}\right]  \right)
=\left[  \rho_{3},\Phi_{i}\left(  \mathsf{T}_{a}\right)  \right]
~.\label{su2conn4}%
\end{equation}
From the commutations relations of $\mathfrak{su}(2)  $%
\begin{equation}
\begin{tabular}
[c]{ll}%
$\left[  \mathsf{T}_{3},\mathsf{T}_{1}\right]  =\mathsf{T}_{2}~,$ & $\left[
\mathsf{T}_{3},\mathsf{T}_{2}\right]  =-\mathsf{T}_{1}~,$%
\end{tabular}
\end{equation}
it is direct to show that eq.$\left(  \ref{su2conn4}\right)  $ leads to%
\begin{equation}%
\begin{tabular}
[c]{ll}%
$\Phi_{i,2}=\left[  \rho_{3},\Phi_{i,1}\right]  ~,$ & $-\Phi_{i,1}=\left[
\rho_{3},\Phi_{i,1}\right]  ~.$%
\end{tabular}
\label{su2conn6}%
\end{equation}

It is convenient to write $\mathcal{A}_{i,N}\left(  y,x\right)  $ in a more
compact way. In order to do so we use complex coordinates $y$ along $S^{2}$
and the section $\Phi=-i\Phi_{1}+\Phi_{2}$ of the complexified vector bundle
$\mathrm{ad}\left(  P\right)  ^{%
\mathbb{C}
}=P\times\mathfrak{g}^{%
\mathbb{C}
}/\sim$. If one also assumes that $\mathfrak{g\subset u}(n)$ for suitable
$n\in%
\mathbb{N}
$, the field $\Phi$ becomes $\Phi^{\dagger}=-i\Phi_{1}-\Phi_{2}$. Thus,
the gauge potential eq.$\left(  \ref{su2conn5}\right)  $ can be written as \cite{Popov:2005ik}

\begin{equation}
\mathcal{A}_{i,N}\left(  y,x\right)  =\mathcal{A}_{i,y}\dd y+\mathcal{A}%
_{i,\bar{y}}\dd\bar{y}+\mathcal{A}_{i,\mu}\dd x^{\mu}%
\end{equation}
with
\begin{align}
\mathcal{A}_{i,\mu} &  =A_{i,\mu} \ , \label{eq21}\\[4pt]
\mathcal{A}_{i,y} &  =\frac{-1}{1+y\, \bar{y}}\,\big(  \ii\bar{y}\, \Lambda+\Phi
_{i}\big) \ , \label{eq23}\\[4pt]
\mathcal{A}_{i,\bar{y}} &  =\frac{1}{1+y\,\bar{y}}\, \big(  \ii y\, \Lambda+\Phi
_{i}^{\dagger}\big) \ , \label{eq24}%
\end{align}
Here we have renamed $\mu_{i,\mu}=A_{i,\mu}$ where $\mu=1,...,d$, being $d$ the dimension of $M$, and
$\rho_{3}=\Lambda$. With these identifications, the commutation relations
eq.$\left(  \ref{su2conn6}\right)  $ become \cite{Manton:2010mj}
\begin{align}
\left[  \Lambda,\Phi\right]   &  =-\ii\Phi \ , \label{eq26}\\[4pt]
\big[  \Lambda,\Phi^{\dagger}\big]   &  =\ii\Phi^{\dagger
} \ . \label{eq27}%
\end{align}
where we have omitted the $i$ index since these relations are globally
meaningful: $\Phi$ transforms from $U_{i}$ to $U_{j}$ in the bifundamental
representation of the structure group $\mathcal{H}$ on $\mathfrak{g}^{%
\mathbb{C}
}$, and $\rho_{3}$ is invariant under the adjoint action. One also con write the infinitesimal expression of
eq.$\left(  {\ref{connacond}}\right)  $%
\begin{equation}
\left[  \Lambda,A_{\mu}\right]   =0 \ , \label{eq25}
\end{equation}
which is again globally meaningful since the inhomogeneous part in the
transformation law for $A$ lies in the Lie algebra of $\mathcal{H}$.

Thus, on non-empty
overlaps $U_{ij}\subset M$ these fields obey the relations%
\begin{align}
{A}_{j} &  =h_{ij}^{-1}\, {A}_{i}\, h_{ij}+h_{ij}^{-1}%
\, \dd h_{ij} \ , \label{eq28}\\[4pt]
\Phi_{j} &  =h_{ij}^{-1}\, \Phi_{i}\, h_{ij} \ , \label{eq29}%
\end{align}
where $h_{ij}:U_{ij}\rightarrow\mathcal{H}$ are the transition functions of
$P_{M}$, and ${A}_{i}={A}_{i,\mu}\, \dd x^{\mu}$. The collection of
 local gauge potentials $A_{i}$ defines a connection on $P_{M}$, and
the constraints eq.(\ref{eq25}) imply that $A_{i}$ take values in the Lie algebra $\mathfrak{h}$ of $\mathcal{H}$
which is consistent with $P_{M}$ having $\mathcal{H}$ as structure group. The
collection of local adjoint scalar fields $\Phi_{i}$ define a section of the complexified vector bundle ${\rm ad}(P_M)^{\mathbb{C}}:= P_{M}\times_{\text{ad}}\mathfrak{g}^{\mathbb{C}}$ associated 
to $P_{M}$ by the adjoint representation of $\mathcal{H}$ on $\mathfrak{g}$. In the following we write $A$, $\Phi$ with $\left.  A\right\vert _{U_{i}}=A_{i}$ and $\left. \Phi\right\vert _{U_{i}}%
=\Phi_{i}.$

\subsection{Dimensional reduction of Yang--Mills theory}

We consider as an illustrative example the dimensional reduction of Yang--Mills theories ~\cite{Popov:2005ik,Lechtenfeld:2007st,Lechtenfeld:2008nh,Dolan:2009ie,Dolan:2010ur,Popov2012,Szabo:2014zua}.
On $\mathcal{M}=M\times S^{2}$ the metric
is taken to be the direct
product of a chosen metric $g_{\mu\nu}$ on $M$ and the round metric of the two-sphere, so that
\begin{equation}
\dd s^{2}
=G_{\mu^{\prime}\nu^{\prime}}\,\dd x^{\mu^{\prime}}\otimes \dd x^{\nu^{\prime}}
 =g_{\mu\nu}\, \dd x^{\mu} \otimes \dd x^{\nu}+\frac{4R^{2}}{\left(  1+y\, \bar{y}\right)
^{2}}\, \dd y \otimes \dd\bar{y}\label{ym5}%
\end{equation}
where the indices $\mu^{\prime},\nu^{\prime}$ run over $1, \ldots,  d+2$ and $R$ is the radius of $S^2$. For a principal $\mathcal{G}$-bundle $P \rightarrow \mathcal{M}$ with gauge potential $\alg$, the 
 Yang--Mills Lagrangian is given by
\begin{equation}
\mathscr{L}_{\mathsf{YM}}=-\frac{1}{4g_{\rm YM}^2}\, \sqrt{G}~ \text{\textsf{Tr}}\big( \mathcal{F}_{\mu^{\prime}%
\nu^{\prime}}\, \mathcal{F}^{\mu^{\prime}\nu^{\prime}} \big) \label{nym1}%
\end{equation}
where $\mathcal{F}$ is the curvature two-form
\begin{equation}
\mathcal{F} 
=\dd\mathcal{A+A\wedge A}=\mbox{$\frac{1}{2}$}\, \mathcal{F}_{\mu^{\prime}\nu^{\prime}}\, \dd x^{\mu^{\prime}}\wedge \dd x^{\nu^{\prime}}\label{nym2}
\end{equation}
and $G= \det (G_{\mu^{\prime}\nu^{\prime}} )$. Here $g_{\rm YM}$ is the Yang--Mills coupling constant and $\textsf{Tr}$ denotes a non-degenerate invariant quadratic form on the Lie algebra $\frg$ of the gauge group $\mathcal{G}$, which for $\mathcal{G}$ semisimple is proportional to the Killing--Cartan form.

Expanding eq.(\ref{nym1}) into components along $M$ and $\mathbb{C}P^{1}$ we get%
\begin{equation}
\mathscr{L}_{\mathsf{YM}}=-\frac{1}{4g_{\rm YM}^2}\, \sqrt{G}~\text{\textsf{Tr}}\Big( \mathcal{F}_{\mu\nu
}\, \mathcal{F}^{\mu\nu}+\frac{\left(  1+y\, \bar{y}\right)  ^{2}}{2R^{2}}\, g^{\mu\nu
}\, \big(  \mathcal{F}_{\mu y}\, \mathcal{F}_{\nu\bar{y}}+\mathcal{F}_{\mu\bar{y}%
}\, \mathcal{F}_{\nu y}\big)  +\frac{\left(  1+y\, \bar{y}\right)  ^{4}}{8R^{4}%
}\, \mathcal{F}_{y\bar{y}}\, \mathcal{F}_{\bar{y}y}\Big)
\end{equation}
where from eq.($\ref{eq21}-\ref{eq24}$) we have
\begin{align}
\mathcal{F}_{\mu\nu} &  =F_{\mu\nu} \ , \label{ym22}\\[4pt]
\mathcal{F}_{\mu y} &  =-\frac{1}{1+y\, \bar{y}}\, \nabla_{\mu}\Phi \ , \label{ym23}\\[4pt]
\mathcal{F}_{\mu\bar{y}} &  =\frac{1}{1+y\, \bar{y}}\, \nabla_{\mu}\Phi^{\dagger
} \ , \label{ym24}\\[4pt]
\mathcal{F}_{y\bar{y}} &  =\frac{1}{\left(  1+y\, \bar{y}\right)  ^{2}}\, \big(
2\ii\Lambda-\big[  \Phi,\Phi^{\dagger}\big] \big) \ , \label{ym27}%
\end{align} 
with
\begin{align}
F  & =\dd A+A\wedge A=\mbox{$\frac{1}{2}$}\, F_{\mu\nu}\, \dd x^{\mu}\wedge \dd x^{\nu} \ , \label{curv}\\[4pt]
\nabla\Phi & =\dd\Phi+\left[  A,\Phi\right] = \nabla_\mu\Phi\, \dd x^\mu \label{covd} \ .
\end{align}
Integrating the corresponding Yang--Mills action 
\begin{equation}
S_{\mathsf{YM}}=\int_{\mathcal{M}}\, \dd^{d+2}x\ \sqrt{G}\ L_{\mathrm{YM}}
\end{equation}
over $S^2 \simeq \mathbb{C}P^{1}$ using
\begin{equation}
\int_{\mathbb{C}P^{1}}\, \frac{R^{2}}{\left(  1+y\, \bar{y}\right)
  ^{2}}\, \dd y \wedge \dd\bar{y}=4\pi \, R^{2} \ ,
\end{equation}
we get the action
\begin{align}
S_{\mathsf{YMH}}=\frac{\pi\, R^{2}}{g_{\rm YM}^{2}}\, \int_{M}\, \dd^{d}%
x~\sqrt{g}\ \text{\textsf{Tr}}\Big( F_{\mu\nu}\, \left(  F^{\mu\nu}\right)  ^{\dagger
}& +\frac{1}{2R^{2}}\, \big(  \nabla_{\mu}\Phi \, \nabla^{\mu}\Phi^{\dagger}+\nabla_{\mu}%
\Phi^{\dagger}\, \nabla^{\mu}\Phi\big) \nonumber \\ 
& + \frac{1}{8R^{4}}\, \big(  2\ii\Lambda-\big[  \Phi,\Phi^{\dagger}\big]
\big)  ^{2}\Big) \label{ymredact}
\end{align}
which describes a Yang--Mills--Higgs theory on $M$ with gauge group $\mathcal{H}$ \cite{ForgaCS:1979zs,Dolan:2009ie,Manton:2010mj}.

\section{Principal quiver bundles} \label{sector3}

In order to solve the constraint equations eq.($\ref{eq26}-\ref{eq25}$) explicitly, it is necessary to fix the element
$\Lambda\in\frg$ and therefore the gauge group $\mathcal{G}$. In this section we consider the case where $\CG$ is one of the classical Lie groups $U(n)$, $SO(2n)$,
$SO(2n+1)$, or $Sp(2n)$. In this case equivariant dimensional reduction gives principal $\mathcal{H}$-bundles $P_{M} \rightarrow M$ which can be characterized in terms of quivers, and eq.(\ref{ymredact}) becomes an action for a quiver gauge theory on $M$.

In
the Cartan--Weyl basis, the generators of the gauge group $\mathcal{G}$ satisfy the
commutation relations%
\begin{align}
\left[  H_{i},H_{j}\right]   &  =0 \ , \label{cwb1}\\[4pt]
\left[  H_{i},X_{\alpha}\right]   &  =\alpha_{i}\, X_{\alpha} \ , \label{cwb2}\\[4pt]
\left[  X_{\alpha},X_{\beta}\right]   &  =\left\{
\begin{array}
[c]{l}%
N_{\alpha,\beta}\, X_{\alpha+\beta} \quad \text{ ~~if }\alpha+\beta\text{ is a root} \ , \\
0 \quad \text{ ~~~~~~~~~~~~~~~otherwise} \ ,
\end{array}
\right.  \text{ }\label{cwb3}\\[4pt]
\left[  X_{\alpha},X_{-\alpha}\right]   &  =\frac{2}{\left\vert \alpha
\right\vert ^{2}}\, \sum\limits_{i=1}^{n}\,\alpha_{i}\, H_{i}\label{cwb4} \ ,
\end{align}
where $n$ is the rank of $\mathcal{G}$, the subset $\left\{  {H}_{i}\right\}_{i=1}^n  $ generates the Cartan subalgebra
$\mathfrak{t\subset g}$, the vectors $\alpha$ are the roots of the Lie algebra $\mathfrak{g}$ of $\mathcal{G}$, and $\left\{
X_{\alpha}\right\}  $ are the root vectors with normalization constants $N_{\alpha,\beta}$. By gauge invariance, the element $\Lambda
\in\mathfrak{g}$ can be conjugated into the Cartan subalgebra
generated by $\left\{H_{i}\right\}  $. Then there is still a residual gauge symmetry under the discrete Weyl subgroup $\Wcal\subset\mathcal{G}$ which acts by permuting the eigenvalues $\lambda_{i}$, $i=1,\dots, n$ of $\Lambda$. We can use this symmetry to group $\lambda_{i}$ into $m+1$ degenerate blocks, $0\leq m\leq n-1$, of dimensions $k_{\ell}$ such that $\lambda_{k_0+ k_{1}+\cdots+k_{\ell-1}+1}=\cdots=\lambda_{k_0+ k_{1}+\cdots+k_{\ell-1}+k_{\ell}}=:\alpha_{\ell}$ for $\ell=0, 1,\ldots,m$, where $k_{-1}:=0$ and
\begin{equation}
\sum\limits_{\ell=0}^{m}\, k_{\ell}=n \ . \label{cb1}%
\end{equation}
Then the element $\Lambda$ can be expanded as
\begin{equation}
\Lambda=\ii\sum\limits_{\ell=0}^{m}\, \alpha_{\ell} \ \sum\limits_{i=1}^{k_{\ell}} \,
H_{k_{1}+\cdots+k_{\ell-1}+i} \ . \label{cwb5}%
\end{equation}
Similarly, the Higgs fields $\Phi$ and the gauge field $A$ can both be
expanded in the Cartan--Weyl basis as
\begin{align}
\Phi&=\sum\limits_{i=1}^{n}\, \phi_{i}\, H_{i}+\sum\limits_{\alpha>0}\, \big(
\phi_{\alpha}\, X_{\alpha}+\phi_{-\alpha}\, X_{-\alpha}\, \big) \ ,
\label{cw6} \\[4pt]
A&=\sum\limits_{i=1}^{n}\, A_{i}\, H_{i}+\sum\limits_{\alpha>0}\, \big(
A_{\alpha}\, X_{\alpha}+A_{-\alpha}\, X_{-\alpha}\big) \ .
\label{cb7}%
\end{align}

Let us first consider the unitary gauge group $\mathcal{G}=U(n)
$. Since $\Lambda\in\mathfrak{u}(n)$, it may be represented by a
Hermitian $n\times n$ matrix which can always be taken to be diagonal
by conjugation with a suitable element $g \in U(n)  $. The roots and
the forms of the generators in the Cartan--Weyl basis are summarized in appendix \ref{ch:app4}.

Using
\begin{equation}
\left[  H_{i},X_{e_{j}-e_{k}}\right]  =\left(
\delta_{ji}-\delta_{ki}\right) \, X_{e_{j}-e_{k}}\label{u3}
\end{equation}
the invariance constraints eq.(\ref{eq26}) and eq.(\ref{eq27}) yield
\begin{equation}
\phi_{i}=0 \ , \qquad \phi_{jk}\, \left(
  \lambda_{j}-\lambda_{k}+1\right)    =0 = \phi_{kj}\, \left(
  \lambda_{k}-\lambda_{j}+1\right) \ . \label{u7} 
\end{equation}
To allow for non-trivial solutions, it is necessary to require
$\lambda_{k}-\lambda_{j}=\pm\, 1$. Using Weyl symmetry to restrict attention to
$
\lambda_{j}-\lambda_{k}=-1\label{er44}%
$
with $\lambda_{j}\neq\lambda_{k}\neq0$, we find $\phi_{kj}=0$ while $\phi_{jk} $ can be non-vanishing.
However, not all of the fields $\phi_{jk}$ are non-zero.
The only non-vanishing components arise when $j$ and $k
$ belong to neighbouring blocks of indices. 
If $j,k$ belong to the same block $K_{\left(\ell\right)  }:=\{{k_0+ k_{1}+\cdots +k_{\ell-1}+i }\}_{i=1}^{k_{\ell}}$, then $\lambda_{j}=\lambda_{k}=\alpha_{\ell}$ and so $\phi_{jk}=0$ by eq.(\ref{u7}).
On the other hand, if  $j\in K_{\left(  \ell\right)  }$ and  $k\in K_{\left(  \ell+1\right)  }$, then $\lambda_{j}=\alpha_{\ell}$ and $\lambda_{k}=\alpha_{\ell+1}$, and by eq.(\ref{u7}) if $\phi_{jk}\neq0$ then $\alpha_{\ell}-\alpha_{\ell+1}=-1$, so we have 
$
\alpha_{\ell}  =\alpha+\ell
$
for $\ell=0, 1, \ldots,m$ and $\alpha :=\alpha_{0}$. Therefore the Higgs field eq.(\ref{cw6}) has the form
\begin{equation}
\Phi=\sum\limits_{\ell=0}^{m}\, \phi_{(\ell+1)}\label{u19}%
\end{equation}
where%
\begin{equation}
\phi_{(\ell+1)}=\sum\limits_{\stackrel{\scriptstyle j\in K_{\left(  \ell\right)  }\,,\, k\in K_{\left(
\ell+1\right)  }}{\scriptstyle j<k}} \, \phi_{jk} \, X_{e_{j}-e_{k}}
\end{equation}
with $\phi_{(m+1)}:=0$.

The constraint equation eq.(\ref{eq25}) gives
\begin{equation}
A_{jk}\, \left(  \lambda_{j}-\lambda_{k}\right)  =0=
A_{kj}\, \left(  \lambda_{k}-\lambda_{j}\right) \ . \label{u22}%
\end{equation}
Here non-trivial solutions occur when $\lambda_{k}=\lambda_{j}$. This
happens when $j,k$ belong to the same block $K_{\left(  \ell\right)  }$ and thus%
\begin{equation}
A=\sum\limits_{\ell=0}^{m}\, A_{(\ell)}\label{u23}%
\end{equation}
where%
\begin{equation}
A_{(\ell)}=\sum\limits_{i\in K_{\left(  \ell\right)  }}\, A_{i}\, H_{i}%
+\sum\limits_{\stackrel{\scriptstyle j,k\in K_{\left(  \ell\right)  }}{\scriptstyle j<k}}\, \left(
A_{jk}\, X_{e_{j}-e_{k}}+A_{kj}\, X_{e_{k}-e_{j}}\right) \ .
\end{equation}
This calculation also shows that the breaking of the original $U(n)$ gauge symmetry to the centralizer subgroup eq.(\ref{eq16}) is given by
\begin{equation}
\mathcal{H}=\prod\limits_{\ell=0}^{m}\, U(k_{\ell}) \ . \label{u24}%
\end{equation}

The $\mathfrak{u}(n)$-valued gauge potential $\mathcal{A}$ on $\mathcal{M}$ is by construction $SU(2)$-invariant and 
decomposes into $k_{\ell}\times k_{\ell'}$ blocks $\mathcal{A}^{\ell,\ell'}$
with $\ell,\ell'=0, 1, \ldots ,m$ and
\begin{align}
\mathcal{A}^{\ell,\ell} &  =A_{(\ell)}-\a_{(\ell)} \ , \label{ansa1}\\[4pt]
\mathcal{A}^{\ell, \ell+1} &  =-\phi_{(\ell+1)}\, \beta \ , \label{ansa2}\\[4pt]
\mathcal{A}^{\ell+1, \ell} &  =-\big(  \mathcal{A}^{\ell, \ell+1}\big)  ^{\dagger}%
=\phi_{(\ell+1)}^{\dagger}\, \bar{\beta} \ , \\[4pt]
\mathcal{A}^{\ell+i, \ell} &  =0=\mathcal{A}^{\ell, \ell+i} \qquad \text{for} \quad i\geq2 \ . \label{ansa4}
\end{align}
Here the local one-forms $\a_{(\ell)}$ on $\C P^1$ are given by
\begin{equation}
\a_{(\ell)}=-\frac{\alpha_{\ell}\, \left(  \bar{y}\, \dd y-y\, \dd\bar{y}\right) }{ 1+y\, \bar
{y}} \ , \label{magm}
\end{equation} 
and 
\begin{equation}
\beta=\frac{\dd y}{1+y\, \bar{y}} \ , \qquad \bar{\beta}=\frac
{\dd\bar{y}}{1+y\, \bar{y}} \ , \label{cdf}
\end{equation}
are the unique covariantly constant $SU(2)$-invariant $(1,0)$- and $(0,1)$-forms on $\mathbb{C}P^1$ respectively. From eq.($\ref{ansa1}-\ref{ansa4}$) it follows that the curvature two-form splits into $k_{\ell}\times k_{\ell'}$ blocks 
\begin{equation}
\mathcal{F}^{\ell,\ell'}=\dd\mathcal{A}^{\ell,\ell'}+\sum\limits_{\ell''=0}^{m}\, \mathcal{A}%
^{\ell,\ell''} \wedge \mathcal{A}^{\ell'',\ell'}\label{lolazo}
\end{equation} 
and its only non-vanishing components are
\begin{align}
\mathcal{F}^{\ell,\ell} &  =F_{(\ell)}-\sff_{(\ell)}+\big(  \phi_{(\ell)}^{\dagger}\, \phi_{(\ell)}-\phi
_{(\ell+1)}\, \phi_{(\ell+1)}^{\dagger}\big) \, \beta \wedge \bar{\beta} \ , \nonumber \\[4pt]
\mathcal{F}^{\ell,\ell+1} &  =-\nabla\phi_{(\ell+1)}\wedge \beta \ , \nonumber \\[4pt]
\mathcal{F}^{\ell+1,\ell} &  =\nabla\phi_{(\ell+1)}^{\dagger}\wedge
\bar{\beta} \ ,
\end{align}
where
\begin{align}
\sff_{(\ell)} &  =2\alpha_\ell\, \beta \wedge \bar{\beta} \ , \nonumber \\[4pt]
F_{(\ell)}& = \dd A_{(\ell)}+A_{(\ell)}\wedge A_{(\ell)} \ , \nonumber \\[4pt]
 \nabla\phi_{(\ell+1)} &  =\dd\phi_{(\ell+1)}+A_{(\ell)}\, \phi_{(\ell+1)}-\phi_{(\ell+1)}\, A_{(\ell+1)} \ , \nonumber \\[4pt]
\nabla\phi_{(\ell+1)}^{\dagger} &  =\dd\phi_{(\ell+1)}^{\dagger}+A_{(\ell+1)}\, \phi_{(\ell+1)}^{\dagger
}-\phi_{(\ell+1)}^{\dagger}\, A_{(\ell)}
\end{align}
with $\phi_{(0)}:=0=:\phi_{(m+1)}$.

The eigenvalues of the matrix $\Lambda$ from eq.(\ref{cwb5}) are constrained by eq.(\ref{eq20}) to quantized values $\alpha_{\ell} \in \frac12\, \mathbb{Z}$ given by
\begin{equation}
\alpha_{\ell}=\frac{p+2\ell}{2}
\end{equation}
for arbitrary $p\in\mathbb{Z}$. 
It follows that the matrix $\Lambda$ geometrically parameterizes the quantized magnetic charges of the unique $SU(2)$-invariant family of monopole connections $\a_{(\ell)}$ on $\C P^1$. With $p=-m$ the Yang--Mills--Higgs model eq.(\ref{ymredact}) reproduces the quiver gauge theories from~\cite{Popov:2005ik} which are based on the linear $A_{m}$ quivers
\begin{equation}
\xymatrix@R=0.28pc{
\bullet \ar[r] & \bullet \ar[r]&\bullet \ \cdots \ \bullet \ar[r]& \bullet 
} \label{quiver}
\end{equation}
containing $m+1$ nodes corresponding to the gauge groups $U(k_\ell)$ and gauge fields $A_{(\ell)}$, and $m$ arrows corresponding to the $U(k_{\ell+1})\times U(k_\ell)$ bifundamental Higgs fields $\phi_{(\ell+1)}$. The quiver (\ref{quiver}) characterizes how $SU(2)$-invariance is incorporated into the gauge theory on $\CM=M\times S^2$.

Note that this correspondence with quivers is somewhat symbolic, as an
$SU(2)$-equivariant principal $\CG$-bundle does not belong to a
representation category for the quiver (\ref{quiver}). The association
is possible because in the present case the gauge group $\CG$ is a
matrix Lie group: One may regard $U(k_\ell)$ as the group of unitary
automorphisms of a complex inner product space $V_{k_\ell}\simeq
\C^{k_\ell}$ and the Higgs fields $\phi_{(\ell+1)}$ fibrewise as maps
in $\Hom(V_{k_{\ell+1}}, V_{k_\ell})$. To associate a quiver bundle to
our construction we need a suitable representation of the quiver
(\ref{quiver}) in the category of vector bundles on $M$. For this, we
can take the complex vector bundle $E=P\times_\varrho V$ on $\CM$
associated to the fundamental representation $\varrho: \CG\to U(V)$ of
$\CG=U(n)$ on $V\simeq \C^n$. Then the restriction $E_M:= E|_{M\times[\unit_{SU(2)}]}=P_M\times_\varrho V|_\CH$ is a $U(1)$-equivariant vector bundle on $M$ with fibre the restriction $V|_\CH=\bigoplus_{\ell=0}^m\, V_{k_\ell}$ of the linear representation $(\varrho,V)$ to $\CH$. The $U(1)$-action on the fibre is given by $\exp(t\, \Lambda)|_{V_{k_\ell}}=\e^{\ii t\, (\frac p2+\ell)}\, \unit_{V_{k_\ell}}$ and the Higgs fields are morphisms $\Phi|_{E_{k_{\ell+1}}}:E_{k_{\ell+1}} \to E_{k_\ell}$ of the vector bundles $E_{k_\ell}:=P_M\times_\varrho V_{k_\ell}$ for each $\ell=0,1,\dots,m$.

Our detailed treatment here of the standard case with $\CG=U(n)$ has
the virtue that the exact same analysis can be performed for the
remaining classical gauge groups $\mathcal{G}=SO(2n)$,  $SO(2n+1)$,
and $Sp(2n)$; the requisite group theory data for their decompositions
in the Cartan--Weyl basis are summarised in appendix~\ref{ch:app4}. In
every case one shows that, for generic eigenvalues $\alpha_\ell$ of
the matrix $\Lambda$, the residual gauge symmetry group is again given
by eq.(\ref{u24}) (as a subgroup of $\CG$) and the structure of the
dimensionally reduced gauge theory can again be encoded in the $A_m$
quiver (\ref{quiver}), with only trivial redefinitions of the coupling
constants in eq.(\ref{ymredact}) distinguishing the different cases. Such
redefinitions may have implications in matching the quiver gauge
theories with more realistic models as in~\cite{Dolan:2009ie}.

\newsection{Topological Chern--Simons--Higgs models} \label{secondresult}

We will now perform the $SU(2)$-equivariant dimensional
reduction of the Chern--Simons gauge theory on $\mathcal{M}=M \times
S^2$, where $M$ is an oriented manifold of dimension $d=2n-1$. Throughout we assume that the manifold $M$ is closed, as no novel boundary effects arise in the models we derive.
The gauge field defined by eq.($\ref{eq21}-\ref{eq24}$) can be written in the form
\begin{equation}
\mathcal{A}=A-\a-\Phi\otimes \beta+\Phi^{\dagger
}\otimes \bar{\beta} \ ,
\end{equation}
where
\begin{equation}
\a:=\Lambda\otimes\frac{\ii\left(  \bar{y}\, \dd y-y\, \dd\bar{y}\right) }{ 1+y\, \bar
{y}} 
\end{equation}
and we have used eq.(\ref{cdf}). 
In general, the computation of the reduced Chern--Simons action directly from its definition is somewhat involved; to simplify the calculations considerably we use the subspace separation method~\cite{Izaurieta:2006wv} introduced in section (\ref{SSMM})

For the present case we decompose $\frg=\frg_0\oplus\frg_1$ with $\frg_0=\frh$ and $\frg_1= \frg\ominus\frh$, and expand the gauge potential as
\begin{align}
\mathcal{A}_{0} &  =0 \ ,\label{ssmd1}\\[4pt]
\mathcal{A}_{1} &  =-\a \ ,\label{ssmd2}\\[4pt]
\mathcal{A}_{2} &  =A-\a \ ,\label{ssmd3}\\[4pt]
\mathcal{A}_{3} &  =\Phi^{\dagger}\otimes \bar\beta-\Phi\otimes
\beta+A-\a \ . \label{ssmd4}%
\end{align}
 By applying the triangle equation (\ref{ch2treq}) with $\mathcal{\bar{A}}=0$, we obtain the 
expression for the reduced Chern--Simons action: The reduced Lagrangian splits into the sum of three terms%
\begin{align}
\mathscr{L}_{\Phi} &  =\kappa T_{\mathcal{A}_{3}\leftarrow\mathcal{A}_{2}}^{\left(
2n+1\right)  }=2\kappa \left(  n+1\right)  \int_{0}^{1}\dd t\
\left\langle t \big(  \Phi
 \nabla\Phi^{\dagger}-\Phi^{\dagger} \nabla\Phi\big) \wedge \beta \wedge\bar{\beta
} \wedge F^{n-1}\right\rangle \ , \nonumber \\[4pt]
\mathscr{L}_{A} &  =\kappa  T_{\mathcal{A}_{2}\leftarrow\mathcal{A}_{1}}^{\left(
2n+1\right)  }=2\kappa \left(  n+1\right)  \int_{0}^{1} \dd t\
\left\langle 2\ii \Lambda \beta
\wedge \bar{\beta} \wedge
A\wedge\big(  t\dd A+t^{2} A\wedge A\big)  ^{n-1}\right\rangle \ , \nonumber \\[4pt]
\mathscr{L}_{\Lambda} &  =\kappa T_{\mathcal{A}_{1}\leftarrow\mathcal{A}_{0}}^{\left(
2n+1\right)  }=0 \ .
\end{align}
By integrating over $S^2$, the original $(2n+1)$-dimensional Chern--Simons gauge theory reduces to a Chern--Simons--Higgs type model in $d=2n-1$ dimensions with action
\begin{equation}
{S}_{\mathsf{CSH}}^{\left(  2n-1\right)  }=  \kappa^{\prime} 
\int_{M} \int_{0}^{1} \dd t\ \left\langle t \big(  \Phi \nabla\Phi^{\dagger} 
-\Phi^{\dagger} \nabla\Phi\big)  \wedge F^{n-1} +2\ii\Lambda A\wedge \big(
t\dd A+t^{2} A\wedge A\big)  ^{n-1}\right\rangle \label{CSredact}%
\end{equation}
subject to the constraints eq.($\ref{eq26}-\ref{eq25} $).
Here we have defined $\kappa^{\prime}=8\pi R^{2}\left(  n+1\right) \kappa$ and the fields $F$, $\nabla\Phi$ are given by
eq.($\ref{curv}-\ref{covd}$) respectively. 

This action is ``topological"
in the sense that it is diffeomorphism invariant; this point is
actually somewhat subtle and we return to it below. The first term of eq.(\ref{CSredact}) is also manifestly invariant under the gauge transformations
\eq
A^h = h^{-1}\, A\, h+h^{-1}\, \dd h \ , \qquad \Phi^h= h^{-1}\, \Phi\, h
\label{APhigaugetransf}\eqend
for $h\in\Omega^0(M,\CH)$, but the second Chern--Simons type term is
generically not: Using~\cite[eq.~(3.5)]{Chamseddine:1990gk} one finds
that the full action transforms as
\eq 
{S}_{\mathsf{CSH}}^{\left(  2n-1\right)  }\big[A^h,\Phi^h \big] = {S}_{\mathsf{CSH}}^{\left(  2n-1\right)  }[A,\Phi] -2\ii (-1)^n \, \frac{(n-1)!\, n!}{(2n-1)!}\, \kappa' \, \int_M\, \Big\langle \Lambda\, \big(h^{-1}\, \dd h\big)^{2n-1} \Big\rangle \ .
\label{CStransf}\eqend
Due
to the constraint eq.(\ref{eq20}), the closed $(2n-1)$-form $\big\langle \Lambda\, (h^{-1}\, \dd h)^{2n-1}\big\rangle$ gives a de~Rham representative for a class in the cohomology
group $H^{2n-1}(M,\pi_{2n-1}(\CH))$. Hence the deficit term in
eq.(\ref{CStransf}) generically vanishes if and only if the free part of
the homotopy group
$\pi_{2n-1}(\CH)$ is trivial. Otherwise, the path integral
for the quantum field theory
is well-defined provided that the functional $\exp\big(\ii {S}_{\mathsf{CSH}}^{\left(
    2n-1\right)  }\big)$ is invariant under gauge transformations; this requirement generically imposes a
further topological quantization condition on the effective coupling constant
$\kappa'$ after dimensional reduction if the group
$\pi_{2n-1}(\CH)/\, {\rm Tor}(\pi_{2n-1}(\CH))$ is non-trivial.
Then up to a gauge transformation with parameter $\lambda=\xi\, \lrcorner\, A$, the infinitesimal action of diffeomorphisms of $M$ can be represented as contractions
\eq 
\delta_\xi A = \xi \, \lrcorner\, F \ , \qquad \delta_\xi \Phi=\xi \, \lrcorner\, \nabla\Phi
\label{diffgauge}\eqend
along vector fields $\xi \in \Omega^0(M,T(M))$.

The field equations can be obtained by varying the reduced action eq.(\ref{CSredact}) or equivalently by dimensional reduction over the general
condition%
\begin{equation}
\delta {S_{\mathsf{CS}}^{(2n+1)}}=\kappa\, \int_{\mathcal{M}}\,\big\langle \mathcal{F}^{n}\wedge
\delta\mathcal{A}\big\rangle =0\label{CS59}%
\end{equation}
on $\mathcal{M}= M\times S^2$. One finds that the equations of motion reduce to 
\begin{align}
\Big\langle \Big(  F^{n-1}\, \big(  2\ii\Lambda-\big[  \Phi,\Phi^{\dagger
}\big]  \big)  +\left(  n-1\right) \, F^{n-2}\wedge \nabla\Phi^{\dagger}\wedge \nabla\Phi\Big)\wedge
\delta A\Big\rangle  &  =0 \ , \nonumber \\[4pt]
\big\langle F^{n-1} \wedge \nabla\Phi^{\dagger} \ \delta\Phi\big\rangle  &  =0 \ , \nonumber \\[4pt]
\big\langle F^{n-1} \wedge \nabla\Phi \ \delta\Phi^{\dagger}\big\rangle  &  =0 \label{varphidag} \ ,
\end{align}
subject to the linear constraints eq.($\ref{eq26}-\ref{eq25}$).
In the following we will study various aspects of the moduli space
$\CCM_n$ of
solutions to these equations modulo gauge transformations and
diffeomorphisms. As a special class of topological solutions, note
that the Higgs fields $\Phi$ are (locally) parallel
sections of the adjoint bundle ${\rm ad}(P_M)$ if and only if
the curvature two-form $F$ of $P_M$ vanishes, in which case the field
equations are immediately satisfied when $n>1$. Since in this case the diffeomorphisms
eq.(\ref{diffgauge}) vanish on-shell, this subspace of the solution space
is the finite-dimensional moduli space of flat $\CH$-connections on $M$ modulo gauge transformations, or equivalently the moduli space of representations of the fundamental group $\pi_1(M)$ in $\CH$ modulo conjugation.

\subsection{Moduli spaces of solutions}

For some explicit examples, let us look at the case where $\CG$ is one of the classical gauge groups from section \ref{sector3}, focusing without loss of generality on $\CG=U(n)$. 
The dynamics of the reduced topological quiver gauge theory is then controlled by the invariant tensor associated to the residual gauge group eq.(\ref{u24}).
In general, if $\{\mathsf{t}_a\}^{\dim \mathfrak{h}}_{a=1}$ denotes the generators of the Lie algebra $\mathfrak{h}$ of $\mathcal{H}$, then the invariant tensor $g_{a_{1} \cdots a_{n+1}}$ is a symmetric tensor of rank $n+1$ that is invariant under the adjoint action of $\mathcal{H}$ which we take to be the symmetrized trace~\cite{deAzcarraga:1997ya}
\begin{equation}
g_{a_{1} \cdots a_{n+1}}=\left\langle \mathsf{t}_{a_{1}} \cdots
  \mathsf{t}_{a_{n+1}} \right\rangle = \frac{1}{(n+1)!} \
\sum\limits_{\sigma\in S_{n+1}}\, \mathsf{Tr}\big(  \mathsf{t}_{a_{\sigma\left(
1\right)}  }\cdots\mathsf{t}_{a_{\sigma\left(  n+1\right) } }\big)
\label{symmtrace}\end{equation}
where $S_{n+1}$ is the symmetric group of degree $n+1$. In the
Cartan--Weyl basis the reduced gauge group $\mathcal{H}$ of
eq.(\ref{u24}) is generated by $\{H_{i},
X_{e_{j}-e_{k}}\}_{i,j,k=1}^n$. Let us now examine in detail some cases in lower dimensionalities.

\subsubsection*{$\mbf{d=1}$}

The non-zero components of the invariant tensor for $d=1$ coincide
with the Killing--Cartan form
\begin{align}
\left\langle X_{e_{j}-e_{k}}\, X_{e_{l}-e_{m}}\right\rangle  & =%
\delta_{jm}\, \delta_{kl} \ , \nonumber \\[4pt]
\left\langle H_{i}\, X_{e_{j}-e_{k}}\right\rangle   &=\delta
_{ik}\, \delta_{ij} \ , \nonumber \\[4pt]
\left\langle H_{i}\, H_{j}\right\rangle   &=\delta_{ij} \ ,
\end{align}
and the resulting action functional is that of a topological matrix quantum mechanics given by 
\begin{equation}
S^{(1)}_{\mathsf{CSH}}=8\pi\, R^{2}\, \kappa\, \int\, \dd\tau \ \sum\limits_{\ell=0}^{m}\,
\mathsf{Tr}\big(  \phi_{(\ell+1)}\, \nabla_\tau\phi_{(\ell+1)}^{\dagger}-\phi
_{(\ell)}^{\dagger}\, \nabla_\tau\phi_{(\ell)}-2\alpha_\ell\, A_{(\ell)}\big)  \label{trivact}
\end{equation}
where $\nabla_\tau\phi_{(\ell)}=\dot{\phi}{}_{(\ell)} +A_{(\ell-1)}\,
\phi_{(\ell)}-\phi_{(\ell)}\, A_{(\ell)}$.
In this case the gauge potentials $A_{(\ell)}(\tau)\in\frh$ are Lagrange
multipliers and integrating them out of the action eq.(\ref{trivact}) yields the constraints
\eq 
\mu_{V}^{(\ell)}(\Phi):= \phi_{(\ell+1)}\, \phi_{(\ell+1)}^{\dagger}-\phi
_{(\ell)}^{\dagger}\, \phi_{(\ell)} = 2\alpha_\ell \, \unit_{k_\ell} \ ,
\label{momentmap}\eqend
while the remaining equations of motion for the Higgs fields read $\dot{\phi}{}_{(\ell)}=0= \dot{\phi}{}_{(\ell)}^\dag$ for $\ell=0,1,\dots,m$. 

Thus in this case moduli space $\CCM_1$ of classical solutions is finite-dimensional and can be described as the subvariety cut out by the quadric eq.(\ref{momentmap}) in the quotient of the affine variety $\prod_{\ell=0}^{m}\, \Hom(\C^{k_{\ell+1}}, \C^{k_\ell})$ by the natural action of the gauge group eq.(\ref{u24}) given by $\phi_{(\ell+1)}\mapsto g_{\ell+1}\, \phi_{(\ell+1)}\, g_{\ell}^\dag$ with $g_\ell\in U(k_\ell)$. The moduli space $\CCM_1$ also has a representation theoretic description as an affine quiver variety in the following way. The vector space of linear representations of the $A_m$ quiver (\ref{quiver}) with fixed $V|_\CH=\bigoplus_{\ell=0}^m\, V_{k_\ell}$ is
\beq
\CCR_m(V) = \bigoplus_{\ell=0}^m\, \Hom(V_{k_{\ell+1}}, V_{k_\ell}) \ .
\eeq
The corresponding representation space for the opposite quiver, obtained by reversing the directions of all arrows, is the dual vector space $\CCR_m(V)^*$ and the cotangent bundle on $\CCR_m(V)$ is
\beq
T^*\CCR_m(V)=\CCR_m(V)\oplus \CCR_m(V)^* \ .
\eeq
It carries a canonical $\CH$-invariant symplectic structure such that
the linear $\CH$-action on $T^*\CCR_m(V)$ is
Hamiltonian~\cite{Popov:2010rf} and the corresponding moment map is
given by $\mu_V=\big(\mu_V^{(\ell)}\big)_{\ell=0}^m:T^*\CCR_m(V)\to
\frh^*$. The moduli space is then the symplectic quotient
\eq
\CCM_1= \mu_V^{-1}(2\alpha_0,2\alpha_1,\ldots,2\alpha_m)\,
\big/\!\!\big/\, \CH \ .
\label{quivervariety}\eqend
This moduli space parameterizes isomorphism classes of semisimple
representations of a certain preprojective
algebra deformed by the eigenvalues $\alpha_\ell$~\cite{Popov:2010rf}.

The topological nature of the quiver gauge theory in this instance is
not surprising as the original pure three-dimensional
Chern--Simons theory with Lagrangian
\eq
\mathscr{L}_{\sf CS}^{(3)} = \big\langle
\alg\wedge\dd\alg+\mbox{$\frac13$}\, \alg\wedge \alg\wedge\alg 
\big\rangle
\eqend
is a topological gauge theory, and hence so is its dimensional
reduction. In this setting the affine quiver variety
(\ref{quivervariety}) is described geometrically as the
finite-dimensional moduli space of flat $SU(2)$-invariant
$\CG$-connections on the three-manifold $\CM$, which can be regarded as the symplectic quotient of the space of all $SU(2)$-invariant $\CG$-connections on $\CM$ by the action of the group of gauge transformations $\Omega^0(\CM,\CH)$.

\subsubsection*{$\mbf{d=3}$} \label{d3feq}

The Chern--Simons--Higgs like system in the case $d=3$ is the three-dimensional diffeomorphism-invariant gauge theory reduced from pure $U(n)$ Chern--Simons theory in five dimensions which has Lagrangian
\eq 
\mathscr{L}_{\sf CS}^{(5)} = \big\langle\alg\wedge(\dd\alg)^2+\mbox{$\frac32$}\, \alg^3\wedge \dd\alg+\mbox{$\frac35$}\, \alg^5\big\rangle \ .
\eqend
As a consequence, the components of the invariant tensor are inherited from the five-dimensional theory and read as
\begin{align}
\big\langle X_{e_{j}-e_{k}}\, X_{e_{j^{\prime}}-e_{k^{\prime}}}\, X_{e_{j^{\prime
\prime}}-e_{k^{\prime\prime}}}\big\rangle  & =\delta_{kj^{\prime}}\,
\delta_{jk^{\prime\prime}}\, \delta_{k^{\prime}j^{\prime\prime}}+\delta
_{kj^{\prime\prime}}\, \delta_{jk^{\prime}}\, \delta_{k^{\prime\prime}j^{\prime}} \ , \nonumber \\[4pt]
\big\langle H_{j}\, X_{e_{j^{\prime}}-e_{k^{\prime}}}\, X_{e_{j^{\prime\prime}%
}-e_{k^{\prime\prime}}}\big\rangle  & =\delta_{jj^{\prime}}\, \delta
_{jk^{\prime\prime}}\, \delta_{k^{\prime}j^{\prime\prime}}+\delta_{jj^{\prime
\prime}}\, \delta_{jk^{\prime}}\, \delta_{k^{\prime\prime}j^{\prime}} \ , \nonumber \\[4pt]
\big\langle H_{j}\, H_{j^{\prime}}\, X_{e_{j^{\prime\prime}}-e_{k^{\prime\prime}}%
}\big\rangle  & =\delta_{jj^{\prime}}\, \big(  \delta_{jk^{\prime\prime}%
}\, \delta_{j^{\prime}j^{\prime\prime}}+\delta_{jj^{\prime\prime}}\, \delta
_{k^{\prime\prime}j^{\prime}}\big) \ , \nonumber \\[4pt]
\big\langle H_{j}\, H_{j^{\prime}}\, H_{j^{\prime\prime}}\big\rangle  &
=2\, \delta_{jj^{\prime}}\, \delta_{jj^{\prime\prime}}\, \delta_{j^{\prime}%
j^{\prime\prime}} \ .
\end{align}
With this data, the reduced action becomes
\begin{align}
S^{(3)}_{\mathsf{CSH}}=12\pi \, R^{2}\, \kappa\, \int_M \ \sum_{\ell=0}^{m}\, \mathsf{Tr}\Big( \big( & \, \phi_{(\ell+1)}\, \nabla\phi_{(\ell+1)}^{\dagger}-\phi_{(\ell)}^{\dagger
}\, \nabla\phi_{(\ell)}\big)  \wedge F_{(\ell)} \nonumber \\ & \, - 2\alpha_\ell \, A_{(\ell)}\wedge \big(
\dd A_{(\ell)}+\mbox{$\frac{2}{3}$}\, A_{(\ell)}\wedge A_{(\ell)}\big)
\Big)
\end{align}
with the field equations
\begin{align}
F_{(\ell)}\, \big(  4\alpha_{\ell}+\phi_{(\ell+1)}\, \phi_{(\ell+1)}^{\dagger}-\phi_{(\ell)}^{\dagger}\,
\phi_{(\ell)}\big) -\nabla\phi_{(\ell)}^{\dagger}\wedge \nabla\phi_{(\ell)} & =0 \ , \nonumber \\[4pt]
F_{(\ell)}\wedge \nabla\phi_{(\ell)}^{\dagger}  & =0 \ , \nonumber \\[4pt]
F_{(\ell)}\wedge \nabla\phi_{(\ell+1)}  & =0 \ .
\label{d=3equations}\end{align}

Note that the pure gauge sector of this field theory is governed by the three-dimensional Chern--Simons action with gauge group $\CH$, whose classical solution space is the moduli space of flat $\CH$-connections on $M$ modulo gauge transformations. As an explicit example, consider the case $m=1$, so that the gauge group $\CG=U(2)$ is broken to $\CH=U(1)\times U(1)$, and consider $A_1$ quiver gauge field configurations with $A_{(0)}=-A_{(1)}$ which further breaks the gauge symmetry to the diagonal $U(1)$ subgroup of $\CH$. It is then easy to reduce the field equations to the flatness conditions $F_{(0)}=-F_{(1)}=0$, and as a consequence there exists a local basis of parallel sections of the adjoint bundle ${\rm ad}(P_M)$. Hence in this case the solution space is again the finite-dimensional moduli space of flat $\CH$-connections on $M$. Owing to the topological nature of the system in this dimensionality, it may be possible that this is the generic moduli space of solutions in this dimension, but a rigorous proof of this fact is needed.

\subsubsection*{$\mbf{d\geq5}$}

Although for $d=3$ the moduli space of solutions is completely classified by the topology of the manifold $M$ and hence has no local degrees of freedom, in dimensions $d\geq5$ one can argue following~\cite{Banados:1995mq,Banados:1996yj,Miskovic:2005di} that the space of solutions of the diffeomorphism invariant Chern--Simons--Higgs model cannot be uniquely associated to the topology of $M$ as it generically contains local propagating degrees of freedom, depending on the algebraic properties of the invariant tensor. Our model presents an example of an \emph{irregular} Hamiltonian system~\cite{Saavedra:2000wk,Miskovic:2003ex} whose phase space is stratified into branches with different numbers of degrees of freedom and gauge symmetries, due to the dependence of the symplectic form on the fields. When certain \textit{generic} conditions are fulfilled, the symplectic form is of maximal rank and it is shown by~\cite{Banados:1996yj} using the standard Hamiltonian formalism that the number of local degrees of freedom in the pure gauge sector is given by
\begin{equation}
\mathcal{N}=\mbox{$\frac{1}{2}$} \, \big( 2(d-1)\, h-2(h+d-1)-(d-1)\, (h-1) \big) = \mbox{$\frac{1}{2}$}\, (d-1)\, (h-1) -h \ , \label{dofn}
\end{equation}
where $h>1$ is the dimension of the residual gauge group $\CH$; the first term in eq.(\ref{dofn}) is the number of canonical variables, the second term is twice the number $h$ of first class constraints associated with the gauge symmetry plus $d-1$ first class constraints associated to spatial diffeomorphism invariance, and the third term corresponds to the second class constraints. Note that this number is zero only for $d=5$ and $h=2$, i.e. the $A_1$ quiver gauge theory in five dimensions with gauge group $\CH=U(1)\times U(1)$. 

There are also \emph{degenerate} sectors where the rank of the symplectic form is smaller, additional local symmetries emerge, and fewer degrees of freedom propagate; on these branches the constraints are functionally dependent and the standard Dirac analysis is not applicable. Thus the dynamical structure of the theory changes throughout the phase space, from purely topological sectors to sectors with the maximal number eq.(\ref{dofn}) of local degrees of freedom. Moreover, the sector with maximal rank is stable under perturbations of the initial conditions, and on open neighbourhoods of the maximal rank solutions one can ignore the field-dependent nature of the constraints; on the contrary, degenerate sectors form measure zero subspaces of the phase space and around such degenerate backgrounds local degrees of freedom can propagate.

\newsection{Quiver gauge theory of AdS gravity} \label{sect4}

\subsection{$SU(2,2|1)$ Chern--Simons supergravity}

The most general action for gravity in arbitrary dimensionality is given by the
dimensional continuation of the Einstein--Hilbert action, called the
Lovelock series \cite{Lov71,Lan38,Zegers:2005vx}. In this expansion there are free
parameters which cannot be fixed from first principles. However, in
$d=2n+1$ dimensions a
special choice for the coefficients can be made in such a way that the Lovelock
Lagrangian becomes a Chern--Simons form~\cite{Cha89,Zanelli:2005sa,Allemandi:2003rs,Tro99}. The importance of
this feature lies in the fact that the gravity theory then
possesses a gauge symmetry once the spin
connection $\omega$ and the vielbein $e$ are arranged into a
connection $\mathcal{A}$ valued in the Lie
algebra of one of the Lie groups $SO(
d-1,2)$, $SO(d,1)$ or $ISO(d-1,1)$ corresponding respectively to the local
isometry groups of
spacetimes with negative, positive or vanishing cosmological constant. Another important reason for
considering Chern--Simons gravity theories is that
they admit natural supersymmetric extensions
\cite{Chamseddine:1990gk,Troncoso:1998ng,Banh96a}. In this
section we study as an example the $SU(2)$-equivariant dimensional reduction of 
five-dimensional Chern--Simons supergravity on $\CM= M\times S^2$, where
$M$ is a three-manifold.

Five-dimensional supergravity can be constructed as a Chern--Simons
gauge theory which is invariant
under the supergroup $SU(2,2|N)$~\cite{Troncoso:1997me}. The superalgebra $\mathfrak{su}%
(2,2|N)$ is the minimal supersymmetric extension of
$\mathfrak{su}(2,2)$, which is isomorphic to the
anti-de~Sitter (AdS) algebra $\mathfrak{so}(4,2)$.
A crucial observation is that in any dimension $d$ an explicit representation
of the AdS algebra can be given in terms of gamma-matrices $\Gamma_a$ which
satisfy the Clifford algebra relations (see appendix~\ref{explrep})
\begin{equation}
\left\{  \Gamma_{a},\Gamma_{b}\right\}  =2\eta
_{ab}\label{cliffal}%
\end{equation}
where $\eta={\rm diag}\left(-1,1,\ldots,1\right)  $ is the metric of $d$-dimensional
Minkowski space. By defining%
\begin{equation}
\Gamma_{ab}=\mbox{$\frac{1}{2}$}\, \left[  \Gamma_{a},\Gamma
_{b}\right]
\label{Gammaab}\end{equation}
it is easy to show that%
\begin{align}
\left[  \Gamma_{a},\Gamma_{b}\right]  & =2\Gamma
_{ab}  \ , \label{gam1}\\[4pt]
\left[  \Gamma_{ab},\Gamma_{cd}\right]  & =2\left(
\eta_{cb}\, \Gamma_{ab}-\eta_{ca}\, \Gamma_{bd}+\eta
_{db}\, \Gamma_{ca}-\eta_{da}\, \Gamma_{cb}\right) \ ,  \\[4pt]
\left[  \Gamma_{ab},\Gamma_{c}\right]  & =2\left(
\eta_{cb}\, \Gamma_{a}-\eta_{ca}\, \Gamma_{b}\right) \ . \label{gam2}
\end{align}
In this way, by choosing a set of $4 \times 4$ matrices satisfying
eq.($\ref{gam1}-\ref{gam2}$) it is possible to represent the Lie algebra
$\mathfrak{su}(2,2)  $ as a matrix algebra
by defining
\begin{equation}
\mathsf{J}_{ab}  =\mbox{$\frac{1}{2}$}\, \Gamma_{ab} \ , \qquad \mathsf{P}_{a}
=\mbox{$\frac{1}{2}$}\, \Gamma_{a} \ .
\end{equation}

Let us now turn to the supersymmetric extension
$\mathfrak{su}(2,2|N)$. For definiteness, we consider the case $N=1$
which accommodates
the minimum number $\mathcal{N}=2$ of supersymmetries. A representation of
$\mathfrak{su}(2,2|1)  $ can be obtained by extending the
bosonic generators $\left\{  \mathsf{P}_a,\mathsf{J}_{ab}\right\}  $ as
\begin{equation}
\mathsf{P}_{a}   =
\begin{pmatrix}
\frac{1}{2}\, \left(  \Gamma_{a}\right)  _{~\beta}^{\alpha} & 0\\
0 & 0
\end{pmatrix} \ , \qquad \mathsf{J}_{ab}   =
\begin{pmatrix}
\frac{1}{2}\, \left(  \Gamma_{ab}\right)  _{~\beta}^{\alpha} & 0\\
0 & 0
\end{pmatrix}
\end{equation}
and inserting the fermionic generators
\begin{equation}
\mathsf{Q}^{\gamma}  =
\begin{pmatrix}
0 & 0\\
-2\delta_{\beta}^{\gamma} & 0
\end{pmatrix} 
\ , \qquad \bar{\mathsf{Q}}_{\gamma}   =%
\begin{pmatrix}
0 & -2\delta_{\gamma}^{\alpha}\\
0 & 0
\end{pmatrix} \  .
\end{equation}
The supersymmetry algebra further requires the inclusion of a $U(1) $ generator
\begin{equation}
\mathsf{K}=%
\begin{pmatrix}
\frac{\ii}{4}\, \delta_{~\beta}^{\alpha} & 0\\
0 & \ii
\end{pmatrix}
\end{equation}
so that gauge invariance is preserved~\cite{Banados:1999kv}.

\subsection{Dimensional reduction}

In order to perform the $SU(2)$-equivariant dimensional reduction of $SU(2,2|1)  $
Chern--Simons supergravity, we choose the element
$\Lambda$ to take values in the Lorentz subalgebra $\mathfrak{so}%
(1,4)$ generated by $\left\{  \mathsf{J}_{ab}\right\}  $ and expand it as
\begin{equation}
\Lambda=\mbox{$\frac{\ii}{2}$}\, \lambda^{ab}\, \mathsf{J}_{ab} \ .
\end{equation}
This choice is not arbitrary, in the sense that it is the only one that
leads to an Einstein--Hilbert term after dimensional
reduction. Furthermore, non-trivial solutions of the constraint
equations eq.($\ref{eq26}-\ref{eq27}$) are possible only if the Higgs
fields $\Phi$ take values in the fermionic sector of $\mathfrak{su}(
2,2|1)  $; we expand them as
\begin{equation}
\Phi  =\bar{\mathsf{Q}}_{\beta}\, \chi^{\beta} \ , \qquad \bar{\Phi}
=\bar{\chi}_{\beta}\, \mathsf{Q}^{\beta}
\end{equation}
where $\chi$ and $\bar{\chi}$ are four-component Dirac spinor zero-forms with
$\beta$ running over $1,2,3,4$. In this way the constraints
eq.($\ref{eq26}-\ref{eq27}$) read as%
\begin{equation}
\big(  \mbox{$\frac{1}{4}$}\, \lambda^{ab}\, \left(  \Gamma_{ab}\right)  _{~\beta}^{\alpha
}+\delta_{~\beta}^{\alpha}\big) \, \chi^{\beta}=0 \ , \qquad
\bar{\chi}_{\alpha}\, \big(  \mbox{$\frac{1}{4}$}\, \lambda^{ab}\, \left(  \Gamma_{ab}\right)
_{~\beta}^{\alpha}+\delta_{~\beta}^{\alpha}\big)  =0 \ . \label{hfer}
\end{equation}
Gauging the Lie superalgebra $\mathfrak{su}(  2,2|1)  $ means that the
gauge potential decomposes as
\begin{equation}
A=\mbox{$\frac{1}{2}$}\, \omega^{ab}\, \mathsf{J}_{ab}+e^{a}\, \mathsf{P}%
_{a}+b\, \mathsf{K}+\bar{\psi}_{\alpha}\, \mathsf{Q}^{\alpha}-\bar{\mathsf{Q}%
}_{\beta}\, \psi^{\beta}%
\end{equation}
where $e, \omega$ are the standard vielbein and spin connection, $b$
is a $U(1)$ gauge field and $\psi, \bar{\psi}$ are four-component spin
$\frac32$ gravitino fields.
The constraint equation (\ref{eq25}) reads
\begin{align}
\lambda_{~b}^{a}\, \omega^{bd}  & =0 \ , \label{acons1}\\[4pt]
\lambda_{~b}^{a}\, e^{b}  & =0 \ , \\[4pt]
\bar{\psi}_{\alpha}\, \lambda^{ab}\, \left(  \Gamma_{ab}\right)  _{~\beta}^{\alpha}
& =0 \ , \qquad \\[4pt]
\lambda^{ab}\, \left(  \Gamma_{ab}\right)  _{~\beta}^{\alpha}\,
\psi^{\beta}  & =0 \ . \label{acons4}
\end{align}
These equations are still generic and will characterize the symmetry
breaking pattern once the non-zero components of $\lambda^{ab}$ are
specified. For this, we choose a particular representation of
$\mathfrak{su}(  2,2|1)  $. Using the Pauli matrices eq.(\ref{eq17}), a representation of the Clifford algebra in five dimensions is given by 
\begin{align}
\Gamma_{0} &  =\ii\sigma_{1}\otimes\mathbbm{1}_2 \ , \label{repexpl}\\[4pt]
\Gamma_{1} &  =\sigma_{2}\otimes\mathbbm{1}_2 \ , \\[4pt]
\Gamma_{2} &  =\sigma_{3}\otimes\sigma_{1} \ , \\[4pt]
\Gamma_{3} &  =\sigma_{3}\otimes\sigma_{2} \ , \\[4pt]
\Gamma_{4} &  =\sigma_{3}\otimes\sigma_{3} \ .\label{repexpl1}
\end{align}
The explicit construction is detailed in appendix~\ref{explrep}.
We now restrict $\lambda^{ab}\, \mathsf{J}_{ab}$ to be $\lambda
^{01}\, \mathsf{J}_{01}$; other restrictions are possible and they all
lead to the same qualitative results below. With this choice the algebraic quantization condition
eq.(\ref{eq20}) is satisfied and the constraint equation (\ref{hfer}) has non-trivial solutions if $\lambda^{01}=4$. In that case, one finds
\begin{equation}
\chi^{2}   =\chi^{4}=0=\bar{\chi}^{2}   =\bar{\chi}^{4} \ .
\end{equation}
Similarly, non-trivial solutions of eq.($\ref{acons1}-\ref{acons4}$) are
given by taking
\begin{equation}
\omega^{1a}   =0=\omega^{0a} \ , \qquad e^{1}  =0 =e^{0} \ , \qquad
\bar{\psi}_{\alpha}  =0 =\psi^{\alpha}
\end{equation}
for $a=0,1,2,3,4$ and $\alpha=1,2,3,4$.

The reduced field content can therefore be summarised as
\begin{align}
&  e^{a},\omega^{ab} \qquad \text{for} \quad a,b=2,3,4 \ , \nonumber \\[4pt]
&  \chi_{\alpha},\bar{\chi}^{\alpha} \qquad \text{for} \quad
\alpha=1,3 \ , \nonumber \\[4pt]
&  b \qquad \text{as $U(1)$ gauge field} \ .
\end{align}
Since the reduced gauge potential becomes%
\begin{equation}
A=\mbox{$\frac{1}{2}$}\, \omega^{ab}\, \mathsf{J}_{ab}+ e^{a}\, \mathsf{P}%
_{a}+b\, \mathsf{K} \ \in \ \mathfrak{so}(  2,2)  \oplus
\mathfrak{u}(  1) \ ,
\end{equation}
the gauge symmetry $\CG=SU(2,2|1)$ is broken by this construction to
\begin{equation}
\CH=  SO(
2,2)  \times U(  1) \ .
\end{equation}
The quiver gauge
theory is thus based on the $A_1$ quiver
\begin{equation}
\xymatrix@R=0.3pc{
& \bullet \ar[r] & \bullet  }
\end{equation}
with the left node containing the $SO(2,2)$ gravitational content $e,\omega$, the
right node containing the $U(1)$ gauge field $b$, and the arrow
corresponding to the Higgs fermions $\chi$ and $\bar{\chi}$ which transform in the
bifundamental representation of $SO(2,2)\times U(1)$. Since $\pi_3(U(1))=0=\pi_3(SO(2,2))$, there is no topological quantization
condition required of the gravitational constant $\kappa'$ after
dimensional reduction.

In order to evaluate the reduced Chern--Simons--Higgs action, note
that the curvature two-form associated to the group $SO(2,2)\times U(1)$ is  
\begin{equation}
{F}  =\mbox{$\frac{1}{2}$} \, \big(
R^{ab}+\mbox{$\frac{1}{l^{2}}$} \,
e^{a}\wedge e^{b}\big) \,
\mathsf{J}_{ab}+\mbox{$\frac{1}{l}$} \, T^{a}\, \mathsf{P}_{a}+\dd b\, \mathsf{K}
\end{equation}
where $l$ is the AdS radius, $R^{ab}=\dd\omega^{ab}+\omega^a_{~c}\wedge
\omega^{cb}$ is the Lorentz curvature two-form, and $T^a=\dd
e^a+\omega^a_{~b}\wedge e^b$ is the torsion two-form.
The non-vanishing components of the $\mathfrak{su}%
(2,2|1)$-invariant tensor of rank three are given in appendix \ref{invtens}.
With this, one finds that the Chern--Simons--Higgs gravitational action is given by
\begin{equation}
{S}_{\mathsf{CSH}}^{(3)} =\frac{\kappa^\prime}{l}\, \int_{M}\, \bigg( \epsilon_{abc}\, \Big(
R^{ab}+\frac{1}{3l^{2}} \,
e^{a}\wedge e^{b}\Big)  \wedge e^{c} -\ii \nabla\bar{\chi}_{\alpha} \wedge \mathcal{Z}_{~\beta}^{\alpha
}\, \chi^{\beta}+\ii\bar{\chi}_{\alpha}\, \mathcal{Z}_{~\beta}^{\alpha} \wedge \nabla\chi^{\beta} \, \bigg) \label{CShferm}
\end{equation}
where $\kappa^\prime =8\pi \,R^{2}\, \kappa$ and
\begin{align}
\mathcal{Z}_{~\beta}^{\alpha}  & =\mbox{$\frac{1}{2}$} \, \big(
R^{ab}+\mbox{$\frac{1}{l^{2}}$} \,
e^{a}\wedge e^{b}\big) \, \left(  \Gamma_{ab}\right)
_{~\beta}^{\alpha}-\mbox{$\frac{1}{l}$}\, T^{a}\, \left(  \Gamma_{a}\right)  _{~\beta}^{\alpha}%
+\mbox{$\frac{5\ii}{2}$}\, \delta_{~\beta}^{\alpha}\, \dd b \ , \nonumber \\[4pt]
\nabla\bar{\chi}_{\alpha}  & =\dd\bar{\chi}_{\alpha}- \mbox{$\frac{1}{4}$}\,
\bar{\chi}_{\beta}\, \omega^{ab}\, \left( \Gamma_{ab}\right)  _{~\alpha}^{\beta
}-\mbox{$\frac{1}{2}$}\, \bar{\chi}_{\beta}\, e^{a}\, \left(  \Gamma_{a}\right)  _{~\alpha
}^{\beta}+\mbox{$\frac{3\ii}{4}$}\, b\, \bar{\chi}_{\beta}\, \delta_{~\alpha
}^{\beta} \ , \nonumber \\[4pt]
\nabla\chi^{\beta}  & =\dd\chi^{\beta}+\mbox{$\frac{1}{4}$}\, \omega^{ab}\, \left(  \Gamma
_{ab}\right)  _{~\alpha}^{\beta}\, \chi^{\alpha}+\mbox{$\frac{1}{2}$}\, e^{a}\, \left(
\Gamma_{a}\right)  _{~\alpha}^{\beta}\, \chi^{\alpha}-\mbox{$\frac{3\ii}{4}$}\, b\, 
\delta_{~\alpha}^{\beta} \, \chi^{\alpha} \ .
\label{eme}\end{align}
Note that the reduced field content restricts the gamma-matrices of the five-dimensional representation according to 
\begin{equation}
\Gamma_{0}=%
\begin{pmatrix}
0 & \ii\\
\ii & 0
\end{pmatrix}
\ , \qquad \Gamma_{1}=%
\begin{pmatrix}
0 & -\ii\\
\ii & 0
\end{pmatrix}
\ , \qquad \Gamma_{2}=%
\begin{pmatrix}
1 & 0\\
0 & -1
\end{pmatrix}
\ , \nonumber 
\end{equation}
\begin{equation}
\Gamma_{01}=%
\begin{pmatrix}
-1 & 0\\
0 & 1
\end{pmatrix}
\ , \qquad \Gamma_{02}=%
\begin{pmatrix}
0 & -\ii\\
\ii & 0
\end{pmatrix}
\ , \qquad \Gamma_{12}=%
\begin{pmatrix}
0 & \ii\\
\ii & 0
\end{pmatrix}
\ ,
\end{equation}
which gives a representation of the Clifford algebra in $d=2+1$
dimensions.

The infinitesimal gauge transformations corresponding to
eq.(\ref{APhigaugetransf}) yield local symmetry transformations for
the gauge fields and Higgs fermions given by
 \begin{align}
 \delta_{\lambda,\rho}\omega^{ab}  &
 =\dd\lambda^{ab}+\omega_{~c}^{a}\, \lambda^{cb}+ \omega
 _{~c}^{b}\, \lambda^{ac}+\mbox{$\frac{1}{l^{2}}$} \, \big(
 e^{a}\wedge\rho^{b}-\rho^{a} \wedge
 e^{b}\big) \ , \\[4pt]
 \delta_{\lambda,\rho} e^{a}  & =\dd\rho^{a}+\omega_{~b}^{a}\,
 \rho^{b}-\lambda_{~b}^{a}\, e^{b} \ , \\[4pt]
 \delta_\beta b  & =\dd\beta \ , \\[4pt]
 \delta_{\rho,\kappa,\beta}\chi & =\mbox{$\frac{1}{2l}$} \, \rho^{a}\,
 \Gamma_{a}\chi- \mbox{$\frac{1}{2}$} \, \epsilon_{abc}\,
 \kappa^{ab}\, \Gamma^{c}\chi-\mbox{$\frac{3\ii}{4}$} \, \beta\, \chi \ , \label{hferm1}\\[4pt]
 \delta_{\rho,\kappa,\beta}\bar{\chi}& =-\mbox{$\frac{1}{2l}$} \, \bar{\chi}\,
 \rho^{a}\, \Gamma_{a}+\mbox{$\frac{1}{2}$} \,
 \epsilon_{abc}\, \bar{\chi}\, \kappa^{ab}\, \Gamma^{c}+
 \mbox{$\frac{3\ii}{4}$} \,
 \bar{\chi}\, \beta  \ . \label{hferm2}
 \end{align}
The action eq.(\ref{CShferm}) describes a theory of Einstein--Hilbert
gravity with cosmological constant in three dimensions, plus a
non-minimal coupling between Higgs fermions and the fields associated
to the curvature of the residual gauge symmetry $SO(2,2)\times
U(1)$. This model is not supersymmetric as one sees from the gauge
transformations eq.($\ref{hferm1}-\ref{hferm2}$). The equivariant
dimensional reduction scheme thus provides a novel and systematic way to couple
scalar fermions to gravitational theories, which
is normally cumbersome to do.

The variation of the Chern--Simons--Higgs action eq.(\ref{CShferm}) leads to the field equations \begin{align}
2\epsilon_{abc}\check{R}^{ab}+\mbox{$\frac{\ii}{l}$}\, T_{c}\, \bar{\chi}\, \chi-\mbox{$\frac{1}{2}$}\, 
\dd b\, \bar{\chi}\, \Gamma_{c}\, \chi -\ii\nabla\bar\chi\, \wedge \Gamma_c\nabla\chi & =0 \ , \nonumber \\[4pt]
\ii\check{R}^{ab}\, \bar{\chi}\,\chi+\mbox{$\frac{1}{l}$}\, \epsilon^{abc}\, T_{c}+ \mbox{$\frac{1}{4}$}\, 
\dd b\, \bar{\chi}\, \Gamma^{ab}\chi -\ii  \nabla\bar\chi\,\wedge \Gamma^{ab}\nabla\chi & =0 \ , \nonumber \\[4pt]
\check{R}^{ab}\, \bar{\chi}\, \Gamma_{ab}\chi-\mbox{$\frac1l$}\, T^{a}\, \bar{\chi}\, \Gamma_{a}\chi
+\mbox{$\frac{15\ii}{2}$}\, \dd b\, \bar{\chi}\, \chi & =0 \ , \nonumber \\[4pt]
\mathcal{Z}\wedge\nabla\chi & =0 \ , \nonumber \\[4pt]
\nabla\bar{\chi}\wedge\mathcal{Z}  & =0 \ ,
\label{AdSmattereqs}\end{align}
where we have used the abbreviation 
\begin{equation}
\check{R}^{ab}:=\mbox{$\frac{1}{2}$}\, \big(  R^{ab}+\mbox{$\frac{1}{l^{2}}$}\, e^{a}\wedge
e^{b}\big) \ .
\end{equation}
These equations demonstrate an interesting coupling between curvature and the matter currents; note that at least one of the torsion field $T^a$ or the $U(1)$ field strength $\dd b$ must be non-zero to get a non-trivial matter coupling; otherwise, when $T^a=0=\dd b$ the matter fields freely decouple from gravity and the field equations reduce to those of pure AdS gravity in three dimensions.

\chapter{Conclusions}
\label{ch:conclusions}
\begin{flushright}
\textit{``...Caminante, no hay camino, 
se hace camino al andar.}''. \\ \textit{Caminante no hay camino, Antonio Machado.}
\footnote{\scriptsize ``...Wayfarer, there is no way,
make your way by going farther.''. Wayfarer, there is no path , Antonio Machado.}
\bigskip
\end{flushright}
In the present thesis, we have had the opportunity to investigate the construction of different types of topological gauge theories by means of transgression forms. In Chapter \ref{ch:top_grav} we have made the connection between even dimensional topological gravity and transgression field theories for the special case of Poincar\'e symmetry. By similar arguments, in Chapter \ref{ch:Max_alg} a gauged Wess--Zumino--Witten model for the Maxwell algebra in two dimensions is constructed. Finally, in Chapter \ref{ch:covquiv} we use transgressions to obtain an action principle for what we called a Chern--Simons--Higgs model as dimensional reduction of a pure Chern--Simons term in higher dimensions. As physical application, we studied the Chern--Simons--Higgs Lagrangian in the context of five dimensional supergravity. In each of these chapters some answers have been provided but also some interesting questions have arisen which could extend this Thesis to further research directions.

\begin{itemize}
\item \textbf{Transgression forms:} Any field theory constructed using transgression forms as Lagrangians have very good qualities. A striking property is that they are built in terms of topological invariants and consequently any transgression action turns out to be background independent (metric free). This is why in literature they are usually called ``topological''. 

In the most general case, when the two connections are treated as independent fields, the transgression field theory is fully gauge invariant (since the transgression defined on the fibre bundle is projectable) \cite{Mora:2004kb,Izaurieta:2005vp,Mora:2006ka}. Despite of the counter intuitive idea of carrying two connections as dynamical fields, the fact turns out to be very versatile. For instance, turning off one of the connections conduces to the definition of a Chern--Simons form. This restriction is however not free in the sense that Chern--Simons forms are not globally defined and therefore any action principle constructed with Chern--Simons forms as Lagrangians is only gauge invariant modulo boundary terms. 

Another interesting possibility is to relate both connections by a gauge transformation. In this case the transgression field theory can be treated as more than one chart on the base $\mathcal{M}$ is provided. This means that the gauge fields are independent up to the intersection region in which they relate by the transitions functions $g_{ij}: U_i \cap U_j \rightarrow \mathcal{G}$. The resulting Lagrangian is a gauged Wess--Zumino--Witten term.

A more technical but no less interesting application of transgressions is given in terms of the triangle equation and the Subspace Separation Method \ref{SSMM}. With this method, the transgression and subsequently the Chern--Simons Lagrangian can be explicitly written in pieces corresponding to different interactions present in the theory, as well as to split the volume and the boundary contributions in the Lagrangian.

It should be emphasized that even though there are strong indicators that transgression or Chern--Simons theories are renormalizable \cite{Witten:1988hf,Witten:1988hc,Cha89,Cha90,Zanelli:1994ti,Mora:2003wy}, the quantum behaviour of these theories it is not well understood in dimensions higher than three. The main reason is that the kinetic and the potential terms are very complicated and therefore the interactions are highly nonlinear. This has as a consequence that the dynamic is strongly constrained; much more than the usual field theories \cite{Saavedra:2000wk,Miskovic:2003ex,Banados:1995mq,Banados:1996yj}. Thus, the quantum mechanics of these type of theories still remains as an open question and even when the problem seems to be soluble, it does not looks like the solution can be achieved by conventional quantization and renormalization rules. This could mean that new quantization method are needed or, more dramatically, the requirement that he axioms which support the notion of quantum mechanics that we know today should be reformulated.

\item \textbf{Topological theories of gravity:} 

In the classification of topological theories of (super)gravity \cite{Cha89,Cha90} the gauge groups anti-de Sitter, de-Sitter and Poincar\'e were considered, depending on the sign of the cosmological constant $\Lambda$. In odd dimensions the gravitational actions are obtained by using Chern--Simons forms once the gauge potential is arranged in terms of the vierbein $e$ and the spin connection $\omega$. A very interesting link between Chern--Simons gravity and General Relativity was first realized in three dimensions where it was shown that both theories are classically equivalent \cite{Ach86,Witten:1988hc,Deser:1981wh}. Moreover, in any odd dimension, Chern--Simons gravity theory  invariant under the anti-de~Sitter group turns out to be equivalent to a Lanczos--Lovelock Lagrangian. The identification is realized by requiring that the equations of motion determine uniquely the dynamics for as many components of the independent fields as possible. In this way one can fix the free parameters in the Lovelock theorem in
terms of the gravitational and cosmological constant \cite{Tro99}. This is quite interesting, Lanczos--Lovelock theory is the most general Lagrangian compatible with General Relativity principles in higher dimensions and corresponds to a Chern--Simons Lagrangian for a specific choice of the coupling constants.

In even dimensions, there is no geometrical candidate as Chern--Simons forms. In this case a $2n-$dimensional invariant form can be obtained by wedging with $n$ two-form curvature but, in addition, a scalar field transforming in the fundamental representation of the gauge group must be inserted. This inclusion may seem unnatural but it is actually motivated by dimensional reduction of a Chern--Simons gravity theory in one higher dimensions \cite{Cha90}. A non-trivial observation pointed out in this thesis is that even-dimensional topological gravity action, which is genuinely invariant under the Poincar\'e group, can be written as a transgression field theory where the gauge connections are related by a gauge transformation with group element taking values in the coset $g=\mathrm{e}^{-\phi^a \mathsf{P}_a} \in ISO(2n,1)/SO(2n,1)$. The geometrical interpretation regarding the inclusion a second gauge field $\mathcal{\bar{A}}$ is that more than one chart $U_\alpha$ on the base space $\mathcal{M}$ is provided. In this way, on non-empty overlaps $U_{ij}$ the gauge connections are related by transitions functions $g=\mathrm{e}^{-\phi^a \mathsf{P}_a}$ determining the non-triviality of the principal bundle $\mathcal{P}$ \cite{Anabalon:2006fj}.
The resulting even dimensional topological action is a gauged Wess--Zumino--Witten term \cite{Anabalon:2007dr}. In fact, in the present case the pure $(2n+1)$-dimensional Wess--Zumino term vanish due to the form of the invariant tensor. This has as a consequence that the full theory collapses to its $2n$-dimensional boundary which we identify with the even-dimensional topological gravity model.

There is also another interesting identification made at this point. In the context of nonlinear realization theory of the Poincar\'e group, the transformation law for the nonlinear counterpart of the gauge connection has the same form as a gauge transformation law for the connection with element $g=\mathrm{e}^{-\phi^a \mathsf{P}_a} \in ISO(2n,1)/SO(2n,1)$. This allowed us to obtain the even-dimensional topological gravity model as the difference of the Chern--Simons forms constructed in terms of the linear $\mathcal{A}$ and the nonlinear connection $\mathcal{\bar{A}}$ both valued in the Lie algebra of $\mathfrak{iso}(2n,1)$ plus a boundary contribution. From this point of view the topological gravity action remains invariant under the Lorentz subgroup $SO(2n,1)$. However the local translation symmetry is broken since the nonlinear realization theory does not prescribe of the appropriate adjoint transformation law for the coset field $\phi$.

Since transgressions and subsequently Chern--Simons theories admit very natural supersymmetric extensions, we have extended the construction of  gauged Wess--Zumino--Witten models to the case of the super Poincar\'e algebra in three dimensions. In complete analogy to the pure bosonic case, the Wess--Zumino--Witten action collapses to its boundary $\partial \mathcal{M}$ providing in this way of a two dimensional field theory containing in addition to the coset fields $\phi$, spin 3/2 gravitinos $\chi$. It would be interesting to extend this construction to higher dimensions as well as to different gauge groups. In particular to study eleven-dimensional supergravity and its associated ten-dimensional Wess--Zumino--Witten models. This could provide of an interesting relation with supergravity theories which arise as the low energy limit of superstrings in ten-dimensions. 

\item {\textbf{WZW model and the Maxwell algebra:}}

The Maxwell algebra was introduced as the non-central extension of the Poincar\'e algebra $\mathfrak{iso}(d-1,1)$ by a rank-two abelian generator $\mathsf{Z}_{ab}$ \cite{Bacry:1970ye,Schrader:1972zd}. The initial motivation for considering the Maxwell algebra was the description of the symmetries of particles moving in a constant electromagnetic background. More recently, it has been shown that gauging the Maxwell algebra in four dimensions leads to generalizations of standard General Relativity where the new abelian gauge fields play the role of vector inflatons which contribute to a generalization of the cosmological term \cite{deAzcarraga:2012qj}. The construction is based on considering all the possible four dimensional non-metric Lagrangian densities constructed in terms of the components field strength and the Levi-Civita tensor. Motivated by these models, the three-dimensional Chern--Simons gravity Lagrangian invariant under the Maxwell algebra was constructed. The Chern--Simons theory contains in addition to the Einstein--Hilbert term, the so called exotic term for the Lorentz connection plus the torsional term $T^a \wedge e_a$, and a coupling between an additional gauge field $\sigma$ with the Lorentz curvature $R^{a}_{~b} \wedge \sigma^{b}_{~a} $. The geometrical interpretation of this theory can be understood by looking at the field equation which only support flat solutions. In this way, three dimensional gravity for the Maxwell algebra describes flat geometries coupled to the gauge fields $\sigma$.

In order to obtain the Chern--Simons Lagrangian one need to specify the nonvanishing components of the invariant tensors. This is done by considering the Maxwell algebra as an $S-$expansion mechanism of the anti-de Sitter algebra $SO(2,2)$ in three dimensions \cite{Salgado:2014qqa}. The problem of extending Lie algebras allows to obtain new Lie algebras in terms of, for instance an already knew one. Depending on which type of extension is performed, the symmetries can be sometimes enhanced or reduced. The $S-$expansion mechanism consists basically in the construction of a Lie algebra as the direct product of a Lie algebra $\mathfrak{g}$ and an abelian semigroup $S$. Furthermore, smaller Lie algebras can be extracted from any $S-$expanded algebra by performing systematic reductions like $0_{S}-$force or resonant conditions \cite{Iza06b,Izaurieta:2006ia}.  More interestingly, $S-$expansion provides of the invariant tensors associated to the new Lie algebra once the invariant tensors of $\mathfrak{g}$ are specified. In this way, starting form the anti-de Sitter algebra in three dimensions and for a given semigroup $S$, the Maxwell algebra is recovered and therefore the invariant tensors are obtained.

A new insight considered in this Thesis is also the construction of the gauged Wess--Zumino--Witten model associated to the Maxwell algebra in two dimensions. In this case one shows that the resulting Lagrangian generalizes the $(1+1)-$topological gravity action for the Poincar\'e case. Thus, the model contains a new coset field $h^{ab}$ associated to the generator $\mathsf{Z}_{ab}$ minimally coupled to the Lorentz curvature. It would be interesting to study this model in a deeper way. In fact, it has been shown recently that the two dimensional holographic dual of three-dimensional asymptotically flat (or anti-de Sitter) gravity theory corresponds to a Wess--Zumino-Witten model which connects with Liouville theory of gravity \cite{Coussaert:1995zp,Campoleoni:2011hg,Barnich:2013yka}. From this point of view, an obvious generalization to this idea could be carried out by considering three-dimensional Maxwell gravity and the associated Wess--Zumino--Witten model.

\item{\textbf{Quivers and Chern--Simons--Higgs theory:}}

The geometric structures arising from reductions of $G$-invariant Yang--Mills theory have been thoroughly studied in a multitude of
different contexts \cite{Popov:2008gw,Popov:2005ik,Dolan:2009ie}, while coset space dimensional reduction of five-dimensional Chern--Simons theory with gauge group $\CG=SU(2)$ is considered in~\cite{TempleRaston:1994sk}. In the context of $SU(2)-$equivariant dimensional reduction we have shown that
the symmetry breaking patterns induced by $SU(2)$-invariant connections for the case of the classical gauge groups $U(n)$, $SO(2n)$,
$SO(2n+1)$, or $Sp(2n)$ is generically the same (without any conditions on the
background Dirac monopole charges) in all cases. As a
consequence, the induced quiver gauge theories are the same for any classical gauge group (up to redefinitions of the coupling
constants).

$SU(2)-$equivariant dimensional
reduction of Chern--Simons theories over $\mathbb{C}P^{1}$ leads to a novel diffeomorphism-invariant
Chern--Simons--Higgs model, which can have local degrees of freedom whose dynamics and canonical structure are rather delicate to disentangle; from this point of view the generally covariant models are therefore generically \emph{not} topological field theories. However, the definite answer about the topological origin of Chern--Simons--Higgs models, requires Hamiltonian analysis for the case of degenerate systems. Similar treatments have been considered in \cite{Banados:1995mq,Banados:1996yj,Saavedra:2000wk,Miskovic:2003ex}.

In $d=1$ dimensions, Chern--Simons--Higgs model describes a trivial system of covariantly constant Higgs fields. It is shown that the moduli spaces of classical solutions is finite dimensional and the resulting theory is that of a topological matrix quantum mechanics. This result is somehow expected in the sense that the original pure three-dimensional
Chern--Simons theory
is a topological gauge theory, and hence so is its dimensional
reduction.

In $d=3$ dimensions
the reduced field equations similar to those of the $m=1$ case \ref{d3feq} were obtained in~\cite{TempleRaston:1994sk}. It is interesting to note that one can consider regions of $M$ with monopole type Higgs field configurations having $\nabla\Phi=0$ but $F\neq0$; in this case the monopole charge is non-zero only through two-cycles of $M$ which enclose regions where $\nabla\Phi\neq0$. According to the field equations (\ref{d=3equations}), in such regions the Higgs fields must in addition satisfy $\big[  \Phi,\Phi^{\dagger
}\big]= 2\ii\Lambda$, which is the minimum of the Higgs potential in (\ref{ymredact}). Thus monopole configurations are allowed in the Higgs vacuum and are triggered by spontaneous symmetry breaking. It would be interesting to examine the dynamics after symmetry breaking of the coupled Yang--Mills--Chern--Simons--Higgs models defined by the sum of the action functionals (\ref{ymredact}) and (\ref{CSredact}), along the lines of~\cite{Dolan:2009ie}; in this model the gauge sector also contains massive spin one degrees of freedom~\cite{Deser:1981wh}.

In $d\geqslant5$ dimensions the equations of motion become rather complicated. However, 
it may be possible that the main features of pure Chern--Simons dynamic will not be spoilt by the coupling to the Higgs fields, so the essential features should remain: The equations of motion do not constrain the connection to be flat. As our choice of invariant tensor (\ref{symmtrace}) for $\CG$ is primitive~\cite{deAzcarraga:1997ya}, we expect the generic condition to hold; note that this choice is the one that leads to the appropriate Higgs branching structure of the quiver gauge theory from section~\ref{sector3}. In fact, the phase $F=0$ is degenerate because small perturbations around it are trivial. It would be interesting to see how the degree of freedom count (\ref{dofn}) is modified by performing the analogous canonical analysis for the full Chern--Simons--Higgs model, but this seems far more complicated than the analysis of the pure Chern--Simons gauge theory. Moreover, even in the pure gauge sector, no explicit propagating solutions have been found thus far. If we choose to discard solutions with $F^n=0$, $n>1$ as degenerate backgrounds, then one can find a phase with $F$ of maximal rank which carries the maximum number of degrees of freedom (\ref{dofn}). Such a propagating phase contains ``Higgs monopole" type configurations analogous to those discussed above for the case $d=3$.

In the context of $SU(2)-$equivariant dimensional reduction to five-dimensional
Chern--Simons supergravity over $\mathbb{C}P^1$. It is found that if the
Higgs fields are bifundamental fields in the fermionic sector of the gauge algebra, then the reduced action contains the standard
Einstein--Hilbert term plus a non-minimal coupling of the Higgs fermions to the curvature.
This reduction scheme constitutes a novel and systematic way to couple scalar fermions to
gravitational Lagrangians. Note that
by restricting to the pure bosonic sector by setting all fermions to zero,
our reduction reduces from ordinary five-dimensional to
three-dimensional anti-de Sitter gravity without any matter fields; hence our
reduction scheme further provides a means for lifting purely
gravitational configurations on $M$ to solutions on $M\times
S^2$, and it would be interesting to examine this lifting in more
detail on some explicit solutions. 

\end{itemize}

\appendix
\renewcommand{\chaptermark}[1]{\markboth{Appendix \thechapter.\ #1}{}}
\chapter{Nonlinear realizations of Lie groups}
\label{ch:app2}

Most of the applications of Lie groups theory to physics is by using linear
representations. In this picture, for each element $g$ of a Lie group $G$
there is a linear operator acting on a vector space (the space of the
representation) in such a way that the composition law defined by the group
axioms is preserved under the product of the associated linear operators which
define the representation.

It is also possible to define nonlinear realization of Lie groups which
corresponds to maps from a manifold $M$ to itself characterized by an element
$g_{0}\in G$. Let $\left\{  x\right\}  $ be the set of coordinates labelling
the points of $M$. The action of $g_{0}$ on $M$ is characterized by%
\begin{equation}
x^{\prime}=f\left(  g_{0};x\right) \ ,
\end{equation}
where $f:G\times M\longrightarrow M$ satisfies the following properties%
\begin{align}
x  &  =f\left(  I_{G};x\right) \ , \\
f\left(  g_{2};\left[  f\left(  g_{1};x\right)  \right]  \right)   &
=f\left(  g_{1}g_{2};x\right) \ .
\end{align}
In general, the map $f$ is nonlinear. The standard notion of linear
representation is recovered when $f$ is linear%
\begin{equation}
f\left(  g,\alpha x+\beta y\right)  =\alpha f\left(  g,x\right)  +\beta
f\left(  g,y\right) \ ,
\end{equation}
and $M$ is a vector space.

\newsection{Standard form of a nonlinear representation}

Let $G$ be a Lie group of dimension $n$ and $\mathfrak{g}$ its Lie algebra.
Let $H$ be the stability subgroup of $G$ of dimension $n-d$ whose Lie algebra
$\mathfrak{h}$ is generated by $\mathrm{Span}_{%
\mathbb{C}
}\left\{  \mathsf{V}_{i}\right\}  _{i=1}^{n-d}$. Let us denote by
$\mathfrak{p}$ to the vector subspace generated by the remaining generators of
$\mathfrak{g}$, $\mathrm{Span}_{%
\mathbb{C}
}\left\{  \mathsf{P}_{l}\right\}  _{l=1}^{d}$. In this way, as vector spaces,
we can write $\mathfrak{g}=\mathfrak{h}\oplus\mathfrak{p}$. In general, the
Lie bracket between two any elements in $\mathfrak{p}$ is $\left[
\mathfrak{p,p}\right]  \subset\mathfrak{p}\oplus\mathfrak{h}$. Since
$\mathfrak{h}$ is a subalgebra, then $\left[  \mathfrak{h,h}\right]
\subset\mathfrak{h}$. Moreover, we will assume that $\mathfrak{p}$ can be
chosen in such a way that it defines a representation of $H$. This means that
$\left[  \mathfrak{h},\mathfrak{p}\right]  \subset\mathfrak{p}$.

With this decomposition, any element $g_{0}\in G$ can always be written as \cite{Coleman:1969sm,Callan:1969sn}%
\begin{equation}
g_{0}=e^{\zeta\text{\textperiodcentered}\mathsf{P}}h \ .\label{nonl1}%
\end{equation}
where $h\in H$ and $e^{\zeta\text{\textperiodcentered}\mathsf{P}}=e^{\zeta
^{l}\mathsf{P}_{l}}\in G/H$ for $l=1,...,d$. The coordinates $\zeta$ 
parametrize the coset space $G/H$~\cite{Ste80}. By virtue of eq.$\left(  \ref{nonl1}%
\right)  $, the action of $g_{0}$ on the coset space $G/H$ is given by%
\begin{equation}
g_{0}e^{\zeta\text{\textperiodcentered}\mathsf{P}}=e^{\zeta^{\prime
}\text{\textperiodcentered}\mathsf{P}}h_{1} \ . \label{nonl2}%
\end{equation}
This expression allows to obtain $\zeta^{^{\prime}}$ and $h_{1}$ as nonlinear
functions of $g_{0}$ and $\zeta$ and thus%
\begin{align}
\zeta^{^{\prime}}  & =\zeta^{^{\prime}}\left(  g_{0};\zeta\right)
,\label{nonl12}\\
h_{1}  & =h_{1}\left(  g_{0};\zeta\right)  .\nonumber
\end{align}
The expression eq.$\left(  \ref{nonl12}\right)  $ defines by itself a
nonlineal realization of $G$ over the manifold with coordinates $\zeta^{i}$.

To obtain the transformation law of the coset parameters under the action of
$G$, it is useful to consider $g_{0}$ infinitesimal. In that case eq.$\left(
\ref{nonl2}\right)  $ reads%
\begin{equation}
e^{-\zeta\text{\textperiodcentered}\mathsf{P}}\left(  g_{0}-1\right)
e^{\zeta\text{\textperiodcentered}\mathsf{P}}-e^{-\zeta
\text{\textperiodcentered}\mathsf{P}}\delta e^{\zeta\text{\textperiodcentered
}\mathsf{P}}=h_{1}-1\label{nonl3}%
\end{equation}
and this allows us to obtain $\delta\zeta$ as the transformation law of the
coset parameter $\zeta^{i}$ under the infinitesimal action of $G$.

In order to characterize the standard form of a nonlinear realization of a Lie
group, let $\varphi$ be a field transforming in a linear representation of $G$%
\begin{equation}
\varphi^{\prime}=D\left(  g_{0}\right)  \varphi
\end{equation}
here $D\left(  g_{0}\right)  $ denotes the linear operator $D$ associated to
the element $g_{0}\in G$. Let us define the nonlinear field $\bar{\varphi}$ as
the action of an element of the coset space on $\varphi$ by%
\begin{equation}
\bar{\varphi}=D\left(  e^{-\zeta\text{\textperiodcentered}\mathsf{P}}\right)
\varphi
\end{equation}
Using this relation we see how the nonlinear fields transform under the action
of $G$. In fact, since%
\begin{align}
\bar{\varphi}^{\prime} &  =D\left(  e^{-\zeta^{\prime}\text{\textperiodcentered
}\mathsf{P}}\right)  \varphi^{\prime} \ ,\\
&  =D\left(  e^{-\zeta^{\prime}\text{\textperiodcentered}\mathsf{P}}\right)  D\left(
g_{0}\right)  \varphi \ ,\\
&  =D\left(  e^{-\zeta^{\prime}\text{\textperiodcentered}\mathsf{P}}\right)  D\left(
g_{0}\right)  D^{-1}\left(  e^{-\zeta\text{\textperiodcentered}\mathsf{P}%
}\right)  \bar{\varphi} \ ,\\
&  =D\left(  e^{-\zeta^{\prime}\text{\textperiodcentered}\mathsf{P}}g_{0}%
e^{\zeta\text{\textperiodcentered}\mathsf{P}}\right)  \bar{\varphi} \ .%
\end{align}
and since that $g_{0}e^{\zeta\text{\textperiodcentered}\mathsf{P}}=e^{-\zeta^{\prime}\text{\textperiodcentered}\mathsf{P}}h_{1}$, we
find
\begin{equation}
\bar{\varphi}^{\prime}=D\left(  h_{1}\right)  \bar{\varphi} \ ,%
\end{equation}
where $h_{1}=h_{1}\left(  g_{0},\zeta\right)  $ and $D\left(  h\right)  $ as a
linear representation of the subgroup $H$. The field $\bar{\varphi}$ has the
special property that under the action of $g_{0}\in G$ it transforms as under
an element $h_{1}\in H$ where $h_{1}=h_{1}\left(  g_{0},\zeta\right)  $ is
nonlinear. Thus, the complete set of relations which define a nonlinear
realization is given by \cite{Callan:1969sn}
\begin{align}
\zeta^{^{\prime}} &  =\zeta^{^{\prime}}\left(  g_{0},\zeta\right)
,\label{nonl8}\\
\bar{\varphi}^{\prime} &  =D\left[  h_{1}\left(  g_{0},\zeta\right)  \right]
\bar{\varphi},\label{nonl9}\\
g_{0}e^{\zeta\text{\textperiodcentered}\mathsf{P}} &  =e^{\zeta^{\prime
}\text{\textperiodcentered}\mathsf{P}}h_{1}.\label{nonl10}%
\end{align}

The nonlinear realization of a Lie group $G$ can be understood as a set of
maps acting on a manifold $M$ with coordinates $\left(  \zeta,\bar{\varphi
}\right)  $. Note that in the case that we restrict $G$ to the subgroup $H$
the nonlinear representation becomes linear. If $g_{0}=h_{0}\in H$,
eq.$\left(  \ref{nonl2}\right)  $ takes the form%
\begin{align}
e^{\zeta^{\prime}\text{\textperiodcentered}\mathsf{P}}h_{1}  & =h_{0}%
e^{\zeta\text{\textperiodcentered}\mathsf{P}} \ ,\\
& =\left(  h_{0}e^{\zeta\text{\textperiodcentered}\mathsf{P}}h_{0}%
^{-1}\right)  h_{0} \ .%
\end{align}
and since $\left[  \mathfrak{h},\mathfrak{p}\right]  \subset\mathfrak{p}$, the
term $h_{0}e^{\zeta\text{\textperiodcentered}\mathsf{P}}h_{0}^{-1}$ it is
proportional to the generators of $\mathfrak{p}$ and we can split the last
expression as%
\begin{align}
h_{1} &  =h_{0} \ ,\label{nonl6}\\
e^{\zeta^{\prime}\text{\textperiodcentered}\mathsf{P}} &  =h_{0}%
e^{\zeta\text{\textperiodcentered}\mathsf{P}}h_{0}^{-1} \ . \label{nonl7}%
\end{align}
From eq.$\left(  \ref{nonl7}\right)  $ on sees that it is always possible to
find a linear transformation for the coset parameters $\zeta^{\prime}%
=\tilde{D}\left(  h_{0}\right)  \zeta$. On the other hand, eq.$\left(
\ref{nonl6}\right)  $ says that $h_{1}$ is no longer $\zeta$ dependent and
therefore we can write eq.$\left(  \ref{nonl9}\right)  $ as
\begin{equation}
\bar{\varphi}^{\prime}=D\left(  h_{0}\right)  \bar{\varphi} \ ,%
\end{equation}
which corresponds to a linear transformation for the nonlinear field under
$h_{0}\in H$. Therefore, the restriction to the subgroup $H$ implies that the
nonlinear realization eq.$\left(  \ref{nonl8}-\ref{nonl10}\right)  $ becomes linear.

\newsection{Nonlinear gauge fields}

In the case that the group elements $g_{0}\in G$ are local $g_{0}=g_{0}\left(
x\right)  $, one needs to introduce, as in the case of linear representations,
a nonlinear gauge connection $\mathcal{\bar{A}}$ in order to guarantee that
the derivatives of the fields $\zeta,\bar{\varphi}$ transforms covariantly
with respect to the standard form of nonlinear realizations eq.$\left(
\ref{nonl8}-\ref{nonl10}\right)  $. The linear gauge potential $\mathcal{A}$,
can be naturally divided in terms of gauge fields associated to $H$ and $G/H$%
\begin{equation}
\mathcal{A}=v^{i}\mathsf{V}_{i}+p^{l}\mathsf{P}_{l}.
\end{equation}
Now, under gauge transformations, the linear connection changes as%
\begin{equation}
\mathcal{A}^{\prime}=g_{0}\mathcal{A}g_{0}^{-1}+\dd g_{0}^{-1}g_{0 \ .} %
\end{equation}
Introducing the non linear gauge potential $\mathcal{\bar{A}}$, we can write
no the non-linear gauge fields
\begin{equation}
\mathcal{\bar{A}}=\bar{v}^{i}\mathsf{V}_{i}+\bar{p}^{l}\mathsf{P}_{l}.
\end{equation}

It can be shown that the relation between the gauge field
associated to the linear and nonlinear gauge potential is given by \cite{Callan:1969sn}%
\begin{equation}
\bar{v}^{i}\mathsf{V}_{i}+\bar{p}^{l}\mathsf{P}_{l}=e^{-\zeta
\text{\textperiodcentered}P}\left[  \dd+v^{i}\mathsf{V}_{i}+p^{l}\mathsf{P}%
_{l}\right]  e^{\zeta\text{\textperiodcentered}P}.\label{nonl13}%
\end{equation}
This relation is very interesting because it has exactly the form of a gauge
transformation with parameter $z=e^{-\zeta\text{\textperiodcentered}%
\mathsf{P}}\in G/H$%
\begin{equation}
\mathcal{A}\longrightarrow\mathcal{\bar{A}}=z\mathcal{A}z^{-1}+z\dd z^{-1}%
\end{equation}
However, strictly speaking, this is not true since te nonlinear gauge
connection possesses its own transformation law as we see in the following:
From eq.$\left(  \ref{nonl2}\right)  $ we see that%
\begin{align}
z^{\prime}g  & =h_{1}z,\\
g^{-1}z^{\prime-1}  & =z^{-1}h_{1}^{-1},
\end{align}
taking the exterior derivative, we get
\begin{align}
\dd z^{\prime}g+z^{\prime}\dd g  & =\dd h_{1}z+h_{1}\dd z  \\
\dd g^{-1}z^{\prime-1}+g^{-1}\dd z^{\prime-1}  & =\dd z^{-1}h_{1}^{-1}+z^{-1}%
\dd h_{1}^{-1}%
\end{align}
Thus,
\begin{align}
\mathcal{\bar{A}}^{\prime}  & =z^{\prime}\mathcal{A}^{\prime}z^{\prime
-1}+z^{\prime}\dd z^{\prime-1}\\
& =h_{1}\left(  z\mathcal{A}z^{-1}+\dd z^{-1}\dd z\right)  h_{1}^{-1}+\dd h_{1}%
h_{1}^{-1}\\
& =h_{1}\mathcal{\bar{A}}h_{1}^{-1}+\dd h_{1}h_{1}^{-1}%
\end{align}
Then, under gauge transformations with elements $g\in G$, the nonlinear
connection changes as%
\begin{equation}
\mathcal{\bar{A}}=h_{1}\mathcal{\bar{A}}h_{1}^{-1}+\dd h_{1}h_{1}^{-1}%
\label{nonl11}%
\end{equation}
with $h\in H$. Thus, under the whole group $G$, the gauge potential transforms
as a one-form connection under $h_{1}\in H$ where $h_{1}=h_{1}\left(
g_{0},\zeta\right)  $ is non linear. From eq.$\left(  \ref{nonl11}\right)  $
one sees that the nonlinear gauge fields transform in the following way%
\begin{align}
\bar{v}^{\prime} &  =h_{1}\bar{\nu}h_{1}^{-1}\\
\bar{p}^{\prime} &  =h_{1}\bar{p}h_{1}^{-1}+\dd h_{1}^{-1}h_{1}%
\end{align}
One important observation is that if one writes an action principle in terms
of a gauge potential and its derivatives which is invariant under $H$,
\begin{equation}
S=S\left[  \mathcal{A},\dd\mathcal{A}\right] \ ,
\end{equation}
the replacement of the gauge connection by its nonlinear version does not
change the form of the action and moreover, it guarantees the invariance of
the action not only by the subgroup $H$ but the whole group $G$%
\begin{equation}
S=S\left[  \mathcal{\bar{A}},\dd\mathcal{\bar{A}}\right] \ ,
\end{equation}
enhancing the symmetry form $H$ to $G$. The field strength associated to
$\mathcal{\bar{A}}$ is defined as usual
\begin{equation}
\mathcal{\bar{F}}=\dd\mathcal{\bar{A}}+\mathcal{\bar{A}}\wedge\mathcal{\bar{A}}%
\end{equation}
and under gauge transformations with $g\in G$, it changes as%
\begin{equation}
\mathcal{\bar{F}}^{\prime}=h_{1}\mathcal{\bar{F}}h_{1}^{-1} \ ,%
\end{equation}
where $h_{1}\in H$ is nonlinear.

\newsection{Nonlinear realization of the Poincar\'{e} group}

Let $G=ISO(d-1,1)$ generated by $\left\{  \mathsf{J}_{ab},\mathsf{P}%
_{a}\right\}  $. It is possible to decompose the Poincar\'{e} algebra in term
of two subspaces~\cite{Ste80}

\begin{itemize}
\item The Lorentz subalgebra $\mathfrak{so}(d-1,1)$ generated by $\left\{
\mathsf{J}_{ab}\right\}  $

\item The symmetric coset $\mathfrak{iso}(d-1,1)/\mathfrak{so}(d-1,1)$
generated by $\left\{  \mathsf{P}_{a}\right\}  $
\end{itemize}

This decomposition satisfies%
\begin{align}
\left[  \mathsf{J},\mathsf{J}\right]    & \sim\mathsf{J,}\\
\left[  \mathsf{J},\mathsf{P}\right]    & \sim\mathsf{P,}\\
\left[  \mathsf{J},\mathsf{J}\right]    & \sim\mathsf{J.}%
\end{align}
This means that the commutator of any element in the stability subgroup
$H=SO(d-1,1)$ with an element of the coset $G/H$ will remain in $G/H$. This is
a key ingredient for obtaining nonlinear realizations of the Poincar\'{e} group.

We introduce now a coset coordinate associated to the generators of $G/H$%
\begin{equation}
\phi^{a}\rightarrow\mathsf{P}_{a}.
\end{equation}
To obtain the transformation law of the coset parameter we use eq.$\left(
\ref{nonl3}\right)  $. In fact under a group element of the form $g=1-\rho
^{a}\mathsf{P}_{a}$, the coset coordinate changes as%
\begin{equation}
\delta\phi^{a}=\rho^{a}.
\end{equation}

Now, using eq.$\left(  \ref{nonl13}\right)  $ we have%
\begin{equation}
\bar{e}^{a}\mathsf{P}_{a}+\frac{1}{2}\bar{\omega}^{ab}\mathsf{J}_{ab}%
=\exp\left( \phi^{c}\mathsf{P}_{c}\right)  \left[ \dd+ e^{a}\mathsf{P}_{a}%
+\frac{1}{2}\omega^{ab}\mathsf{J}_{ab}\right]  \exp\left( - \phi^{c}%
\mathsf{P}_{c}\right)  ,
\end{equation}
and then%
\begin{align}
\bar{e}^{a}  & =e^{a}-D_{\omega}\phi^{a},\\
\bar{\omega}^{ab}  & =\omega^{ab}.
\end{align}

\chapter{Chern--Simons supergravity}
\label{ch:app1}
\newsection{$d=3$ Majorana spinors}

The minimal irreducible spinor in three dimensions is a two real component
Majorana spinor. Every Majorana spinor satisfies a reality condition which can
be established by demanding that the Majorana conjugate equals the Dirac
conjugate
\begin{equation}
\bar{\psi}:= \psi^{\top}\mathcal{C}=-\ii\psi^{\top}\Gamma_{1} \ .\label{maj1}%
\end{equation}
Spinors carry indices $\psi_{\alpha}$ and gamma-matrices act on them in such a
way that $\Gamma_{a}\psi:= \left(  \Gamma_{a}\right)  _{~\beta}^{\alpha
}\, \psi_{\alpha}$. In order to raise and lower indices, we introduce matrices
$(\mathcal{C}^{\alpha\beta})$, $(\mathcal{C}_{\alpha\beta})$ related to the
charge conjugation matrix, and we use the convention of raising and lowering
indices according to the NorthWest--SouthEast convention $\left(
\searrow\right)  $. This means that the position of the indices should appear
in that relative position as
\begin{equation}
\psi^{\alpha}=\mathcal{C}^{\alpha\beta}\, \psi_{\beta} \qquad
\mbox{and} \qquad \psi_{\alpha}%
=\psi^{\beta}\, \mathcal{C}_{\alpha\beta} \ , \label{maj2}
\end{equation}
which implies that
\beq
\mathcal{C}^{\alpha\beta}\, \mathcal{C}_{\gamma\beta}=\delta_{\gamma}^{\alpha
} \qquad \mbox{and} \qquad \text{ }\mathcal{C}_{\beta\alpha}\, \mathcal{C}^{\beta\gamma}=\delta_{\alpha
}^{\gamma} \ .
\eeq
We choose the identifications in such a way that the Majorana conjugate
$\bar{\psi}$ is written as $\psi^{\alpha}$. Comparing eq.~(\ref{maj1})
with eq.~(\ref{maj2}), one then finds
$(\mathcal{C}^{\alpha\beta})=\mathcal{C}^{\top}$ and $(\mathcal{C}_{\alpha\beta
})=\mathcal{C}^{-1}$.

Note that in section (\ref{3dsugra}) we have used the following presentation for the super Poincar\'e algebra in three-dimensions
\begin{align}
\left[  \mathsf{J}_{ab},\mathsf{J}_{cd}\right]    & =-\ii\left(  \eta
_{bc}\mathsf{J}_{ad}+\eta_{ad}\mathsf{J}_{bc}-\eta_{bd}\mathsf{J}_{ac}%
-\eta_{ac}\mathsf{J}_{bd}\right)  , \nonumber\\
\left[  \mathsf{J}_{ab},\mathsf{P}_{c}\right]    & =-\ii\left(  \eta
_{bc}\mathsf{P}_{a}-\eta_{ac}\mathsf{P}_{b}\right)  ,\nonumber\\
\left[  \mathsf{P}_{a},\mathsf{P}_{b}\right]    & =0,\nonumber\\
\left[  \mathsf{Q}_{\alpha},\mathsf{J}_{ab}\right]    & =-\frac{\ii}{2}\left(
\Gamma_{ab}\right)  _{\alpha}^{~\beta}\mathsf{Q}_{\beta},\nonumber\\
\left\{  \mathsf{Q}_{\alpha},\mathsf{Q}_{\beta}\right\}    & =\left(
\Gamma^{a}\right)  _{\alpha\beta}\mathsf{P}_{a},\nonumber\\
\left[  \mathsf{Q}_{\alpha},\mathsf{P}_{a}\right]    & =0.
\end{align}

The invariant tensor associated to this gauge algebra are given in~\cite{Banh96a} 
\begin{align}
\left\langle \mathsf{J}_{ab}\mathsf{P}_{c}\right\rangle  & =\epsilon_{abc} \ ,\\
\left\langle \mathsf{Q}_{\alpha}\mathsf{Q}_{\beta}\right\rangle  &
=-i\mathcal{C}_{\alpha\beta} \ .%
\end{align}

\newsection{Five dimensional supergravity Lagrangian}\label{SUGRALag}

The supersymmetric extension of the AdS algebra in five dimensions is the Lie superalgebra $\mathfrak{su}(  2,2|N)  $~\cite{Chamseddine:1990gk}. The
associated gauge field decomposes into generators as
\begin{equation}
\mathcal{A}=e^{a}\, \mathsf{P}_{a}+\mbox{$\frac{1}{2}$}\, \omega^{ab}\, \mathsf{J}_{ab}%
+a_{~n}^{m}\, \mathsf{M}_{~m}^{n}+b\, \mathsf{K}+\bar{\psi}_{\alpha}^{k}%
\, \mathsf{Q}_{k}^{\alpha}-\bar{\mathsf{Q}}_{\beta}^{k}\, \psi_{k}^{\beta} \ .
\end{equation}
Here the generators $\left\{ \mathsf{P}_a, \mathsf{J}_{ab} \right\}$ span
an $\mathfrak{so}(4,2)  $ subalgebra, $\mathsf{M}_{~m}^{n}$ are
$N^{2}-1$ generators of $SU(  N)  $, $\mathsf{K}$
generates a $U(1)  $ subgroup, and $\mathsf{Q}_{k}^{\alpha
},\bar{\mathsf{Q}}_{\beta}^{k}$ are the supersymmetry generators. The
Chern--Simons Lagrangian associated to this superalgebra is given by \cite{Chamseddine:1990gk,Troncoso:1998ng,Izaurieta:2006wv}
\begin{equation}
\mathscr{L}_{\text{CS}}^{\left(  5\right)  }=\mathscr{L}_{\psi}
+\mathscr{L}_{a}+\mathscr{L}_{b}+\mathscr{L}_{e}
\end{equation}
where%
\begin{align}
\mathscr{L}_{\psi}&  =\mbox{$\frac{3}{2\ii}$}\, \left(  \bar{\psi}^n%
\wedge \mathcal{R}\wedge \nabla\psi_n +\bar{\psi}^{n}\wedge \mathcal{F}_{~n}^{m}\wedge \nabla\psi_{m}%
-\nabla\bar{\psi}^n\wedge \mathcal{R}\wedge \psi_n -\nabla\bar{\psi}^{n}\wedge \mathcal{F}_{~n}^{m}%
\wedge \psi_{m}\right) \ , \nonumber \\[4pt]
\mathscr{L}_{a} &  =\mbox{$\frac{3}{N}$}\, \dd b \wedge
\mathsf{Tr}\big(  a\wedge \dd a+\mbox{$\frac{2}{3}$}\, a^{3}\big)  -\ii \mathsf{Tr}\big( a\wedge \left(
\dd a\right)  ^{2}+\mbox{$\frac{3}{2}$}\, a^{3}\wedge \dd a+\mbox{$\frac{3}{5}$}\, a^{5}\big) \ , \nonumber \\[4pt]
\mathscr{L}_{b} &  = \big( \mbox{$ \frac{1}{16}-\frac{1}
{N^{2}}$} \big) \, b\wedge \left(  \dd b\right)  ^{2}-\mbox{$\frac{3}{4l^{2}}$}\, b\wedge \big(
T^{a}\wedge T_{a}-R_{ab}\wedge e^{a}\wedge e^{b}-\mbox{$\frac{l^{2}}{2}$}\, R^{ab}\wedge R_{ab}\big) \ , \nonumber \\[4pt]
\mathscr{L}_{e} &  =\mbox{$\frac{3}{8l}$}\, \epsilon_{abcdh}\, \big(
R^{ab}\wedge R^{cd}+\mbox{$\frac{2}{3}$}\, R^{ab}\wedge e^{c}\wedge e^{d}+\mbox{$\frac{1}{5}$}\, e^{a}\wedge e^{b}\wedge e^{c}\wedge
e^{d}\big)\wedge e^{h} \ ,
\end{align}
and%
\begin{align}
\mathcal{R} &  =\ii \big( \mbox{$ \frac{1}{4}+\frac{1}{N}$} \big)\,
\big( \dd b+\mbox{$\frac{\ii}{2l}$}\, \bar{\psi}^n\wedge \psi_n \big) + \mbox{$\frac{1}{2}$}\, \big(  T^{a}%
-\mbox{$\frac{1}{4}$}\, \bar{\psi}^n\wedge \Gamma^{a}\psi_n \big)\, \Gamma_{a} \nonumber \\ & \qquad +\, \mbox{$\frac{1}{4}$}\, \big(
R^{ab}+\mbox{$\frac{1}{l}$}\, e^{a}\wedge
e^{b}+\mbox{$\frac{1}{4l}$}\, \bar{\psi}^n\wedge \Gamma^{ab}\psi_n \big)\,
\Gamma_{ab} \ , \nonumber \\[4pt]
\mathcal{F}_{~n}^{m} & =f_{~n}^{m}-\mbox{$\frac{1}{2l}$}\, \bar{\psi}^{m}\wedge \psi_{n} \ .
\end{align}
Here the spinor covariant derivatives are defined by
\begin{align}
\nabla\psi_k& =\dd \psi_k+\mbox{$\frac1{2l}$}\, e^a\wedge \Gamma_a\psi_k+\mbox{$\frac14$} \, \omega^{ab}\wedge\Gamma_{ab}\psi_k -a^n_{~k}\wedge \psi_n+\ii\big(\mbox{$\frac14-\frac1N$} \big)\, b\wedge\psi_k \ , \nonumber \\[4pt]
\nabla\bar{\psi}^k& =\dd \bar{\psi}^k-\mbox{$\frac1{2l}$}\, e^a\wedge \bar{\psi}^k\Gamma_a-\mbox{$\frac14$} \, \omega^{ab}\wedge\bar{\psi}^k\Gamma_{ab}+ a^k_{~n}\wedge \bar{\psi}^n-\ii\big(\mbox{$\frac14-\frac1N$} \big)\, b\wedge\bar{\psi}^k \ ,
\end{align}
while $f=\dd a+a\wedge a$ is the curvature of the $SU(N)$ gauge field $a$. 

\newsection{Representation of $\mathfrak{su}(  2,2|1)  $} \label{explrep}

For simplicity we consider now the particular instance
$N=1$. This case furnishes the minimum number $\mathcal{N} =2$ of supersymmetries, and the commutation relations are given by%
\begin{align}
\left[  \mathsf{K},\mathsf{Q}^{\rho}\right]   &  =\mbox{$\frac{3\ii}{4}$}\, \mathsf{Q}%
^{\rho} \ , \nonumber \\[4pt]
\left[  \mathsf{K},\bar{\mathsf{Q}}_{\rho}\right]   &  =-\mbox{$\frac{3\ii}%
{4}$}\, \bar{\mathsf{Q}}_{\rho} \ , \nonumber \\[4pt]
\left[  \mathsf{P}_{a},\mathsf{P}_{b}\right]   &  = \mbox{$\frac{1}{l^{2}}$}\,
\mathsf{J}_{ab} \ , \nonumber \\[4pt]
\left[  \mathsf{P}_{a},\mathsf{J}_{bc}\right] &  =\eta_{ba}\, \mathsf{P}%
_{c}-\eta_{ac}\, \mathsf{P}_{b} \ , \nonumber \\[4pt]
\left[  \mathsf{P}_{a},\mathsf{Q}^{\rho}\right]   &  =-\mbox{$\frac{1}{2l}$}\, \left(
\Gamma_{a}\right)  _{~\gamma}^{\rho}\, \mathsf{Q}^{\gamma} \ , \nonumber \\[4pt]
\left[  \mathsf{P}_{a},\bar{\mathsf{Q}}_{\rho}\right]   &  =\mbox{$\frac{1}%
{2l}$}\, \bar{\mathsf{Q}}_{\gamma}\, \left(  \Gamma_{a}\right)  _{~\rho}^{\gamma} \ , \nonumber \\[4pt]
\left[\mathsf{J}_{ab},\mathsf{J}_{cd}\right] &  =\eta_{cb}\, \mathsf{J}%
_{ad}-\eta_{ac}\, \mathsf{J}_{bd}+\eta_{db}\, \mathsf{J}_{ca}-\eta_{ad}%
\, \mathsf{J}_{cb} \ , \nonumber \\[4pt]
\left[  \mathsf{J}_{ab},\mathsf{Q}^{\rho}\right]   &  =- \mbox{$\frac{1}{2}$}\, \left(
\Gamma_{ab}\right)  _{~\gamma}^{\rho}\, \mathsf{Q}^{\gamma} \ , \nonumber \\[4pt]
\left[  \mathsf{J}_{ab},\bar{\mathsf{Q}}_{\rho}\right]   &  =\mbox{$\frac{1}%
{2}$}\, \bar{\mathsf{Q}}_{\gamma}\left(  \Gamma_{ab}\right)  _{~\rho}^{\gamma} \ , \nonumber \\[4pt]
\left\{  \mathsf{Q}^{\rho},\bar{\mathsf{Q}}_{\sigma}\right\}   &
=-4\ii\delta_{~\sigma}^{\rho}\,\mathsf{K}+2\left(  \Gamma^{a}\right)  _{~\sigma
}^{\rho}\, \mathsf{P}_{a}-\left(  \Gamma_{ab}\right)  _{~\sigma}^{\rho}%
\, \mathsf{J}_{ab} \ .
\end{align}
According to (\ref{repexpl})--(\ref{repexpl1}) the matrix
generators explicitly read as
\begin{equation}
\Gamma_{0}=%
\begin{pmatrix}
0 & \ii & 0 & 0\\
\ii & 0 & 0 & 0\\
0 & 0 & 0 & -\ii\\
0 & 0 & -\ii & 0
\end{pmatrix} \ , \qquad \Gamma_{1}=%
\begin{pmatrix}
0 & -\ii & 0 & 0\\
\ii & 0 & 0 & 0\\
0 & 0 & 0 & \ii\\
0 & 0 & -\ii & 0
\end{pmatrix} \ , \qquad \Gamma_{2}=%
\begin{pmatrix}
1 & 0 & 0 & 0\\
0 & -1 & 0 & 0\\
0 & 0 & -1 & 0\\
0 & 0 & 0 & 1
\end{pmatrix} \ ,
\nonumber \end{equation}
\begin{equation}
\Gamma_{3}=%
\begin{pmatrix}
0 & 0 & 1 & 0\\
0 & 0 & 0 & 1\\
1 & 0 & 0 & 0\\
0 & 1 & 0 & 0
\end{pmatrix}
\ , \qquad \Gamma_{4}=%
\begin{pmatrix}
0 & 0 & -\ii & 0\\
0 & 0 & 0 & -\ii\\
\ii & 0 & 0 & 0\\
0 & \ii & 0 & 0
\end{pmatrix} \ ,
\end{equation}
and using (\ref{Gammaab}) we find
\begin{equation}
\Gamma_{01}=%
\begin{pmatrix}
-1 & 0 & 0 & 0\\
0 & 1 & 0 & 0\\
0 & 0 & -1 & 0\\
0 & 0 & 0 & 1
\end{pmatrix}
\ , \qquad \Gamma_{02}=%
\begin{pmatrix}
0 & -\ii & 0 & 0\\
\ii & 0 & 0 & 0\\
0 & 0 & 0 & -\ii\\
0 & 0 & \ii & 0
\end{pmatrix}
\ , \qquad \Gamma_{03}=%
\begin{pmatrix}
0 & 0 & 0 & \ii\\
0 & 0 & \ii & 0\\
0 & -\ii & 0 & 0\\
-\ii & 0 & 0 & 0
\end{pmatrix} \ ,
\nonumber \end{equation}
\begin{equation}
\Gamma_{04}=%
\begin{pmatrix}
0 & 0 & 0 & 1\\
0 & 0 & 1 & 0\\
0 & 1 & 0 & 0\\
1 & 0 & 0 & 0
\end{pmatrix}
\ , \qquad \Gamma_{12}=%
\begin{pmatrix}
0 & \ii & 0 & 0\\
\ii & 0 & 0 & 0\\
0 & 0 & 0 & \ii\\
0 & 0 & \ii & 0
\end{pmatrix}
\ , \qquad \Gamma_{13}=%
\begin{pmatrix}
0 & 0 & 0 & -\ii\\
0 & 0 & \ii & 0\\
0 & \ii & 0 & 0\\
-\ii & 0 & 0 & 0
\end{pmatrix} \ ,
\nonumber \end{equation}
\begin{equation}
\Gamma_{14}=%
\begin{pmatrix}
0 & 0 & 0 & -1\\
0 & 0 & 1 & 0\\
0 & -1 & 0 & 0\\
1 & 0 & 0 & 0
\end{pmatrix}
\ , \qquad \Gamma_{23}=%
\begin{pmatrix}
0 & 0 & 1 & 0\\
0 & 0 & 0 & -1\\
-1 & 0 & 0 & 0\\
0 & 1 & 0 & 0
\end{pmatrix}
\ , \qquad \Gamma_{24}=%
\begin{pmatrix}
0 & 0 & -\ii & 0\\
0 & 0 & 0 & \ii\\
-\ii & 0 & 0 & 0\\
0 & \ii & 0 & 0
\end{pmatrix} \ ,
\nonumber \end{equation}
\begin{equation}
\Gamma_{34}=%
\begin{pmatrix}
\ii & 0 & 0 & 0\\
0 & \ii & 0 & 0\\
0 & 0 & -\ii & 0\\
0 & 0 & 0 & -\ii
\end{pmatrix} \ .
\end{equation}
It is then easy to show that this particular choice of basis for the Lie
algebra $\mathfrak{su}(2,2)$ has traceless generators all satisfying the Clifford algebra relations (\ref{cliffal}).

The $\mathfrak{su}(
2,2| 1)  $-invariant tensor of rank three can be computed from this
representation as the supersymmetrized supertraces of products of
triples of supermatrices. The non-vanishing components are given by~\cite{Izaurieta:2006wv}
\begin{align}
\left\langle \mathsf{J}_{ab}\, \mathsf{J}_{cd}\, \mathsf{P}_{e}\right\rangle  &
=-\mbox{$\frac{\gamma}{2l}$}\, \epsilon_{abcde} \ , \nonumber \\[4pt]
\left\langle \mathsf{K\, K\, K}\right\rangle  &  =-\mbox{$\frac{15}{16}$} \ , \nonumber \\[4pt]
\left\langle \mathsf{K\, P}_{a}\, \mathsf{P}_{b}\right\rangle  &
=-\mbox{$\frac{1}{4l^{2}}$}\, \delta_{ab} \ , \nonumber \\[4pt]
\left\langle \mathsf{J}_{ab}\, \mathsf{K\, J}_{cd}\right\rangle  &
=-\mbox{$\frac{1}{4}$}\, (\delta_{ad}\, \delta_{bc}+\delta_{ac}\, \delta_{bd}) \ , \nonumber \\[4pt]
\left\langle \mathsf{Q}^{\alpha}%
\, \mathsf{K\, \bar{Q}}_{\beta}\right\rangle  &  =\mbox{$\frac{5}{2l}$}\, \delta^\alpha_{~\beta} \ , \nonumber \\[4pt]
\left\langle \mathsf{Q}^{\alpha
}\, \mathsf{P}_{a}\, \bar{\mathsf{Q}}_{\beta}\right\rangle  &  =-\mbox{$\frac{\ii}{l}$}\, 
\left(  \Gamma_{a}\right)  _{~\beta}^{\alpha}%
\ , \nonumber \\[4pt]
\left\langle \mathsf{Q}^{\alpha
}\, \mathsf{J}_{ab}\, \bar{\mathsf{Q}}_{\beta}\right\rangle  &  =-\mbox{$\frac{\ii}{l}$}\,
\left(  \Gamma_{ab}\right)  _{~\beta}^{\alpha}%
 \ ,
\end{align} \label{invtens}
where $\gamma$ is an arbitrary constant.

\chapter{The $S-$expansion procedure}
\label{ch:app3}

\newsection{$S-$Expansion method for Lie algebras}

In this section we describe general aspects of the $S-$expansion mechanism \cite{Iza06b,Izaurieta:2006ia,Rodriguez:2006xx}.

\begin{definition}
Let $S=\left\{  \lambda_{\alpha},\alpha=1,...,N\right\}  $ be an abelian
finite semigroup. We define the two-selector $K_{\alpha\beta}^{~~\gamma}$ as follows
\begin{equation}
K_{\alpha\beta}^{~~\gamma}=\left\{
\begin{tabular}
[c]{l}%
$1$, if $\ \lambda_{\alpha}\lambda_{\beta}=\lambda_{\gamma}$\\
$0$, otherwise
\end{tabular}
\right. \label{twosel}
\end{equation}

\end{definition}

This definition induces the multiplication law of the semigroup, i.e.,
\begin{equation}
\lambda_{\alpha}\lambda_{\beta}=K_{\alpha\beta}^{~~\gamma}\lambda_{\gamma
}.\label{sexp1}%
\end{equation}
Since $S$ is abelian, the two-selectors are symmetric in the lower indices.
Moreover, the two-selectors allows us to define matrix representations of the
elements of the semigroup%
\begin{equation}
\left[  \lambda_{\alpha}\right]  _{\rho}^{~\sigma}=K_{\alpha\rho}^{~\ \sigma}
\end{equation}
and it is direct to show, using the semigroup axioms, that these matrices
satisfy eq.$\left(  \ref{sexp1}\right)  $.

Under the same considerations, it is possible to extend the definition of
two-selectors to $n$--selectors. This can be by taking the product of $n$
element of the semigroup
\begin{equation}
\lambda_{\alpha_{1}}\ldots\lambda_{\alpha_{n}}=K_{\alpha_{1}\ldots\alpha_{n}%
}^{~~~~~~~~~\gamma}\lambda_{\gamma}.
\end{equation}

In what follows it will be implicitly assumed that whenever we refer to an
element $\lambda_{\alpha}\in S$, we mean to the matrix representation of the
elements of the semigroup over a vector space. For instance, the representation
defined by the two-selectors.

Let $\mathfrak{g}$ be a Lie algebra with structure constants $C_{AB}^{~~C}$.
Then, according to ref.~\cite[Theorem~3.1]{Iza06b}, the product $\mathfrak{S}%
=S\times\mathfrak{g}$ is also a Lie algebra and it is given by%
\begin{equation}
\left[  \mathsf{T}_{\left(  A,\alpha\right)  },\mathsf{T}_{\left(
B,\beta\right)  }\right]  =K_{\alpha\beta}^{~~\gamma}C_{AB}^{~~C}%
\mathsf{T}_{\left(  C,\gamma\right)  }\label{sexp2}%
\end{equation}
this can be clearly seen by taking the generators of the expanded algebra
$\mathfrak{S}$ as%
\begin{equation}
\mathsf{T}_{\left(  A,\alpha\right)  }=\lambda_{\alpha}\mathsf{T}_{A},
\end{equation}
so the commutator eq.$\left(  \ref{sexp2}\right)  $ reads%
\begin{align}
\left[  \mathsf{T}_{\left(  A,\alpha\right)  },\mathsf{T}_{\left(
B,\beta\right)  }\right]   & =\lambda_{\alpha}\lambda_{\beta}\left[
\mathsf{T}_{A},\mathsf{T}_{B}\right]  ,\\
& =K_{\alpha\beta}^{~~\gamma}C_{AB}^{~~C}\lambda_{\gamma}\mathsf{T}_{C},\\
& =K_{\alpha\beta}^{~~\gamma}C_{AB}^{~~C}\mathsf{T}_{\left(  C,\gamma\right)
}.
\end{align}
\subsection{Resonant subalgebra} \label{resexp}

There are cases in which it is possible to systematically
extract Lie subalgebras from $\mathrm{S}\times\mathfrak{g}$. For
instance,
let us suppose that the Lie algebra $\mathfrak{g}$ in
$\mathfrak{S}=S\times\mathfrak{g}$ has, as a vector space, decompositions in
terms of subspaces $\mathsf{V}_{p}$%
\begin{equation}
\mathfrak{g}=\bigoplus\limits_{p\in I}\mathsf{V}_{p},\label{sexp3}%
\end{equation}
where $I$ denotes a set of indices. Let us suppose in addition that the
commutation relations of the Lie algebra $\mathfrak{g}$ have the following
structure%
\begin{equation}
\left[  \mathsf{V}_{p},\mathsf{V}_{q}\right]  \subset\bigoplus\limits_{r\in
i_{\left(  p,q\right)  }}\mathsf{V}_{r}
\end{equation}
where $i_{\left(  p,q\right)  }\subset I$ encodes the information of the
subspaces structure of $\mathfrak{g}$.

Now, if the semigroup $S$ admits a subset decomposition
\begin{equation}
S=\bigcup\limits_{p\in I}S_{p}
\end{equation}
such that the subsets $S_{p}\times S_{q}$ satisfy%
\begin{equation}
S_{p}\times S_{q}=\bigcap\limits_{r\in i_{\left(  p,q\right)  }}S_{r}
\end{equation}
then, we say that the semigroup admits a decomposition which is \textit{resonant}
respect to the algebra $\left(  \ref{sexp3}\right)  $.

The structure
\begin{equation}
\mathfrak{S}_{R}=\bigoplus\limits_{p\in I}S_{p}\times\mathsf{V}_{p}
\end{equation}
defines a sublagebra $\mathfrak{S}_{R}\subset\mathfrak{S}$ called the resonant
subalgebra of the $S-$expanded algebra $\mathfrak{S}$~\cite[Theorem~4.2]{Iza06b}.

\subsection{$0_{S}-$Reduced algebra} \label{0reduc}

Let us consider a semigroup $S=\left\{  \lambda_{i},0_{S}\right\}  $ where $i=1,\ldots,N$ with a zero element $0_{ S}$. This means that $0_{ S}\cdot \lambda_{\alpha}=0_{ S}$ for all $\lambda_{\alpha
}\in \mathrm{S}$. Denoting the zero
element as $\lambda_{N+1}=0_{S}$, the $S-$expanded algebra $\mathfrak{S}%
=S\times\mathfrak{g}$ takes the form%
\begin{align*}
\left[  \mathsf{T}_{\left(  A,i\right)  },\mathsf{T}_{\left(  B,j\right)
}\right]   & =K_{ij}^{~~k}C_{AB}^{~~C}\mathsf{T}_{\left(  C,k\right)  }%
+K_{ij}^{~~N+1}C_{AB}^{~~C}\mathsf{T}_{\left(  C,N+1\right)  },\\
\left[  \mathsf{T}_{\left(  A,N+1\right)  },\mathsf{T}_{\left(  B,j\right)
}\right]   & =C_{AB}^{~~C}\mathsf{T}_{\left(  C,N+1\right)  },\\
\left[  \mathsf{T}_{\left(  A,N+1\right)  },\mathsf{T}_{\left(  B,N+1\right)
}\right]   & =C_{AB}^{~~C}\mathsf{T}_{\left(  C,N+1\right)  }.
\end{align*}
The $0_{S}-$reduction process consist in removing from the expanded algebra
all the generators of the form $\mathsf{T}_{\left(  C,N+1\right)  }%
=0_{S}\mathsf{T}_{C}$ . In other words, the whole sector
$0_{ S}\times\mathfrak{g}$ can be removed from the algebra
by imposing $0_{S}\times\mathfrak{g}=0$. In this way, the remaining sector 
\begin{equation}
\left[  \mathsf{T}_{\left(  A,i\right)  },\mathsf{T}_{\left(  B,j\right)
}\right]  =K_{ij}^{~~k}C_{AB}^{~~C}\mathsf{T}_{\left(  C,k\right)  }.
\end{equation}
is referred to as the $0_{ S}$-reduced algebra, which is still a Lie
algebra~\cite[Theorem~6.1]{Iza06b}.

\subsection{Invariant tensors} \label{invite}

The problem of finding invariant tensors associated to Lie algebras in highly
nontrivial. Usually they are constructed in terms of symmetrized traces but
this is not the only possibility.

The $S-$expansion method of Lie algebras provides a mechanism to obtain
invariant tensors for the expanded algebra $\mathfrak{S}$ starting from an
invariant tensor of the original algebra $\mathfrak{g}$. The
systematic process is contained in the following theorem

\begin{theorem}
Let $S$ be an abelian semigroup, $\mathfrak{g}$ a Lie algebra with base
element $\mathsf{T}_{A}$ and let $\left\langle \mathsf{T}_{A_{1}}%
\ldots\mathsf{T}_{A_{n}}\right\rangle $ an invariant tensor for $\mathfrak{g}$.
Then, the expression%
\begin{equation}
\left\langle \mathsf{T}_{\left(  A_{1},\alpha_{1}\right)  }\ldots
\mathsf{T}_{\left(  A_{n},\alpha_{n}\right)  }\right\rangle =\alpha_{\gamma
}K_{\alpha_{1}\ldots\alpha_{n}}^{~~~~\gamma}\left\langle \mathsf{T}_{A_{1}%
}\ldots\mathsf{T}_{A_{n}}\right\rangle ,
\end{equation}
corresponds to an invariant tensor for the $S-$expanded algebra $\mathfrak{S}%
=S\times\mathfrak{g}$, where $\alpha_{\gamma}$ are arbitrary constants and
$K_{\alpha_{1}\ldots\alpha_{n}}^{~~~~\gamma}$ is the $n-$selector.

\begin{proof}
See  \cite[Theorem 7.1, 7.2]{Iza06b}
\end{proof}
\end{theorem}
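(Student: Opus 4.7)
The plan is to verify directly that the candidate expression satisfies the invariance condition on the expanded algebra $\mathfrak{S}=S\times\mathfrak{g}$. Recall that an $n$-multilinear map is an invariant tensor if the analogue of eq.~$\left(\ref{pr21}\right)$ holds, i.e.~for any $\mathsf{T}_{(B,\beta)}\in\mathfrak{S}$
\begin{equation*}
\sum_{i=1}^{n}\, \Big\langle \mathsf{T}_{(A_{1},\alpha_{1})}\cdots\big[\mathsf{T}_{(B,\beta)},\mathsf{T}_{(A_{i},\alpha_{i})}\big]\cdots\mathsf{T}_{(A_{n},\alpha_{n})}\Big\rangle = 0 \ .
\end{equation*}
So first I would substitute the expanded commutator eq.~$\left(\ref{sexp2}\right)$, namely $\big[\mathsf{T}_{(B,\beta)},\mathsf{T}_{(A_{i},\alpha_{i})}\big]=K_{\beta\alpha_{i}}^{~~\gamma_{i}}\, C_{BA_{i}}^{~~C_{i}}\, \mathsf{T}_{(C_{i},\gamma_{i})}$, into the left-hand side and then apply the definition of the expanded tensor to each term.

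After that substitution the left-hand side reads
\begin{equation*}
\sum_{i=1}^{n}\, K_{\beta\alpha_{i}}^{~~\gamma_{i}}\, C_{BA_{i}}^{~~C_{i}}\, \tilde\alpha_{\rho}\, K_{\alpha_{1}\cdots\gamma_{i}\cdots\alpha_{n}}^{~~~~~~~~~~\rho}\, \big\langle \mathsf{T}_{A_{1}}\cdots\mathsf{T}_{C_{i}}\cdots\mathsf{T}_{A_{n}}\big\rangle \ .
\end{equation*}
The key step is then to observe that, because $S$ is abelian and associative, the product $\lambda_{\beta}\, \lambda_{\alpha_{1}}\cdots\lambda_{\alpha_{n}}$ is independent of how one pairs $\lambda_{\beta}$ with any one of the $\lambda_{\alpha_{i}}$. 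Translated into selectors this yields the identity
\begin{equation*}
K_{\beta\alpha_{i}}^{~~\gamma_{i}}\, K_{\alpha_{1}\cdots\gamma_{i}\cdots\alpha_{n}}^{~~~~~~~~~~\rho}=K_{\beta\alpha_{1}\cdots\alpha_{n}}^{~~~~~~~~\rho}
\end{equation*}
for every $i=1,\dots,n$, so the factor $\tilde\alpha_{\rho}\, K_{\beta\alpha_{1}\cdots\alpha_{n}}^{~~~~~~~~\rho}$ comes out of the sum.

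What remains inside the sum is exactly $\sum_{i}C_{BA_{i}}^{~~C_{i}}\, \big\langle \mathsf{T}_{A_{1}}\cdots\mathsf{T}_{C_{i}}\cdots\mathsf{T}_{A_{n}}\big\rangle$, which vanishes by the invariance of the original tensor on $\mathfrak{g}$. This completes the proof. The one genuinely delicate point, and the step I would take most care with, is the selector identity: it requires a clean combinatorial/inductive argument showing that the $n$-selector $K_{\alpha_{1}\cdots\alpha_{n}}^{~~~~~~\rho}$ is well defined independently of the bracketing used to evaluate the semigroup product, and that ``inserting'' $\lambda_{\beta}$ into slot $i$ via a two-selector contraction reproduces the $(n+1)$-selector. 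Everything else is a direct book-keeping exercise that reduces the invariance condition on $\mathfrak{S}$ to the invariance condition on $\mathfrak{g}$ that was assumed.
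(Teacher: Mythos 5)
Your proof is correct. Note that the thesis itself gives no argument for this theorem --- it simply defers to \cite[Theorems 7.1, 7.2]{Iza06b} --- and your direct verification is essentially the argument contained in that reference: reduce the invariance condition on $\mathfrak{S}=S\times\mathfrak{g}$ to the one on $\mathfrak{g}$ by factoring out the selector contraction. The one step you rightly flag as delicate, the identity $K_{\beta\alpha_{i}}^{~~\gamma_{i}}\, K_{\alpha_{1}\cdots\gamma_{i}\cdots\alpha_{n}}^{~~~~~~\rho}=K_{\beta\alpha_{1}\cdots\alpha_{n}}^{~~~~~~\rho}$, is immediate from the definition of the $n$-selector via $\lambda_{\alpha_{1}}\cdots\lambda_{\alpha_{n}}=K_{\alpha_{1}\cdots\alpha_{n}}^{~~~~\gamma}\,\lambda_{\gamma}$ together with associativity and commutativity of $S$ (each product of semigroup elements equals a unique $\lambda_{\rho}$, independently of bracketing or ordering), so no further inductive machinery is needed.
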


\chapter{Classical gauge groups}
\label{ch:app4}

In this appendix we summarize the group theory data which are used in
section~\ref{sector3} in the case when the gauge symmetry belongs to
one of the four infinite families $A_n,B_n,C_n,D_n$ of  classical Lie
groups in the Cartan classification; we consider each family in turn. Below $\{E_{i,j}\}_{i,j=1}^n$
denotes the orthonormal basis of $n\times n$ matrix units with elements
$\left( E_{i,j}\right) _{kl}=\delta_{ik}\, \delta_{jl}$, and $\left\{  e_{i}\right\}_{i=1}^n  $ is
the canonical orthonormal basis of~$%
\mathbb{R}
^{n}$.
\newsection{Cartan--Weyl decomposition}
\subsubsection*{$\mbf{\mathcal{G}=U(n)}$}

\begin{equation}
\centering%
\begin{tabular}
[c]{l|l|l}
Positive roots $\alpha>0$& $e_{i}-e_{j}$ & $1\leq i<j\leq n$\\\hline
Cartan generators & ${H}_{i}={E}_{i,i}$ & $1\leq i\leq
n$\\\hline
Root vectors & $X_{e_{i}-e_{j}}=E_{i,j}$ & $i\neq j$,
$i,j=1,\ldots,n$\\\hline
Weyl symmetry $\Wcal$ & $S_{n}$ & 
\end{tabular}
\ \ \ \ \
\label{u1}%
\end{equation}

\subsubsection*{$\mbf{\mathcal{G}=SO(2n+1)}  $}

\begin{equation}
\begin{tabular}
[c]{l|l|l}
Positive roots $\alpha>0$ & $e_{i}\pm e_{j}$ & $1\leq i<j\leq n$\\
& $e_{i}$ & $1\leq i\leq n$\\\hline
Cartan generators & $H_{i}=E_{i,i}-E_{i+n,i+n}$ &
$1\leq i\leq n$\\\hline
Root vectors & $X_{e_{i}-e_{j}}=E_{j+1,i+1}-E_{i+n+1,j+n+1}$ & $i\neq j$\\
& $X_{e_{i}+e_{j}}=E_{i+n+1,j+1}-E_{j+n+1,i+1}$ &
$i<j$\\
& $X_{e_{i}}=E_{1,i+1}-E_{i+n+1,1}$ & $1\leq i\leq
n$\\\hline
Weyl symmetry $\Wcal$ & $S_{n}\ltimes\left(
\mathbb{Z}
_{2}\right)  ^{n}$ &
\end{tabular}
\end{equation}

\subsubsection*{$\mbf{\mathcal{G}=Sp(2n) } $}

\begin{equation}
\begin{tabular}
[c]{l|l|l}
Positive roots $\alpha>0$ & $e_{i}\pm e_{j}$ & $1\leq i<j\leq n$\\
& $e_{2i}$ & $1\leq i\leq n$\\\hline
Cartan generators & $H_{i}=E_{i,i}-E_{i+n,i+n}$ &
$1\leq i\leq n$\\\hline
Root vectors & $X_{e_{i}-e_{j}}=E_{j,i}-E_{i+n,j+n}$ & $i\neq j$\\
& $X_{e_{i}+e_{j}}=E_{i+n,j}-E_{j+n,i}$ &
$i<j$\\
& $X_{2e_{i}}=E_{i+n,i}$ & $1\leq i\leq n$\\\hline
Weyl symmetry $\Wcal$ & $S_{n}\ltimes\left(
\mathbb{Z}
_{2}\right)  ^{n}$ &
\end{tabular}
\end{equation}

\subsubsection*{$\mbf{\mathcal{G}=SO(2n)  }$}

\begin{equation}
\begin{tabular}
[c]{l|l|l}
Positive roots $\alpha>0$ & $e_{i}\pm e_{j}$ & $1\leq i<j\leq n$\\\hline
Cartan generators & $H_{i}=E_{i,i}-E_{i+n,i+n}$ &
$1\leq i\leq n$\\\hline
Root vectors & $X_{e_{i}-e_{j}}=E_{j,i}-E_{i+n,j+n}$ & $i\neq j$\\
& $X_{e_{i}+e_{j}}=E_{i+n,j}-E_{j+n,i}$ &
$i<j$\\
& $X_{-e_{i}-e_{j}}=E_{j,i+n}-E_{i,j+n}$ &
$i<j$\\\hline
Weyl symmetry $\Wcal$ & $S_{n}\ltimes\left(
\mathbb{Z}
_{2}\right)  ^{n-1}$ &
\end{tabular}
\end{equation}

\bibliographystyle{utphys}
\bibliography{Bibliography}

\providecommand{\href}[2]{#2}\begingroup\raggedright\begin{thebibliography}{100}

\bibitem{Nakahara}
M.~Nakahara, {\em Geometry, Topology and Physics}.
\newblock Institute of Physics Publishing, 2003.

\bibitem{Eguchi:1980jx}
T.~Eguchi, P.~B. Gilkey, and A.~J. Hanson, ``{Gravitation, Gauge Theories and
  Differential Geometry},''
\href{http://dx.doi.org/10.1016/0370-1573(80)90130-1}{{\em Phys.Rept.}
  {\bfseries 66} (1980) 213}.

\bibitem{deAz95}
J.~A. de~Azcárraga and J.~M. Izquierdo, {\em Lie groups, Lie algebras,
  Cohomology and some Applications in Physics}.
\newblock Cambridge University Press, Cambridge, UK, 1995.

\bibitem{'tHooft:1974bx}
G.~'t~Hooft and M.~Veltman, ``{One loop divergencies in the theory of
  gravitation},''
{\em Annales Poincare Phys.Theor.} {\bfseries A20} (1974) 69--94.

\bibitem{Burgess:2003jk}
C.~Burgess, ``{Quantum gravity in everyday life: General relativity as an
  effective field theory},'' \href{http://dx.doi.org/10.12942/lrr-2004-5}{{\em
  Living Rev.Rel.} {\bfseries 7} (2004) 5--56},
\href{http://arxiv.org/abs/gr-qc/0311082}{{\ttfamily arXiv:gr-qc/0311082
  [gr-qc]}}.

\bibitem{Chamseddine:1989yz}
A.~H. Chamseddine and D.~Wyler, ``{Gauge theory of topological gravity in $1+1$
  dimensions},''
\href{http://dx.doi.org/10.1016/0370-2693(89)90528-5}{{\em Phys. Lett. B}
  {\bfseries 228} (1989) 75--78}.

\bibitem{Cha89}
A.~H. Chamseddine, ``Topological gauge theory of gravity in five and all odd
  dimensions,'' \href{http://dx.doi.org/10.1016/0370-2693(89)91312-9}{
  {\bfseries 233} (1989) 291--294}.

\bibitem{Cha90}
A.~H. Chamseddine, ``Topological gravity and supergravity in various
  dimensions,'' \href{http://dx.doi.org/10.1016/0550-3213(90)90245-9}{
  {\bfseries 346} (1990) 213--234}.

\bibitem{Chern:1974ft}
S.-S. Chern and J.~Simons, ``{Characteristic forms and geometric invariants},''
\href{http://dx.doi.org/10.2307/1971013}{{\em Annals Math.} {\bfseries 99}
  (1974) 48--69}.

\bibitem{Merino:2010zz}
N.~Merino, A.~Perez, P.~Salgado, and O.~Valdivia, ``{Topological gravity from a
  transgression gauge field theory},''
\href{http://dx.doi.org/10.1016/j.physletb.2010.09.017}{{\em Phys.Lett.}
  {\bfseries B693} (2010) 600--604}.

\bibitem{Salgado:2013pva}
P.~Salgado, P.~Salgado-Rebolledo, and O.~Valdivia, ``{Topological gravity and
  gauged Wess-Zumino-Witten term},''
  \href{http://dx.doi.org/10.1016/j.physletb.2013.11.023}{{\em Phys.Lett.}
  {\bfseries B728} (2014) 99--104},
\href{http://arxiv.org/abs/1311.2532}{{\ttfamily arXiv:1311.2532 [math-ph]}}.

\bibitem{PhysRevD.89.084077}
P.~Salgado, R.~J. Szabo, and O.~Valdivia, ``Topological gravity and
  transgression holography,''
  \href{http://dx.doi.org/10.1103/PhysRevD.89.084077}{{\em Phys. Rev. D}
  {\bfseries 89} (2014) 084077},
\href{http://arxiv.org/abs/1401.3653}{{\ttfamily arXiv:1401.3653 [hep-th]}}.

\bibitem{Mora:2003wy}
P.~Mora, {\em {Transgression forms as unifying principle in field theory}}.
\newblock PhD thesis, 2003.
\newblock
\href{http://arxiv.org/abs/hep-th/0512255}{{\ttfamily arXiv:hep-th/0512255
  [hep-th]}}.
\newblock

\bibitem{Mora:2004kb}
P.~Mora, R.~Olea, R.~Troncoso, and J.~Zanelli, ``{Finite action principle for
  Chern-Simons AdS gravity},''
  \href{http://dx.doi.org/10.1088/1126-6708/2004/06/036}{{\em JHEP} {\bfseries
  0406} (2004) 036},
\href{http://arxiv.org/abs/hep-th/0405267}{{\ttfamily arXiv:hep-th/0405267
  [hep-th]}}.

\bibitem{Mora:2006ka}
P.~Mora, R.~Olea, R.~Troncoso, and J.~Zanelli, ``{Transgression forms and
  extensions of Chern--Simons gauge theories},''
  \href{http://dx.doi.org/10.1088/1126-6708/2006/02/067}{{\em JHEP} {\bfseries
  0602} (2006) 067},
\href{http://arxiv.org/abs/hep-th/0601081}{{\ttfamily arXiv:hep-th/0601081}}.

\bibitem{Wess:1967jq}
J.~Wess and B.~Zumino, ``{Lagrangian method for chiral symmetries},''
\href{http://dx.doi.org/10.1103/PhysRev.163.1727}{{\em Phys. Rev.} {\bfseries
  163} (1967) 1727--1735}.

\bibitem{Witten:1983tw}
E.~Witten, ``{Global aspects of current algebra},''
\href{http://dx.doi.org/10.1016/0550-3213(83)90063-9}{{\em Nucl. Phys. B}
  {\bfseries 223} (1983) 422--432}.

\bibitem{ForgaCS:1979zs}
P.~Forgacs and N.~S. Manton, ``{Spacetime symmetries in gauge theories},''
\href{http://dx.doi.org/10.1007/BF01200108}{{\em Commun. Math. Phys.}
  {\bfseries 72} (1980) 15--35}.

\bibitem{Kapetanakis:1992hf}
D.~Kapetanakis and G.~Zoupanos, ``{Coset space dimensional reduction of gauge
  theories},''
\href{http://dx.doi.org/10.1016/0370-1573(92)90101-5}{{\em Phys. Rept.}
  {\bfseries 219} (1992) 4--76}.

\bibitem{AlGar12}
L.~Alvarez-Consul and O.~Garcia-Prada, ``{Dimensional reduction and quiver
  bundles},'' {\em J. Reine Angew. Math.} {\bfseries 556} (2003) 1--46,
\href{http://arxiv.org/abs/math-dg/0112160}{{\ttfamily arXiv:math-dg/0112160}}.

\bibitem{Lechtenfeld:2007st}
O.~Lechtenfeld, A.~D. Popov, and R.~J. Szabo, ``{Quiver gauge theory and
  noncommutative vortices},''
  \href{http://dx.doi.org/10.1143/PTPS.171.258}{{\em Prog. Theor. Phys. Suppl.}
  {\bfseries 171} (2007) 258--268},
\href{http://arxiv.org/abs/0706.0979}{{\ttfamily arXiv:0706.0979 [hep-th]}}.

\bibitem{Dolan:2010ur}
B.~P. Dolan and R.~J. Szabo, ``{Equivariant dimensional reduction and quiver
  gauge theories},'' \href{http://dx.doi.org/10.1007/s10714-010-0970-z}{{\em
  Gen. Rel. Grav.} {\bfseries 43} (2010) 2453--2466},
\href{http://arxiv.org/abs/1001.2429}{{\ttfamily arXiv:1001.2429 [hep-th]}}.

\bibitem{AlvarezConsul:2001mb}
L.~Alvarez-Consul and O.~Garcia-Prada, ``{Dimensional reduction,
  $SL(2,\C)$-equivariant bundles and stable holomorphic chains},''
  \href{http://dx.doi.org/10.1142/S0129167X01000745}{{\em Int. J. Math.}
  {\bfseries 12} (2001) 159--201},
\href{http://arxiv.org/abs/math-dg/0112159}{{\ttfamily arXiv:math-dg/0112159}}.

\bibitem{Lechtenfeld:2006wu}
O.~Lechtenfeld, A.~D. Popov, and R.~J. Szabo, ``{Rank two quiver gauge theory,
  graded connections and noncommutative vortices},''
  \href{http://dx.doi.org/10.1088/1126-6708/2006/09/054}{{\em JHEP} {\bfseries
  0609} (2006) 054},
\href{http://arxiv.org/abs/hep-th/0603232}{{\ttfamily arXiv:hep-th/0603232}}.

\bibitem{Popov:2007ms}
A.~D. Popov, ``{Integrability of vortex equations on Riemann surfaces},''
  \href{http://dx.doi.org/10.1016/j.nuclphysb.2009.05.003}{{\em Nucl. Phys. B}
  {\bfseries 821} (2009) 452--466},
\href{http://arxiv.org/abs/0712.1756}{{\ttfamily arXiv:0712.1756 [hep-th]}}.

\bibitem{Lechtenfeld:2008nh}
O.~Lechtenfeld, A.~D. Popov, and R.~J. Szabo, ``{$SU(3)$-equivariant quiver
  gauge theories and nonabelian vortices},''
  \href{http://dx.doi.org/10.1088/1126-6708/2008/08/093}{{\em JHEP} {\bfseries
  0808} (2008) 093},
\href{http://arxiv.org/abs/0806.2791}{{\ttfamily arXiv:0806.2791 [hep-th]}}.

\bibitem{Popov:2010rf}
A.~D. Popov and R.~J. Szabo, ``{Double quiver gauge theory and nearly K\"ahler
  flux compactifications},''
  \href{http://dx.doi.org/10.1007/JHEP02(2012)033}{{\em JHEP} {\bfseries 1202}
  (2012) 033},
\href{http://arxiv.org/abs/1009.3208}{{\ttfamily arXiv:1009.3208 [hep-th]}}.

\bibitem{Popov:2008wh}
A.~D. Popov, ``{Bounces/dyons in the plane wave matrix model and $SU(N)$
  Yang--Mills theory},''
  \href{http://dx.doi.org/10.1142/S0217732309030163}{{\em Mod. Phys. Lett. A}
  {\bfseries 24} (2009) 349--359},
\href{http://arxiv.org/abs/0804.3845}{{\ttfamily arXiv:0804.3845 [hep-th]}}.

\bibitem{Manton:2010mj}
N.~S. Manton and N.~A. Rink, ``{Geometry and energy of nonabelian vortices},''
  \href{http://dx.doi.org/10.1063/1.3574357}{{\em J. Math. Phys.} {\bfseries
  52} (2011) 043511},
\href{http://arxiv.org/abs/1012.3014}{{\ttfamily arXiv:1012.3014 [hep-th]}}.

\bibitem{Szabo:2014zua}
R.~J. Szabo and O.~Valdivia, ``{Covariant Quiver Gauge Theories},''
  \href{http://dx.doi.org/10.1007/JHEP06(2014)144}{{\em JHEP} {\bfseries 1406}
  (2014) 144},
\href{http://arxiv.org/abs/1404.4319}{{\ttfamily arXiv:1404.4319 [hep-th]}}.

\bibitem{HJV}
J.~Harnad, S.~Shnider, and L.~Vinet, ``Group actions on principal bundles and
  invariance conditions for gauge fields,''
  \href{http://dx.doi.org/http://dx.doi.org/10.1063/1.524389}{{\em Journal of
  Mathematical Physics} {\bfseries 21} no.~12, (1980) 2719--2724}.
  \url{http://scitation.aip.org/content/aip/journal/jmp/21/12/10.1063/1.524389}.

\bibitem{wang1958}
H.-c. Wang, ``On invariant connections over a principal fibre bundle,'' {\em
  Nagoya Mathematical Journal} {\bfseries 13} (1958) 1--19.
  \url{http://projecteuclid.org/euclid.nmj/1118800027}.

\bibitem{JSL}
J.~P. Harnad, S.~Shnider, and J.~Tafel, ``{Group actions on principal bundles
  and dimensional reduction},''
{\em Lett. Math. Phys.} {\bfseries 4} (1980) 107--113.

\bibitem{Manes:2012zz}
J.~Manes, R.~Stora, and B.~Zumino, ``{Algebraic study of chiral anomalies},''
\href{http://dx.doi.org/10.1007/s12043-012-0316-3}{{\em Pramana} {\bfseries 78}
  (2012) 907--925}.

\bibitem{Izaurieta:2006wv}
F.~Izaurieta, E.~Rodriguez, and P.~Salgado, ``{The Extended Cartan homotopy
  formula and a subspace separation method for Chern-Simons supergravity},''
  \href{http://dx.doi.org/10.1007/s11005-007-0148-0}{{\em Lett.Math.Phys.}
  {\bfseries 80} (2007) 127--138},
\href{http://arxiv.org/abs/hep-th/0603061}{{\ttfamily arXiv:hep-th/0603061
  [hep-th]}}.

\bibitem{Izaurieta:2005vp}
F.~Izaurieta, E.~Rodriguez, and P.~Salgado, ``{On transgression forms and
  Chern-Simons (super)gravity},''
\href{http://arxiv.org/abs/hep-th/0512014}{{\ttfamily arXiv:hep-th/0512014
  [hep-th]}}.

\bibitem{Borowiec:2003rd}
A.~Borowiec, M.~Ferraris, and M.~Francaviglia, ``{A Covariant formalism for
  Chern-Simons gravity},''
  \href{http://dx.doi.org/10.1088/0305-4470/36/10/318}{{\em J.Phys.} {\bfseries
  A36} (2003) 2589--2598},
\href{http://arxiv.org/abs/hep-th/0301146}{{\ttfamily arXiv:hep-th/0301146
  [hep-th]}}.

\bibitem{Borowiec:2005ky}
A.~Borowiec, L.~Fatibene, M.~Ferraris, and M.~Francaviglia, ``{Covariant
  Lagrangian formulation of Chern-Simons and BF theories},''
  \href{http://dx.doi.org/10.1142/S0219887807001990}{{\em
  Int.J.Geom.Meth.Mod.Phys.} {\bfseries 3} (2006) 755--774},
\href{http://arxiv.org/abs/hep-th/0511060}{{\ttfamily arXiv:hep-th/0511060
  [hep-th]}}.

\bibitem{Miskovic:2004us}
O.~Miskovic, {\em {Dynamics of Wess-Zumino-Witten and Chern-Simons theories}}.
\newblock PhD thesis, 2004.
\newblock
\href{http://arxiv.org/abs/hep-th/0401185}{{\ttfamily arXiv:hep-th/0401185
  [hep-th]}}.
\newblock

\bibitem{Haehl:2013hoa}
F.~M. Haehl, R.~Loganayagam, and M.~Rangamani, ``{Effective actions for
  anomalous hydrodynamics},''
  \href{http://dx.doi.org/10.1007/JHEP03(2014)034}{{\em JHEP} {\bfseries 1403}
  (2014) 034},
\href{http://arxiv.org/abs/1312.0610}{{\ttfamily arXiv:1312.0610 [hep-th]}}.

\bibitem{Izaurieta:2006ia}
F.~Izaurieta, {\em {Semigroup Expansion and M-Supergravity in Eleven
  Dimensions}}.
\newblock PhD thesis, 2006.
\newblock
\href{http://arxiv.org/abs/hep-th/0611238}{{\ttfamily arXiv:hep-th/0611238
  [hep-th]}}.
\newblock

\bibitem{Rodriguez:2006xx}
E.~Rodriguez, {\em {Transgression Forms and Abelian Semigroups in
  Supergravity}}.
\newblock PhD thesis, 2006.
\newblock
\href{http://arxiv.org/abs/hep-th/0611032}{{\ttfamily arXiv:hep-th/0611032
  [hep-th]}}.
\newblock

\bibitem{Zanelli:2005sa}
J.~Zanelli, ``{Lecture notes on Chern-Simons (super-)gravities. Second edition
  (February 2008)},''
\href{http://arxiv.org/abs/hep-th/0502193}{{\ttfamily arXiv:hep-th/0502193
  [hep-th]}}.

\bibitem{Ogura198961}
W.~Ogura, ``Path integral quantization of chern-simons gauge thoery,''
  \href{http://dx.doi.org/http://dx.doi.org/10.1016/0370-2693(89)90156-1}{{\em
  Physics Letters B} {\bfseries 229} no.~1–2, (1989) 61 -- 66}.
  \url{http://www.sciencedirect.com/science/article/pii/0370269389901561}.

\bibitem{Carlip:1991zm}
S.~Carlip, ``{Inducing Liouville theory from topologically massive gravity},''
\href{http://dx.doi.org/10.1016/0550-3213(91)90558-F}{{\em Nucl.Phys.}
  {\bfseries B362} (1991) 111--124}.

\bibitem{Nilles:1983ge}
H.~P. Nilles, ``{Supersymmetry, Supergravity and Particle Physics},''
\href{http://dx.doi.org/10.1016/0370-1573(84)90008-5}{{\em Phys.Rept.}
  {\bfseries 110} (1984) 1--162}.

\bibitem{Utiyama:1956sy}
R.~Utiyama, ``{Invariant theoretical interpretation of interaction},''
\href{http://dx.doi.org/10.1103/PhysRev.101.1597}{{\em Phys.Rev.} {\bfseries
  101} (1956) 1597--1607}.

\bibitem{Chamseddine:1976bf}
A.~H. Chamseddine and P.~C. West, ``{Supergravity as a Gauge Theory of
  Supersymmetry},''
\href{http://dx.doi.org/10.1016/0550-3213(77)90018-9}{{\em Nucl.Phys.}
  {\bfseries B129} (1977) 39}.

\bibitem{MacDowell:1977jt}
S.~MacDowell and F.~Mansouri, ``{Unified Geometric Theory of Gravity and
  Supergravity},''
\href{http://dx.doi.org/10.1103/PhysRevLett.38.1376,
  10.1103/PhysRevLett.38.739}{{\em Phys.Rev.Lett.} {\bfseries 38} (1977) 739}.

\bibitem{Witten:1988hc}
E.~Witten, ``{(2+1)-Dimensional Gravity as an Exactly Soluble System},''
\href{http://dx.doi.org/10.1016/0550-3213(88)90143-5}{{\em Nucl.Phys.}
  {\bfseries B311} (1988) 46}.

\bibitem{Witten:1989sx}
E.~Witten, ``{Topology Changing Amplitudes in (2+1)-Dimensional Gravity},''
\href{http://dx.doi.org/10.1016/0550-3213(89)90591-9}{{\em Nucl.Phys.}
  {\bfseries B323} (1989) 113}.

\bibitem{D'Hoker:1983ef}
E.~D'Hoker and R.~Jackiw, ``{Space translation breaking and compactification in
  the Liouville theory},''
\href{http://dx.doi.org/10.1103/PhysRevLett.50.1719}{{\em Phys. Rev. Lett.}
  {\bfseries 50} (1983) 1719--1722}.

\bibitem{D'Hoker:1983is}
E.~D'Hoker, D.~Z. Freedman, and R.~Jackiw, ``{$SO(2,1)$-invariant quantization
  of the Liouville theory},''
\href{http://dx.doi.org/10.1103/PhysRevD.28.2583}{{\em Phys. Rev. D} {\bfseries
  28} (1983) 2583--2598}.

\bibitem{Lan38}
C.~Lanczos, ``A remarkable property of the {Riemann--Christoffel} tensor in
  four dimensions,'' {\em Ann. Math.} {\bfseries 39} (1938) 842--850.

\bibitem{Lov71}
D.~Lovelock, ``The {Einstein} tensor and its generalizations,''
  \href{http://dx.doi.org/10.1063/1.1665613}{ {\bfseries 12} (1971) 498--501}.

\bibitem{Cno05}
S.~Cnockaert and M.~Henneaux, ``Lovelock terms and {BRST} cohomology,''
  {\bfseries 22} (2005) 2797--2810,
  \href{http://arxiv.org/abs/hep-th/0504169}{{\ttfamily arXiv:hep-th/0504169}}.

\bibitem{Des05}
S.~Deser and J.~Franklin, ``Birkhoff for {Lovelock} redux,''  {\bfseries 22}
  (2005) L103--L106, \href{http://arxiv.org/abs/gr-qc/0506014}{{\ttfamily
  arXiv:gr-qc/0506014}}.

\bibitem{Mar91}
J.~Zanelli and A.~Mardones, ``Lovelock--{Cartan} theory of gravity,''
  \href{http://dx.doi.org/10.1088/0264-9381/8/8/018}{ {\bfseries 8} (1991)
  1545--1558}.

\bibitem{Tro99}
R.~Troncoso and J.~Zanelli, ``Higher-dimensional gravity, propagating torsion
  and {AdS} gauge invariance,''  {\bfseries 17} (2000) 4451--4466,
  \href{http://arxiv.org/abs/hep-th/9907109}{{\ttfamily arXiv:hep-th/9907109}}.

\bibitem{Crisostomo:2000bb}
J.~Crisostomo, R.~Troncoso, and J.~Zanelli, ``{Black hole scan},''
  \href{http://dx.doi.org/10.1103/PhysRevD.62.084013}{{\em Phys.Rev.}
  {\bfseries D62} (2000) 084013},
\href{http://arxiv.org/abs/hep-th/0003271}{{\ttfamily arXiv:hep-th/0003271
  [hep-th]}}.

\bibitem{Sal05}
P.~Salgado, F.~Izaurieta, and E.~Rodriguez, ``An {AdS}-invariant theory of
  gravity in any dimension,'' in {\em Proceedings of the Tenth {Marcel
  Grossmann} Meeting on General Relativity}, vol.~C, pp.~1424--1429.
\newblock World Scientific Publishing, Singapore, 2005.

\bibitem{Ste80}
K.~S. Stelle and P.~C. West, ``Spontaneously broken {de Sitter} symmetry and
  the gravitational holonomy group,''
  \href{http://dx.doi.org/10.1103/PhysRevD.21.1466}{ {\bfseries 21} (1980)
  1466--1488}.

\bibitem{Grignani:1991nj}
G.~Grignani and G.~Nardelli, ``{Gravity and the Poincar\'e group},''
\href{http://dx.doi.org/10.1103/PhysRevD.45.2719}{{\em Phys. Rev. D} {\bfseries
  45} (1992) 2719--2731}.

\bibitem{Sal03b}
P.~Salgado, F.~Izaurieta, and E.~Rodriguez, ``Higher-dimensional gravity
  invariant under the {AdS} group,''
  \href{http://dx.doi.org/10.1016/j.physletb.2003.09.021}{ {\bfseries 574}
  (2003) 283--288}, \href{http://arxiv.org/abs/hep-th/0305180}{{\ttfamily
  arXiv:hep-th/0305180}}.

\bibitem{Anabalon:2006fj}
A.~Anabalon, S.~Willison, and J.~Zanelli, ``{General relativity from a gauged
  WZW term},'' \href{http://dx.doi.org/10.1103/PhysRevD.75.024009}{{\em
  Phys.Rev.} {\bfseries D75} (2007) 024009},
\href{http://arxiv.org/abs/hep-th/0610136}{{\ttfamily arXiv:hep-th/0610136
  [hep-th]}}.

\bibitem{Anabalon:2007dr}
A.~Anabalon, S.~Willison, and J.~Zanelli, ``{The Universe as a topological
  defect},'' \href{http://dx.doi.org/10.1103/PhysRevD.77.044019}{{\em
  Phys.Rev.} {\bfseries D77} (2008) 044019},
\href{http://arxiv.org/abs/hep-th/0702192}{{\ttfamily arXiv:hep-th/0702192
  [hep-th]}}.

\bibitem{Des83}
S.~Deser and J.~H. Kay, ``Topologically massive supergravity,''
  \href{http://dx.doi.org/10.1016/0370-2693(83)90631-7}{ {\bfseries 120} (1983)
  97--100}.

\bibitem{Ach86}
A.~Achucarro and P.~K. Townsend, ``A action for three-dimensional {anti-de
  Sitter} supergravity theories,''
  \href{http://dx.doi.org/10.1016/0370-2693(86)90140-1}{ {\bfseries 180} (1986)
  89--92}.

\bibitem{Banh96a}
M.~Banados, R.~Troncoso, and J.~Zanelli, ``Higher-dimensional supergravity,''
  {\bfseries 54} (1996) 2605--2611,
  \href{http://arxiv.org/abs/gr-qc/9601003}{{\ttfamily arXiv:gr-qc/9601003}}.

\bibitem{Sal03a}
P.~Salgado, S.~del Campo, and M.~Cataldo, ``Supergravity with cosmological
  constant and the {AdS} group,''  {\bfseries 68} (2003) 024021,
  \href{http://arxiv.org/abs/hep-th/0305004}{{\ttfamily arXiv:hep-th/0305004}}.

\bibitem{Sal01}
P.~Salgado, M.~Cataldo, and S.~del Campo, ``Supergravity and the {Poincar\'e}
  group,''  {\bfseries 65} (2002) 084032,
  \href{http://arxiv.org/abs/gr-qc/0110097}{{\ttfamily arXiv:gr-qc/0110097}}.

\bibitem{Barnich:2013yka}
G.~Barnich and H.~A. Gonzalez, ``{Dual dynamics of three dimensional
  asymptotically flat Einstein gravity at null infinity},''
  \href{http://dx.doi.org/10.1007/JHEP05(2013)016}{{\em JHEP} {\bfseries 1305}
  (2013) 016},
\href{http://arxiv.org/abs/1303.1075}{{\ttfamily arXiv:1303.1075 [hep-th]}}.

\bibitem{Chamseddine:1990gk}
A.~H. Chamseddine, ``{Topological gravity and supergravity in various
  dimensions},''
\href{http://dx.doi.org/10.1016/0550-3213(90)90245-9}{{\em Nucl. Phys. B}
  {\bfseries 346} (1990) 213--234}.

\bibitem{Bacry:1970ye}
H.~Bacry, P.~Combe, and J.~Richard, ``{Group-theoretical analysis of elementary
  particles in an external electromagnetic field. 1. the relativistic particle
  in a constant and uniform field},''
\href{http://dx.doi.org/10.1007/BF02725178}{{\em Nuovo Cim.} {\bfseries A67}
  (1970) 267--299}.

\bibitem{Schrader:1972zd}
R.~Schrader, ``{The maxwell group and the quantum theory of particles in
  classical homogeneous electromagnetic fields},''
\href{http://dx.doi.org/10.1002/prop.19720201202}{{\em Fortsch.Phys.}
  {\bfseries 20} (1972) 701--734}.

\bibitem{Bonanos:2009wy}
S.~Bonanos, J.~Gomis, K.~Kamimura, and J.~Lukierski, ``{Maxwell Superalgebra
  and Superparticle in Constant Gauge Badkgrounds},''
  \href{http://dx.doi.org/10.1103/PhysRevLett.104.090401}{{\em Phys.Rev.Lett.}
  {\bfseries 104} (2010) 090401},
\href{http://arxiv.org/abs/0911.5072}{{\ttfamily arXiv:0911.5072 [hep-th]}}.

\bibitem{deAzcarraga:2010sw}
J.~A. de~Azcarraga, K.~Kamimura, and J.~Lukierski, ``{Generalized cosmological
  term from Maxwell symmetries},''
  \href{http://dx.doi.org/10.1103/PhysRevD.83.124036}{{\em Phys.Rev.}
  {\bfseries D83} (2011) 124036},
\href{http://arxiv.org/abs/1012.4402}{{\ttfamily arXiv:1012.4402 [hep-th]}}.

\bibitem{deAzcarraga:2012qj}
J.~A. de~Azcarraga, K.~Kamimura, and J.~Lukierski, ``{Maxwell symmetries and
  some applications},'' \href{http://dx.doi.org/10.1142/S2010194513011604}{{\em
  Int.J.Mod.Phys.Conf.Ser.} {\bfseries 23} (2013) 01160},
\href{http://arxiv.org/abs/1201.2850}{{\ttfamily arXiv:1201.2850 [hep-th]}}.

\bibitem{Salgado:2014qqa}
P.~Salgado and S.~Salgado, ``{$\mathfrak{so}(D-1,1)\otimes
  \mathfrak{so}(D-1,2)$ algebras and gravity},''
\href{http://dx.doi.org/10.1016/j.physletb.2013.11.009}{{\em Phys.Lett.}
  {\bfseries B728} (2014) 5--10}.

\bibitem{Iza06b}
F.~Izaurieta, E.~Rodríguez, and P.~Salgado, ``Expanding {Lie} (super)algebras
  through abelian semigroups,'' \href{http://dx.doi.org/10.1063/1.2390659}{
  {\bfseries 47} (2006) 123512},
  \href{http://arxiv.org/abs/hep-th/0606215}{{\ttfamily arXiv:hep-th/0606215}}.

\bibitem{Diaz:2012zza}
J.~Diaz, O.~Fierro, F.~Izaurieta, N.~Merino, E.~Rodriguez, {\em et~al.}, ``{A
  generalized action for (2 + 1)-dimensional Chern-Simons gravity},''
  \href{http://dx.doi.org/10.1088/1751-8113/45/25/255207}{{\em J.Phys.}
  {\bfseries A45} (2012) 255207},
\href{http://arxiv.org/abs/1311.2215}{{\ttfamily arXiv:1311.2215 [gr-qc]}}.

\bibitem{Fierro:2014lka}
O.~Fierro, F.~Izaurieta, P.~Salgado, and O.~Valdivia, ``{(2+1)-dimensional
  supergravity invariant under the AdS-Lorentz superalgebra},''
\href{http://arxiv.org/abs/1401.3697}{{\ttfamily arXiv:1401.3697 [hep-th]}}.

\bibitem{Caroca:2011zz}
R.~Caroca, N.~Merino, P.~Salgado, and O.~Valdivia, ``{Generating
  infinite-dimensional algebras from loop algebras by expanding Maurer-Cartan
  forms},''
\href{http://dx.doi.org/10.1063/1.3579990}{{\em J.Math.Phys.} {\bfseries 52}
  (2011) 043519}.

\bibitem{Hayashi:1991mu}
M.~Hayashi, ``{Classification of (2+1)-dimensional Chern-Simons supergravity
  theories},''
\href{http://dx.doi.org/10.1143/PTP.88.403}{{\em Prog.Theor.Phys.} {\bfseries
  88} (1992) 403--430}.

\bibitem{rohtua}
M.~Fecko, {\em Differential Geometry and Lie Groups for Physicists}.
\newblock Cambridge University Press, 2006.
\newblock \url{http://dx.doi.org/10.1017/CBO9780511755590}.

\bibitem{Popov:2005ik}
A.~D. Popov and R.~J. Szabo, ``{Quiver gauge theory of nonabelian vortices and
  noncommutative instantons in higher dimensions},''
  \href{http://dx.doi.org/10.1063/1.2157005}{{\em J. Math. Phys.} {\bfseries
  47} (2006) 012306},
\href{http://arxiv.org/abs/hep-th/0504025}{{\ttfamily arXiv:hep-th/0504025}}.

\bibitem{Dolan:2009ie}
B.~P. Dolan and R.~J. Szabo, ``{Dimensional reduction, monopoles and dynamical
  symmetry breaking},''
  \href{http://dx.doi.org/10.1088/1126-6708/2009/03/059}{{\em JHEP} {\bfseries
  0903} (2009) 059},
\href{http://arxiv.org/abs/0901.2491}{{\ttfamily arXiv:0901.2491 [hep-th]}}.

\bibitem{Popov2012}
A.~D. Popov and R.~J. Szabo, ``{Double quiver gauge theory and nearly K\"ahler
  flux compactifications},''
  \href{http://dx.doi.org/10.1007/JHEP02(2012)033}{{\em JHEP} {\bfseries 1202}
  (2012) 033},
\href{http://arxiv.org/abs/1009.3208}{{\ttfamily arXiv:1009.3208 [hep-th]}}.

\bibitem{deAzcarraga:1997ya}
J.~de~Azcarraga, A.~Macfarlane, A.~Mountain, and J.~Perez~Bueno, ``{Invariant
  tensors for simple groups},''
  \href{http://dx.doi.org/10.1016/S0550-3213(97)00609-3}{{\em Nucl.Phys.}
  {\bfseries B510} (1998) 657--687},
\href{http://arxiv.org/abs/physics/9706006}{{\ttfamily arXiv:physics/9706006
  [physics]}}.

\bibitem{Banados:1995mq}
M.~Banados, L.~J. Garay, and M.~Henneaux, ``{The local degrees of freedom of
  higher-dimensional pure Chern--Simons theories},''
  \href{http://dx.doi.org/10.1103/PhysRevD.53.593}{{\em Phys. Rev. D}
  {\bfseries 53} (1996) 593--596},
\href{http://arxiv.org/abs/hep-th/9506187}{{\ttfamily arXiv:hep-th/9506187}}.

\bibitem{Banados:1996yj}
M.~Banados, L.~J. Garay, and M.~Henneaux, ``{The dynamical structure of
  higher-dimensional Chern--Simons theory},''
  \href{http://dx.doi.org/10.1016/0550-3213(96)00384-7}{{\em Nucl. Phys. B}
  {\bfseries 476} (1996) 611--635},
\href{http://arxiv.org/abs/hep-th/9605159}{{\ttfamily arXiv:hep-th/9605159}}.

\bibitem{Miskovic:2005di}
O.~Miskovic, R.~Troncoso, and J.~Zanelli, ``{Canonical sectors of
  five-dimensional Chern--Simons theories},''
  \href{http://dx.doi.org/10.1016/j.physletb.2005.04.043}{{\em Phys. Lett. B}
  {\bfseries 615} (2005) 277--284},
\href{http://arxiv.org/abs/hep-th/0504055}{{\ttfamily arXiv:hep-th/0504055}}.

\bibitem{Saavedra:2000wk}
J.~Saavedra, R.~Troncoso, and J.~Zanelli, ``{Degenerate dynamical systems},''
  \href{http://dx.doi.org/10.1063/1.1389088}{{\em J.Math.Phys.} {\bfseries 42}
  (2001) 4383--4390},
\href{http://arxiv.org/abs/hep-th/0011231}{{\ttfamily arXiv:hep-th/0011231
  [hep-th]}}.

\bibitem{Miskovic:2003ex}
O.~Miskovic and J.~Zanelli, ``{Dynamical structure of irregular constrained
  systems},'' \href{http://dx.doi.org/10.1063/1.1601299}{{\em J. Math. Phys.}
  {\bfseries 44} (2003) 3876--3887},
\href{http://arxiv.org/abs/hep-th/0302033}{{\ttfamily arXiv:hep-th/0302033}}.

\bibitem{Zegers:2005vx}
R.~Zegers, ``{Birkhoff's theorem in Lovelock gravity},''
  \href{http://dx.doi.org/10.1063/1.1960798}{{\em J. Math. Phys.} {\bfseries
  46} (2005) 072502},
\href{http://arxiv.org/abs/gr-qc/0505016}{{\ttfamily arXiv:gr-qc/0505016}}.

\bibitem{Allemandi:2003rs}
G.~Allemandi, M.~Francaviglia, and M.~Raiteri, ``{Charges and energy in
  Chern--Simons theories and Lovelock gravity},''
  \href{http://dx.doi.org/10.1088/0264-9381/20/23/010}{{\em Class. Quant.
  Grav.} {\bfseries 20} (2003) 5103--5120},
\href{http://arxiv.org/abs/gr-qc/0308019}{{\ttfamily arXiv:gr-qc/0308019}}.

\bibitem{Troncoso:1998ng}
R.~Troncoso and J.~Zanelli, ``{Gauge supergravities for all odd dimensions},''
  \href{http://dx.doi.org/10.1023/A:1026614631617}{{\em Int. J. Theor. Phys.}
  {\bfseries 38} (1999) 1181--1206},
\href{http://arxiv.org/abs/hep-th/9807029}{{\ttfamily arXiv:hep-th/9807029}}.

\bibitem{Troncoso:1997me}
R.~Troncoso and J.~Zanelli, ``{Chern--Simons supergravities with off-shell
  local superalgebras},'' \href{http://arxiv.org/abs/hep-th/9902003}{{\ttfamily
  arXiv:hep-th/9902003}}.
Proc. {\it Black Holes and the Structure of the Universe}, eds. C. Teitelboim
  and J. Zanelli (World Scientific, 2000) 119--145.

\bibitem{Banados:1999kv}
M.~Banados, ``{Gravitons and gauge fields in Chern--Simons supergravity},''
  \href{http://dx.doi.org/10.1016/S0920-5632(00)00749-0}{{\em Nucl. Phys. Proc.
  Suppl.} {\bfseries 88} (2000) 17--26},
\href{http://arxiv.org/abs/hep-th/9911150}{{\ttfamily arXiv:hep-th/9911150}}.

\bibitem{Witten:1988hf}
E.~Witten, ``{Quantum Field Theory and the Jones Polynomial},''
\href{http://dx.doi.org/10.1007/BF01217730}{{\em Commun.Math.Phys.} {\bfseries
  121} (1989) 351}.

\bibitem{Zanelli:1994ti}
J.~Zanelli, ``{Quantization of the gravitational constant in odd dimensional
  gravity},'' \href{http://dx.doi.org/10.1103/PhysRevD.51.490}{{\em Phys.Rev.}
  {\bfseries D51} (1995) 490--492},
\href{http://arxiv.org/abs/hep-th/9406202}{{\ttfamily arXiv:hep-th/9406202
  [hep-th]}}.

\bibitem{Deser:1981wh}
S.~Deser, R.~Jackiw, and S.~Templeton, ``{Topologically massive gauge
  theories},''
\href{http://dx.doi.org/10.1016/0003-4916(82)90164-6}{{\em Ann. Phys.}
  {\bfseries 140} (1982) 372--411}.

\bibitem{Coussaert:1995zp}
O.~Coussaert, M.~Henneaux, and P.~van Driel, ``{The Asymptotic dynamics of
  three-dimensional Einstein gravity with a negative cosmological constant},''
  \href{http://dx.doi.org/10.1088/0264-9381/12/12/012}{{\em Class.Quant.Grav.}
  {\bfseries 12} (1995) 2961--2966},
\href{http://arxiv.org/abs/gr-qc/9506019}{{\ttfamily arXiv:gr-qc/9506019
  [gr-qc]}}.

\bibitem{Campoleoni:2011hg}
A.~Campoleoni, S.~Fredenhagen, and S.~Pfenninger, ``{Asymptotic W-symmetries in
  three-dimensional higher-spin gauge theories},''
  \href{http://dx.doi.org/10.1007/JHEP09(2011)113}{{\em JHEP} {\bfseries 1109}
  (2011) 113},
\href{http://arxiv.org/abs/1107.0290}{{\ttfamily arXiv:1107.0290 [hep-th]}}.

\bibitem{Popov:2008gw}
A.~D. Popov, ``{Nonabelian vortices on Riemann surfaces: An integrable case},''
  \href{http://dx.doi.org/10.1007/s11005-008-0243-x}{{\em Lett. Math. Phys.}
  {\bfseries 84} (2008) 139--148},
\href{http://arxiv.org/abs/0801.0808}{{\ttfamily arXiv:0801.0808 [hep-th]}}.

\bibitem{TempleRaston:1994sk}
M.~Temple-Raston, ``{The reduction of five-dimensional Chern--Simons
  theories},''
\href{http://dx.doi.org/10.1063/1.530665}{{\em J. Math. Phys.} {\bfseries 35}
  (1994) 759--768}.

\bibitem{Coleman:1969sm}
S.~R. Coleman, J.~Wess, and B.~Zumino, ``{Structure of phenomenological
  Lagrangians. 1.},''
\href{http://dx.doi.org/10.1103/PhysRev.177.2239}{{\em Phys. Rev.} {\bfseries
  177} (1969) 2239--2247}.

\bibitem{Callan:1969sn}
C.~G. Callan~Jr., S.~R. Coleman, J.~Wess, and B.~Zumino, ``{Structure of
  phenomenological Lagrangians. 2.},''
\href{http://dx.doi.org/10.1103/PhysRev.177.2247}{{\em Phys. Rev.} {\bfseries
  177} (1969) 2247--2250}.

\end{thebibliography}\endgroup

\end{document}